\title{Rate-Optimal Robust Estimation of High-Dimensional Vector Autoregressive Models}
\author{Di Wang and Ruey S. Tsay\\
\textit{{Booth School of Business, University of Chicago}}}
\newcolumntype{L}[1]{>{\raggedright\let\newline\\\arraybackslash\hspace{0pt}}m{#1}}
\newcolumntype{C}[1]{>{\centering\let\newline\\\arraybackslash\hspace{0pt}}m{#1}}
\newcolumntype{R}[1]{>{\raggedleft\let\newline\\\arraybackslash\hspace{0pt}}m{#1}}
\newcommand*{\addFileDependency}[1]{
  \typeout{(#1)}
  \@addtofilelist{#1}
  \IfFileExists{#1}{}{\typeout{No file #1.}}
}
\renewcommand{\arraystretch}{1.3}
\newtheorem{assumption}{Assumption}
\newtheorem{example}{Example}
\newtheorem{lemma}{Lemma}
\newtheorem{proposition}{Proposition}
\newtheorem{theorem}{Theorem}
\newtheorem{remark}{Remark}
\DeclareMathOperator*{\argmin}{arg\,min}
\newcommand{\bm}{\mathbf}
\newcommand{\bbm}{\boldsymbol}
\begin{document}

\setlength{\parindent}{16pt}

\maketitle

\begin{abstract}
High-dimensional time series data appear in many scientific areas in the current data-rich environment. Analysis of such data poses new challenges to data analysts because of not only the complicated dynamic dependence between the series, but also the existence of aberrant observations, such as missing values, contaminated observations, and heavy-tailed distributions. For high-dimensional vector autoregressive (VAR) models, we introduce a unified estimation procedure that is robust to model misspecification, heavy-tailed noise contamination, and conditional heteroscedasticity. The proposed methodology enjoys both statistical optimality and computational efficiency, and can handle many popular high-dimensional models, such as sparse, reduced-rank, banded, and network-structured VAR models. With proper regularization and data truncation, the estimation convergence rates are shown to be almost optimal in the minimax sense under a bounded $(2+2\epsilon)$-th moment condition. When $\epsilon\geq1$, the rates of convergence match those obtained under the sub-Gaussian assumption. Consistency of the proposed estimators is also established for some $\epsilon\in(0,1)$, with minimax optimal convergence rates associated with $\epsilon$. The efficacy of the proposed estimation methods is demonstrated by simulation and a U.S. macroeconomic example.
  
\end{abstract}

\textit{Keywords}: Autocovariance, high-dimensional time series, minimax optimal, robust statistics, truncation

\newpage
\section{Introduction}

\subsection{High-dimensional vector autoregression}
Vector autoregressive (VAR) models are arguably the most commonly used multivariate time series models in practice; see, e.g., \citet{lutkepohl2005new}, \citet{tsay2013multivariate}, and the references therein. Applications of the model can be found in a wide range of fields, such as economics and finance \citep{wu2016measuring}, time-course functional genomics \citep{michailidis2013autoregressive}, and neuroimaging \citep{gorrostieta2012investigating}.
Consider a $p$-dimensional zero-mean VAR model of order $d$, i.e., VAR($d$) model,
\begin{equation}\label{eq:VAR}
    \bm{y}_t=\bm{A}_1\bm{y}_{t-1}+\bm{A}_2\bm{y}_{t-2}+\cdots+\bm{A}_d\bm{y}_{t-d}+\bbm{\varepsilon}_t,~~~~t=1,2,\dots,T,
\end{equation}
where $\bm{y}_t=(y_{1t},\dots,y_{pt})^\top\in\mathbb{R}^p$ is the observed time series,  $\bm{A}_j\in\mathbb{R}^{p\times p}$ is the lag-$j$ coefficient matrix, and $\bbm{\varepsilon}_t=(\varepsilon_{1t},\dots,\varepsilon_{pt})^\top\in\mathbb{R}^{p}$ is a serially uncorrelated white noise innovation. We assume that all solutions of the determinant equation $|\bm{A}(z)| = 0$ are outside the unit circle, where $\bm{A}(z) = \bm{I}_p-\bm{A}_1 z-\cdots -\bm{A}_d z^d$ is referred to as the AR matrix polynomial in $z$. In modern applications, the dimension $p$ is often large. However, since the number of coefficient parameters is $p^2d$ and those coefficients are  often highly correlated, an unrestricted VAR($d$) model is likely to encounter the difficulty of over-parameterization and, hence, cannot provide reliable estimates nor accurate forecasts without further restrictions.

Estimation consistency of high-dimensional VAR models is achievable under some structural assumptions, provided that certain regularity conditions are satisfied. For example, if the coefficient matrices have an unobserved low-dimensional structure, such as sparsity or low-rankness, the structure-inducing regularization methods, including Lasso \citep{basu2015regularized}, Dantzig selector \citep{han2015direct}, and nuclear norm penalty \citep{negahban2011estimation}, give consistent estimates under the Gaussian assumption of the time series. Recently, \citet{zheng2019testing},  \citet{zheng2020finite}, and \citet{wang2021high} developed novel technical tools to relax the distributional assumption from Gaussian to sub-Gaussian.

However, in real applications, time series data often contain aberrant observations, which can occur in many ways, such as missing values, measurement error contamination, and heavy-tailed distribution. Those aberrant observations, if overlooked, can lead to biased estimates, erroneous inference, and sub-optimal forecasts. The situation can easily be further exacerbated when the dimension $p$ is large.  It is, therefore, important to study robust estimation of high-dimensional VAR models. In addition, the true data generating process is unlikely to follow a VAR model. Model uncertainty, including using a high-dimensional VAR($d$) model as an approximation to the true model, also deserves a careful investigation. 

Recently there have been emerging interests in studying high-dimensional VAR models with non-\textit{i.i.d.} and/or non-sub-Gaussian innovations. For example, \citet{wu2016performance} studied theoretical properties of Lasso and constrained $\ell_1$ minimization estimators for a VAR model with weakly correlated and heavy-tailed $\bbm{\varepsilon}_t$. \citet{wong2020lasso} investigated the estimation and prediction performance of Lasso for the sub-Weibull time series data under a $\beta$-mixing condition. Both theoretical and numerical results in the literature show that the performance of standard $\ell_1$ regularized estimators deteriorates substantially when the data have heavy tails. For robust estimation of high-dimensional heavy-tailed time series data, \citet{qiu2015robust} developed a quantile-based Dantzig selector for the class of elliptical VAR processes. \citet{han2020robust} proposed a robust estimation method for high-dimensional sparse generalized linear models with temporal dependent covariates. However, the existing literature on robust estimation for time series data focuses on sparse models. To the best of our knowledge, there is no unified solution to address the robust estimation problem for a large class of high-dimensional VAR models.

Our proposed robust estimation procedure is built on two key ingredients: the constrained Yule--Walker estimator and the robust autocovariance matrix estimator. The first ingredient, the constrained Yule--Walker estimator, provides a general and flexible estimation framework for two classes of high-dimensional models, namely the approximately low-dimensional VAR models and linear-restricted VAR models. The second ingredient, the robust autocovariance matrix estimator, is easy to implement by truncating the time series data. However, based on the specific model structure, we need to adapt the data truncation methods. With a large class of distributions having bounded second or higher-order moments, the proposed estimators are shown to be consistent under high-dimensional scaling. To be specific, we summarize the main contributions of this paper as follows:
\begin{itemize}
    \item[(i)] For various high-dimensional VAR models, the paper provides a simple and general estimation procedure robust to model misspecification, heavy-tailed noise contamination, and conditional heteroskedasticity. Our proposal can handle many popular high-dimensional VAR models, such as sparse, reduced-rank, banded, and network VAR models. An efficient and scalable alternating direction method of multipliers (ADMM) algorithm is developed.
    \item[(ii)] The proposed methodology enjoys statistical optimality in the minimax sense. Our theoretical framework deals with heavy-tailed distributions with bounded $(2+2\epsilon)$-th moments for any $\epsilon>0$. It results in a phase transition on the rates of convergence: for $\epsilon\geq1$, the estimator achieves the same convergence rates as those obtained under the Gaussian or sub-Gaussian distribution, while consistency is also established with a slower rate for $\epsilon\in(0,1)$. By establishing the matching minimax lower bounds, we show that the proposed estimators for high-dimensional VAR models and autocovariance matrices are rate-optimal.
\end{itemize}

\subsection{Related literature}

This work is related to a huge body of literature on the robust estimation of high-dimensional regression and covariance matrices. The early developments in robust statistics were pioneered by \citet{huber1964robust} and \citet{hampel1971general,hampel1974influence}; see also the overview by \citet{hampel2001robust} and the references therein. For time series data, robust estimation methods were developed for univariate and multivariate ARMA models \citep{martin1981robust,muler2009robust,muler2013robust}. For high-dimensional \textit{i.i.d.} data, inspired by \citet{catoni2012challenging}, the non-asymptotic deviation analysis for heavy-tailed variables and robust $M$-estimators for high-dimensional regression were proposed by \citet{fan2017estimation}, \citet{loh2017statistical}, \citet{sun2020adaptive}, \citet{wang2020tuning}, and many others. These recent robust estimation methods and their theoretical results were further extended to high-dimensional covariance and precision matrix estimation problems, such as \citet{avella2018robust}, \citet{minsker2018sub}, and \citet{zhang2021robust}. Another research line of robust regression is the least absolute deviation loss, and more generally, quantile loss, which have been studied extensively; see, e.g., \citet{wang2007robust}, \citet{belloni2011l1}, and \citet{wang2012quantile}. Recently, \citet{fan2016shrinkage} and \citet{ke2019user} independently proposed truncated variants of the sample covariance, which are easy to implement and motivate us to develop the robust estimators of high-dimensional autocovariance matrices.

The upper and lower bound development for high-dimensional estimation problems under heavy-tailed distributions is another emerging and important research topic. Under heavy-tailed distributions, the phase transition phenomenon in the rate of convergence was previously discovered by \citet{bubeck2013bandits}, \citet{avella2018robust}, \citet{sun2020adaptive}, \citet{tan2018robust}, and others. For the lower bound development, the phase transition phenomenon was first established by \citet{devroye2016sub} for univariate mean estimation problem, and was extended to the fixed and high-dimensional linear regression problems in \citet{sun2020adaptive}. Compared with the existing literature, our minimax lower bound results are established under a much more complicated setting. First, all existing lower bound results are developed for \textit{i.i.d.} data, but we allow weak serial dependency and establish the minimax lower bounds under strong mixing conditions. Second, for the lower bounds of the linear regression problem in \citet{sun2020adaptive}, a finite moment condition is imposed on the random error terms while the covariates are assumed to be light-tailed. Without any assumption on data generating process, we consider the finite moment condition on the observed time series data that serve as both predictors and responses in the autoregressive models, which is fundamentally different from the conventional linear regression problem.

\subsection{Notation and outline}

We start with some notations used in the paper.
Let $C$ denote a generic positive constant, which is independent of the dimension and sample size.
For any two real-valued sequences $x_k$ and $y_k$, $x_k\gtrsim y_k$ if there exists a $C>0$ such that $x_k\geq Cy_k$ for all $k$.
In addition, we write $x_k\asymp y_k$ if $x_k\gtrsim y_k$ and $y_k\gtrsim x_k$.
For any two real numbers $x$ and $y$, let $x\wedge y$ denote their minimum.
Throughout the paper, we use bold lowercase letters to denote vectors. For any vector $\bm{v}=(v_1,v_2,\dots,v_d)^\top$, denote its $\ell_q$ norm as $\|\bm{v}\|_q=(\sum_{i=1}^dv_i^q)^{1/q}$ for $1\leq q<\infty$, its maximum norm as $\|\bm{v}\|_\infty=\max_{1\leq i\leq d}|v_i|$, and its $\ell_0$ norm as $\|\bm{v}\|_0=\sum_{i=1}^d1\{v_i\neq0\}$. 
We use bold uppercase letters to denote matrices. 
For any matrix $\bm{M}=[\bm{m}_1,\dots,\bm{m}_{d_2}]\in\mathbb{R}^{d_1\times d_2}$, where $\bm{m}_j\in\mathbb{R}^{d_1}$ is the $j$-th column of $\bm{M}$, denote its $\ell_{p,q}$ norm as $\|\bm{M}\|_{p,q}=(\sum_{j=1}^{d_2}\|\bm{m}_{j}\|_p^q)^{1/q}$, for any $1\leq p,q\leq\infty$, and its $i$th largest singular value as $\sigma_i(\bm{M})$, for all $i=1,2,\dots,d_1\wedge d_2$. For a given matrix $\bm{M}\in\mathbb{R}^{d_1\times d_2}$, we let $\|\bm{M}\|_\textup{F}$, $\|\bm{M}\|_\textup{op}$, and $\|\bm{M}\|_\textup{nuc}$ denote its Frobenius norm, operator norm, and nuclear norm, respectively, where $\|\bm{M}\|_\textup{F}=\|\bm{M}\|_{2,2}$, $\|\bm{M}\|_\textup{op}=\sigma_1(\bm{M})$, and $\|\bm{M}\|_\textup{nuc}=\sum_{i=1}^{d_1\wedge d_2}\sigma_i(\bm{M})$. For any two matrices $\bm{M}_1$ and $\bm{M}_2$, their Kronecker product is $\bm{M}_1\otimes\bm{M}_2$. For any subspace $\mathcal{M}\subset\mathbb{R}^{d_1\times d_2}$ and any matrix $\bm{M}\in\mathbb{R}^{d_1\times d_2}$, denote by $\mathcal{M}^\perp$ and $\bm{M}_{\mathcal{M}}$ the orthogonal complement of $\mathcal{M}$ and the projection of $\bm{M}$ onto $\mathcal{M}$, respectively.

The rest of the paper is organized as follows. Section \ref{sec:2} studies constrained Yule--Walker estimators for two classes of VAR models: VAR with approximately low-dimensional structure and VAR with linear restrictions. Section \ref{sec:3} develops robust autocovariance estimators tailored for various low-dimensional structures. Minimax lower bounds for both high-dimensional VAR estimation and autocovariance matrix estimation are investigated in Section \ref{sec:lower_bound}. Computational algorithms and implementation details are discussed in Section \ref{sec:4}. Section \ref{sec:5} presents some simulation results, and Section \ref{sec:6} 
shows an empirical application of U.S. macroeconomic data. All technical proofs and detailed algorithms are relegated to the Appendices. The codes and data can be found at \href{https://github.com/diwangstat/RobustAR}{https://github.com/diwangstat/RobustAR}.

\section{Constrained Yule--Walker Estimation}\label{sec:2}

\subsection{Yule--Walker equation}\label{sec:2.1}

Consider a general mean-zero and covariance stationary process $\{\bm{y}_t\}_{t=1}^T$, where $\bm{y}_t\in\mathbb{R}^p$. In many applications, it is common to model the time series data and predict their future values using a linear VAR model.
The VAR($d$) model in \eqref{eq:VAR} can be rewritten as 
\begin{equation}\label{eq:VAR2}
    \bm{y}_t=\bm{A}\bm{x}_t+\bbm{\varepsilon}_t,
\end{equation}
where $\bm{x}_t=(\bm{y}_{t-1}^\top,\dots,\bm{y}_{t-d}^\top)^\top\in\mathbb{R}^{pd}$ is the predictor vector and $\bm{A}=[\bm{A}_1,\bm{A}_2,\cdots,\bm{A}_d]\in\mathbb{R}^{p\times pd}$ is the combined parameter matrix.
Throughout this paper, we consider that the lag order $d$ is fixed. 
Given the stationarity of $\bm{y}_t$, the parameter matrix of interest $\bm{A}^*\in\mathbb{R}^{p\times pd}$ is defined as the minimizer of the risk function
\begin{equation}
    \label{eq:Astar}
    \bm{A}^*:=\underset{\bm{A}\in\mathbb{R}^{p\times pd}}{\arg\min}~\mathbb{E}\left[\|\bm{y}_t-\bm{A}\bm{x}_t\|_2^2\right].
\end{equation}

For any integer $\ell$, denote the lag-$\ell$ autocovariance matrix of $\bm{y}_t$ by  $\bm{\Gamma}_\ell=\mathbb{E}[\bm{y}_t\bm{y}_{t-\ell}^\top]$.
By simple algebra,
\begin{equation}
    \begin{split}
        \|\bm{y}_t-\bm{A}\bm{x}_t\|_2^2=&\|\bm{y}_t-\textup{vec}(\bm{x}_t^\top\bm{A}^\top)\|_2^2 =\|\bm{y}_t-(\bm{I}_p\otimes\bm{x}_t^\top)\textup{vec}(\bm{A}^\top)\|_2^2\\
        =&~\bm{y}_t^\top\bm{y}_t-2\bm{y}_t^\top(\bm{I}_p\otimes\bm{x}_t^\top)\textup{vec}(\bm{A}^\top)+\textup{vec}(\bm{A}^\top)^\top(\bm{I}_p\otimes\bm{x}_t\bm{x}_t^\top)\textup{vec}(\bm{A}^\top)\\
        =&~\bm{y}_t^\top\bm{y}_t-2\textup{vec}(\bm{x}_t\bm{y}_t^\top)^\top\textup{vec}(\bm{A}^\top)+\textup{vec}(\bm{A}^\top)^\top(\bm{I}_p\otimes\bm{x}_t\bm{x}_t^\top)\textup{vec}(\bm{A}^\top),
    \end{split}
\end{equation}
which implies that $\bm{\Sigma}_1=\bm{A}^*\bm{\Sigma}_0$ or $\bm{A}^*=\bm{\Sigma}_1\bm{\Sigma}_0^{-1}$ if $\bm{\Sigma}_0$ is invertible, where $\bm{\Sigma}_1:=\mathbb{E}[\bm{y}_t\bm{x}_t^\top]=[\bm{\Gamma}_1,\bm{\Gamma}_2,\dots,\bm{\Gamma}_d]$ and
\begin{equation}
    \bm{\Sigma}_0:=\mathbb{E}[\bm{x}_t\bm{x}_t^\top]=
    \begin{bmatrix}
        \bm{\Gamma}_0 & \bm{\Gamma}_1 & \cdots & \bm{\Gamma}_{d-1}\\
        \bm{\Gamma}_1^\top & \bm{\Gamma}_0 & \cdots & \bm{\Gamma}_{d-2}\\
        \vdots & \vdots & \ddots & \vdots\\
        \bm{\Gamma}_{d-1}^\top & \bm{\Gamma}_{d-2}^\top & \cdots & \bm{\Gamma}_0
    \end{bmatrix}.
\end{equation}
This relationship between the parameter matrix $\bm{A}^*$ and autocovariance matrices is well known as the multivariate Yule--Walker equation for VAR($d$) models \citep{lutkepohl2005new,tsay2013multivariate}. 
When the data generating mechanism of $\bm{y}_t$ is not the VAR($d$) process, the Yule--Walker equation still holds if VAR is used as a running model.

The existence of aberrant observations, such as missing values or data contaminated by measurement errors, is ubiquitous in high-dimensional macroeconomic, environmental, and 
genetic data, among many other applications. In a VAR process, when some of the variables are removed as they cannot be observed or measured, the rest of the available variables generally would not follow any finite-order VAR model \citep{lutkepohl2005new}. In addition, if the true signal of the time series follows a VAR process and the observed time series $\bm{y}_t$ is a contaminated version of the true signal plus a measurement error, $\bm{y}_t$ no longer follows a VAR process.
Hence, the VAR model assumption commonly used in real applications is questionable, but the Yule--Walker equation which holds without any data generating mechanism assumption can be utilized to construct estimation methodology for the general covariance stationary time series.

In addition to the sub-Gaussian innovation assumption, another key limitation in the existing literature is the \textit{i.i.d.} assumption for the innovation series $\bbm{\varepsilon}_t$. The conditional heteroskedasticity is often observed in financial time series, and violation of the  homogeneous error assumption can affect the performance of standard estimation methods. Another major advantage of our moment-based methodology is that the \textit{i.i.d.} assumption for $\bbm{\varepsilon}_t$ can be relaxed to a serially uncorrelated, but weakly stationary condition. Specifically, our theoretical analysis allows some serial dependence in $\bbm{\varepsilon}_t$, dependence between $\bbm{\varepsilon}_t$ and $\bm{x}_t$, and conditional heteroskedasticity in $\bbm{\varepsilon}_t$. Note that this setting of innovations includes many important conditional heteroskedasticity models, such as GARCH models \citep{engle1982autoregressive,bollerslev1986generalized} and double AR models \citep{ling2004estimation,zhu2018linear}.

Based on the Yule--Walker equation, we propose a  constrained minimization estimation approach, named the constrained Yule--Walker estimation, for two classes of high-dimensional VAR($d$) models. In Section \ref{sec:2.2}, we consider the VAR($d$) model, where $\bm{A}^*$ in \eqref{eq:Astar} can be approximated by a matrix with some latent low-dimensional structure. In Section \ref{sec:2.3}, we consider the VAR($d$) model whose coefficients are subject to some linear restrictions. These two classes of models are shown to encompass many commonly-used high-dimensional VAR models in practice.

\subsection{Approximately low-dimensional VAR}\label{sec:2.2}

We first consider the situations where the dimension $p$ is relatively large compared to the sample size $T$ and the parameter matrix of interest $\bm{A}^*$ in \eqref{eq:Astar} can be well approximated by a matrix with certain types of low-dimensional structure. The approximately low-dimensional structure, such as weak sparsity and approximate low-rankness, is general and natural in high-dimensional time series modeling.

Based on the Yule--Walker equation, we consider a general class of constrained estimation:
\begin{equation}
    \label{eq:constrainedYW}
    \bm{\widehat{A}} = \underset{\bm{A}\in\mathbb{R}^{p\times pd}}{\arg\min}~\mathcal{R}(\bm{A})~~~\text{such that}~~\mathcal{R}^*(\bm{\widetilde{\Sigma}}_1-\bm{A}\bm{\widetilde{\Sigma}}_0)\leq\lambda,
\end{equation}
where $\mathcal{R}(\cdot)$ is a matrix norm as the regularization function, $\mathcal{R}^*(\cdot)$ is its dual norm as the constraint function, $\lambda$ is the constraint parameter, and $\bm{\widetilde{\Sigma}}_1$ and $\bm{\widetilde{\Sigma}}_0$ are robust autocovariance estimators to be specified later in Section \ref{sec:3}. When $\lambda$ is sufficiently small, the constraint function $\mathcal{R}^*(\cdot)$ guarantees that the sample version of the Yule--Walker equation holds roughly, and the regularizer $\mathcal{R}(\cdot)$ induces the low-dimensional structure and improves the estimation efficiency.

Following the framework for high-dimensional analysis in \citet{negahban2012unified}, we consider a decomposable regularizer $\mathcal{R}(\cdot)$. For a generic low-dimensional structure, define $\mathcal{M}\subseteq\overline{\mathcal{M}}\subset\mathbb{R}^{p\times pd}$, where $\mathcal{M}$ is referred to as the model subspace to represent the specific model constraints; for instance, it can be the subspace of low-rank matrices (see Example \ref{ex:reduced-rank}). The orthogonal complement of $\overline{\mathcal{M}}$, denoted by $\overline{\mathcal{M}}^\perp$, is the associated perturbation subspace to capture the deviation from the model subspace and is adopted to measure the error of approximation to the low-dimensional structure. We assume that the true value $\bm{A}^*$ can be decomposed into its projections onto $\mathcal{M}$ and $\overline{\mathcal{M}}^\perp$, i.e., $\bm{A}^*=\bm{A}^*_{\mathcal{M}}+\bm{A}^*_{\overline{\mathcal{M}}^\perp}$.

We call a regularizer $\mathcal{R}(\cdot)$ decomposable with respect to a pair of subspaces $(\mathcal{M},\overline{\mathcal{M}}^\perp)$, if for any $\bm{W}_1\in\mathcal{M}$ and $\bm{W}_2\in\overline{\mathcal{M}}^\perp$,
\begin{equation}
    \mathcal{R}(\bm{W}_1+\bm{W}_2) = \mathcal{R}(\bm{W}_1) + \mathcal{R}(\bm{W}_2).
\end{equation}
Many of the commonly-used convex regularizers, such as the nuclear norm for low-rankness, are shown to be decomposable; see \citet{negahban2012unified} for more details of decomposable regularizers. To measure the magnitude of the low-dimensional structure, define by $\phi(\overline{\mathcal{M}})$ the associated constant such that $\mathcal{R}(\bm{W})\leq \phi(\overline{\mathcal{M}})\mathcal{R}^*(\bm{W})$ for any $\bm{W}\in\overline{\mathcal{M}}$. In our theoretical analysis, we consider another auxiliary matrix norm $\mathcal{C}(\cdot)$ such that $(\mathcal{R}(\cdot),\mathcal{R}^*(\cdot),\mathcal{C}(\cdot))$ satisfy that
$\mathcal{R}^*(\bm{W}_1\bm{W}_2)\leq \mathcal{R}^*(\bm{W}_1)\mathcal{C}(\bm{W}_2)$ and $\mathcal{R}^*(\bm{W}_1\bm{W}_2)\leq \mathcal{R}(\bm{W}_1)\mathcal{R}^*(\bm{W}_2)$ for any symmetric matrix $\bm{W}_2$ and any compatible matrix $\bm{W}_1$.
The following example illustrates the suitable choice of $\mathcal{R}(\cdot)$, $\mathcal{R}^*(\cdot)$, and $(\mathcal{M},\overline{\mathcal{M}}^\perp)$ for a reduced-rank VAR model.

\begin{example}[Reduced-rank VAR]
    \label{ex:reduced-rank}
    The reduced-rank VAR model is an important approach to modeling high-dimensional time series by imposing a low-rank structure on $\bm{A}$; see also \citet{velu2013multivariate}, \citet{basu2019low}, and \citet{wang2019high}. To induce low-rankness, we use the nuclear norm as the regularizer and the operator norm as the constraint function; that is, $\mathcal{R}(\cdot)=\|\cdot\|_{\textup{nuc}}$ and $\mathcal{R}^*(\cdot)=\|\cdot\|_{\textup{op}}$. By the submultiplicative property of the operator norm, $\mathcal{C}(\cdot)=\|\cdot\|_{\textup{op}}$. For the approximately low-rank matrix $\bm{A}^*$, denote by $\mathcal{U}$ and $\mathcal{V}$ the subspace spanned by its leading left and right singular vectors, respectively. Define the model subspace
    \begin{equation}
        \mathcal{M}(\mathcal{U},\mathcal{V})=\{\bm{W}\in\mathbb{R}^{p\times pd}:\text{col}(\bm{W})\subseteq\mathcal{U},~\text{col}(\bm{W}^\top)\subseteq\mathcal{V}\},
    \end{equation}
    and the perturbation space
    \begin{equation}
        \overline{\mathcal{M}}^\perp(\mathcal{U},\mathcal{V})=\{\bm{W}\in\mathbb{R}^{p\times pd}:\text{col}(\bm{W})\perp\mathcal{U},~\text{col}(\bm{W}^\top)\perp\mathcal{V}\}.
    \end{equation}
\end{example}

For the general mean-zero and covariance stationary process $\{\bm{y}_t\}_{t=1}^T$, if we model the data by a VAR($d$) model in \eqref{eq:VAR} and apply the constrained Yule--Walker estimator in \eqref{eq:constrainedYW} with robust autocovariance estimators $\bm{\widetilde{\Sigma}}_0$ and $\bm{\widetilde{\Sigma}}_1$, we have the following upper bounds for the estimation error between the estimated matrix $\bm{\widehat{A}}$ and the true value $\bm{A}^*$ defined in \eqref{eq:Astar}.

\begin{proposition}
    \label{prop:1}
    For the constrained Yule--Walker estimator in \eqref{eq:constrainedYW},
    suppose that $\mathcal{R}^*(\bm{\widetilde{\Sigma}}_0-\bm{\Sigma}_0)\leq\zeta_0$, $\mathcal{R}^*(\bm{\widetilde{\Sigma}}_1-\bm{\Sigma}_1)\leq\zeta_1$, and $\lambda\geq \zeta_0\mathcal{R}(\bm{A}^*)+\zeta_1$. Then, for $(\mathcal{M},\overline{\mathcal{M}}^\perp)$,
    \begin{equation}
        \begin{split}
            \mathcal{R}^*(\bm{\widehat{A}}-\bm{A}^*) &\leq 2\hspace{0.01in}\mathcal{C}(\bm{\Sigma}_0^{-1})\lambda,\\
            \mathcal{R}(\bm{\widehat{A}}-\bm{A}^*) &\leq 4\hspace{0.01in}\mathcal{C}(\bm{\Sigma}_0^{-1})\phi(\overline{\mathcal{M}})\lambda + 2\mathcal{R}(\bm{A}^*_{\overline{\mathcal{M}}^\perp}),\\
            \text{and}~~\|\widehat{\bm{A}}-\bm{A}^*\|_\textup{F}^2 & \leq 8\hspace{0.01in}\mathcal{C}(\bm{\Sigma}_0^{-1})^2\phi(\overline{\mathcal{M}})\lambda^2+4\hspace{0.01in}\mathcal{C}(\bm{\Sigma}_0^{-1})\mathcal{R}(\bm{A}^*_{\overline{\mathcal{M}}^\perp})\lambda.
            \end{split}
    \end{equation}
\end{proposition}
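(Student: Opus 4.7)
The argument follows the template for constrained $M$-estimators with decomposable regularizers of \citet{negahban2012unified}, adapted to the Yule--Walker constraint. The first step is to show that $\bm{A}^*$ is itself feasible for \eqref{eq:constrainedYW}. Using $\bm{\Sigma}_1=\bm{A}^*\bm{\Sigma}_0$, I would write $\widetilde{\bm{\Sigma}}_1 - \bm{A}^*\widetilde{\bm{\Sigma}}_0 = (\widetilde{\bm{\Sigma}}_1-\bm{\Sigma}_1) - \bm{A}^*(\widetilde{\bm{\Sigma}}_0-\bm{\Sigma}_0)$, apply the triangle inequality for $\mathcal{R}^*$, and use the product inequality $\mathcal{R}^*(\bm{W}_1\bm{W}_2)\leq\mathcal{R}(\bm{W}_1)\mathcal{R}^*(\bm{W}_2)$ on the second term (valid since $\widetilde{\bm{\Sigma}}_0-\bm{\Sigma}_0$ is symmetric). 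The hypotheses then give $\mathcal{R}^*(\widetilde{\bm{\Sigma}}_1 - \bm{A}^*\widetilde{\bm{\Sigma}}_0)\leq\zeta_1+\zeta_0\mathcal{R}(\bm{A}^*)\leq\lambda$, and feasibility together with optimality yields $\mathcal{R}(\widehat{\bm{A}})\leq\mathcal{R}(\bm{A}^*)$.

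Next, set $\bm{\Delta}:=\widehat{\bm{A}}-\bm{A}^*$ and bound $\mathcal{R}^*(\bm{\Delta}\bm{\Sigma}_0)$ through the telescoping
\[
\bm{\Delta}\bm{\Sigma}_0 = \widehat{\bm{A}}(\bm{\Sigma}_0-\widetilde{\bm{\Sigma}}_0) + (\widehat{\bm{A}}\widetilde{\bm{\Sigma}}_0-\widetilde{\bm{\Sigma}}_1) + (\widetilde{\bm{\Sigma}}_1-\bm{\Sigma}_1).
\]
The three pieces are controlled by $\zeta_0\mathcal{R}(\widehat{\bm{A}})\leq\zeta_0\mathcal{R}(\bm{A}^*)$ (product inequality plus Step 1), $\lambda$ (feasibility of $\widehat{\bm{A}}$), and $\zeta_1$, respectively, so $\mathcal{R}^*(\bm{\Delta}\bm{\Sigma}_0)\leq 2\lambda$ by the hypothesis on $\lambda$. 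Applying the other product inequality $\mathcal{R}^*(\bm{W}_1\bm{W}_2)\leq\mathcal{R}^*(\bm{W}_1)\mathcal{C}(\bm{W}_2)$ with $\bm{W}_2=\bm{\Sigma}_0^{-1}$ then gives $\mathcal{R}^*(\bm{\Delta})\leq 2\mathcal{C}(\bm{\Sigma}_0^{-1})\lambda$, which is the first conclusion.

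For the $\mathcal{R}$-bound I would invoke decomposability. Writing $\bm{\Delta}=\bm{\Delta}_{\overline{\mathcal{M}}}+\bm{\Delta}_{\overline{\mathcal{M}}^\perp}$ and starting from $\mathcal{R}(\bm{A}^*+\bm{\Delta})\leq\mathcal{R}(\bm{A}^*)$, two triangle inequalities combined with the decomposability identity $\mathcal{R}(\bm{A}^*_{\mathcal{M}}+\bm{\Delta}_{\overline{\mathcal{M}}^\perp})=\mathcal{R}(\bm{A}^*_{\mathcal{M}})+\mathcal{R}(\bm{\Delta}_{\overline{\mathcal{M}}^\perp})$ yield the standard cone inequality $\mathcal{R}(\bm{\Delta}_{\overline{\mathcal{M}}^\perp})\leq\mathcal{R}(\bm{\Delta}_{\overline{\mathcal{M}}})+2\mathcal{R}(\bm{A}^*_{\overline{\mathcal{M}}^\perp})$. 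Combining this with $\mathcal{R}(\bm{\Delta}_{\overline{\mathcal{M}}})\leq\phi(\overline{\mathcal{M}})\mathcal{R}^*(\bm{\Delta}_{\overline{\mathcal{M}}})\leq\phi(\overline{\mathcal{M}})\mathcal{R}^*(\bm{\Delta})$ (the latter non-expansivity holding for the decomposable regularizers considered) and the first conclusion yields the $\mathcal{R}$-estimate. The Frobenius bound then drops out of the H\"older-type inequality $\|\bm{\Delta}\|_\textup{F}^2=\langle\bm{\Delta},\bm{\Delta}\rangle\leq\mathcal{R}(\bm{\Delta})\mathcal{R}^*(\bm{\Delta})$ combined with the two preceding estimates.

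The main delicate point is the second step: the bound $\mathcal{R}^*(\bm{\Delta}\bm{\Sigma}_0)\leq 2\lambda$ must be obtained without a residual $\mathcal{R}(\bm{\Delta})$ term on the right, which is achieved by routing the $(\bm{\Sigma}_0-\widetilde{\bm{\Sigma}}_0)$ perturbation through $\widehat{\bm{A}}$ (whose $\mathcal{R}$-norm is already controlled via Step 1) rather than through $\bm{\Delta}$; otherwise the bound would loop on itself and require a fixed-point argument.
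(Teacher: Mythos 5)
Your proposal is correct and follows essentially the same route as the paper's proof: establish feasibility of $\bm{A}^*$ to get $\mathcal{R}(\widehat{\bm{A}})\leq\mathcal{R}(\bm{A}^*)$, telescope $\bm{\Delta}\bm{\Sigma}_0$ through $\widehat{\bm{A}}$ (exactly the routing you flag as the delicate point) to obtain the $\mathcal{R}^*$ bound via $\mathcal{C}(\bm{\Sigma}_0^{-1})$, then the decomposability cone inequality, the subspace compatibility constant, and the H\"older-type inequality for the Frobenius bound. No gaps.
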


This proposition presents the deterministic estimation error bounds in $\mathcal{R}^*(\cdot)$, $\mathcal{R}(\cdot)$, and $\|\cdot\|_\textup{F}^2$, given that the estimators $\bm{\widetilde{\Sigma}}_0$ and $\bm{\widetilde{\Sigma}}_1$ satisfy some regularity conditions. 
The estimation error bounds are related to $\mathcal{R}(\bm{A}^*)$ and $\mathcal{C}(\bm{\Sigma}_0^{-1})$, and these terms could be bounded or diverge slowly as the dimension $p$ increases; see more discussions for each specific model in Section \ref{sec:3}.
When the true parameter matrix $\bm{A}^*$ admits a low-dimensional approximation rather than an exact structure, $\mathcal{R}(\bm{A}^*_{\overline{\mathcal{M}}^\perp})$ is strictly positive, and the second terms in the second and third upper bounds are related to the approximation error.
In this case, the upper bounds in this proposition hold uniformly for a class of low-dimensional structures with respect to $(\mathcal{M},\overline{\mathcal{M}}^\perp)$, and the class of upper bounds can be minimized by balancing the estimation error on $\overline{\mathcal{M}}$ and the approximation error on $\overline{\mathcal{M}}^\perp$. 

Moreover, in some cases, it is of interest to consider the approximate low dimensionality in a more structured manner. In particular, based on the multi-response nature, the VAR model in \eqref{eq:VAR2} can be split into $p$ sub-models:
\begin{equation}\label{eq:subVAR}
    y_{it}=\bm{a}_i^\top\bm{x}_t+\varepsilon_{it}
\end{equation}
where $\bm{a}_i^\top$ is the $i$-th row of $\bm{A}$, for $i=1,\dots,p$. For these sub-models, as each $\bm{a}_i$ is a $pd$-dimensional vector, we may consider some low-dimensional structures, such as sparsity, on $\bm{a}_i$. Then, we consider the constrained estimation framework for each $\bm{a}_i$:
\begin{equation}\label{eq:split_YW}
    \widehat{\bm{a}}_i=\argmin_{\bm{a}\in\mathbb{R}^{pd}}\mathcal{R}(\bm{a})~~\text{such that}~~\mathcal{R}^*(\widetilde{\bbm{\sigma}}_{1i}-\widetilde{\bm{\Sigma}}_0\bm{a})\leq\lambda,
\end{equation}
where $\mathcal{R}(\cdot)$ is a vector norm, $\mathcal{R}^*(\cdot)$ is its dual norm, and $\widetilde{\bbm{\sigma}}_{1i}^\top$ is the $i$-th row of $\widetilde{\bm{\Sigma}}_1$. For simplicity, we consider the same constraint parameter $\lambda$ for all sub-problems. In this framework, these $p$
sub-problems can be handled in parallel and thus can be solved efficiently.

In an analogous fashion, we consider a pair of subspaces of $\mathbb{R}^{pd}$, $(\mathcal{M}_i,\overline{\mathcal{M}}_i^\perp)$, for each $\bm{a}_i$, where $\bm{a}^*_{i}=(\bm{a}^*_i)_{\mathcal{M}_i}+(\bm{a}^*_i)_{\overline{\mathcal{M}}_i^\perp}$ and $\mathcal{R}(\bm{v}_1+\bm{v}_2)=\mathcal{R}(\bm{v}_1)+\mathcal{R}(\bm{v}_2)$ for any $\bm{v}_1\in\mathcal{M}_i$ and $\bm{v}_2\in\overline{\mathcal{M}}_i^\perp$, for all $i=1,\dots,p$. In addition, define by $\phi(\overline{\mathcal{M}}_i)$ the constant such that $\mathcal{R}(\bm{v})\leq \phi(\overline{\mathcal{M}}_i)\mathcal{R}^*(\bm{v})$ for any $\bm{v}\in\overline{\mathcal{M}}_i$. Moreover, for theoretical analysis, we consider the auxiliary matrix norms $\overline{\mathcal{R}}^*(\cdot)$ and $\mathcal{C}(\cdot)$ satisfying that $\mathcal{R}^*(\bm{W}\bm{v})\leq\overline{\mathcal{R}}^*(\bm{W})\mathcal{R}(\bm{v})$ and $\mathcal{R}^*(\bm{W}^\top\bm{v})\leq\mathcal{C}(\bm{W})\mathcal{R}^*(\bm{v})$ for any compatible matrix $\bm{W}$ and vector $\bm{v}$. The following example shows that the $\ell_1$ regularization can be applied to the sparse VAR model.

\begin{example}[Sparse VAR]
    \label{ex:sparse}
    The sparse VAR model assumes that each $\bm{a}_i$ is strictly or weakly sparse, so that each response variable $y_{it}$ is only related to a subset of lagged values $y_{j,t-\ell}$. Though the overall sparsity in $\bm{A}$ has been considered in many existing literature, such as \citet{basu2015regularized} and \citet{kock2015oracle}, the sparsity structure in rows seems more natural in the time series context. For example, in the diagonal VAR(1) model, the total degree of sparsity grows with $p$, while the sparsity level in each row remains fixed. To induce sparsity in each row, as in \citet{han2015direct}, we consider $\mathcal{R}(\cdot)=\|\cdot\|_{1}$, $\mathcal{R}^*(\cdot)=\|\cdot\|_{\infty}$, $\overline{\mathcal{R}}^*(\cdot)=\|\cdot\|_{\infty}$, and $\mathcal{C}(\cdot)=\|\cdot\|_{1,\infty}$. For the weakly sparse $\bm{a}_i$, $i=1,\dots,p$, denote by $S_i\subset\{j:1\leq j\leq pd\}$ any index set corresponding to those coefficients significantly distant from zero in $\bm{a}_i$. For each $i$, define the model subspace associated with the chosen index set $S_i$ as
    \begin{equation}
        \mathcal{M}_i(S_i)=\{\bm{v}\in\mathbb{R}^{pd}:\bm{v}_{j}=0~\text{for all}~j\notin S_i\},
    \end{equation}
    and the perturbation subspace
    \begin{equation}
        \overline{\mathcal{M}}_i^\perp(S_i)=\{\bm{v}\in\mathbb{R}^{pd}:\bm{v}_{j}=0~\text{for all}~j\in S_i\}.
    \end{equation}
\end{example}

For the general mean-zero and covariance stationary process $\{\bm{y}_t\}_{t=1}^T$, if we separate the VAR($d$) model to $p$ sub-problems in \eqref{eq:subVAR} and apply the separated constrained Yule--Walker estimator in \eqref{eq:split_YW} with robust autocovariance estimators $\bm{\widetilde{\Sigma}}_0$ and $\widetilde{\bbm{\sigma}}_{1i}$, we have the following upper bounds for the estimation error between $\bm{\widehat{a}}_i$ and the true value $\bm{a}_i^*$.

\begin{proposition}
    \label{prop:2}
    For the separated Yule--Walker estimators in \eqref{eq:split_YW}, suppose that $\overline{\mathcal{R}}^*(\widetilde{\bm{\Sigma}}_0-\bm{\Sigma}_0)\leq\zeta_0$, $\mathcal{R}^*(\widetilde{\bbm{\sigma}}_{1i}-\bbm{\sigma}_{1i})\leq\zeta_{1i}$, and $\lambda\geq\max_{1\leq i\leq p}[\zeta_0\mathcal{R}(\bm{a}_i^*)+\zeta_{1i}]$. Then, for $(\mathcal{M}_i,\overline{\mathcal{M}}_i^\perp)$,
    \begin{equation}
        \begin{split}
            \mathcal{R}^*(\widehat{\bm{a}}_i-\bm{a}_i^*) & \leq 2\hspace{0.01in}\mathcal{C}(\bm{\Sigma}_0^{-1})\lambda,\\
            \mathcal{R}(\widehat{\bm{a}}_i-\bm{a}_i^*) & \leq 4\hspace{0.01in}\mathcal{C}(\bm{\Sigma}_0^{-1})\phi(\overline{\mathcal{M}}_i)\lambda+2\mathcal{R}(\bm{A}^*_{\overline{\mathcal{M}}_i^\perp}),\\
            \text{and}~~\|\widehat{\bm{a}}_i-\bm{a}_i^*\|_2^2 & \leq 8\hspace{0.01in}\mathcal{C}(\bm{\Sigma}_0^{-1})^2\phi(\overline{\mathcal{M}}_i)\lambda^2 + 4\hspace{0.01in}\mathcal{C}(\bm{\Sigma}_0^{-1})\mathcal{R}(\bm{A}^*_{\overline{\mathcal{M}}_i^\perp})\lambda
        \end{split}
    \end{equation}
    for $i=1,\dots,p$.
\end{proposition}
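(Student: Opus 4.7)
The proposal is to mimic the proof of Proposition 1 row by row, exploiting the fact that Yule--Walker equation $\bbm{\sigma}_{1i} = \bm{\Sigma}_0 \bm{a}_i^*$ holds for each $i$, and then to carry the deterministic arguments through with vector norms in place of matrix norms.

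First, I would establish that $\bm{a}_i^*$ is feasible for the program in \eqref{eq:split_YW}. Using the Yule--Walker relation and the triangle inequality,
\begin{equation}
\mathcal{R}^*(\widetilde{\bbm{\sigma}}_{1i} - \widetilde{\bm{\Sigma}}_0 \bm{a}_i^*) \leq \mathcal{R}^*(\widetilde{\bbm{\sigma}}_{1i} - \bbm{\sigma}_{1i}) + \mathcal{R}^*\big((\bm{\Sigma}_0 - \widetilde{\bm{\Sigma}}_0) \bm{a}_i^*\big) \leq \zeta_{1i} + \overline{\mathcal{R}}^*(\widetilde{\bm{\Sigma}}_0 - \bm{\Sigma}_0) \mathcal{R}(\bm{a}_i^*) \leq \zeta_{1i} + \zeta_0 \mathcal{R}(\bm{a}_i^*) \leq \lambda,
\end{equation}
where the second inequality uses the auxiliary-norm relation $\mathcal{R}^*(\bm{W}\bm{v}) \leq \overline{\mathcal{R}}^*(\bm{W}) \mathcal{R}(\bm{v})$. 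Optimality of $\widehat{\bm{a}}_i$ then yields $\mathcal{R}(\widehat{\bm{a}}_i) \leq \mathcal{R}(\bm{a}_i^*)$.

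Next, I would derive the $\mathcal{R}^*$ bound. Writing $\widehat{\bm{a}}_i - \bm{a}_i^* = \bm{\Sigma}_0^{-1}\bm{\Sigma}_0(\widehat{\bm{a}}_i - \bm{a}_i^*)$ and applying symmetry of $\bm{\Sigma}_0^{-1}$ together with $\mathcal{R}^*(\bm{W}^\top \bm{v}) \leq \mathcal{C}(\bm{W}) \mathcal{R}^*(\bm{v})$, I get $\mathcal{R}^*(\widehat{\bm{a}}_i - \bm{a}_i^*) \leq \mathcal{C}(\bm{\Sigma}_0^{-1}) \mathcal{R}^*(\bm{\Sigma}_0 \widehat{\bm{a}}_i - \bbm{\sigma}_{1i})$. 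A three-term split
\begin{equation}
\bm{\Sigma}_0 \widehat{\bm{a}}_i - \bbm{\sigma}_{1i} = (\bm{\Sigma}_0 - \widetilde{\bm{\Sigma}}_0)\widehat{\bm{a}}_i + (\widetilde{\bm{\Sigma}}_0 \widehat{\bm{a}}_i - \widetilde{\bbm{\sigma}}_{1i}) + (\widetilde{\bbm{\sigma}}_{1i} - \bbm{\sigma}_{1i}),
\end{equation}
combined with the feasibility of $\widehat{\bm{a}}_i$, the bound $\mathcal{R}(\widehat{\bm{a}}_i) \leq \mathcal{R}(\bm{a}_i^*)$, and the assumption on $\lambda$, gives $\mathcal{R}^*(\bm{\Sigma}_0 \widehat{\bm{a}}_i - \bbm{\sigma}_{1i}) \leq 2\lambda$ and hence the first claimed inequality.

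For the $\mathcal{R}$ bound I would invoke decomposability in the Negahban--Wainwright fashion: letting $\bm{\Delta} = \widehat{\bm{a}}_i - \bm{a}_i^*$ and using $\mathcal{R}(\widehat{\bm{a}}_i) \leq \mathcal{R}(\bm{a}_i^*)$, a standard manipulation yields the cone-type inequality
\begin{equation}
\mathcal{R}(\bm{\Delta}_{\overline{\mathcal{M}}_i^\perp}) \leq \mathcal{R}(\bm{\Delta}_{\overline{\mathcal{M}}_i}) + 2\mathcal{R}\big((\bm{a}_i^*)_{\overline{\mathcal{M}}_i^\perp}\big),
\end{equation}
so that $\mathcal{R}(\bm{\Delta}) \leq 2\mathcal{R}(\bm{\Delta}_{\overline{\mathcal{M}}_i}) + 2\mathcal{R}((\bm{a}_i^*)_{\overline{\mathcal{M}}_i^\perp})$. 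The defining property $\mathcal{R}(\bm{v}) \leq \phi(\overline{\mathcal{M}}_i) \mathcal{R}^*(\bm{v})$ for $\bm{v} \in \overline{\mathcal{M}}_i$, together with the monotonicity $\mathcal{R}^*(\bm{\Delta}_{\overline{\mathcal{M}}_i}) \leq \mathcal{R}^*(\bm{\Delta})$ which holds in the concrete cases of interest (e.g.\ the $\ell_\infty$ norm under coordinate projection in Example~\ref{ex:sparse}), then delivers the second bound. Finally the third bound follows from $\|\bm{\Delta}\|_2^2 \leq \mathcal{R}^*(\bm{\Delta}) \mathcal{R}(\bm{\Delta})$, which holds whenever $(\mathcal{R}, \mathcal{R}^*)$ is a norm/dual-norm pair dominating the Euclidean norm in this sense (as for $\ell_1/\ell_\infty$).

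The main obstacle is bookkeeping rather than a genuinely new idea: the proof is essentially Proposition 1 with vectors. The one place that needs care is the step $\mathcal{R}^*(\bm{\Delta}_{\overline{\mathcal{M}}_i}) \leq \mathcal{R}^*(\bm{\Delta})$ and the Cauchy--Schwarz-type pairing $\|\bm{\Delta}\|_2^2 \leq \mathcal{R}^*(\bm{\Delta})\mathcal{R}(\bm{\Delta})$, which hold for the norm pairs covered by Example~\ref{ex:sparse} but would in principle need to be verified for each future application of the framework.
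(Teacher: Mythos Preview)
Your proposal is correct and matches the paper's own proof essentially line for line: feasibility of $\bm{a}_i^*$, the three-term split to control $\mathcal{R}^*(\bm{\Sigma}_0\widehat{\bm{a}}_i-\bbm{\sigma}_{1i})$, the decomposability/cone argument for the $\mathcal{R}$ bound, and duality for the $\ell_2$ bound. Your caveat about $\mathcal{R}^*(\bm{\Delta}_{\overline{\mathcal{M}}_i})\leq\mathcal{R}^*(\bm{\Delta})$ is in fact more careful than the paper, which uses this step without comment; the inequality $\|\bm{\Delta}\|_2^2\leq\mathcal{R}(\bm{\Delta})\mathcal{R}^*(\bm{\Delta})$, on the other hand, holds for any norm/dual-norm pair since $\|\bm{\Delta}\|_2^2=\langle\bm{\Delta},\bm{\Delta}\rangle$.
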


For the approximately low-rank and sparse VAR models discussed in Examples \ref{ex:reduced-rank} and \ref{ex:sparse}, we can obtain a sharp upper bound by appropriately selecting the matrix rank or the degree of sparsity. More concrete results are given in Section \ref{sec:3}.

\subsection{Linear-restricted VAR}\label{sec:2.3}

In modern high-dimensional time series modeling, domain knowledge and prior information are commonly available in addition to the time series data. In many such cases, prior information can be formulated as linear restrictions on the parameter matrix, and in this subsection, we apply the constrained Yule--Walker estimation approach to the high-dimensional VAR models with linear restrictions.

As discussed in \citet{tsay2013multivariate}, a general linear parameter restriction can be expressed as $\textup{vec}(\bm{A}^\top)=\bm{C}\bbm{\theta}+\bbm{\gamma}$,
where $\bm{C}\in\mathbb{R}^{p^2d\times r}$ is a prespecified constraint matrix, $\bbm{\gamma}\in\mathbb{R}^{p^2d}$ is a known constant vector, and $\bbm{\theta}\in\mathbb{R}^{r}$ is the unknown parameter vector.
Note that the linear constraint representation is not unique; that is, for any nonsingular matrix $\bm{O}\in\mathbb{R}^{r\times r}$, $\bm{C}\bbm{\theta}=(\bm{C}\bm{O})(\bm{O}^{-1}\bbm{\theta})$. Thus, we require that $\bm{C}$ is a tall orthonormal matrix, i.e., $\bm{C}^\top\bm{C}=\bm{I}_r$. Otherwise, we can apply singular value decomposition or QR decomposition to $\bm{C}$.

For simplicity, we consider the case where $\bbm{\gamma}=\bm{0}$. In this case, we can rewrite the VAR($d$) model in \eqref{eq:VAR2} to
\begin{equation}
    \label{eq:linearVAR}
    \bm{y}_t=(\bm{I}_p\otimes\bm{x}_t^\top)\bm{C}\bbm{\theta}+\bbm{\varepsilon}_t.
\end{equation}
Similarly to the discussions in Section \ref{sec:2.1}, for the general stationary time series $\bm{y}_t$ which may not strictly follow a linear-restricted VAR process, the parameter of interest $\bbm{\theta}^*$ is defined as
\begin{equation}
    \label{eq:thetaStar}
    \bbm{\theta}^*=\argmin_{\bbm{\theta}\in\mathbb{R}^r}\mathbb{E}\left[\|\bm{y}_t-(\bm{I}_p\otimes\bm{x}_t^\top)\bm{C}\bbm{\theta}\|_2^2\right].
\end{equation}
It implies the linear-restricted version of the Yule--Walker equation
\begin{equation}
    \bbm{\theta}^*=[\bm{C}^\top(\bm{I}_p\otimes\bm{\Sigma}_0)\bm{C}]^{-1}\bm{C}^\top\textup{vec}(\bm{\Sigma}_1^\top)~~\text{or}~~\bm{C}^\top\textup{vec}(\bm{\Sigma}_1^\top)=\bm{C}^\top(\bm{I}_p\otimes\bm{\Sigma}_0)\bm{C}\bbm{\theta}^*,
\end{equation}
provided that $\bm{C}^\top(\bm{I}_p\otimes\bm{\Sigma}_0)\bm{C}$ is invertible. 

Let $\bm{\Omega}=\bm{C}^\top(\bm{I}_p\otimes\bm{\Sigma}_0)\bm{C}\in\mathbb{R}^{r\times r}$ and $\bbm{\omega}=\bm{C}^\top\textup{vec}(\bm{\Sigma}_1^\top)\in\mathbb{R}^r$. For $\bm{\Omega}$ and $\bbm{\omega}$, suppose that we have robust estimators $\widetilde{\bm{\Omega}}$ and $\widetilde{\bbm{\omega}}$. Then, we consider the constrained Yule--Walker estimator
\begin{equation}
    \label{eq:YuleWalker2}
    \bbm{\widehat{\theta}}=\arg\min\|\bbm{\theta}\|_\infty
    ~~~\textup{subject to}~~\|\bm{\widetilde{\Omega}}\bbm{\theta}-\bbm{\widetilde{\omega}}\|_\infty\leq\lambda,
\end{equation}
where $\lambda$ is the constraint parameter and the $\ell_\infty$ norm provides both element-wise regularization on $\bbm{\theta}$ and element-wise constraint on the sample version of Yule--Walker equation. Decompose the restriction matrix $\bm{C}=[\bm{B}_1^\top,\dots,\bm{B}_p^\top]^\top$, where each $\bm{B}_i$ is a $pd\times r$ matrix. Let $\bm{B}=[\bm{B}_1,\dots,\bm{B}_p]\in\mathbb{R}^{pd\times pr}$ and then the estimator of $\bm{A}$ is $\bm{\widehat{A}}=(\bm{I}_p\otimes\bbm{\widehat{\theta}})^\top\bm{B}^\top$.

The linear-restricted model encompasses several important recent developments in high-dimensional vector autoregression.

\begin{example}[Banded VAR]
    \label{ex:banded}
    The banded VAR(1) proposed by \citet{guo2016high} has the following banded coefficient structure:
    \begin{equation}
        \bm{A}_{ij}=0,~~\text{for all }|i-j|>k_0,
    \end{equation}
    where $k_0$ is called the bandwidth parameter. 
    The banded structure is a special sparse structure, so we can formulate it to a linear constraint $\textup{vec}(\bm{A}^\top)=\bm{C}\bbm{\theta}$, where each column of $\bm{C}$ is a coordinate vector corresponding to one nonzero index in $\textup{vec}(\bm{A}^\top)$.
\end{example}

\begin{example}[Network VAR] \label{ex:network}
    \citet{zhu2017network} proposed the network VAR model for the network time series data $\bm{y}_t$, which has an observable network structure with the adjacency matrix $\bm{W}$. The network VAR model assumes the form
    \begin{equation}
        \bm{y}_t=\beta_1\bm{y}_{t-1}+\beta_2\bm{W}\bm{y}_{t-1}+\bbm{\varepsilon}_t.
    \end{equation}
    In other words, it is a VAR(1) model with $\bm{A}=\beta_1\bm{I}_p+\beta_2\bm{W}$, and the associated coefficient vector can be written as
    $\textup{vec}(\bm{A}^\top)=\bm{C}\bbm{\theta}$, where $\bm{C}=(p^{-1/2}\textup{vec}(\bm{I}_p),\|\bm{W}\|_\textup{F}^{-1}\textup{vec}(\bm{W}^\top))$, and $\bbm{\theta}=(p^{1/2}\beta_1,\|\bm{W}\|_{\textup{F}}\beta_2)^\top$. In this parameterization, $\bm{C}$ is an orthonormal matrix.
\end{example}

For a general zero-mean and covariance stationary time series $\{\bm{y}_t\}_{t=1}^T$, if we model the data by the linear-restricted VAR($d$) model in \eqref{eq:linearVAR} and apply the linear-restricted Yule--Walker estimator in \eqref{eq:YuleWalker2} with the robust estimators $\bm{\widetilde{\Omega}}$ and $\bbm{\widetilde{\omega}}$, we have the upper bounds for the estimation error between the estimated parameter vector $\bbm{\widehat{\theta}}$ and the true vector $\bbm{\theta}^*$ defined in \eqref{eq:thetaStar}.

\begin{proposition} 
    \label{prop:3}
    For the constrained Yule--Walker estimator in \eqref{eq:YuleWalker2}, suppose that $\|\bm{\widetilde{\Omega}}-\bm{\Omega}\|_{1,\infty}\leq\zeta_1$,
    $\|\bbm{\widetilde{\omega}}-\bbm{\omega}\|_\infty\leq\zeta_2$, and $\lambda\geq \zeta_1\|\bbm{\theta}^*\|_\infty+\zeta_2$. Then,
    \begin{equation}
        \|\bbm{\widehat{\theta}}-\bbm{\theta}^*\|_\infty\leq 2\lambda\|\bm{\Omega}^{-1}\|_{1,\infty}.
    \end{equation}
\end{proposition}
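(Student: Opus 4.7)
My plan is a classical Dantzig--selector-style argument organized in two steps: (i) show that the true parameter $\bbm{\theta}^*$ is feasible for the constraint in \eqref{eq:YuleWalker2}, and (ii) translate the $\bm{\widetilde{\Omega}}$-level feasibility into an $\bm{\Omega}$-level error bound on $\bbm{\widehat{\theta}}-\bbm{\theta}^*$ and then invert $\bm{\Omega}$.

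For feasibility, I would start from the population Yule--Walker identity $\bm{\Omega}\bbm{\theta}^*=\bbm{\omega}$ derived just above \eqref{eq:YuleWalker2}, and decompose
\[
\bm{\widetilde{\Omega}}\bbm{\theta}^*-\bbm{\widetilde{\omega}}\;=\;(\bm{\widetilde{\Omega}}-\bm{\Omega})\bbm{\theta}^*\;+\;(\bbm{\omega}-\bbm{\widetilde{\omega}}).
\]
Taking the $\ell_\infty$ norm, I would apply the triangle inequality together with the elementary bound $\|\bm{M}\bm{v}\|_\infty\leq\|\bm{M}\|_{1,\infty}\|\bm{v}\|_\infty$, which is valid whenever $\bm{M}$ is symmetric (here both $\bm{\Omega}$ and $\bm{\widetilde{\Omega}}$ inherit symmetry from $\bm{\Sigma}_0$ and its robust estimator, so $\bm{\widetilde{\Omega}}-\bm{\Omega}$ is symmetric). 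Combining this with the assumed deviations $\zeta_1,\zeta_2$ and the hypothesis $\lambda\geq\zeta_1\|\bbm{\theta}^*\|_\infty+\zeta_2$ yields $\|\bm{\widetilde{\Omega}}\bbm{\theta}^*-\bbm{\widetilde{\omega}}\|_\infty\leq\lambda$, so $\bbm{\theta}^*$ is feasible. The optimality of $\bbm{\widehat{\theta}}$ in \eqref{eq:YuleWalker2} then gives $\|\bbm{\widehat{\theta}}\|_\infty\leq\|\bbm{\theta}^*\|_\infty$.

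For the error bound, I would write
\[
\bm{\Omega}(\bbm{\widehat{\theta}}-\bbm{\theta}^*)\;=\;\bm{\Omega}\bbm{\widehat{\theta}}-\bbm{\omega}\;=\;(\bm{\Omega}-\bm{\widetilde{\Omega}})\bbm{\widehat{\theta}}\;+\;(\bm{\widetilde{\Omega}}\bbm{\widehat{\theta}}-\bbm{\widetilde{\omega}})\;+\;(\bbm{\widetilde{\omega}}-\bbm{\omega})
\]
and control the three pieces in $\ell_\infty$: the first by $\zeta_1\|\bbm{\widehat{\theta}}\|_\infty\leq\zeta_1\|\bbm{\theta}^*\|_\infty$ via the same row-sum inequality, the second by $\lambda$ via the feasibility of $\bbm{\widehat{\theta}}$, and the third by $\zeta_2$. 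Summing gives $\|\bm{\Omega}(\bbm{\widehat{\theta}}-\bbm{\theta}^*)\|_\infty\leq\zeta_1\|\bbm{\theta}^*\|_\infty+\lambda+\zeta_2\leq 2\lambda$. Finally, writing $\bbm{\widehat{\theta}}-\bbm{\theta}^*=\bm{\Omega}^{-1}\bm{\Omega}(\bbm{\widehat{\theta}}-\bbm{\theta}^*)$ and once more applying $\|\bm{\Omega}^{-1}\bm{u}\|_\infty\leq\|\bm{\Omega}^{-1}\|_{1,\infty}\|\bm{u}\|_\infty$ (valid since $\bm{\Omega}^{-1}$ is symmetric) delivers the claimed bound $2\lambda\|\bm{\Omega}^{-1}\|_{1,\infty}$.

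The argument is almost pure bookkeeping, so there is no real analytic obstacle. The one subtlety worth flagging is the norm pairing: the paper defines $\|\cdot\|_{1,\infty}$ as the maximum $\ell_1$ \emph{column} norm, whereas the usable operator-type inequality $\|\bm{M}\bm{v}\|_\infty\leq(\max_i\sum_j|M_{ij}|)\|\bm{v}\|_\infty$ uses the maximum $\ell_1$ \emph{row} norm. Symmetry of $\bm{\widetilde{\Omega}}-\bm{\Omega}$ and of $\bm{\Omega}^{-1}$ makes the two quantities coincide, so no mismatch actually arises; this is the one implicit fact the proof relies on.
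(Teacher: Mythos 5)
Your proposal is correct and follows essentially the same route as the paper's own proof: the same feasibility argument for $\bbm{\theta}^*$ via the decomposition $(\bm{\widetilde{\Omega}}-\bm{\Omega})\bbm{\theta}^*+(\bbm{\omega}-\bbm{\widetilde{\omega}})$, and the same three-term split of $\bm{\Omega}\bbm{\widehat{\theta}}-\bbm{\omega}$ followed by inverting $\bm{\Omega}$ with the $\|\cdot\|_{1,\infty}$ bound. Your remark on the row/column norm pairing being reconciled by symmetry is exactly the point the paper also invokes, so nothing is missing.
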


This proposition presents the $\ell_{\infty}$ norm estimation error bounds of the low-dimensional parameter $\bbm{\widehat{\theta}}$, which is a new result for the linear-restricted VAR models. Similarly to Proposition \ref{prop:1}, the upper bound is proportional to $\|\bm{\Omega}^{-1}\|_{1,\infty}$ that could be related to the dimension $p$, depending on the specific linear restrictions. The $\ell_\infty$ upper bound directly implies that $\|\widehat{\bbm{\theta}}-\bbm{\theta}^*\|_2\leq2\sqrt{r}\lambda\|\bm{\Omega}^{-1}\|_{1,\infty}$, and some sharper bounds may be obtained based on the the special structure of $\bm{C}$. More concrete results and discussions are provided in Section \ref{sec:3.3}.

\begin{remark}
  For the upper bound result in Proposition \ref{prop:3}, the terms $\|\bbm{\theta}^*\|_\infty$ and $\|\bm{\Omega}^{-1}\|_{1,\infty}$ may vary or even diverge with the dimension $p$. Indeed, the term $\|\bbm{\theta}^*\|_\infty$ for the network VAR model in Example \ref{ex:banded} naturally diverges at a rate $\sqrt{p}$. The explicit rate of $\lambda\|\bm{\Omega}^{-1}\|_{1,\infty}$ depends on the specific linear restriction structure and further assumptions. More discussions about these terms are given in Section \ref{sec:3.3} for some specific models.
\end{remark}

The deterministic upper bounds in Propositions \ref{prop:1}-\ref{prop:3} imply that the performance of parameter estimation of the constrained Yule--Walker estimators hinges on the accuracy of autocovariance matrix estimation. It suffices to find reliable robust autocovariance estimators, which will be introduced in the next section.

\section{Robust Autocovariance Estimation}\label{sec:3}

\subsection{Element truncation estimator}\label{sec:3.1}

\citet{fan2016shrinkage} and \citet{ke2019user} proposed a simple robust estimation approach via appropriate truncation on data and showed that the truncated covariance estimator can achieve the optimal rate as that under the sub-Gaussian distribution for independent samples. We first apply the truncation method to estimate the elements of autocovariance matrices.

We consider the element-wise truncated data $\bm{y}_t^\text{E}(\tau)=(y_{1t}(\tau),y_{2t}(\tau),\dots,y_{pt}(\tau))^\top$, where $y_{it}(\tau)=\text{sign}(y_{it})(\tau\wedge|y_{it}|)$, for $1\leq i\leq p$, and $\tau>0$ is the truncation parameter. Based on this truncation scheme, $\bm{\Gamma}_\ell$ can be estimated by $\bm{\widetilde{\Gamma}}_{\ell}^\text{E}(\tau)=T^{-1}\sum_{t=1}^T\bm{y}_t^\text{E}(\tau)\bm{y}_{t-\ell}^\text{E}(\tau)^\top$, for any integer $\ell\geq0$. The corresponding autocovariance estimators are
\begin{equation}
    \widetilde{\bm{\Sigma}}_1^{\text{E}}(\tau)=[\widetilde{\bm{\Gamma}}^{\text{E}}_1(\tau),\widetilde{\bm{\Gamma}}^{\text{E}}_2(\tau),\dots,\widetilde{\bm{\Gamma}}^{\text{E}}_d(\tau)],
\end{equation}and
\begin{equation}
    \widetilde{\bm{\Sigma}}_0^{\text{E}}(\tau) = 
    \begin{bmatrix}
        \widetilde{\bm{\Gamma}}^{\text{E}}_0(\tau) & \widetilde{\bm{\Gamma}}^{\text{E}}_1(\tau) & \cdots & \widetilde{\bm{\Gamma}}^{\text{E}}_{d-1}(\tau)\\
        \widetilde{\bm{\Gamma}}^{\text{E}}_{-1}(\tau) & \widetilde{\bm{\Gamma}}^{\text{E}}_0(\tau) & \cdots & \widetilde{\bm{\Gamma}}^{\text{E}}_{d-2}(\tau)\\
        \vdots & \vdots & \ddots & \vdots\\
        \widetilde{\bm{\Gamma}}^{\text{E}}_{-d+1}(\tau) & \widetilde{\bm{\Gamma}}^{\text{E}}_{-d}(\tau) & \cdots & \widetilde{\bm{\Gamma}}^{\text{E}}_0(\tau)
    \end{bmatrix},
\end{equation}
where $\widetilde{\bm{\Gamma}}^{\text{E}}_{-j}(\tau)=\widetilde{\bm{\Gamma}}^{\text{E}}_j(\tau)^\top$, for $j\geq1$. The element-wise truncation can control the deviation of $y_{it}y_{jt}$ or $y_{it}y_{j,t-\ell}$ and the truncation parameter $\tau$ allows us to balance the tradeoff between the truncation bias and robustness.

Note that no data generating mechanism assumption is imposed, and we adopt the $\alpha$-mixing condition to quantify the serial dependency. For any stochastic process $\{\bm{y}_t\}_{t=-\infty}^\infty$, the lag-$\ell$ $\alpha$-mixing dependence coefficient is defined as
$\alpha(\ell):=\sup_{s\in\mathbb{Z}}\alpha(\{\bm{y}_t\}_{t=-\infty}^s,\{\bm{y}_t\}_{t=s+\ell}^\infty)$, where
\begin{equation}
  \alpha(\{\bm{y}_t\}_{t=-\infty}^s,\{\bm{y}_t\}_{t=r}^\infty)=\sup|\mathbb{P}(A\cap B)-\mathbb{P}(A)\mathbb{P}(B)|,
\end{equation}
the supremum being taken over all events $A\in\sigma(\{\bm{y}_t\}_{t=-\infty}^s)$ and $B\in\sigma(\{\bm{y}_t\}_{t=r}^\infty)$, and $\sigma(\cdot)$ is the sigma field generated by the process. To derive the theoretical guarantees of the autocovariance estimation, we have the following assumptions.

\begin{assumption}\label{asmp:mixing}
    The process $\{\bm{y}_t\}$ is weakly stationary and $\alpha$-mixing with the mixing coefficients $\alpha(\ell)=O(r^\ell)$, where $r=r(p)$ is a sequence possibly depending on $p$ such that $0\leq r\leq \bar{r}$ for some constant $\bar{r}<1$.
\end{assumption}

\begin{assumption}\label{asmp:moment}
    For $1\leq i\leq p$, $\mathbb{E}[|y_{it}|^{2+2\epsilon}]\leq M_{2+2\epsilon}$, for some $\epsilon\in(0,1]$.
\end{assumption}

Assumption \ref{asmp:mixing} states the weak stationarity and geometrically decayed $\alpha$-mixing, rather than assuming that the true data generating process (DGP) of $\bm{y}_t$ is a VAR($d$) process. If $\bm{y}_t$ truly follows a VAR($d$) model in \eqref{eq:VAR}, Assumption \ref{asmp:mixing} holds by the stationarity and geometric ergodicity of the VAR process; see Proposition 2 in \citet{liebscher2005towards}. Assumption \ref{asmp:moment} relaxes the commonly-used sub-Gaussian condition in the existing literature to the bounded $(2+2\epsilon)$-th moment condition, where $M_{2+2\epsilon}$ may diverge to infinity with the dimension $p$. For the strong mixing time series, we denote $n_\text{eff}=T/\log(T)^2$ as the effective sample size representing the number of effectively independent samples from $T$ observations. Note that $n_\text{eff}$ is of almost the same rate as $T$ since $T^{1-\delta}=o(n_\text{eff})$ for any $\delta>0$.

\begin{proposition}\label{prop:element}
  Under Assumptions \ref{asmp:mixing} and \ref{asmp:moment}, if $T\gtrsim\log(p^2d)$ and
  \begin{equation}
    \tau\asymp\left[\frac{M_{2+2\epsilon}n_\textup{eff}}{\log(p^2d)}\right]^{\frac{1}{2+2\epsilon}},
  \end{equation}
    then, with probability at least $1-C\exp[-C\log(T)\log(p^2d)]$,
    \begin{equation}
        \begin{split}
            &\|\widetilde{\bm{\Sigma}}_0^{\textup{E}}(\tau)-\bm{\Sigma}_0\|_{\infty}\lesssim\left[\frac{M_{2+2\epsilon}^{1/\epsilon}\log(p^2d)}{n_\textup{eff}}\right]^{\frac{\epsilon}{1+\epsilon}},\\
            \text{and~~}&\|\widetilde{\bm{\Sigma}}_1^{\textup{E}}(\tau)-\bm{\Sigma}_1\|_{\infty}\lesssim \left[\frac{M_{2+2\epsilon}^{1/\epsilon}\log(p^2d)}{n_\textup{eff}}\right]^{\frac{\epsilon}{1+\epsilon}}.
        \end{split}
    \end{equation}
\end{proposition}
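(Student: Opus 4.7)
The plan is to reduce each max-norm $\|\widetilde{\bm{\Sigma}}^{\textup{E}}_\bullet(\tau) - \bm{\Sigma}_\bullet\|_\infty$ to entrywise bounds on $[\widetilde{\bm{\Gamma}}^{\textup{E}}_\ell(\tau)]_{ij} - [\bm{\Gamma}_\ell]_{ij}$, taken over the $O(p^2 d)$ coordinates indexed by $-d+1\leq\ell\leq d$ and $1\leq i,j\leq p$, and to conclude by a union bound. For each fixed triple $(i,j,\ell)$ I split the error into a deterministic truncation bias and a centered stochastic fluctuation,
\begin{equation*}
  [\widetilde{\bm{\Gamma}}^{\textup{E}}_\ell(\tau)]_{ij} - [\bm{\Gamma}_\ell]_{ij} = \underbrace{\mathbb{E}[y_{it}(\tau)y_{j,t-\ell}(\tau) - y_{it}y_{j,t-\ell}]}_{\textup{bias}} + \underbrace{T^{-1}\sum_{t=1}^T Z_t}_{\textup{fluctuation}},
\end{equation*}
where $Z_t = y_{it}(\tau)y_{j,t-\ell}(\tau) - \mathbb{E}[y_{it}(\tau)y_{j,t-\ell}(\tau)]$, and control the two pieces separately.

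For the bias, I would dominate the integrand by $\tau|y_{j,t-\ell}|1\{|y_{j,t-\ell}|>\tau\} + |y_{it}||y_{j,t-\ell}|1\{|y_{it}|>\tau\}$, then apply the simple Markov tail trick $|y|1\{|y|>\tau\}\leq|y|^{2+2\epsilon}/\tau^{1+2\epsilon}$ together with H\"older on the cross term; Assumption \ref{asmp:moment} then yields a bias of order $M_{2+2\epsilon}/\tau^{2\epsilon}$. For the fluctuation, $\{Z_t\}$ inherits the geometric $\alpha$-mixing of Assumption \ref{asmp:mixing} up to a constant lag shift, since $Z_t$ is a measurable functional of $(\bm{y}_t,\bm{y}_{t-\ell})$; hence I would invoke a Bernstein-type inequality for geometrically strongly mixing sequences in the style of Merlev\`ede--Peligrad--Rio. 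The two inputs to that inequality are the uniform envelope $|Z_t|\leq 2\tau^2$ and the variance bound $\mathbb{E} Z_t^2 \leq \tau^{2-2\epsilon}\mathbb{E}|y_{it}(\tau)y_{j,t-\ell}(\tau)|^{1+\epsilon} \leq \tau^{2-2\epsilon}M_{2+2\epsilon}$, the latter obtained by interpolating between $|Z_t|\leq\tau^2$ and Cauchy--Schwarz on the $(1+\epsilon)$-th moment.

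Substituting into the mixing Bernstein inequality produces a tail of the form $\exp(-cTu^2/[\tau^{2-2\epsilon}M_{2+2\epsilon} + \tau^2 u\log^2 T])$. With the prescribed $\tau\asymp(M_{2+2\epsilon}n_\textup{eff}/\log(p^2d))^{1/(2+2\epsilon)}$, the sub-Gaussian regime of the Bernstein bound delivers a deviation of order $M_{2+2\epsilon}^{1/(1+\epsilon)}(\log(p^2d)/n_\textup{eff})^{\epsilon/(1+\epsilon)}$, which matches the bias $M_{2+2\epsilon}/\tau^{2\epsilon}$ exactly; a short check verifies that the Bernstein linear-in-$u$ remainder is of the same order. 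A union bound over the $O(p^2 d)$ entries then produces the stated high-probability rate. The main obstacle is the mixing Bernstein step itself: one must check that the polylog penalty from blocking is precisely $\log^2 T$, so it is absorbed into $n_\textup{eff}=T/\log(T)^2$, and one must use the sharp variance bound with the $\tau^{2-2\epsilon}$ factor, since any coarser bound such as $\tau^2 M_2$ would produce the wrong exponent on $\log(p^2d)/n_\textup{eff}$ and cause the fluctuation to dominate the bias.
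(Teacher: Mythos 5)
Your plan is essentially the paper's proof: the same bias/fluctuation decomposition, a H\"older--Markov bound giving truncation bias of order $M_{2+2\epsilon}/\tau^{2\epsilon}$, the same marginal variance bound $\mathbb{E}Z_t^2\leq\tau^{2-2\epsilon}M_{2+2\epsilon}$ and envelope $2\tau^2$, the observation that the truncated pairs inherit geometric $\alpha$-mixing, a Bernstein inequality for strongly mixing sequences, and a union bound over the $O(p^2d)$ entries.

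The one step that needs repair is the Bernstein input. The Merlev\`ede--Peligrad--Rio inequality you invoke is stated with the long-run variance proxy $v^2=\sup_i\bigl(\operatorname{Var}(Z_i)+2\sum_{j>i}|\operatorname{Cov}(Z_i,Z_j)|\bigr)$, not the marginal variance, so your ``two inputs'' are not sufficient as written. Moreover, controlling the covariance series by Davydov's inequality (the paper's Lemma \ref{lemma:covariance}) alone is dangerous in this heavy-tailed truncated setting: $Z_t$ has only $(1+\epsilon)$ uniformly bounded moments before the envelope kicks in, so for any fixed $r>2$ one gets $\|Z_t\|_r^2\gg\mathbb{E}Z_t^2$ by a factor polynomial in $n_\textup{eff}$, and plugging that into $v^2$ would destroy the rate. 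The fix is to combine Davydov with the trivial Cauchy--Schwarz bound $|\operatorname{Cov}(Z_t,Z_{t+k})|\leq\mathbb{E}Z_t^2$ and cut the lag sum at $k\asymp\log T$, which inflates $v^2$ by at most a $\log T$ factor that is absorbable into $n_\textup{eff}$. Alternatively---and this is what the paper does---use a Bernstein-type inequality for $\alpha$-mixing sequences whose inputs are only the marginal variance $D^2\leq\tau^{2-2\epsilon}M_{2+2\epsilon}$ together with the moment-growth condition $\mathbb{E}|Z_t|^k\leq(2\tau^2)^{k-2}\mathbb{E}Z_t^2$ (its Lemma \ref{lemma:alpha-mixing}); your envelope and variance bounds already supply exactly these, and with that substitution your argument reproduces the paper's proof, including the stated probability $1-C\exp[-C\log(T)\log(p^2d)]$, whereas the MPR route as you set it up yields the slightly weaker (though still sufficient for the rate) exponent of order $\log(p^2d)$.
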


This proposition presents the $\ell_\infty$ bounds for the element-wise truncation autocovariance estimators $\bm{\widetilde{\Sigma}}^\text{E}_0(\tau)$ and $\bm{\widetilde{\Sigma}}^\text{E}_1(\tau)$. If the truncation parameter $\tau$ is chosen appropriately and the moment bound $M_{2+2\epsilon}$ is fixed, the convergence rate of autocovariance matrix estimation in the $\ell_\infty$ norm is $O([\log(p^2d)/n_\text{eff}]^{\epsilon/(1+\epsilon)})$. 
When $\epsilon=1$, the data has a bounded fourth moment and the convergence rates of the autocovariance matrices scale as $\sqrt{\log(p^2d)/n_\text{eff}}$. When $\epsilon\in(0,1)$, the convergence rates have a smooth phase transition phenomenon, decreasing from $[\log(p^2d)/n_\text{eff}]^{1/2}$ to $[\log(p^2d)/n_\text{eff}]^{\epsilon/(1+\epsilon)}$. 

\begin{remark}
  For robust covariance estimators based on data truncation or shrinkage, the existing theoretical analysis  in \citet{fan2016shrinkage} and \citet{ke2019user} focused on the data with a bounded fourth moment. Our results in Proposition \ref{prop:element} relax the fourth moment condition to the $(2+2\epsilon)$-th moment condition and can effectively handle a much larger class of distributions. In addition, \citet{avella2018robust} proposed rank-based and adaptive Huber regression methods for robust covariance estimation and obtained a similar phase transition in the upper bound with respect to $\epsilon$.
\end{remark}

For the weakly sparse VAR model in Example \ref{ex:sparse}, $\bm{\widetilde{\Sigma}}_0^\text{E}(\tau)$ and $\bm{\widetilde{\Sigma}}_1^\text{E}(\tau)$ can be used as robust autocovariance estimators in the constrained Yule--Walker estimation in \eqref{eq:split_YW} with $\mathcal{R}(\cdot)=\|\cdot\|_1$ and $\mathcal{R}^*(\cdot)=\|\cdot\|_\infty$.
The resulting estimator for the $i$-th row of $\bm{A}$ is denoted as $\widehat{\bm{a}}_i(\lambda,\tau)$ and the sparse estimator is $\widehat{\bm{A}}_\text{S}(\lambda,\tau)=[\widehat{\bm{a}}_1(\lambda,\tau),\dots,\widehat{\bm{a}}_p(\lambda,\tau)]^\top$.

Define an $\ell_q$-``ball'' with radius $s_q$ as $\mathbb{B}_q(s_q)=\{\bm{A}\in\mathbb{R}^{p\times pd}:\max_{1\leq i\leq p}\sum_{j=1}^{pd}|\bm{A}_{ij}|^q\leq s_q\}$. Note that when $q=0$, $\mathbb{B}_0(s_0)$ is the set of all $p$-by-$pd$ matrices whose rows are at most $s_0$-sparse. For $q>0$, $\mathbb{B}_q(s_q)$ requires that the absolute values of entries in $\bm{A}$ decay sufficiently fast, which is more general than the exact sparsity assumption. Given the weak sparsity in the rows of $\bm{A}^*$, it is natural to assume that $\|\bm{A}^{*\top}\|_{1,\infty}$ is bounded.

We are ready to present the theoretical properties of $\widehat{\bm{A}}_\text{S}(\lambda,\tau)$.

\begin{theorem}[Sparse VAR upper bounds]
    \label{thm:sparseAR}
    Suppose that $\bm{A}^*\in\mathbb{B}_q(s_q)$ for some $q\in[0,1)$, $\|\bm{A}^{*\top}\|_{1,\infty}\leq C<\infty$, and Assumptions \ref{asmp:mixing} and \ref{asmp:moment} hold. If $T\gtrsim\log(p^2d)$, 
  \begin{equation}
    \tau\asymp\left[\frac{M_{2+2\epsilon}n_\textup{eff}}{\log(p^2d)}\right]^{\frac{1}{2+2\epsilon}}\quad\text{and}\quad\lambda\asymp\left[\frac{M_{2+2\epsilon}^{1/\epsilon}\log(p^2d)}{n_\textup{eff}}\right]^{\frac{\epsilon}{1+\epsilon}},
  \end{equation}
  then, with probability at least $1-C\exp[-C\log(T)\log(p^2d)]$,
    \begin{equation}
        \begin{split}
            \|\bm{\widehat{A}}_\textup{S}(\lambda,\tau)-\bm{A}^*\|_{\infty}&\lesssim \|\bm{\Sigma}_0^{-1}\|_{1,\infty}\left[\frac{M_{2+2\epsilon}^{1/\epsilon}\log(p^2d)}{n_\textup{eff}}\right]^{\frac{\epsilon}{1+\epsilon}},\\
            \|\bm{\widehat{A}}^\top_\textup{S}(\lambda,\tau)-\bm{A}^{*\top}\|_{1,\infty}&\lesssim s_q\|\bm{\Sigma}_0^{-1}\|_{1,\infty}^{1-q} \left[\frac{M_{2+2\epsilon}^{1/\epsilon}\log(p^2d)}{n_\textup{eff}}\right]^{\frac{(1-q)\epsilon}{1+\epsilon}},\\
            \text{and}~~\|\bm{\widehat{A}}^\top_\textup{S}(\lambda,\tau)-\bm{A}^{*\top}\|_{2,\infty}&\lesssim \sqrt{s_q}\|\bm{\Sigma}_0^{-1}\|_{1,\infty}^{1-\frac{q}{2}} \left[\frac{M_{2+2\epsilon}^{1/\epsilon}\log(p^2d)}{n_\textup{eff}}\right]^{\left(1-\frac{q}{2}\right)\frac{\epsilon}{1+\epsilon}}.
        \end{split}
    \end{equation}
\end{theorem}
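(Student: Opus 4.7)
The plan is to combine the deterministic bounds of Proposition \ref{prop:2} with the probabilistic autocovariance deviation bounds of Proposition \ref{prop:element}, handling the weak sparsity case ($q\in(0,1)$) through a threshold-based choice of the model subspaces $\mathcal{M}_i$.

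First, I would invoke Proposition \ref{prop:element} to define a single high-probability event on which both $\|\widetilde{\bm{\Sigma}}_0^{\textup{E}}(\tau)-\bm{\Sigma}_0\|_\infty$ and $\|\widetilde{\bm{\Sigma}}_1^{\textup{E}}(\tau)-\bm{\Sigma}_1\|_\infty$ are bounded by $\zeta\asymp [M_{2+2\epsilon}^{1/\epsilon}\log(p^2d)/n_{\textup{eff}}]^{\epsilon/(1+\epsilon)}$; this event has probability at least $1-C\exp[-C\log(T)\log(p^2d)]$. In the sparse VAR specialization of Example \ref{ex:sparse}, $(\mathcal{R},\mathcal{R}^*,\overline{\mathcal{R}}^*,\mathcal{C})=(\|\cdot\|_1,\|\cdot\|_\infty,\|\cdot\|_\infty,\|\cdot\|_{1,\infty})$, so $\zeta_0=\zeta$ and $\max_i\zeta_{1i}\le\zeta$, because each $\widetilde{\bbm{\sigma}}_{1i}$ is a row of $\widetilde{\bm{\Sigma}}_1^{\textup{E}}$. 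Since $\mathcal{R}(\bm{a}_i^*)=\|\bm{a}_i^*\|_1\le \|\bm{A}^{*\top}\|_{1,\infty}\le C$ uniformly in $i$, the prescribed $\lambda\asymp \zeta$ satisfies the hypothesis $\lambda\ge\max_i[\zeta_0\mathcal{R}(\bm{a}_i^*)+\zeta_{1i}]$ of Proposition \ref{prop:2}.

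The $\ell_\infty$ conclusion of Proposition \ref{prop:2} is now immediate: $\max_i\|\widehat{\bm{a}}_i-\bm{a}_i^*\|_\infty\le 2\|\bm{\Sigma}_0^{-1}\|_{1,\infty}\lambda$, which equals $\|\widehat{\bm{A}}_{\textup{S}}(\lambda,\tau)-\bm{A}^*\|_\infty$ and yields the first bound. For the $\|\cdot\|_{1,\infty}$ and $\|\cdot\|_{2,\infty}$ bounds (row-wise $\ell_1$ and $\ell_2$ control), I would, for each row $i$, pick the subspace $\mathcal{M}_i(S_i)$ with thresholded support $S_i=\{j:|a_{ij}^*|>\eta\}$ for a value $\eta$ to be optimized. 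The assumption $\bm{A}^*\in\mathbb{B}_q(s_q)$ gives the standard weak-sparsity identities $|S_i|\le s_q\eta^{-q}$ and $\|(\bm{a}_i^*)_{\overline{\mathcal{M}}_i^\perp}\|_1\le s_q\eta^{1-q}$. With $\phi(\overline{\mathcal{M}}_i)=|S_i|$ in this sparse setup, Proposition \ref{prop:2} delivers
\begin{align*}
\|\widehat{\bm{a}}_i-\bm{a}_i^*\|_1 &\lesssim \|\bm{\Sigma}_0^{-1}\|_{1,\infty}\,s_q\eta^{-q}\lambda + s_q\eta^{1-q},\\
\|\widehat{\bm{a}}_i-\bm{a}_i^*\|_2^2 &\lesssim \|\bm{\Sigma}_0^{-1}\|_{1,\infty}^2\,s_q\eta^{-q}\lambda^2 + \|\bm{\Sigma}_0^{-1}\|_{1,\infty}\,s_q\eta^{1-q}\lambda.
\end{align*}

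Balancing the estimation and approximation contributions by choosing $\eta\asymp\|\bm{\Sigma}_0^{-1}\|_{1,\infty}\lambda$ reduces both right-hand sides to $s_q(\|\bm{\Sigma}_0^{-1}\|_{1,\infty}\lambda)^{1-q}$ and $s_q(\|\bm{\Sigma}_0^{-1}\|_{1,\infty}\lambda)^{2-q}$, respectively. Taking a maximum over $i$ and inserting the prescribed $\lambda$ recovers the stated rates. The main technical point to watch is that Proposition \ref{prop:element} delivers a \emph{single} event simultaneously controlling $\widetilde{\bm{\Sigma}}_0$ and every row of $\widetilde{\bm{\Sigma}}_1$ in the entry-wise $\ell_\infty$ norm, so the uniform threshold $\eta$ across the $p$ sub-problems incurs no additional logarithmic penalty beyond the $\log(p^2d)$ already inside $\lambda$; after that, the remainder is routine bookkeeping of constants and the thresholding identities for weakly sparse vectors.
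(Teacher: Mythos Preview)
Your proposal is correct and follows essentially the same route as the paper: invoke Proposition~\ref{prop:element} to land on a high-probability event with $\zeta_0,\zeta_{1i}\lesssim\lambda$, apply Proposition~\ref{prop:2} row by row, and handle weak sparsity via a thresholded support $S_i=\{j:|a_{ij}^*|>\eta\}$ with the optimal choice $\eta\asymp\|\bm{\Sigma}_0^{-1}\|_{1,\infty}\lambda$. The only cosmetic difference is that the paper derives the $\ell_2$ bound via the duality inequality $\|\widehat{\bm{a}}_i-\bm{a}_i^*\|_2^2\le\|\widehat{\bm{a}}_i-\bm{a}_i^*\|_1\|\widehat{\bm{a}}_i-\bm{a}_i^*\|_\infty$ after establishing the $\ell_1$ and $\ell_\infty$ bounds, whereas you read it off directly from the third conclusion of Proposition~\ref{prop:2}; both routes give the same rate.
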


This theorem presents non-asymptotic estimation upper bounds in various matrix norms under the weakly sparse structure. If both $M_{2+2\epsilon}$ and $\|\bm{\Sigma}_0^{-1}\|_{1,\infty}$ are fixed, the $\|\cdot\|_{\infty}$, $\|\cdot\|_{1,\infty}$, and $\|\cdot\|_{2,\infty}$ upper bounds scale as $[\log(p^2d)/n_\text{eff}]^{\epsilon/(1+\epsilon)}$,  $s_q[\log(p^2d)/n_\text{eff}]^{(1-q)\epsilon/(1+\epsilon)}$, and $\sqrt{s_q}[\log(p^2d)/n_\text{eff}]^{(1-q/2)\epsilon/(1+\epsilon)}$, respectively.
If the data have a bounded fourth moment with $\epsilon=1$, the first two rates of convergence match those of Gaussian sparse VAR in \citet{han2015direct}. When the true model has a strict sparsity structure with $q=0$ and sparsity level $s_0$, based on the $\|\cdot\|_{2,\infty}$ bound, the sample size requirement is $n_\text{eff}\gtrsim s_0\log(p^2d)$. That is, if the sparsity level in each row of $\bm{A}^*$ is finite, the dimension $p$ is allowed to be exponentially large compared to $n_\text{eff}$. If the data do not have a bounded fourth moment but only a bounded $(2+2\epsilon)$-th moment for some $\epsilon\in(0,1)$, the proposed estimator is still consistent but the rate of convergence in the $\ell_{2,\infty}$ norm decreases from $\sqrt{s_q}[\log(p^2d)/n_\text{eff}]^{1/2}$ to $\sqrt{s_q}[\log(p^2d)/n_\text{eff}]^{\epsilon/(1+\epsilon)}$.

\begin{remark}\label{rmk:3.2}
  The boundedness of $\|\bm{\Sigma}_0^{-1}\|_{1,\infty}$ is widely assumed in the existing literature of the Dantzig selector for sparse VAR models; see, for example, \citet{han2015direct} and \citet{wu2016performance}. Indeed, if $\bm{y}_t$ follows a VAR(1) process in \eqref{eq:VAR}, the covariance of $\bm{y}_t$ has an explicit form $\bm{\Sigma}_0=\sum_{i=0}^\infty(\bm{A}^*)^i\bm{\Sigma}_{\bbm{\varepsilon}}(\bm{A}^{*\top})^i$, where $\bm{\Sigma}_{\bbm{\varepsilon}}$ is the covariance matrix of $\bbm{\varepsilon}_t$, and $\|\bm{\Sigma}_0^{-1}\|_{1,\infty}$ can be verified to be independent of $p$ given that $\bm{A}^*$ and $\bm{\Sigma}_{\bbm{\varepsilon}}$ satisfy certain structures. For example, if both $\bm{A}^*$ and $\bm{\Sigma}_{\bbm{\varepsilon}}$ are block diagonal with a fixed block size, $\bm{\Sigma}_0^{-1}$ is also block diagonal and its $\ell_{1,\infty}$ norm is independent of $p$. If $\bm{A}^*$ has a banded structure as in Example \ref{ex:banded}, $(\bm{\Sigma}_0^{-1})_{ij}$ shrinks to zero exponentially as $|i-j|$ increases to infinity, and hence the boundedness of $\|\bm{\Sigma}_0^{-1}\|_{1,\infty}$ can be expected.
\end{remark}

\begin{remark}
  The robust estimation of sparse linear regression has been investigated in \citet{fan2017estimation}, \citet{sun2020adaptive}, and \citet{fan2016shrinkage}, among many others. We highlight some main differences between conventional linear regression and vector autoregression in terms of robust estimation. First, for linear regression models, the heavy-tailed distribution condition is considered on the random errors; that is, the response is conditionally heavy-tailed given the predictors. In our analysis, as no data generating mechanism assumption is imposed, the heavy-tailed distributional assumption, i.e. the moment condition in Assumption \ref{asmp:moment}, is directly considered on the observed data $\bm{y}_t$. Second, when the design matrix of the linear regression model satisfies some regulatory conditions, the $(1+\epsilon)$-th moment condition is considered for the random errors in \citet{sun2020adaptive}. However, for the VAR models, as both predictor and response are  simultaneously heavy-tailed, a more stringent $(2+2\epsilon)$-th moment condition has to be imposed in our theoretical analysis. Though the distributional assumptions are different in these two problems, the similar phase transition phenomenon can be obtained for two regimes $\epsilon\in(0,1)$ and $\epsilon\geq1$.
\end{remark}

\subsection{Vector truncation estimator}\label{sec:3.2}

In this subsection, we propose and study an autcovariance estimator that is robust in the operator norm. Intuitively, in order to achieve the operator norm robustness, we need to control the spectrum of $\bm{x}_t\bm{x}_t^\top$ and $\bm{y}_t\bm{x}_t^\top$. Note that $\|\bm{x}_t\bm{x}_t^\top\|_\textup{op}=\|\bm{x}_t\|_2^2$ and $\|\bm{y}_t\bm{x}_t^\top\|_\textup{op}=\|\bm{y}_t\|_2\|\bm{x}_t\|_2$. Therefore, we propose the truncation method to the whole vector $\bm{y}_t$ in the $\ell_2$ norm. 
To construct robust estimators for $\bm{\Sigma}_0$ and $\bm{\Sigma}_1$ of a VAR($d$) model, we consider the vector-truncated responses and predictors $\bm{y}^\text{V}_t(\tau_1)=(\tau_1\wedge\|\bm{y}_t\|_2)\bm{y}_t/\|\bm{y}_t\|_2$ and $\bm{x}^\text{V}_t(\tau_2)=(\tau_2\wedge\|\bm{x}_t\|_2)\bm{x}_t/\|\bm{x}_t\|_2$, for $1\leq t\leq T$, where two truncation parameters $\tau_1$ and $\tau_2$ are adopted as $\bm{y}_t$ and $\bm{x}_t$ are of different dimension when $d>1$. For the case with $d=1$, we can use a single parameter by setting $\tau_1=\tau_2$.
Based on the vector truncation of the data, the corresponding truncation autocovariance estimators are defined as
\begin{equation}\label{eq:vectorAutoCov}
    \widetilde{\bm{\Sigma}}^{\text{V}}_0(\tau_2)=\frac{1}{T}\sum_{t=1}^T\bm{x}_t^\text{V}(\tau_2)\bm{x}_t^\text{V}(\tau_2)^\top,~~\text{and}~~\widetilde{\bm{\Sigma}}^{\text{V}}_1(\tau_1,\tau_2)=\frac{1}{T}\sum_{t=1}^T\bm{y}_t^\text{V}(\tau_1)\bm{x}_t^\text{V}(\tau_2)^\top.
\end{equation}

\begin{remark}
  \citet{fan2016shrinkage} proposed a robust covariance estimator based on the vector-wise truncation in the $\ell_4$ norm and investigated its theoretical properties under the finite fourth moment condition. However, as we are interested in the relaxed setting with the finite $(2+2\epsilon)$-th moment, the truncation in the $\ell_4$ norm is not appropriate in our setting. \citet{ke2019user} proposed the spectrum-wise truncation for robust estimation of covariance matrix, and our method shares the same idea.
\end{remark}

For the vector-wise truncation estimator, we adopt the $\beta$-mixing condition to quantify the serial dependency. Specifically, for any stochastic process $\{\bm{y}_t\}_{t=-\infty}^\infty$, the lag-$\ell$ $\beta$-mixing dependence coefficient is defined as
$\beta(\ell):=\sup_{s\in\mathbb{Z}}\beta(\{\bm{y}_t\}_{t=-\infty}^s,\{\bm{y}_t\}_{t=s+\ell}^\infty)$, where
\begin{equation}
  \beta(\{\bm{y}_t\}_{t=-\infty}^s,\{\bm{y}_t\}_{t=r}^\infty)=\sup\frac{1}{2}\sum_{i=1}^I\sum_{j=1}^J\left|\mathbb{P}(A_i\cap B_j)-\mathbb{P}(A_i)\mathbb{P}(B_j)\right|,
\end{equation}
the supremum being taken over all pairs of partitions $\{A_1,\dots,A_I\}$ and $\{B_1,\dots,B_J\}$ such that $A_i\in\sigma(\{\bm{y}_t\}_{t=-\infty}^s)$ and $B_j\in\sigma(\{\bm{y}_t\}_{t=r}^\infty)$
. To study the theoretical properties of the vector-wise truncation estimators, we need the following assumptions.

\begin{assumption}
    \label{asmp:mixing2}
    The process $\{\bm{y}_t\}$ is weakly stationary and $\beta$-mixing with the mixing coefficients $\beta(\ell)=O(r^{\ell})$, where $r=r(p)$ is a sequence possibly depending on $p$ such that $0\leq r\leq \bar{r}$ for some constant $\bar{r}<1$.
\end{assumption}

\begin{assumption}
    \label{asmp:moment2}
    For any $\bm{v}\in\mathbb{R}^{pd}$ such that $\|\bm{v}\|_2=1$ and some $\delta>0$, $\mathbb{E}[(\bm{x}_t^\top\bm{v})^{2+2\epsilon+\delta}]\leq M_{2+2\epsilon}$.
\end{assumption}

Note that the $\beta$-mixing condition in Assumption \ref{asmp:mixing2} is slightly stronger than the $\alpha$-mixing condition in Assumption \ref{asmp:mixing} for the element-wise truncation estimators. Here we adopt the $\beta$-mixing condition in order to apply the Bernstein-type inequality for $\beta$-mixing random matrices developed by \citet{banna2016bernstein}. For the VAR process with independent and identically distributed $\bbm{\varepsilon}_t$, the absolute regularity with geometrically decayed $\beta$-mixing coefficients is equivalent to the geometric ergodicity \citep{liebscher2005towards}. The bounded $(2+2\epsilon+\delta)$-th moment condition in Assumption \ref{asmp:moment2} is also slightly stronger than the bounded $(2+2\epsilon)$-th moment condition in Assumption \ref{asmp:moment}. Here $\delta$ is only for the technical purpose and can be arbitrarily small.

\begin{proposition}
    \label{prop:vector}
    Under Assumptions \ref{asmp:mixing2} and \ref{asmp:moment2}, if $T\gtrsim pd$,
  \begin{equation}
    \tau_1\asymp(p^\epsilon M_{2+2\epsilon}n_\textup{eff})^{\frac{1}{2+2\epsilon}}~~\text{and}~~\tau_2\asymp(p^\epsilon d^\epsilon M_{2+2\epsilon}n_\textup{eff})^{\frac{1}{2+2\epsilon}},
  \end{equation}
    then, with probability at least $1-C\exp[-C\log(T)]$,
    \begin{equation}
        \begin{split}
            \|\bm{\widetilde{\Sigma}}_0^\textup{V}(\tau_2)-\bm{\Sigma}_0\|_{\textup{op}}&\lesssim \left(\frac{pM_{2+2\epsilon}^{1/\epsilon}}{n_\textup{eff}}\right)^{\frac{\epsilon}{1+\epsilon}},\\
            \text{and}~~\|\bm{\widetilde{\Sigma}}_1^\textup{V}(\tau_1,\tau_2)-\bm{\Sigma}_1\|_{\textup{op}}&\lesssim \left(\frac{pM_{2+2\epsilon}^{1/\epsilon}}{n_\textup{eff}}\right)^{\frac{\epsilon}{1+\epsilon}}.
        \end{split}
    \end{equation}
\end{proposition}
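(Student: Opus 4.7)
The plan is to split $\widetilde{\bm{\Sigma}}_0^\textup{V}(\tau_2)-\bm{\Sigma}_0$ into a deterministic truncation bias $\mathbb{E}\widetilde{\bm{\Sigma}}_0^\textup{V}(\tau_2)-\bm{\Sigma}_0$ and a centered stochastic fluctuation $\widetilde{\bm{\Sigma}}_0^\textup{V}(\tau_2)-\mathbb{E}\widetilde{\bm{\Sigma}}_0^\textup{V}(\tau_2)$, bound each in operator norm, and then balance the two by choosing $\tau_2$ as prescribed. The rectangular estimator $\widetilde{\bm{\Sigma}}_1^\textup{V}(\tau_1,\tau_2)-\bm{\Sigma}_1$ is handled in parallel via the Hermitian dilation $\mathcal{D}(\bm{M})=\left(\begin{smallmatrix}\bm{0}&\bm{M}\\\bm{M}^\top&\bm{0}\end{smallmatrix}\right)$, which preserves operator norm, together with a joint net over $\mathbb{S}^{p-1}\times\mathbb{S}^{pd-1}$; two truncation parameters are needed because $\bm{y}_t\in\mathbb{R}^p$ and $\bm{x}_t\in\mathbb{R}^{pd}$ lie in spheres of different radii.

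For the bias, $\ell_2$-truncation contracts $\bm{x}_t$ only on $\{\|\bm{x}_t\|_2>\tau_2\}$, so for any unit vector $\bm{v}$,
\[
\bigl|\bm{v}^\top\bigl(\mathbb{E}\widetilde{\bm{\Sigma}}_0^\textup{V}(\tau_2)-\bm{\Sigma}_0\bigr)\bm{v}\bigr|\leq \mathbb{E}\bigl[(\bm{x}_t^\top\bm{v})^2\mathbf{1}\{\|\bm{x}_t\|_2>\tau_2\}\bigr].
\]
Applying Hölder with exponents $1+\epsilon$ and $(1+\epsilon)/\epsilon$ and using Assumption \ref{asmp:moment2} (the $\delta$ slack lets pointwise-in-$\bm{v}$ moment control be upgraded, via the power-mean inequality, to $\mathbb{E}\|\bm{x}_t\|_2^{2+2\epsilon}\lesssim(pd)^{1+\epsilon}M_{2+2\epsilon}$) together with Markov, one obtains a bias of order $M_{2+2\epsilon}(pd)^\epsilon/\tau_2^{2\epsilon}$. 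Substituting $\tau_2\asymp(p^\epsilon d^\epsilon M_{2+2\epsilon}n_\textup{eff})^{1/(2+2\epsilon)}$ collapses this to $(pM_{2+2\epsilon}^{1/\epsilon}/n_\textup{eff})^{\epsilon/(1+\epsilon)}$, with the fixed $d$ absorbed into the constant.

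For the stochastic fluctuation I would combine an $\epsilon$-net on the unit sphere of $\mathbb{R}^{pd}$ (whose log-cardinality is $\lesssim pd$) with a scalar Bernstein inequality for geometrically $\beta$-mixing sequences in the style of Merlev\`ede-Peligrad-Rio; Assumption \ref{asmp:mixing2} together with the standard block decomposition replaces the raw sample size $T$ by $n_\textup{eff}=T/\log(T)^2$. The critical step is a sharp variance estimate: rather than the loose bound $(\bm{v}^\top\bm{x}_t^\textup{V})^4\leq\tau_2^4$, one uses $(\bm{v}^\top\bm{x}_t^\textup{V})^4\leq\tau_2^{2-2\epsilon}(\bm{v}^\top\bm{x}_t^\textup{V})^{2+2\epsilon}$, which combined with Assumption \ref{asmp:moment2} gives $\mathrm{Var}((\bm{v}^\top\bm{x}_t^\textup{V})^2)\lesssim\tau_2^{2-2\epsilon}M_{2+2\epsilon}$. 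The resulting concentration scales as $\sqrt{\tau_2^{2-2\epsilon}M_{2+2\epsilon}\,pd/n_\textup{eff}}+\tau_2^2\,pd/n_\textup{eff}$, and both pieces match $(pM_{2+2\epsilon}^{1/\epsilon}/n_\textup{eff})^{\epsilon/(1+\epsilon)}$ under the prescribed $\tau_2$ and the sample-size condition $T\gtrsim pd$.

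The main obstacle I anticipate is producing this sharp operator-norm variance bound uniformly in $\bm{v}$: the naive support bound $|\bm{v}^\top\bm{x}_t^\textup{V}|\leq\tau_2$ yields a stochastic term strictly worse than the bias, and one must interpolate between the $L^\infty$ support bound and the $L^{2+2\epsilon}$ moment bound to extract the $\tau_2^{-\epsilon}$ improvement that exactly balances the two pieces. A secondary technical point is the $\beta$-mixing reduction, whose $O(\log T)$ block length is absorbed into the $\log(T)^2$ factor defining $n_\textup{eff}$, while the $\log(pd)$ penalty from the net is dominated by $\log T$ under $T\gtrsim pd$ and is likewise absorbed without degrading the rate.
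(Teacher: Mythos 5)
Your decomposition into truncation bias plus centered fluctuation, your bias bound via H\"older and Markov with $\mathbb{E}\|\bm{x}_t\|_2^{2+2\epsilon}\lesssim(pd)^{1+\epsilon}M_{2+2\epsilon}$, and your use of a Hermitian dilation for the rectangular block are all exactly what the paper does, and the interpolation $(\bm{v}^\top\bm{x}_t^\textup{V})^4\leq\tau_2^{2-2\epsilon}(\bm{v}^\top\bm{x}_t^\textup{V})^{2+2\epsilon}$ is valid. The gap is in the concentration step. With an $\varepsilon$-net of log-cardinality $\asymp pd$ and the per-direction variance $\sigma^2=\tau_2^{2-2\epsilon}M_{2+2\epsilon}$, your fluctuation term is $\sqrt{\sigma^2\,pd/n_\textup{eff}}$. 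Plugging in $\tau_2\asymp((pd)^\epsilon M_{2+2\epsilon}n_\textup{eff})^{1/(2+2\epsilon)}$ gives $\sigma^2\,pd/n_\textup{eff}\asymp(pd)^{(1+2\epsilon-\epsilon^2)/(1+\epsilon)}M_{2+2\epsilon}^{2/(1+\epsilon)}n_\textup{eff}^{-2\epsilon/(1+\epsilon)}$, whereas the square of the target rate is $(pd)^{2\epsilon/(1+\epsilon)}M_{2+2\epsilon}^{2/(1+\epsilon)}n_\textup{eff}^{-2\epsilon/(1+\epsilon)}$. The ratio is $(pd)^{1-\epsilon}$, so your bound overshoots the claim by a factor $(pd)^{(1-\epsilon)/2}$ whenever $\epsilon<1$ (it closes only at $\epsilon=1$), and retuning $\tau_2$ to rebalance bias against this larger fluctuation still leaves a residual $(pd)^{\epsilon(1-\epsilon)/(1+\epsilon)}$ loss. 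The problem is structural: an exponential-size net forces you to pay the single-direction variance $pd$ times over, and under only $(2+2\epsilon)$ moments that product exceeds the minimax rate.

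The paper avoids this by applying a Bernstein inequality for geometrically $\beta$-mixing \emph{random matrices} (Banna et al., Lemma \ref{lemma:beta-mixing}) directly to $\widetilde{\bm{X}}_t=\bm{x}_t^\textup{V}\bm{x}_t^{\textup{V}\top}$. There the dimension enters only as a prefactor $pd$ (i.e., $\log(pd)$ in the exponent, absorbed by $T\gtrsim pd$), and the relevant variance proxy is the matrix one, $\|\mathbb{E}[\widetilde{\bm{X}}_t^2]\|_\textup{op}=\|\mathbb{E}[\|\bm{x}_t^\textup{V}\|_2^2\,\bm{x}_t^\textup{V}\bm{x}_t^{\textup{V}\top}]\|_\textup{op}\lesssim\tau_2^{2-2\epsilon}(pd)^\epsilon M_{2+2\epsilon}$, obtained by the same interpolation applied to $\|\bm{x}_t^\textup{V}\|_2^2$ rather than to $(\bm{v}^\top\bm{x}_t^\textup{V})^2$; this larger variance paid only logarithmically in dimension is what yields the rate for all $\epsilon\in(0,1]$. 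This is also why Assumption \ref{asmp:mixing2} is stated as $\beta$-mixing rather than $\alpha$-mixing. Finally, your sketch does not address the cross-lag terms $\|\mathbb{E}[(\widetilde{\bm{X}}_t-\mathbb{E}\widetilde{\bm{X}}_t)(\widetilde{\bm{X}}_{t+\ell}-\mathbb{E}\widetilde{\bm{X}}_{t+\ell})]\|_\textup{op}$ that enter the variance statistic $v^2=\sup_K|K|^{-1}\lambda_{\max}(\mathbb{E}(\sum_{t\in K}(\widetilde{\bm{X}}_t-\mathbb{E}\widetilde{\bm{X}}_t))^2)$; the paper controls their geometric decay with the mixing covariance inequality (Lemma \ref{lemma:covariance}), and this is precisely where the extra $\delta$ in Assumption \ref{asmp:moment2} is consumed, not in upgrading the pointwise moment bound to $\mathbb{E}\|\bm{x}_t\|_2^{2+2\epsilon}$ as you suggest.
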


This proposition presents the operator norm bounds for the vector truncation autocovariance estimators $\bm{\widetilde{\Sigma}}_0^\text{V}(\tau_2)$ and $\bm{\widetilde{\Sigma}}_1^\text{V}(\tau_1,\tau_2)$. 
Similarly to the element truncation estimator in Section \ref{sec:3.1}, we define $n_\text{eff}=T/\log(T)^2$. If the moment bound $M_{2+2\epsilon}$ is fixed, the operator norm convergence rates of both autocovariance matrix estimators scale as $(pd/n_\text{eff})^{\epsilon/(1+\epsilon)}$. When the time series data has a bounded fourth moment, that is $\epsilon=1$, our results are almost the same as those of covariance matrix estimators for \textit{i.i.d.} data in \citet{ke2019user} and \citet{fan2016shrinkage}. Smooth transition on the operator norm convergence rate is observed when the fourth moment condition is relaxed to the $(2+2\epsilon)$-th moment condition.

Based on the rates of the vector truncation autocovariance matrices, we derive the estimation rate of the nuclear norm constrained Yule--Walker estimator for the weakly low-rank VAR model. Define an $\ell_q$-``ball'' for the singular values with radius $r_q$ as $\widetilde{\mathbb{B}}_q(r_q)=\{\bm{A}\in\mathbb{R}^{p\times p}:\sum_{i=1}^p\sigma_i^q(\bm{A})\leq r_q\}$. When the singular values of $\bm{A}^*$ are weakly sparse, it is natural to assume that $\|\bm{A}^*\|_\textup{nuc}$, the sum of singular values of $\bm{A}^*$, is bounded. Denote by $\bm{\widehat{A}}_\text{RR}(\lambda,\tau_1,\tau_2)$ the reduced-rank constrained Yule--Walker estimator with the regularizer $\mathcal{R}(\cdot)=\|\cdot\|_\textup{nuc}$, the constraint function $\mathcal{R}^*(\cdot)=\|\cdot\|_\textup{op}$, the constraint parameter $\lambda$, and the vector truncation autocovariance estimators in \eqref{eq:vectorAutoCov} with truncation parameters $\tau_1$ and $\tau_2$.

\begin{theorem}[Reduced-rank VAR upper bounds]
    \label{thm:lowrankAR}
    Suppose that $\bm{A}^*\in\widetilde{\mathbb{B}}_q(r_q)$ for some $q\in[0,1)$, $\|\bm{A}^*\|_\textup{nuc}\leq C<\infty$, and Assumptions \ref{asmp:mixing2} and \ref{asmp:moment2} hold. If $T\gtrsim p$, 
  \begin{equation}
    \tau_1\asymp(p^\epsilon M_{2+2\epsilon}n_\textup{eff})^{\frac{1}{2+2\epsilon}},~~\tau_2\asymp(p^\epsilon d^\epsilon M_{2+2\epsilon}n_\textup{eff})^{\frac{1}{2+2\epsilon}},~~\text{and}~~\lambda\asymp\left(\frac{pM_{2+2\epsilon}^{1/\epsilon}}{n_\textup{eff}}\right)^{\frac{\epsilon}{1+\epsilon}},
  \end{equation}
  then, with probability at least $1-C\exp[-C\log(T)]$,
    \begin{equation}
        \begin{split}
            \|\bm{\widehat{A}}_\textup{RR}(\lambda,\tau_1,\tau_2)-\bm{A}^*\|_\textup{op}&\lesssim\left(\frac{pM_{2+2\epsilon}^{1/\epsilon}}{n_\textup{eff}}\right)^{\frac{\epsilon}{1+\epsilon}},\\
            \|\bm{\widehat{A}}_\textup{RR}(\lambda,\tau_1,\tau_2)-\bm{A}^*\|_\textup{nuc}&\lesssim r_q\|\bm{\Sigma}_0^{-1}\|^{1-q}_\textup{op}\left(\frac{pM_{2+2\epsilon}^{1/\epsilon}}{n_\textup{eff}}\right)^{\frac{(1-q)\epsilon}{1+\epsilon}},\\
            \text{and}~~\|\bm{\widehat{A}}_\textup{RR}(\lambda,\tau_1,\tau_2)-\bm{A}^*\|_\textup{F}&\lesssim \sqrt{r_q}\|\bm{\Sigma}_0^{-1}\|^{1-\frac{q}{2}}_\textup{op}\left(\frac{pM_{2+2\epsilon}^{1/\epsilon}}{n_\textup{eff}}\right)^{\left(1-\frac{q}{2}\right)\frac{\epsilon}{1+\epsilon}}.
        \end{split}
    \end{equation}
\end{theorem}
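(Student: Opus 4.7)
The plan is to derive Theorem 2 as a clean application of Proposition 1 (the deterministic error bounds for the constrained Yule--Walker estimator), specialized to the nuclear/operator setup of Example 1, with the stochastic input supplied by Proposition 5 (operator-norm concentration of the vector-truncation autocovariance estimators). First I would verify that the prescribed truncation parameters $\tau_1, \tau_2$ yield $\|\widetilde{\bm{\Sigma}}_0^{\textup{V}}(\tau_2)-\bm{\Sigma}_0\|_{\textup{op}} \le \zeta_0$ and $\|\widetilde{\bm{\Sigma}}_1^{\textup{V}}(\tau_1,\tau_2)-\bm{\Sigma}_1\|_{\textup{op}} \le \zeta_1$ with $\zeta_0,\zeta_1 \asymp (pM_{2+2\epsilon}^{1/\epsilon}/n_\textup{eff})^{\epsilon/(1+\epsilon)}$, on an event of probability at least $1-C\exp[-C\log(T)]$; this is immediate from Proposition 5 under Assumptions \ref{asmp:mixing2} and \ref{asmp:moment2}. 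Since $\|\bm{A}^*\|_{\textup{nuc}} \le C$, the prescribed $\lambda$ dominates $\zeta_0\|\bm{A}^*\|_{\textup{nuc}} + \zeta_1$ on that event, so Proposition 1 applies with $\mathcal{R}(\cdot)=\|\cdot\|_{\textup{nuc}}$, $\mathcal{R}^*(\cdot)=\mathcal{C}(\cdot)=\|\cdot\|_{\textup{op}}$.

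The operator-norm bound is then direct: Proposition 1's first inequality gives $\|\widehat{\bm{A}}-\bm{A}^*\|_{\textup{op}} \le 2\|\bm{\Sigma}_0^{-1}\|_{\textup{op}}\lambda$, which matches the claimed rate (absorbing $\|\bm{\Sigma}_0^{-1}\|_{\textup{op}}$ into the constant, consistent with the convention used elsewhere in the paper). For the nuclear and Frobenius bounds under the $\widetilde{\mathbb{B}}_q(r_q)$ weak low-rankness, I would follow the standard Negahban--Wainwright balancing recipe: pick a threshold $\eta > 0$, take the SVD $\bm{A}^* = \bm{U}\bm{D}\bm{V}^\top$, let $\mathcal{U},\mathcal{V}$ be the subspaces spanned by the singular vectors whose singular values exceed $\eta$, and use Example 1's $(\mathcal{M},\overline{\mathcal{M}}^\perp)$. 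The weak low-rankness yields $r := \dim(\mathcal{U}) \le r_q\eta^{-q}$ and the tail mass $\|\bm{A}^*_{\overline{\mathcal{M}}^\perp}\|_{\textup{nuc}} = \sum_{\sigma_i(\bm{A}^*)\le\eta}\sigma_i(\bm{A}^*) \le r_q\eta^{1-q}$; since any $\bm{W}\in\overline{\mathcal{M}}$ has rank at most $2r$, we have $\phi(\overline{\mathcal{M}}) \le 2r \le 2r_q\eta^{-q}$.

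Plugging these into the Frobenius bound of Proposition 1 gives
\begin{equation}
\|\widehat{\bm{A}}-\bm{A}^*\|_{\textup{F}}^2 \lesssim \|\bm{\Sigma}_0^{-1}\|_{\textup{op}}^2 \, r_q\eta^{-q}\lambda^2 + \|\bm{\Sigma}_0^{-1}\|_{\textup{op}} \, r_q\eta^{1-q}\lambda,
\end{equation}
and balancing the two terms by choosing $\eta \asymp \|\bm{\Sigma}_0^{-1}\|_{\textup{op}}\lambda$ produces the stated Frobenius rate. The same threshold, plugged into the nuclear-norm inequality of Proposition 1, gives the nuclear-norm bound. A minor bookkeeping point is that the exponents $1-q$ and $1-q/2$ on $\|\bm{\Sigma}_0^{-1}\|_{\textup{op}}$ emerge precisely from the power of $\eta$ produced by this balance, which is why those two norms carry different powers of the conditioning factor.

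The derivation here is essentially routine once the two inputs are in hand; the main obstacle lies upstream in Proposition 5, which requires an operator-norm Bernstein-type deviation inequality for sums of $\beta$-mixing matrices under only a $(2+2\epsilon+\delta)$-th moment assumption, together with a careful analysis of the truncation bias $\|\mathbb{E}\bm{x}_t^{\textup{V}}(\tau_2)\bm{x}_t^{\textup{V}}(\tau_2)^\top-\bm{\Sigma}_0\|_{\textup{op}}$. Once those are established, the present theorem reduces to the balancing argument above together with the explicit verification that Proposition 1's slack condition holds uniformly over the low-rank model subspace chosen by the threshold $\eta$.
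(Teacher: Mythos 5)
Your proposal matches the paper's own proof: both reduce the theorem to Proposition \ref{prop:1} with $\mathcal{R}=\|\cdot\|_\textup{nuc}$, $\mathcal{R}^*=\mathcal{C}=\|\cdot\|_\textup{op}$, feed in the operator-norm deviation bounds of Proposition \ref{prop:vector} (using $\|\bm{A}^*\|_\textup{nuc}\leq C$ to check the feasibility condition on $\lambda$), threshold the singular values of $\bm{A}^*$ at a level $\kappa$ so that $r_\kappa\leq r_q\kappa^{-q}$ and $\|\bm{A}^*_{\overline{\mathcal{M}}_\kappa^\perp}\|_\textup{nuc}\leq r_q\kappa^{1-q}$, and balance with $\kappa\asymp\|\bm{\Sigma}_0^{-1}\|_\textup{op}\lambda$. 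The only cosmetic difference is that you balance directly inside the squared-Frobenius bound of Proposition \ref{prop:1}, whereas the paper first derives the nuclear-norm rate and then applies $\|\bm{\Delta}\|_\textup{F}^2\leq\|\bm{\Delta}\|_\textup{nuc}\|\bm{\Delta}\|_\textup{op}$, which is the same computation.
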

This theorem presents the non-asymptotic estimation upper bounds in the operator norm, nuclear norm, and Frobenius norm, respectively. If $M_{2+2\epsilon}$ is fixed, the convergence rates of $\widehat{\bm{A}}_\text{RR}$ in the operator norm and Frobenius norm scale as $(p/n_\text{eff})^{\epsilon/(1+\epsilon)}$ and $\sqrt{r_q}(p/n_\text{eff})^{(1-q/2)\epsilon/(1+\epsilon)}$. Specifically, when $\epsilon=1$, that is the time series have a bounded fourth moment, the Frobenius norm convergence rate of the robust estimator is nearly the same as those obtained by the standard nuclear norm penalized estimators for Gaussian VAR model in \citet{negahban2011estimation} and \citet{basu2019low}. When the time series only has a bounded $(2+2\epsilon)$-th moment for some $\epsilon\in(0,1)$, the estimation error rates decrease from $\sqrt{r_q}(p/n_\text{eff})^{1/2}$ to $(p/n_\text{eff})^{\epsilon/(1+\epsilon)}$. Note that though the estimation convergence rates decrease, if $r_q$ is fixed, the sample size requirement $n_\text{eff}\gtrsim pd$ for estimation consistency remains unchanged when the moment condition is relaxed.

\begin{remark}
  The boundedness of $\|\bm{\Sigma}_0^{-1}\|_\textup{op}$ is equivalent to that the smallest eigenvalue of $\bm{\Sigma}_0$ is bounded away from zero. If $\bm{y}_t$ follows a stationary VAR model in \eqref{eq:VAR}, this condition can be guaranteed if the smallest eigenvalue of $\bm{\Sigma}_{\bbm{\varepsilon}}$ is bounded away from zero, where $\bm{\Sigma}_{\bbm{\varepsilon}}$ is the covariance matrix of $\bbm{\varepsilon}_t$.
\end{remark}

\subsection{Linear-restricted truncation estimator}\label{sec:3.3}

For the linear-restricted VAR model in \eqref{eq:linearVAR}, the linear transformations of autocovariance matrices are defined as
\begin{equation}
  \bm{\Omega}=\mathbb{E}[\bm{C}^\top(\bm{I}_p\otimes\bm{x}_t)(\bm{I}_p\otimes\bm{x}_t^\top)\bm{C}]~~\text{and}~~\bbm{\omega}=\mathbb{E}[\bm{C}^\top(\bm{I}_p\otimes\bm{x}_t)\bm{y}_t].
\end{equation}
Motivated by the element and vector-wise truncation in the previous subsections, in order to robustly estimate each element of $\bm{\Omega}$, we apply the truncation to $(\bm{I}_p\otimes\bm{x}_t^\top)\bm{c}_i$, where each $\bm{c}_i\in\mathbb{R}^{p^2d}$ is the $i$-th column of $\bm{C}$. When each $\bm{c}_i$ is highly sparse, e.g., $\bm{c}_i$ is an coordinate vector for the banded VAR in Example \ref{ex:banded}, the vector $(\bm{I}_p\otimes\bm{x}_t^\top)\bm{c}_i$ is also highly sparse. In this case,let $S_i$ be the non-zero index set of $(\bm{I}_p\otimes\bm{x}_t)\bm{c}_i$, and  $\bm{w}_{it}=[(\bm{I}_p\otimes\bm{x}_t^\top)\bm{c}_i]_{S_i}$ and $\bm{z}_{it}=(\bm{y}_t)_{S_i}$ be the sub-vectors. It is obvious to check that $\bm{c}_i^\top(\bm{I}_p\otimes\bm{x}_t)(\bm{I}_p\otimes\bm{x}_t^\top)\bm{c}_j=\bm{w}_{it}^\top\bm{w}_{jt}$ and $\bm{c}_i^\top(\bm{I}_p\otimes\bm{x}_t)\bm{y}_t=\bm{w}_{it}^\top\bm{z}_{it}$, for $1\leq i,j\leq r$.

Therefore, we consider the vector-wise truncation $\widetilde{\bm{w}}_{it}(\tau_1)=(\tau_1\wedge\|\bm{w}_{it}\|_2)\bm{w}_{it}/\|\bm{w}_{it}\|_2$ and $\widetilde{\bm{z}}_{it}(\tau_2)=(\tau_2\wedge\|\bm{z}_{it}\|_2)\bm{z}_{it}/\|\bm{z}_{it}\|_2$, and the elements of $\bm{\Omega}$ and $\bbm{\omega}$ can be estimated by the linear-restricted truncated data
\begin{equation}
  \widetilde{\bm{\Omega}}_{ij}(\tau_1)=\frac{1}{T}\sum_{t=1}^T\widetilde{\bm{w}}_{it}(\tau_1)^\top\widetilde{\bm{w}}_{jt}(\tau_1)~~\text{and}~~\widetilde{\bbm{\omega}}_i(\tau_1,\tau_2)=\frac{1}{T}\sum_{t=1}^T\widetilde{\bm{w}}_{it}(\tau_1)^\top\widetilde{\bm{z}}_{it}(\tau_2),
\end{equation}
for $1\leq i,j\leq r$. 

\begin{remark}
  For the special matrix $\bm{C}$ whose columns are coordinate vectors corresponding to the non-zero entries in $\textup{vec}(\bm{A}^\top)$, the linear-restricted estimators with $\tau_1=\tau_2$ are equivalent to the linear transformations of element truncation estimators, namely $\widetilde{\bm{\Omega}}(\tau_1)=\bm{C}^\top(\bm{I}_p\otimes\widetilde{\bm{\Sigma}}_0^\text{E}(\tau_1))\bm{C}$ and $\widetilde{\bbm{\omega}}(\tau_1,\tau_1)=\bm{C}^\top\textup{vec}(\widetilde{\bm{\Sigma}}_1^\text{E}(\tau_1)^\top)$.
\end{remark}

For the linear-restricted models, we consider the following moment conditions.

\begin{assumption}\label{asmp:moment_linear}
  For $1\leq i\leq r$, $\mathbb{E}[\|\bm{w}_{it}\|_2^{2+2\epsilon}]\leq M_{1,2+2\epsilon}$ and $\mathbb{E}[\|\bm{z}_{it}\|_2^{2+2\epsilon}]\leq M_{2,2+2\epsilon}$.
\end{assumption}

The moment conditions in Assumption \ref{asmp:moment_linear} can be viewed as an extension of the element-wise moment condition in Assumption \ref{asmp:moment}. In addition, due to the normalization of the matrix $\bm{C}$, the moment bounds $M_{1,2+2\epsilon}$ and $M_{2,2+2\epsilon}$ are allowed to vary with the dimension $p$. More discussions on the moment bounds are presented below for each specific linear-restricted model. For the linear-restricted model in \eqref{eq:linearVAR} with $r>1$, the general results of the autocovariance estimators are given as follows.

\begin{proposition}
  \label{prop:linear}
  Under Assumptions \ref{asmp:mixing} and \ref{asmp:moment_linear}, if $T\gtrsim\log(r)$,
  \begin{equation}
    \tau_1\asymp\left[\frac{M_{1,2+2\epsilon}n_\text{eff}}{\log(r)}\right]^{\frac{1}{2+2\epsilon}},~~\text{and}~~\tau_2\asymp\left[\frac{M_{2,2+2\epsilon}n_\text{eff}}{\log(r)}\right]^{\frac{1}{2+2\epsilon}},
  \end{equation}
  then, with probability at least $1-C\exp[-C\log(r)\log(T)]$,
  \begin{equation}
    \begin{split}
      \|\widetilde{\bm{\Omega}}(\tau_1)-\bm{\Omega}\|_\infty & \lesssim\left[\frac{M_{1,2+2\epsilon}^{1/\epsilon}\log(r)}{n_\text{eff}}\right]^{\frac{\epsilon}{1+\epsilon}},\\
      \text{and}~~\|\widetilde{\bbm{\omega}}(\tau_1,\tau_2)-\bbm{\omega}\|_\infty & \lesssim\left[\frac{M_{1,2+2\epsilon}^{1/(2\epsilon)}M_{2,2+2\epsilon}^{1/(2\epsilon)}\log(r)}{n_\text{eff}}\right]^{\frac{\epsilon}{1+\epsilon}}.
    \end{split}
  \end{equation}
\end{proposition}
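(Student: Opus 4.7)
\bigskip

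\noindent\textbf{Proof proposal.} The plan is a standard bias--variance split for truncated sums, applied entry-by-entry and followed by a union bound, mirroring the argument behind Proposition \ref{prop:element} but with vector-valued truncation and two separate moment bounds $M_{1,2+2\epsilon}$ and $M_{2,2+2\epsilon}$. Fix an index pair $(i,j)$ (respectively an index $i$) and decompose
\begin{equation}
    \widetilde{\bm{\Omega}}_{ij}(\tau_1)-\bm{\Omega}_{ij}
    = \bigl[\widetilde{\bm{\Omega}}_{ij}(\tau_1)-\mathbb{E}\widetilde{\bm{\Omega}}_{ij}(\tau_1)\bigr]
    + \bigl[\mathbb{E}\widetilde{\bm{\Omega}}_{ij}(\tau_1)-\bm{\Omega}_{ij}\bigr],
\end{equation}
and likewise for $\widetilde{\bbm{\omega}}_i(\tau_1,\tau_2)-\bbm{\omega}_i$. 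The final $\ell_\infty$ bound will come from controlling these two pieces separately, optimizing in $(\tau_1,\tau_2)$, and applying a union bound over at most $r^2$ (respectively $r$) entries.

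For the bias term I would use that $\widetilde{\bm{w}}_{it}^\top\widetilde{\bm{w}}_{jt}=\bm{w}_{it}^\top\bm{w}_{jt}$ on the event $\{\|\bm{w}_{it}\|_2\le\tau_1,\,\|\bm{w}_{jt}\|_2\le\tau_1\}$, together with $|\widetilde{\bm{w}}_{it}^\top\widetilde{\bm{w}}_{jt}|\le|\bm{w}_{it}^\top\bm{w}_{jt}|$ (since the truncation scaling is at most one). This reduces the bias to expectations of the form $\mathbb{E}[\|\bm{w}_{it}\|_2\|\bm{w}_{jt}\|_2\mathbf{1}\{\|\bm{w}_{it}\|_2>\tau_1\}]$, which I will bound by combining Cauchy--Schwarz (or $ab\le(a^2+b^2)/2$), Hölder, and Markov's inequality to get a bound of order $M_{1,2+2\epsilon}\tau_1^{-2\epsilon}$. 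The analogous computation for $\widetilde{\bbm{\omega}}_i$ produces $\sqrt{M_{1,2+2\epsilon}M_{2,2+2\epsilon}}\,\tau_1^{-\epsilon}\tau_2^{-\epsilon}$ (or a suitable Hölder variant matching the stated geometric-mean form).

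For the stochastic term I would invoke a Bernstein-type inequality for $\alpha$-mixing sums (for instance, the result of Merlev\`ede, Peligrad, and Rio used elsewhere in this paper) applied to the centered random variables $\widetilde{\bm{w}}_{it}^\top\widetilde{\bm{w}}_{jt}-\mathbb{E}[\cdot]$. Two ingredients are required: a sup-norm bound $|\widetilde{\bm{w}}_{it}^\top\widetilde{\bm{w}}_{jt}|\le\tau_1^2$, and a variance proxy. For the latter, interpolating $\|\widetilde{\bm{w}}_{it}\|_2^2\le\tau_1^{2-2\epsilon}\|\bm{w}_{it}\|_2^{2\epsilon}$ (and similarly for $\widetilde{\bm{w}}_{jt}$) and applying Cauchy--Schwarz yields $\mathrm{Var}(\widetilde{\bm{w}}_{it}^\top\widetilde{\bm{w}}_{jt})\lesssim\tau_1^{2-2\epsilon}M_{1,2+2\epsilon}$. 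The Bernstein bound then controls the stochastic term by $\sqrt{\tau_1^{2-2\epsilon}M_{1,2+2\epsilon}\log(r)/n_\textup{eff}}+\tau_1^2\log(r)/n_\textup{eff}$, the second piece being lower order under the condition $T\gtrsim\log(r)$. The identical scheme, with the variance proxy $\tau_1^{1-\epsilon}\tau_2^{1-\epsilon}\sqrt{M_{1,2+2\epsilon}M_{2,2+2\epsilon}}$ and sup-norm $\tau_1\tau_2$, handles $\widetilde{\bbm{\omega}}_i$.

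The final step is to balance the bias against the leading Gaussian part of the Bernstein bound. Equating $M_{1,2+2\epsilon}\tau_1^{-2\epsilon}$ with $\sqrt{\tau_1^{2-2\epsilon}M_{1,2+2\epsilon}\log(r)/n_\textup{eff}}$ gives exactly the prescribed $\tau_1\asymp[M_{1,2+2\epsilon}n_\textup{eff}/\log(r)]^{1/(2+2\epsilon)}$, and substitution produces the stated rate $M_{1,2+2\epsilon}^{1/(1+\epsilon)}[\log(r)/n_\textup{eff}]^{\epsilon/(1+\epsilon)}$; the analogous computation yields the $\widetilde{\bbm{\omega}}$ rate. A union bound over the $r^2$ (or $r$) coordinates inflates the deviation level from $\log r$ to $\log r$ (with a constant absorbed into $C$) and contributes the failure probability $C\exp[-C\log(r)\log(T)]$. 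The main obstacle I anticipate is the variance-proxy computation for the truncated inner products: because the truncation is applied to the full $\ell_2$-norm of $\bm{w}_{it}$ rather than entrywise, one must carefully interpolate between the $\tau_1$-bound and the $(2+2\epsilon)$-moment hypothesis to avoid losing polynomial factors in the dimension of $\bm{w}_{it}$ and to produce a variance proxy that depends on $M_{1,2+2\epsilon}$ alone, free of any implicit dependence on $p$ beyond that absorbed in $M_{1,2+2\epsilon}$.
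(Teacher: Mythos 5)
Your proposal follows essentially the same route as the paper's proof: the same bias--deviation split entry by entry, H\"older plus Markov for the truncation bias, a truncation--moment interpolation to obtain the variance proxy, a Bernstein-type inequality for $\alpha$-mixing sums applied to the truncated (hence still mixing) products, balancing of $(\tau_1,\tau_2)$, and a union bound over the $r^2$ (resp.\ $r$) entries; which mixing Bernstein inequality you invoke is immaterial. The one step to tighten is the variance proxy, which you correctly flag as the delicate point: interpolating each factor separately, i.e.\ using $\|\widetilde{\bm{w}}_{it}\|_2^2\le\tau_1^{2-2\epsilon}\|\bm{w}_{it}\|_2^{2\epsilon}$ for both $i$ and $j$ and then Cauchy--Schwarz, only yields $\mathbb{E}[(\widetilde{\bm{w}}_{it}^\top\widetilde{\bm{w}}_{jt})^2]\lesssim\tau_1^{4-4\epsilon}M_{1,2+2\epsilon}^{2\epsilon/(1+\epsilon)}$, which exceeds the claimed $\tau_1^{2-2\epsilon}M_{1,2+2\epsilon}$ by a factor $[n_\textup{eff}/\log(r)]^{(1-\epsilon)/(1+\epsilon)}$ at the prescribed $\tau_1$; with that weaker proxy the Gaussian part of the Bernstein bound scales as $[\log(r)/n_\textup{eff}]^{(3\epsilon-1)/(2+2\epsilon)}$, which is slower than the stated rate for $\epsilon<1$ and does not even vanish for $\epsilon\le 1/3$. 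The fix, which is what the paper does, is to interpolate the product jointly: $(\|\widetilde{\bm{w}}_{it}\|_2\|\widetilde{\bm{w}}_{jt}\|_2)^{2}\le\tau_1^{2-2\epsilon}\,(\|\bm{w}_{it}\|_2\|\bm{w}_{jt}\|_2)^{1+\epsilon}$, followed by Cauchy--Schwarz and Assumption \ref{asmp:moment_linear}, giving exactly $\tau_1^{2-2\epsilon}M_{1,2+2\epsilon}$ (and, for the $\widetilde{\bbm{\omega}}$ part, $(\tau_1\tau_2)^{1-\epsilon}M_{1,2+2\epsilon}^{1/2}M_{2,2+2\epsilon}^{1/2}$), and the same device supplies the higher-moment growth condition needed by the mixing Bernstein inequality; with this correction your balancing and union-bound steps reproduce the paper's argument.
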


This proposition presents the estimation upper bounds for $\widetilde{\bm{\Omega}}(\tau_1)$ and $\widetilde{\bbm{\omega}}(\tau_2)$ in the $\ell_\infty$ norm. Following this proposition, we first consider the banded VAR model with the bandwidth $k_0$ in Example \ref{ex:banded}, denote by $\widehat{\bbm{\theta}}_\text{B}(\lambda,\tau)$ and $\widehat{\bm{A}}_\text{B}(\lambda,\tau)$ the linear-restricted constrained Yule--Walker estimators with robust autocovariance estimators $\widetilde{\bm{\Omega}}(\tau)$ and $\widetilde{\bbm{\omega}}(\tau,\tau)$. Note that as each $\bm{c}_i$ is a coordinate vector, both $\bm{w}_{it}$ and $\bm{z}_{it}$ are 1-dimensional, and the Assumption \ref{asmp:moment_linear} reduces to the element-wise moment bound in Assumption \ref{asmp:moment}. The following estimation upper bounds can be derived.

\begin{theorem}[Banded VAR upper bounds]\label{thm:banded}
  For the banded VAR model with a bounded bandwidth $k_0$ in Example \ref{ex:banded}, suppose that $\|\bbm{\theta}^*\|_\infty\leq C<\infty$, and Assumptions \ref{asmp:mixing} and \ref{asmp:moment} hold. If $T\gtrsim\log(p)$,
  \begin{equation}
    \tau\asymp\left[\frac{M_{2+2\epsilon}n_\textup{eff}}{\log(p)}\right]^{\frac{1}{1+\epsilon}},~~\text{and}~~\lambda\asymp\left[\frac{M_{2+2\epsilon}^{1/\epsilon}\log(p)}{n_\textup{eff}}\right]^{\frac{1}{1+\epsilon}},
  \end{equation}
  then, with probability at least $1-C\exp[-C\log(p)\log(T)]$,
  \begin{equation}
    \begin{split}
      \|\bbm{\widehat{\theta}}_\textup{B}(\lambda,\tau)-\bbm{\theta}^*\|_\infty &\lesssim \|\bm{\Omega}^{-1}\|_{1,\infty}\left[\frac{M_{2+2\epsilon}^{1/\epsilon}\log(p)}{n_\textup{eff}}\right]^{\frac{\epsilon}{1+\epsilon}},\\
      \|\bm{\widehat{A}}_\textup{B}(\lambda,\tau)-\bm{A}^*\|_\textup{op} &\lesssim \|\bm{\Omega}^{-1}\|_{1,\infty}\left[\frac{M_{2+2\epsilon}^{1/\epsilon}\log(p)}{n_\textup{eff}}\right]^{\frac{\epsilon}{1+\epsilon}},\\
      \text{and}~~\|\bbm{\widehat{\theta}}_\textup{B}(\lambda,\tau)-\bbm{\theta}^*\|_2= \|\bm{\widehat{A}}_\textup{B}(\lambda,\tau)-\bm{A}^*\|_\textup{F} &\lesssim
      \sqrt{p}\|\bm{\Omega}^{-1}\|_{1,\infty}\left[\frac{M_{2+2\epsilon}^{1/\epsilon}\log(p)}{n_\textup{eff}}\right]^{\frac{\epsilon}{1+\epsilon}}.
    \end{split}
  \end{equation}
\end{theorem}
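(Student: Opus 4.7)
The plan is to reduce Theorem \ref{thm:banded} to Proposition \ref{prop:3} by supplying deviation bounds of the stated order for $\widetilde{\bm{\Omega}}(\tau)$ and $\widetilde{\bbm{\omega}}(\tau,\tau)$, and then translating the resulting $\ell_\infty$ bound on $\widehat{\bbm{\theta}}_\textup{B}-\bbm{\theta}^*$ into the claimed operator- and Frobenius-norm bounds on $\widehat{\bm{A}}_\textup{B}-\bm{A}^*$. Because the restriction matrix $\bm{C}$ in Example \ref{ex:banded} has coordinate-vector columns, the remark following Proposition \ref{prop:linear} identifies $\widetilde{\bm{\Omega}}(\tau)=\bm{C}^\top(\bm{I}_p\otimes\widetilde{\bm{\Sigma}}_0^{\textup{E}}(\tau))\bm{C}$ and $\widetilde{\bbm{\omega}}(\tau,\tau)=\bm{C}^\top\textup{vec}(\widetilde{\bm{\Sigma}}_1^{\textup{E}}(\tau)^\top)$, so I would harvest concentration directly from Proposition \ref{prop:element} rather than re-proving it.

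First, with $d=1$, Proposition \ref{prop:element} gives $\|\widetilde{\bm{\Sigma}}_\ell^{\textup{E}}(\tau)-\bm{\Sigma}_\ell\|_\infty\lesssim[M_{2+2\epsilon}^{1/\epsilon}\log p/n_\textup{eff}]^{\epsilon/(1+\epsilon)}$ for $\ell\in\{0,1\}$ on its high-probability event. Because each coordinate of $\bbm{\omega}$ equals a single entry of $\bm{\Sigma}_1$, the required $\zeta_2$ is immediate: $\|\widetilde{\bbm{\omega}}(\tau,\tau)-\bbm{\omega}\|_\infty\leq\|\widetilde{\bm{\Sigma}}_1^{\textup{E}}(\tau)-\bm{\Sigma}_1\|_\infty$. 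The crucial step is the $\|\cdot\|_{1,\infty}$ control of $\widetilde{\bm{\Omega}}(\tau)-\bm{\Omega}$: since the coordinate-vector columns of $\bm{C}$ partition the non-zero entries of $\bm{A}$ by their row index, $\bm{\Omega}$ is block-diagonal with $p$ blocks of size at most $2k_0+1$, so every column of $\widetilde{\bm{\Omega}}(\tau)-\bm{\Omega}$ has at most $2k_0+1=O(1)$ non-zero entries, each bounded by $\|\widetilde{\bm{\Sigma}}_0^{\textup{E}}(\tau)-\bm{\Sigma}_0\|_\infty$. This gives $\zeta_1\lesssim\|\widetilde{\bm{\Sigma}}_0^{\textup{E}}(\tau)-\bm{\Sigma}_0\|_\infty$ at the desired rate. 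Since $\|\bbm{\theta}^*\|_\infty$ is bounded by assumption, the stated $\lambda$ dominates $\zeta_1\|\bbm{\theta}^*\|_\infty+\zeta_2$, fulfilling the hypothesis of Proposition \ref{prop:3}.

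Proposition \ref{prop:3} then yields the first claim, $\|\widehat{\bbm{\theta}}_\textup{B}(\lambda,\tau)-\bbm{\theta}^*\|_\infty\leq 2\lambda\|\bm{\Omega}^{-1}\|_{1,\infty}$. Since $\bm{C}$ is orthonormal, $\textup{vec}((\widehat{\bm{A}}_\textup{B}-\bm{A}^*)^\top)=\bm{C}(\widehat{\bbm{\theta}}_\textup{B}-\bbm{\theta}^*)$, so $\|\widehat{\bm{A}}_\textup{B}-\bm{A}^*\|_\textup{F}=\|\widehat{\bbm{\theta}}_\textup{B}-\bbm{\theta}^*\|_2\leq\sqrt{r}\|\widehat{\bbm{\theta}}_\textup{B}-\bbm{\theta}^*\|_\infty$, and because $r\asymp p$ this contributes the $\sqrt{p}$ factor in the Frobenius bound. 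For the operator norm, $\widehat{\bm{A}}_\textup{B}-\bm{A}^*$ is banded with bandwidth $k_0$ and every entry is bounded by $\|\widehat{\bbm{\theta}}_\textup{B}-\bbm{\theta}^*\|_\infty$, so its maximum row and column $\ell_1$ sums are at most $(2k_0+1)\|\widehat{\bbm{\theta}}_\textup{B}-\bbm{\theta}^*\|_\infty$; the standard inequality $\|\bm{M}\|_\textup{op}^2\leq(\max_i\sum_j|M_{ij}|)(\max_j\sum_i|M_{ij}|)$ then gives $\|\widehat{\bm{A}}_\textup{B}-\bm{A}^*\|_\textup{op}\lesssim\|\widehat{\bbm{\theta}}_\textup{B}-\bbm{\theta}^*\|_\infty$ as claimed.

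The main obstacle is the $\|\cdot\|_{1,\infty}$ control of $\widetilde{\bm{\Omega}}(\tau)-\bm{\Omega}$: a naive bound would pay an extra factor of $r\asymp p$ and destroy the rate, and the argument succeeds only by exploiting the block-diagonal structure of $\bm{\Omega}$ induced by the coordinate-vector form of $\bm{C}$, together with the bounded-bandwidth hypothesis that caps every block at $(2k_0+1)\times(2k_0+1)$. Once this piece is secured, the remainder is essentially bookkeeping: concentration is outsourced to Proposition \ref{prop:element}, the algebraic link between $\bbm{\theta}$ and $\bm{A}$ is explicit from the orthonormal $\bm{C}$, and the $\epsilon$-dependent phase transition carries over without modification.
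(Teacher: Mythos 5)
Your proposal is correct and follows essentially the same route as the paper: reduce to Proposition \ref{prop:3} via concentration of the truncated autocovariance estimators, then use the bounded bandwidth to get the operator-norm bound (via $\|\widehat{\bm{A}}_\textup{B}-\bm{A}^*\|_\textup{op}\leq(2k_0+1)\|\widehat{\bbm{\theta}}_\textup{B}-\bbm{\theta}^*\|_\infty$) and the dimension count $r\asymp p$ for the Frobenius bound. Your explicit $\ell_{1,\infty}$ control of $\widetilde{\bm{\Omega}}(\tau)-\bm{\Omega}$ through the block-diagonal structure induced by the coordinate-vector columns of $\bm{C}$ is precisely the step the paper leaves implicit in its one-line appeal to Propositions \ref{prop:element} and \ref{prop:linear}, so this is a faithful (and slightly more detailed) rendering of the same argument.
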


When $M_{2+2\epsilon}$ and $\|\bm{\Omega}^{-1}\|_{1,\infty}$ are fixed, the upper bounds in this theorem imply that asymptotically the rates of convergence in the operator norm and Frobenius norm scale as $[\log(p)/n_\text{eff}]^{\epsilon/(1+\epsilon)}$ and $[p\log(p)/n_\text{eff}]^{\epsilon/(1+\epsilon)}$, respectively. In other words, the sample size requirement in the operator norm is $n_\text{eff}\gtrsim\log(p)$. When $\epsilon=1$, our results are comparable and consistent with those in \citet{guo2016high} up to a logarithm factor.
However, in \citet{guo2016high}, they imposed the \textit{i.i.d.} condition and sub-exponential tail condition for $\{\bbm{\varepsilon}_t\}$ when $p\gtrsim T$. Hence, under high-dimensional scaling, both the operator norm and Frobenius norm convergence rates of the robust constrained Yule--Walker estimator for heavy-tailed data with a bounded fourth moment are  almost the same as those of the standard ordinary least squares under the sub-exponential tail condition. In sum, the convergence rates obtained under the sub-exponential distribution can be achieved by the robust estimator under a much relaxed fourth moment condition, and we also establish the estimation consistency and the rates of convergence under the $(2+2\epsilon)$-th moment condition.

\begin{remark}
  The stationarity of the VAR(1) model requires that the largest eigenvalue of $\bm{A}^*$ in terms of absolute value is strictly smaller than one, and hence the absolute value of $\bbm{\theta}^*_i$ is expected to be bounded.
  As $\bm{\Omega}=\bm{C}^\top(\bm{I}_p\otimes\bm{\Sigma}_0)\bm{C}$ and the columns of $\bm{C}$ are all coordinate vectors, $\|\bm{\Omega}^{-1}\|_{1,\infty}$ is bounded by $\|\bm{\Sigma}_0^{-1}\|_{1,\infty}$, and the boundedness of $\|\bm{\Sigma}_0^{-1}\|_{1,\infty}$ has been discussed in Remark \ref{rmk:3.2}.
\end{remark}

Next, we consider another case of the linear-restricted VAR models, the network VAR model in Example \ref{ex:network}. In this model, $\bm{C}$ is a $p^2$-by-2 matrix with $\bm{c}_1=p^{-1/2}\textup{vec}(\bm{I}_p)$ and $\bm{c}_2=\|\bm{W}\|_\textup{F}^{-1}\textup{vec}(\bm{W}^\top)$ being the restrictions associated with the nodal effect and network effect, respectively. Accordingly, $\bbm{\theta}=(p^{1/2}\beta_1,\|\bm{W}\|_\textup{F}\beta_2)^\top$. As $\bm{C}$ needs to be normalized fro identification purpose, we assume that the unscaled parameters, $\beta_1$ and $\beta_2$ are constants independent of $p$. In addition, we assume that each node in the network is only connected to a fixed number of nodes, so the network adjacency matrix satisfies $\|\bm{W}\|_\textup{F}\asymp\sqrt{p}$. 

According to the linear restriction matrix $\bm{C}$ of the network VAR model, it can be verified that $\bm{w}_{1t}=p^{-1/2}\bm{y}_{t-1}$, $\bm{w}_{2t}=\|\bm{W}\|_\textup{F}^{-1}\bm{W}\bm{y}_{t-1}$, and $\bm{z}_{1t}=\bm{z}_{2t}=\bm{y}_t$. Hence, if $\max_{i=1,2}\mathbb{E}\|\bm{w}_{it}\|_2^{2+2\epsilon}=M_{2+2\epsilon}$ which is independent of $p$, then, $\mathbb{E}\|\bm{z}_{1t}\|_2^{2+2\epsilon}=\mathbb{E}\|\bm{z}_{2t}\|_2^{2+2\epsilon}\leq p^{1+\epsilon}M_{2+2\epsilon}$. For the network VAR model, denote by $\widehat{\bbm{\theta}}_\text{N}(\lambda,\tau_1,\tau_2)$ the constrained Yule--Walker estimator with the constraint parameter $\lambda$ and the autocovariance estimators $\widetilde{\bm{\Omega}}(\tau_1)$ and $\widetilde{\bbm{\omega}}(\tau_1,\tau_2)$. The estimation upper bounds of the network VAR model can be derived as follows.

\begin{theorem}[Network VAR upper bounds]
  \label{thm:network}
  For the network VAR model in Example \ref{ex:network} with $\|\bm{W}\|_\textup{F}\asymp\sqrt{p}$ and $\|\bbm{\theta}^*\|_\infty\asymp\sqrt{p}$, under Assumptions \ref{asmp:mixing} and \ref{asmp:moment_linear} with $M_{1,2+2\epsilon}=M_{2+2\epsilon}$ and $M_{2,2+2\epsilon}=p^{1+\epsilon}M_{2+2\epsilon}$, if
  \begin{equation}
    \tau_1\asymp(M_{2+2\epsilon}n_\textup{eff})^{\frac{1}{2+2\epsilon}},~~\tau_2\asymp\sqrt{p}(M_{2+2\epsilon}n_\textup{eff})^{\frac{1}{2+2\epsilon}},~~\text{and}~~\lambda\asymp\sqrt{p}\left(\frac{M_{2+2\epsilon}^{1/\epsilon}}{n_\textup{eff}}\right)^{\frac{\epsilon}{1+\epsilon}},
  \end{equation}
  then, with probability at least $1-C\exp[-C\log(T)]$,
  \begin{equation}
    \begin{split}
      \|\bbm{\widehat{\theta}}_\textup{N}(\lambda,\tau_1,\tau_2)-\bbm{\theta}^*\|_\infty&\lesssim\|\bm{\Omega}^{-1}\|_{1,\infty}\sqrt{p}\left(\frac{M_{2+2\epsilon}^{1/\epsilon}}{n_\textup{eff}}\right)^{\frac{\epsilon}{1+\epsilon}}, \\
      \text{and}~~\|\bm{\widehat{A}}_\textup{N}(\lambda,\tau_1,\tau_2)-\bm{A}^*\|_\textup{F}=\|\bbm{\widehat{\theta}}_\textup{N}(\lambda,\tau_1,\tau_2)-\bbm{\theta}^*\|_2&\lesssim \|\bm{\Omega}^{-1}\|_{1,\infty}\sqrt{p}\left(\frac{M_{2+2\epsilon}^{1/\epsilon}}{n_\textup{eff}}\right)^{\frac{\epsilon}{1+\epsilon}}.
    \end{split}
  \end{equation}
\end{theorem}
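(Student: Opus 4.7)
The plan is to chain Proposition \ref{prop:linear}, which controls the robust autocovariance surrogates $\widetilde{\bm{\Omega}}$ and $\widetilde{\bbm{\omega}}$ in the $\ell_\infty$ norm, with Proposition \ref{prop:3}, which converts such deviations into an $\ell_\infty$ error bound on $\widehat{\bbm{\theta}}$. The simplification specific to the network VAR is that only $r=2$ restriction parameters appear, so both the $\log(r)$ factor inside Proposition \ref{prop:linear} and any $\|\cdot\|_{1,\infty}$-versus-$\|\cdot\|_\infty$ conversion are absorbed as constants, and the stated probability $1-C\exp[-C\log(T)]$ drops out of $1-C\exp[-C\log(r)\log(T)]$ directly.

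First I would instantiate Proposition \ref{prop:linear} with $M_{1,2+2\epsilon}=M_{2+2\epsilon}$ and $M_{2,2+2\epsilon}=p^{1+\epsilon}M_{2+2\epsilon}$, using the prescribed $\tau_1,\tau_2$. The key arithmetic is $M_{1,2+2\epsilon}^{1/(2\epsilon)}M_{2,2+2\epsilon}^{1/(2\epsilon)}=p^{(1+\epsilon)/(2\epsilon)}M_{2+2\epsilon}^{1/\epsilon}$; raising this to $\epsilon/(1+\epsilon)$ produces an extra $\sqrt{p}$ factor for the $\widetilde{\bbm{\omega}}$ bound but not for $\widetilde{\bm{\Omega}}$. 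Thus, with the stated high probability,
\begin{equation*}
\|\widetilde{\bm{\Omega}}(\tau_1)-\bm{\Omega}\|_\infty\lesssim\left(\frac{M_{2+2\epsilon}^{1/\epsilon}}{n_\textup{eff}}\right)^{\epsilon/(1+\epsilon)}\quad\text{and}\quad\|\widetilde{\bbm{\omega}}(\tau_1,\tau_2)-\bbm{\omega}\|_\infty\lesssim\sqrt{p}\left(\frac{M_{2+2\epsilon}^{1/\epsilon}}{n_\textup{eff}}\right)^{\epsilon/(1+\epsilon)}.
\end{equation*}
Using $\|\bm{M}\|_{1,\infty}\leq 2\|\bm{M}\|_\infty$ for a $2\times 2$ matrix, these give the quantities $\zeta_1\lesssim(M_{2+2\epsilon}^{1/\epsilon}/n_\textup{eff})^{\epsilon/(1+\epsilon)}$ and $\zeta_2\lesssim\sqrt{p}(M_{2+2\epsilon}^{1/\epsilon}/n_\textup{eff})^{\epsilon/(1+\epsilon)}$ required by Proposition \ref{prop:3}. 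Because $\|\bbm{\theta}^*\|_\infty\asymp\sqrt{p}$, the two summands $\zeta_1\|\bbm{\theta}^*\|_\infty$ and $\zeta_2$ are of the same order, which matches the prescribed $\lambda$.

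Proposition \ref{prop:3} then delivers the first bound $\|\widehat{\bbm{\theta}}-\bbm{\theta}^*\|_\infty\leq 2\lambda\|\bm{\Omega}^{-1}\|_{1,\infty}$. Since $\bbm{\theta}\in\mathbb{R}^2$, the $\ell_2$ bound follows immediately from $\|\cdot\|_2\leq\sqrt{2}\|\cdot\|_\infty$, without incurring any dimension factor. The Frobenius identity $\|\widehat{\bm{A}}-\bm{A}^*\|_\textup{F}=\|\widehat{\bbm{\theta}}-\bbm{\theta}^*\|_2$ is a direct consequence of $\textup{vec}(\widehat{\bm{A}}^\top-\bm{A}^{*\top})=\bm{C}(\widehat{\bbm{\theta}}-\bbm{\theta}^*)$ combined with the orthonormality $\bm{C}^\top\bm{C}=\bm{I}_2$.

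The main obstacle is tracking the $\sqrt{p}$-inflation cleanly. Because $\bm{z}_{it}=\bm{y}_t$ is $p$-dimensional, its $\ell_2$ moments scale with $\sqrt{p}$, forcing $M_{2,2+2\epsilon}=p^{1+\epsilon}M_{2+2\epsilon}$ and thereby injecting the $\sqrt{p}$ penalty into $\zeta_2$. This is not an artifact of slack bookkeeping but the intrinsic price of coupling a scalar parameter to a high-dimensional response; it is cancelled by the normalization-induced growth $\|\bbm{\theta}^*\|_\infty\asymp\sqrt{p}$ on the other side of the Proposition \ref{prop:3} constraint, so the $\lambda$ in the theorem is the minimal one compatible with the two bounds. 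Once these $\sqrt{p}$ factors are checked to align, the proof is an essentially mechanical composition of Propositions \ref{prop:linear} and \ref{prop:3}.
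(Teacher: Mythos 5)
Your proposal is correct and follows essentially the same route as the paper: the paper's proof simply cites Proposition \ref{prop:3} together with Proposition \ref{prop:linear} and notes that $r$ is fixed, exactly the chain you carry out. Your version additionally makes explicit the $\sqrt{p}$ bookkeeping from $M_{2,2+2\epsilon}=p^{1+\epsilon}M_{2+2\epsilon}$, the $\ell_\infty$-to-$\ell_{1,\infty}$ conversion for the $2\times2$ matrix $\widetilde{\bm{\Omega}}$, and the identity $\|\widehat{\bm{A}}-\bm{A}^*\|_\textup{F}=\|\widehat{\bbm{\theta}}-\bbm{\theta}^*\|_2$ via $\bm{C}^\top\bm{C}=\bm{I}_2$, details the paper leaves implicit.
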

The upper bounds in this theorem are new for the network VAR model under the non-asymptotic scheme. 
Due to the normalization for the linear restriction matrix $\bm{C}$, both the true value $\|\bbm{\theta}^*\|_2$ and estimated value $\|\bbm{\widehat{\theta}}\|_2$ will diverge at a rate of $\sqrt{p}$ as the dimension $p$ increases to infinity. Hence, the $\ell_2$ rate of convergence for the normalized coefficients $p^{-1/2}\bbm{\theta}$ scales as $n_\text{eff}^{-\epsilon/(1+\epsilon)}$, independent of the dimension $p$, and our result with $\epsilon=1$ is comparable to that obtained under the sub-Gaussian condition in \citet{zheng2020finite}. The convergence rates under a bounded $(2+2\epsilon)$-th moment condition are established under a phase transition pattern.

\section{Minimax Lower Bounds}\label{sec:lower_bound}

In this section, we investigate theoretical properties on the lower bounds of estimation tasks for heavy-tailed high-dimensional time series data, including estimation of the high-dimensional VAR models and the autocovariance matrices. The lower bound analysis shows that the rates of convergence obtained in Section \ref{sec:3} are optimal in the minimax sense: there exists a distributional setting for the time series process for which the upper bounds obtained cannot be improved without further distributional assumptions.

\subsection{Lower bounds of VAR estimation}

Similarly to the upper bound analysis, no assumption on the data generating mechanism is imposed on the time series data in the lower bound analysis. We consider the VAR($d$) model and denote the distribution of $\{\bm{y}_t\}_{t=1}^T$ as $\mathbb{P}$. By the Yule--Walker equation, the true value of the VAR model coefficient matrix is defined as $\bm{A}^*(\mathbb{P})=\bm{\Sigma}_1(\mathbb{P})\bm{\Sigma}_0^{-1}(\mathbb{P})$, where $\bm{\Sigma}_0(\mathbb{P})=\mathbb{E}_{\mathbb{P}}[\bm{x}_t\bm{x}_t^\top]$ and $\bm{\Sigma}_1(\mathbb{P})=\mathbb{E}_{\mathbb{P}}[\bm{y}_t\bm{x}_t^\top]$.

The first case we consider is the sparse VAR model, where all row vectors of $\bm{A}^*$ are strictly sparse. For any $M>0$, $\epsilon\in(0,1]$ and $r\in(0,1)$, let $\mathcal{P}_\text{E}(M,\epsilon,r)$ denote the class of all joint distributions for the $\alpha$-mixing stochastic process $\bm{v}_t=(v_{1t},\dots,v_{pt})^\top$ with the element-wise moment condition, such that they satisfy $\max_{1\leq i\leq p}\mathbb{E}|v_{it}|^{2+2\epsilon}=M$ and $\alpha(\{\bm{v}_t\}_{t=-\infty}^{s},\{\bm{v}_t\}_{t=s+\ell}^\infty)=O(r^\ell)$ for any integer $s$ and lag order $\ell>0$. The moment parameter $M$ is imposed on the element of $\bm{v}_t$, which is consistent with Assumption \ref{asmp:moment}.

For $\bm{A}^{*}$ belonging to the sparse set $\mathbb{B}_0(s_0)$ defined in Section \ref{sec:3.1} and the process $\bm{y}_t$ following the distribution $\mathbb{P}\in\mathcal{P}_\text{E}(M,\epsilon,r)$, we have the minimax lower bound for the VAR estimation.

\begin{theorem}[Sparse VAR lower bound]
  \label{thm:LB_sparse_VAR}
  For any $M>0$, $\epsilon\in(0,1]$ and $r\in(0,1)$, suppose that $T/\log_{r^{-2}}T\geq2s_0$ and the joint distribution $\mathbb{P}$ of $\bm{y}_t$ belongs to $\mathcal{P}_\textup{E}(M,\epsilon,r)$. Then, for any estimator $\widehat{\bm{A}}:=\widehat{\bm{A}}(\{\bm{y}_t\}_{t=1}^T)$ which depends on the observations from $\bm{y}_1$ to $\bm{y}_T$,
  \begin{equation}
    \inf_{\widehat{\bm{A}}}\sup_{\substack{\mathbb{P}\in\mathcal{P}_\textup{E}(M,\epsilon,r),\\\bm{A}^*(\mathbb{P})\in\mathbb{B}_0(s_0)}}\mathbb{E}\left[\|\widehat{\bm{A}}^\top-\bm{A}^*(\mathbb{P})^\top\|_{2,\infty}\right]\gtrsim\sqrt{s_0}\|\bm{\Sigma}_0^{-1}(\mathbb{P})\|_{1,\infty}\left[\frac{M^{1/\epsilon}\log(T)}{T}\right]^{\frac{\epsilon}{1+\epsilon}}.
  \end{equation}
\end{theorem}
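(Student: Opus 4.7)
The strategy I would follow combines an Assouad-type reduction over an $s_0$-dimensional hypercube with a heavy-tail two-point construction that delivers the $[M^{1/\epsilon}\log(T)/T]^{\epsilon/(1+\epsilon)}$ per-coordinate rate, all transported to the $\alpha$-mixing setting by a blocking argument. First I would reduce the $\ell_{2,\infty}$ lower bound to an $\ell_2$ lower bound on a single row of $\bm{A}^*$ via the elementary inequality $\|\widehat{\bm{A}}^\top - \bm{A}^{*\top}\|_{2,\infty} \geq \|\widehat{\bm{a}}_1 - \bm{a}_1^*\|_2$, and set the remaining rows of $\bm{A}^*$ to zero so that every candidate in the hypothesis class lies automatically inside $\mathbb{B}_0(s_0)$.

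For a fixed support $S\subset\{1,\dots,pd\}$ of cardinality $s_0$ I would index hypotheses by $\bbm{\omega}\in\{-1,+1\}^{s_0}$, setting $\bm{a}_1^*(\bbm{\omega})=\delta\bbm{\omega}$ on $S$ and zero elsewhere, for a scale $\delta$ to be calibrated. The driving innovation $\bbm{\varepsilon}_t$ would be drawn from a heavy-tailed contamination in the spirit of \citet{devroye2016sub} and \citet{sun2020adaptive}: a bounded core plus a rare excursion of magnitude $L$ occurring with probability $q$, with $q$ and $L$ tuned so that $qL^{2+2\epsilon}\asymp M$ saturates Assumption \ref{asmp:moment}. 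Calibrating $(q,L)$ along the heavy-tail trade-off makes the minimax testing error for a single coordinate scale exactly as $\delta\asymp[M^{1/\epsilon}\log(T)/T]^{\epsilon/(1+\epsilon)}$. The factor $\|\bm{\Sigma}_0^{-1}\|_{1,\infty}$ enters multiplicatively when the separation in the Yule--Walker relation $\bm{A}^*=\bm{\Sigma}_1\bm{\Sigma}_0^{-1}$ is transferred back to a separation in $\bm{a}_1^*$, and the prescribed value is engineered by choosing the innovation covariance with a block-diagonal structure supported on $S$.

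The $\alpha$-mixing assumption is then handled by Berbee's coupling (or Yu's blocking lemma), which converts the $T$ serially dependent observations into $T_\textup{eff}\asymp T/\log_{r^{-2}}T$ effectively independent blocks at a coupling cost that is geometrically small by Assumption \ref{asmp:mixing}. The hypothesis $T/\log_{r^{-2}}T\geq 2s_0$ in the theorem is precisely what guarantees enough surviving blocks for Assouad's lemma to be applicable. Applying Assouad's lemma across the $s_0$ coordinates of $\bbm{\omega}$ aggregates the one-dimensional testing gaps of size $\delta$ into an $\ell_2$ lower bound of order $\sqrt{s_0}\,\delta$, and combining with the $\|\bm{\Sigma}_0^{-1}\|_{1,\infty}$ scaling produces the stated rate.

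The chief obstacle is the \emph{joint} calibration of the hypothesis class so that every $\bbm{\omega}$ simultaneously (i) induces a stationary $\bm{y}_t$ whose marginal $(2+2\epsilon)$-th moments are bounded by $M$, not just on $\bbm{\varepsilon}_t$; (ii) satisfies a uniform geometric $\alpha$-mixing rate not exceeding $r$; and (iii) realizes the prescribed $\|\bm{\Sigma}_0^{-1}(\mathbb{P})\|_{1,\infty}$. Because the stationary moments of a VAR(1) driven by a heavy-tailed innovation depend non-linearly on $\bm{A}^*(\bbm{\omega})$, keeping $\|\bm{A}^*(\bbm{\omega})\|_\textup{op}$ a uniformly contractive perturbation of zero is essential so that both moment and mixing bounds transfer cleanly from $\bbm{\varepsilon}_t$ to $\bm{y}_t$. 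Verifying this triple calibration while preserving the sharp heavy-tail separation $\delta$, and controlling the additional error introduced by the mixing coupling, is the delicate bookkeeping step on which the whole argument pivots.
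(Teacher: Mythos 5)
There is a genuine gap, and it sits exactly where your proposal leans on standard machinery instead of a construction. First, the $\log(T)$ factor in the target rate cannot be obtained by ``handling'' the mixing assumption with Berbee/Yu coupling: coupling dependent data to independent blocks is an upper-bound device, and for a minimax \emph{lower} bound over $\mathcal{P}_\textup{E}(M,\epsilon,r)$ you must instead \emph{choose} distributions inside the class whose serial dependence is as strong as the mixing constraint allows, so that $T$ observations genuinely carry only $\asymp T/\log_{r^{-2}}T$ samples' worth of information. If, as your sketch suggests, the innovations are i.i.d.\ heavy-tailed and the dependence comes only from the VAR recursion, the best you can prove is the rate without the $\log(T)$ deflation. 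The paper builds this in directly: the process is blockwise constant on blocks of length $\lfloor\log_{r^{-2}}T\rfloor$ and independent across blocks, which is verified to be geometrically $\alpha$-mixing and reduces the effective sample to $\widetilde{T}\asymp T/\log_{r^{-2}}T$ by construction.

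Second, the ``chief obstacle'' you flag --- calibrating a VAR($1$) DGP so that the $(2+2\epsilon)$-th moment bound holds for $\bm{y}_t$ itself, the mixing rate is uniform over the hypercube, the prescribed $\|\bm{\Sigma}_0^{-1}\|_{1,\infty}$ is realized, and the per-coordinate separation still reaches $[M^{1/\epsilon}\log(T)/T]^{\epsilon/(1+\epsilon)}$ --- is not a bookkeeping step but the heart of the proof, and your proposal does not resolve it. The paper sidesteps it entirely by never positing a VAR data-generating process: it directly constructs a coupled \emph{two-point} family (a Le Cam argument, not Assouad) of discrete distributions for $(\bm{x}_t,y_{s_0+1,t+1})$ in which all $s_0$ support coordinates are simultaneously active (this is where $\sqrt{s_0}$ comes from, with $\bm{\Sigma}_0$ diagonal so the $\|\bm{\Sigma}_0^{-1}\|_{1,\infty}$ factor is explicit), the moment bound $\mathbb{E}|y_{it}|^{2+2\epsilon}=M$ holds exactly via the rare-spike calibration $c^{2+2\epsilon}s_0^{2\epsilon}\gamma=M$, the target parameter is \emph{defined} through the Yule--Walker projection $\bm{a}^*=\bm{\Sigma}_{\bm{x}}^{-1}\mathbb{E}[\bm{x}_ty_{s_0+1,t}]$, and the two coupled sample paths coincide with probability at least $2\delta$, forcing any estimator to miss one of the two (sign-opposite) parameters by $\|\bm{a}^*\|_2\asymp\sqrt{s_0}\|\bm{\Sigma}_0^{-1}\|_{1,\infty}[M^{1/\epsilon}\log(T)/T]^{\epsilon/(1+\epsilon)}$. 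Your Assouad-plus-VAR route is not obviously unsalvageable, but as written the two decisive quantitative claims (the separation $\delta$ achievable per coordinate, and the survival of the $\log(T)$ factor) are asserted rather than established, so the proposal does not yet constitute a proof.
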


The minimax lower bound in this theorem matches the $\ell_{2,\infty}$ upper bound in Theorem \ref{thm:sparseAR} with $q=0$ up to a logarithm factor that is negligible compared with $T$. Hence, for the strictly sparse VAR model, the constrained Yule--Walker estimator $\widehat{\bm{A}}_\text{S}(\lambda,\tau)$ introduced in Section \ref{sec:3.1} is nearly rate-optimal when the regularization parameter $\lambda$ and truncation parameter $\tau$ are properly chosen. In addition, the effective sample size in the lower bound is $T/\log(T)$, where the $\log(T)$ factor can be referred to as the price paid for the serial  dependency in the data.

Moreover, it is clear that the banded VAR model with the bandwidth $k$ is a special case of the strictly sparse VAR model with the sparsity level $s_0=2k+1$. When $k$ is fixed, Theorem \ref{thm:LB_sparse_VAR} directly implies that the minimax lower bound of the banded VAR model is of rate $\|\bm{\Omega}^{-1}\|_{1,\infty}[M^{1/\epsilon}\log(T)/T]^{\epsilon/(1+\epsilon)}$ in the operator norm, matching the upper bound of the banded VAR in Theorem \ref{thm:banded}.

The second case considered is the reduced-rank VAR model where $\bm{A}^*$ is of low rank. For any $M>0$, $\epsilon\in(0,1]$ and $r\in(0,1)$, let $\mathcal{P}_\text{V}(M,\epsilon,r)$ denote the class of all joint distributions for the $\alpha$-mixing stochastic process $\bm{v}_t=(v_{1t},\dots,v_{pt})^\top$ with the vector-wise moment condition, such that $\sup_{\bm{u}\in\mathbb{R}^p,\|\bm{u}\|_2=1}\mathbb{E}|\bm{v}_t^\top\bm{u}|^{2+2\epsilon}=M$ and $\alpha(\{\bm{v}_t\}_{-\infty}^s,\{\bm{v}_t\}_{s+\ell}^\infty)=O(r^\ell)$ for any integer $s$ and lag order $\ell>0$. The moment condition defined on the whole vector is similar to Assumption \ref{asmp:mixing2} for the reduced-rank VAR model, but the technical term $\delta$ is omitted in the lower bound analysis.

For $\bm{A}^*$ belonging to the low-rank set $\widetilde{\mathbb{B}}_0(r_0)$ defined in Section \ref{sec:3.2} and the process $\bm{y}_t$ following the distribution $\mathbb{P}\in\mathcal{P}_\text{V}(M,\epsilon,r)$, the lower bound in terms of Frobenius norm is derived.

\begin{theorem}[Reduced-rank VAR lower bound]
  \label{thm:LB_lowrank_VAR}
  For any $M>0$, $\epsilon\in(0,1]$ and $r\in(0,1)$, suppose that $p\geq20$, $T>p/128$, and the joint distribution $\mathbb{P}$ of $\bm{y}_t$ belongs to $\mathcal{P}_\textup{V}(M,\epsilon,r)$. Then, for any estimator $\widehat{\bm{A}}:=\widehat{\bm{A}}(\{\bm{y}_t\}_{t=1}^T)$ which depends on the observations from $\bm{y}_1$ to $\bm{y}_T$,
  \begin{equation}
    \inf_{\widehat{\bm{A}}}\sup_{\substack{\mathbb{P}\in\mathcal{P}_\textup{V}(M,\epsilon,r),\\\bm{A}^*(\mathbb{P})\in\widetilde{\mathbb{B}}_0(r_0)}}\mathbb{E}\left[\|\widehat{\bm{A}}^\top-\bm{A}^*(\mathbb{P})^\top\|_\textup{F}\right]\gtrsim\sqrt{r_0}\|\bm{\Sigma}_0^{-1}(\mathbb{P})\|_\textup{op}\left(\frac{pM^{1/\epsilon}}{T}\right)^{\frac{\epsilon}{1+\epsilon}}.
  \end{equation}
\end{theorem}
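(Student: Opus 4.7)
The plan is to invoke a generalized Fano inequality on an ensemble of stationary VAR$(1)$ processes whose coefficient matrices form a rich low-rank packing and whose innovations are deliberately designed to saturate the $(2+2\epsilon)$-th moment budget $M$. I would fix a reference matrix $\bar{\bm{A}}$ with spectral radius bounded away from $1$ (so that the resulting stationary VAR is $\beta$-mixing by Proposition~2 of \citet{liebscher2005towards}) and with $\|\bm{\Sigma}_0^{-1}\|_\textup{op}$ of the advertised order, then set $\bm{A}^{(k)}=\bar{\bm{A}}+\rho\,\bm{U}\bm{B}_k$, where $\bm{U}\in\mathbb{R}^{p\times r_0}$ has orthonormal columns and the matrices $\bm{B}_k\in\{\pm 1/\sqrt{r_0 p}\}^{r_0\times p}$ come from a Gilbert--Varshamov packing with pairwise Hamming distance at least $r_0 p/8$. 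This produces $\log N\gtrsim r_0 p$ candidate matrices, each of rank at most $r_0$, with pairwise Frobenius separation $\|\bm{A}^{(k)}-\bm{A}^{(j)}\|_\textup{F}\asymp\rho$, where $\rho$ will be calibrated below to match the target rate.

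\textbf{Heavy-tailed innovation construction.} For each $\bm{A}^{(k)}$, I would specify $\mathbb{P}_k$ as the law of the stationary VAR$(1)$ driven by an innovation $\bbm{\varepsilon}_t$ of mixture type, following the mean-estimation construction of \citet{devroye2016sub} and its linear-regression extension in \citet{sun2020adaptive}: with probability $1-q$ a benign isotropic Gaussian bulk and with probability $q$ a heavy spike of magnitude $R$, calibrated so that $qR^{2+2\epsilon}\asymp M$ while the variance stays of constant order. Propagating this innovation through the MA$(\infty)$ expansion $\bm{y}_t=\sum_{j\geq 0}(\bm{A}^{(k)})^j\bbm{\varepsilon}_{t-j}$ preserves the vector moment $\sup_{\|\bm{u}\|_2=1}\mathbb{E}|\bm{y}_t^\top\bm{u}|^{2+2\epsilon}\lesssim M$ thanks to the bounded spectral radius of $\bar{\bm{A}}$, so that $\mathbb{P}_k\in\mathcal{P}_\textup{V}(M,\epsilon,r)$ uniformly in $k$ and the Yule--Walker identity gives $\bm{A}^*(\mathbb{P}_k)=\bm{A}^{(k)}$.

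\textbf{KL bound and Fano conclusion.} The chain rule in time reduces $\mathrm{KL}(\mathbb{P}_k^{\otimes T}\|\mathbb{P}_j^{\otimes T})$ to $T$ one-step conditional KLs between innovation laws shifted by $(\bm{A}^{(k)}-\bm{A}^{(j)})\bm{y}_{t-1}$. Controlling each such KL by working directly with the mixture density and integrating against the stationary law of $\bm{y}_{t-1}$, then optimizing over $(q,R)$ subject to $qR^{2+2\epsilon}\asymp M$, yields a total KL that balances $\log N\gtrsim r_0 p$ in Fano's inequality precisely when $\rho\asymp \sqrt{r_0}\,\|\bm{\Sigma}_0^{-1}\|_\textup{op}\,(pM^{1/\epsilon}/T)^{\epsilon/(1+\epsilon)}$. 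Substituting this Frobenius separation into the standard Fano reduction, together with the hypothesis $T>p/128$ that makes the packing cardinality dominate the KL, delivers the stated minimax lower bound.

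\textbf{Main obstacle.} The crux of the argument is the tight KL calculation for the mixture innovation under a random shift: because $(\bm{A}^{(k)}-\bm{A}^{(j)})\bm{y}_{t-1}$ is itself heavy-tailed and depends on the common past, recovering the exponent $\epsilon/(1+\epsilon)$---rather than the naive Gaussian exponent $1/2$ that comes from using only the variance---requires exploiting the cancellation on the spike event and carefully controlling the tail of the conditioning variable, most plausibly via a truncation of $\bm{y}_{t-1}$ or a coupling between $\mathbb{P}_k$ and $\mathbb{P}_j$ on the heavy event. One must simultaneously verify uniformly across the packing that the induced $\bm{y}_t$ stays in $\mathcal{P}_\textup{V}(M,\epsilon,r)$, that the $\beta$-mixing coefficients of $\mathbb{P}_k$ decay at rate $r$, and that $\|\bm{\Sigma}_0^{-1}(\mathbb{P}_k)\|_\textup{op}$ matches the prefactor in the target bound, which together form the main technical core of the proof.
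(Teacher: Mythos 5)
There is a genuine gap, and it sits exactly at the step you flag as the ``main obstacle'': the per-step KL bound for your mixture innovations cannot deliver the exponent $\epsilon/(1+\epsilon)$. With innovations of the form (Gaussian bulk with probability $1-q$) $+$ (heavy spike with probability $q$), the conditional law of $\bm{y}_t$ given $\bm{y}_{t-1}$ under $\bm{A}^{(k)}$ versus $\bm{A}^{(j)}$ differs by a mean shift $(\bm{A}^{(k)}-\bm{A}^{(j)})\bm{y}_{t-1}$ applied to the \emph{entire} mixture, so the Gaussian bulk alone already contributes a one-step KL of order $\|(\bm{A}^{(k)}-\bm{A}^{(j)})\bm{y}_{t-1}\|_2^2$, and after integrating against the stationary law the total KL is of order $T\rho^2$ (times a constant-order second moment). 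Fano then forces $\rho\lesssim\sqrt{r_0p/T}$, i.e.\ only the sub-Gaussian rate, which is strictly smaller than the target $\sqrt{r_0}(pM^{1/\epsilon}/T)^{\epsilon/(1+\epsilon)}$ whenever $\epsilon<1$. The hoped-for ``cancellation on the spike event'' is not available in your construction because the hypotheses are not coupled to agree off a rare event: the phase transition in the lower bound requires that the competing distributions coincide except on an event of probability $\sim\gamma$, with the magnitude on that event calibrated against the moment budget ($c^{2+2\epsilon}\gamma\asymp M$), so that the KL scales with $\gamma$ rather than with the squared separation. Re-engineering your VAR-with-innovations ensemble to have this property is essentially a different proof, not a repair of the stated one.

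The paper avoids this entirely by not constructing VAR processes at all. It builds i.i.d.\ discrete distributions supported on scaled sign vectors, with atom probabilities $\gamma/p+\langle\bm{v}_i,\bm{w}_j\rangle^2\gamma$ indexed by a Gilbert--Varshamov packing ($\log N\asymp p$, not $r_0p$), encodes serial structure through a deterministic time-shift linkage ($\bm{y}_{2t}$ correlated with $\bm{y}_{1,t-1}$, $\bm{y}_{1t}$ fresh), and defines $\bm{A}^*(\mathbb{P})$ purely through the Yule--Walker equation, which yields a rank-one coefficient block $\propto\bar{\bm{w}}_j\bar{\bm{w}}_j^\top$. The KL between two such discrete laws is computed directly and is of order $\gamma p^{-1}2^{p}$ per observation, so choosing $\gamma\asymp p^2 2^{-p}T^{-1}$ balances Fano and produces the separation $c^2\gamma\asymp M^{1/(1+\epsilon)}(p/T)^{\epsilon/(1+\epsilon)}$. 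The $\sqrt{r_0}$ factor is then obtained not from a richer packing but by stacking $r_0$ independent rank-one blocks. Your membership and mixing checks (MA$(\infty)$ propagation of moments, geometric $\beta$-mixing, identification of $\bm{A}^*$ via Yule--Walker) are fine as far as they go, but without a construction in which the hypotheses differ only on rare heavy events, the KL step fails and the claimed rate does not follow.
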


This theorem presents the minimax lower bound of the reduced-rank VAR model with the exact low-rankness in terms of Frobenius norm, which matches the upper bound in Theorem \ref{thm:lowrankAR} up to a logarithm factor.

\begin{remark}
  The minimax lower bounds in Theorems \ref{thm:LB_sparse_VAR} and \ref{thm:LB_lowrank_VAR} are developed for the VAR models with an exact low-dimensional structure, such as the strict row-wise sparsity and exact low-rankness. The lower bounds for the high-dimensional regression and covariance estimation with the weakly sparse coefficients have been investigated by \citet{raskutti2011minimax}, \citet{cai2012optimal}, and others under the $\ell_q$ ball constraints. However, the theoretical techniques in their proofs rely heavily on the Gaussian distributional assumption, and hence cannot be applied to the heavy-tailed setting. The minimax lower bounds for the VAR estimation under the weak sparsity and approximate low-rankness are left for future research.
\end{remark}

\subsection{Lower bounds of autocovariance estimation}

For any process $\{\bm{y}_t\}$ following distribution $\mathbb{P}\in\mathcal{P}_\text{E}(M,\epsilon,r)$, let $\bm{x}_t=(\bm{y}_{t-1}^\top,\dots,\bm{y}_{t-d}^\top)^\top$, and a by-product of our lower bound analysis is the minimax lower bounds for the autocovariance estimation. First, we develop the minimax lower bounds in the $\ell_\infty$ norm for estimating the autocovariance matrices $\bm{\Sigma}_0(\mathbb{P})=\mathbb{E}_{\mathbb{P}}[\bm{x}_t\bm{x}_t^\top]$ and $\bm{\Sigma}_1(\mathbb{P})=\mathbb{E}_{\mathbb{P}}[\bm{y}_t\bm{x}_t^\top]$.

\begin{proposition}\label{prop:LB_maximum}
  For any $M>0$, $\epsilon\in(0,1]$ and $r\in[0,1)$, suppose that the joint distribution $\mathbb{P}$ of $\bm{y}_t$ belongs to $\mathcal{P}_\textup{E}(M,\epsilon,r)$. Then, for any covariance estimator $\widehat{\bm{\Sigma}}_k:=\widehat{\bm{\Sigma}}_k(\{\bm{y}_t\}_{t=1}^T)$ which depends on the observations from $\bm{y}_1$ to $\bm{y}_T$,
  \begin{equation}
    \inf_{\widehat{\bm{\Sigma}}_k}\sup_{\mathbb{P}\in\mathcal{P}_\textup{E}(M,\epsilon,r)}\mathbb{E}\left[\|\widehat{\bm{\Sigma}}_k-\bm{\Sigma}_k(\mathbb{P})\|_\infty\right]\gtrsim \left[\frac{M^{1/\epsilon}\log(T)}{T}\right]^{\frac{\epsilon}{1+\epsilon}},\quad k=0,1.
  \end{equation}
\end{proposition}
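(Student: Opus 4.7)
The plan is to reduce the proposition to a scalar mean-estimation problem on a single entry of the autocovariance matrix, and then to apply a classical two-point (Le Cam) or multi-hypothesis (Fano) lower bound with a heavy-tailed Bernoulli-like construction. First, restrict attention to distributions $\mathbb{P}$ on $\{\bm{y}_t\}$ for which $y_{jt}=0$ almost surely for all $j\geq2$ and all $t$; such laws lie in $\mathcal{P}_\textup{E}(M,\epsilon,r)$ as long as the univariate law of $\{y_{1t}\}$ satisfies the stated moment and $\alpha$-mixing conditions. For any such $\mathbb{P}$, the matrix $\bm{\Sigma}_k(\mathbb{P})$ has only the $(1,1)$-entry potentially non-zero, so $\|\widehat{\bm{\Sigma}}_k-\bm{\Sigma}_k(\mathbb{P})\|_\infty\geq|[\widehat{\bm{\Sigma}}_k]_{11}-[\bm{\Sigma}_k(\mathbb{P})]_{11}|$, and the task reduces to lower-bounding the minimax error for estimating the scalar $\mu_k=\mathbb{E}_{\mathbb{P}}[y_{1t}y_{1,t-k}]$.

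Next, I would construct laws for $\{y_{1t}\}$ supported on $\{0,c\}$ with atom probability $\eta$ at $c$. For $k=0$, take the sequence to be i.i.d.\ (satisfying $\alpha$-mixing trivially with $r=0$), giving $\mathbb{E}|y_{1t}|^{2+2\epsilon}=c^{2+2\epsilon}\eta$ and $\mathbb{E}y_{1t}^2=c^2\eta$. The squared Hellinger distance between two such Bernoulli laws with parameters $\eta_0\neq\eta_1$ is of order $(\eta_0-\eta_1)^2/\eta_0$, so the total-variation distance between the $T$-fold products is at most $\sqrt{T(\eta_0-\eta_1)^2/(2\eta_0)}$. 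Balancing $c^{2+2\epsilon}\eta\lesssim M$ against this indistinguishability constraint yields, via the Le Cam two-point lemma, a lower bound on the minimax expected error of order $[M^{1/\epsilon}/T]^{\epsilon/(1+\epsilon)}$. To upgrade to the advertised rate $[M^{1/\epsilon}\log(T)/T]^{\epsilon/(1+\epsilon)}$, I would pass to a Fano-type argument with $N$ hypotheses obtained as small perturbations of a common base law, with $N$ polynomial in $T$; the extra $\log N\asymp\log T$ factor in Fano's inequality converts into the $\log T$ factor in the rate after a suitable calibration of $c$ and the perturbation sizes, so that pairwise KL divergences remain $O(1)$ while pairwise separations attain the target.

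For $k=1$ the i.i.d.\ construction is replaced by a stationary two-state Markov chain on $\{0,c\}$: the transition kernel is tuned so that the stationary distribution is Bernoulli($\eta$), the chain is geometrically ergodic with $\alpha(\ell)=O(r^\ell)$, and $[\bm{\Sigma}_1(\mathbb{P})]_{11}=c^2\eta\,p_{11}$ becomes the target scalar. After this substitution, the Le Cam/Fano argument proceeds as before, with the $T$-fold Hellinger/KL tensorization replaced by a mixing-adjusted version exploiting the exponential contraction of the chain (for example, via a coupling coefficient or a $\rho$-mixing bound on consecutive blocks).

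The main obstacle is the joint calibration of $c$, $\eta$, and the Fano hypothesis count so that the $(2+2\epsilon)$-moment bound, the indistinguishability constraint, and the target separation are simultaneously achieved; producing the $\log T$ factor rather than just $[M^{1/\epsilon}/T]^{\epsilon/(1+\epsilon)}$ is the delicate part, and it is what makes a multi-hypothesis Fano argument preferable to a plain two-point comparison. A secondary obstacle is verifying the geometric $\alpha$-mixing decay for the Markov construction used when $k=1$ while preserving the prescribed Bernoulli marginals and the element-wise moment bound, so that the constructed laws remain inside $\mathcal{P}_\textup{E}(M,\epsilon,r)$.
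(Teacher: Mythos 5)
Your reduction to a scalar entry and the two-point comparison are sound as far as they go, and they would indeed deliver the rate $[M^{1/\epsilon}/T]^{\epsilon/(1+\epsilon)}$. The genuine gap is the mechanism you propose for the extra $\log T$ factor. Within the i.i.d.\ subclass of $\mathcal{P}_\textup{E}(M,\epsilon,r)$ this factor is simply not there: for i.i.d.\ data with a bounded $(1+\epsilon)$-th moment on $y_{1t}^2$, robust mean estimators (truncation, median-of-means) attain expected error of order $[M^{1/\epsilon}/T]^{\epsilon/(1+\epsilon)}$, so no lower-bound argument restricted to i.i.d.\ laws can exceed that rate. Your proposed fix --- Fano with $N\asymp\mathrm{poly}(T)$ hypotheses --- does not repair this, because the target is a one-dimensional functional: you cannot pack more than $O(1)$ hypotheses that are pairwise separated at the target scale while keeping pairwise KL divergences of the $T$-fold laws bounded, so $\log N$ stays $O(1)$ and Fano reduces to Le Cam up to constants. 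The $\log N$ gain in Fano produces dimension-type factors (e.g.\ $\log p$), not a $\log T$ factor for a scalar parameter.

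The paper gets the $\log T$ factor from the serial dependence that the class permits, not from multiple hypotheses. It partitions $\{1,\dots,T\}$ into $\widetilde T\asymp T/\log_{r^{-2}}T$ blocks of length $\asymp\log_{r^{-2}}T$, makes the observations \emph{identical} within each block and independent across blocks, and verifies geometric $\alpha$-mixing because the common within-block value is $0$ with probability $1-\gamma$, so all within-block mixing coefficients are at most $\gamma\asymp\log(T)/T\le Cr^{\ell}$ for lags $\ell$ up to the block length. This collapses the effective sample size to $\widetilde T\asymp T/\log T$, and with $\gamma\asymp\widetilde T^{-1}$, $c^{2+2\epsilon}\gamma=M$, the separation $c^2\gamma$ becomes $[M^{1/\epsilon}\log(T)/T]^{\epsilon/(1+\epsilon)}$. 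The paper also avoids Hellinger/KL altogether: it couples the two processes (a symmetric $\pm c$ construction on two coordinates, targeting the $(1,2)$-entry and keeping the process mean zero, with the two laws giving $\pm c^2\gamma$) so that they coincide with probability $(1-\gamma)^{\widetilde T}\ge 2\delta$, whence no estimator can be accurate for both. Your Markov-chain idea for $k=1$ is workable for a log-free bound, but to prove the proposition as stated you need a construction of this blocked, strongly-within-block-dependent type; the i.i.d.-plus-Fano route cannot produce the claimed rate.
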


The minimax lower bound matches the upper bounds in Proposition \ref{prop:element} up to a logarithmic factor $[\log(p^2d)\log(T)]^{\epsilon/(1+\epsilon)}$, indicating the nearly minimax optimality of the proposed shrinkage estimator $\widetilde{\bm{\Sigma}}_k^\text{E}(\tau)$ with $\tau$ chosen properly. Compared with the upper and lower bounds of the covariance matrix estimation for the \textit{i.i.d.} data in \citet{avella2018robust} and \citet{devroye2016sub}, our minimax lower bound involves a factor $\log(T)^{\epsilon/(1+\epsilon)}$, which is due to the weak serial dependency under the geometrically decayed $\alpha$-mixing condition. It is also noteworthy that the $\log(T)$ factor can also be found in the upper bound analysis in \citet{zhang2021robust} under the functional dependence measures \citep{wu2005nonlinear}.

In addition, this minimax lower bound can easily be extended to the linear transformations of autocovariance matrices $\bm{\Omega}=\bm{C}^\top(\bm{I}_p\otimes\bm{\Sigma}_0)\bm{C}$ and $\bbm{\omega}=\bm{C}^\top\textup{vec}(\bm{\Sigma}_1)$, where $\bm{C}$ consists of coordinate vector columns. In other words, Proposition \ref{prop:LB_maximum} implies that the upper bounds in Proposition \ref{prop:linear} are rate-optimal up to a logarithm factor.

Moreover, the minimax lower bound in the operator norm is also established. 

\begin{proposition}\label{prop:LB_operator}
  For any $M>0$, $\epsilon\in(0,1]$ and $r\in[0,1)$, suppose that $p\geq20$, $T\geq p/128$, and the joint distribution $\mathbb{P}$ of $\bm{y}_t$ belongs to $\mathcal{P}_\textup{V}(M,\epsilon,r)$. Then, for any covariance estimator $\widehat{\bm{\Sigma}}_k:=\widehat{\bm{\Sigma}}_k(\{\bm{y}_t\}_{t=1}^T)$ which depends on the observations from $\bm{y}_1$ to $\bm{y}_T$,
  \begin{equation}
    \inf_{\widehat{\bm{\Sigma}}_k}\sup_{\mathbb{P}\in\mathcal{P}_\textup{V}(M,\epsilon,r)}\mathbb{E}\left[\|\widehat{\bm{\Sigma}}_k-\bm{\Sigma}_k(\mathbb{P})\|_\textup{op}\right]\gtrsim \left(\frac{pM^{1/\epsilon}}{T}\right)^{\frac{\epsilon}{1+\epsilon}},\quad k=0,1.
  \end{equation}
\end{proposition}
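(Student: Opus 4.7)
The plan is to apply a standard minimax reduction (Le Cam's two-point method or Fano's inequality) to a carefully constructed sub-family of IID candidate distributions within $\mathcal{P}_V(M,\epsilon,r)$. Since any IID distribution trivially satisfies the $\alpha$-mixing / $\beta$-mixing condition for every $r$ (its mixing coefficients vanish past lag one), it suffices to build the family from IID laws, reducing the problem to the single-observation distribution and removing the mixing condition from the argument.

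For the $\bm{\Sigma}_0$ case, I would build a family of the rare-event-plus-smoothing form
\begin{equation*}
\mathbb{P}_\eta: \quad \bm{y}_t = \xi_t R \bm{u}_\eta + \sigma \bm{Z}_t,
\end{equation*}
where $\xi_t$ is a symmetric rare-event sign with $\Pr(\xi_t \neq 0) = p_a$, $R$ is a spike magnitude, $\bm{Z}_t \sim \mathcal{N}(\bm{0},\bm{I}_p)$ is a Gaussian background independent of $\xi_t$, and $\{\bm{u}_\eta\}_{\eta \in \mathcal{C}}$ is a Gilbert--Varshamov packing of the unit sphere in $\mathbb{R}^p$ with $|\mathcal{C}| \geq \exp(cp)$ and pairwise principal angle bounded below by a constant. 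The covariance is $\bm{\Sigma}_\eta = p_a R^2 \bm{u}_\eta\bm{u}_\eta^\top + \sigma^2 \bm{I}_p$, so pairwise operator-norm separation is of order $p_a R^2$. Tuning $(p_a, R, \sigma)$ to saturate the moment bound $p_a R^{2+2\epsilon} \lesssim M$ (with $\sigma^{2+2\epsilon} \lesssim M$) while taking $p_a \asymp p/T$ produces $p_a R^2 \asymp M^{1/(1+\epsilon)}(p/T)^{\epsilon/(1+\epsilon)} = (pM^{1/\epsilon}/T)^{\epsilon/(1+\epsilon)}$, which matches the claim.

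The main obstacle is the information-theoretic control: a bare two-point Le Cam reduction yields only the dimension-free rate $(M^{1/\epsilon}/T)^{\epsilon/(1+\epsilon)}$, while a naive Fano reduction with atomic spikes reveals $\eta$ whenever a spike occurs and again loses the $p$-factor. Gaining the $p^{\epsilon/(1+\epsilon)}$ factor requires the Gaussian smoothing $\sigma\bm{Z}_t$, which reduces the per-sample chi-squared information about $\eta$ to $O(p_a^2 R^2/\sigma^2)$. A careful expansion using Gaussian-convolution identities of the form $\int \mathcal{N}(\,\cdot\,;R\bm{u}_\eta,\sigma^2\bm{I}_p)\,\mathcal{N}(\,\cdot\,;R\bm{u}_{\eta'},\sigma^2\bm{I}_p)/\mathcal{N}(\,\cdot\,;\bm{0},\sigma^2\bm{I}_p)\,d\bm{y} = \exp(R^2\langle\bm{u}_\eta,\bm{u}_{\eta'}\rangle/\sigma^2)$ then controls the pairwise and averaged $f$-divergences on the packing, and Fano's inequality yields a constant probability of misidentification, which Markov's inequality converts into the desired $\mathbb{E}\|\cdot\|_\textup{op}$ lower bound.

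The lower bound for $\bm{\Sigma}_1 = \mathbb{E}[\bm{y}_t \bm{x}_t^\top]$ follows by a parallel construction: couple the spike indicators at adjacent time points (for example, set $\xi_t = \xi_{t-1}$ within non-overlapping length-two blocks, drawing fresh indicators across blocks) so that the lag-one cross-covariance inherits the $\bm{u}_\eta\bm{u}_\eta^\top$ packing, while the moment bound and the geometrically decaying mixing coefficients are preserved under this block-IID coupling.
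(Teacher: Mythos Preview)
Your high-level plan (reduce to IID laws, build a Gilbert--Varshamov packing of rank-one perturbations, apply Fano) is the same as the paper's. The specific construction, however, has a real gap in the information bound.

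With your spike-plus-Gaussian law $P_\eta=(1-p_a)\phi_\sigma+p_a q_\eta$, $q_\eta=\tfrac12[\phi_\sigma(\cdot-R\bm{u}_\eta)+\phi_\sigma(\cdot+R\bm{u}_\eta)]$, the Gaussian-convolution identity you cite gives
\[
\chi^2(P_\eta,P_{\eta'})\;\asymp\;p_a^2\bigl(\cosh(R^2/\sigma^2)-1\bigr),
\]
so your claimed $O(p_a^2 R^2/\sigma^2)$ is only valid in the regime $R\lesssim\sigma$. But the scaling you need forbids that regime: saturating $p_aR^{2+2\epsilon}\asymp M$ with $p_a\asymp p/T$ forces $R^2\asymp M^{1/(1+\epsilon)}(T/p)^{1/(1+\epsilon)}$, while the moment constraint on the Gaussian part caps $\sigma^2\lesssim M^{1/(1+\epsilon)}$, so $R^2/\sigma^2\asymp(T/p)^{1/(1+\epsilon)}\to\infty$ and the divergence blows up exponentially. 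Conversely, if you insist on $R\lesssim\sigma$ so that the information bound is usable, then $p_aR^2\lesssim M^{1/(1+\epsilon)}(p/T)$, which is smaller than the target $M^{1/(1+\epsilon)}(p/T)^{\epsilon/(1+\epsilon)}$ by the factor $(p/T)^{1/(1+\epsilon)}$. Intuitively: once $R\gg\sigma$, every spike (and you expect $Tp_a\asymp p$ of them) reveals the direction $\bm{u}_\eta$ up to noise $\sigma/R\ll1$, so the packing index is identifiable and Fano fails.

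The paper fixes exactly this by \emph{not} putting the non-null mass on a single direction. Its discrete law $P_{\bm{w}}$ supports all $2^p$ hypercube vertices $c\bm{v}_j\in\{\pm c/\sqrt{p}\}^p$ with probabilities $\gamma/p+\gamma\langle\bm{w},\bm{v}_j\rangle^2$; the signal $\bm{w}$ enters only as a small quadratic tilt of an otherwise uniform law over the vertices. This keeps the pairwise KL polynomial, $\mathrm{KL}(P_{\bm{w}},P_{\bm{w}'})\le\gamma\,2^{p+2}/p$, while the covariance still acquires the rank-one bump $\bm{\Sigma}_{\bm{w}}=c_1\bm{I}_p+c_2\bm{w}\bm{w}^\top$ needed for the operator-norm separation. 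Choosing $\gamma=p^2 2^{-(p+8)}T^{-1}$ makes $T\cdot\mathrm{KL}\le p/64$ and the separation $\asymp M^{1/(1+\epsilon)}(p/T)^{\epsilon/(1+\epsilon)}$ simultaneously, and Fano over a GV sub-packing of size $\exp(p/16)$ finishes the argument. For $\bm{\Sigma}_1$ the paper splits $\bm{y}_t=(\bm{y}_{1t}^\top,\bm{y}_{2t}^\top)^\top$ and sets $\bm{y}_{1t}=\bm{y}_{2,t-1}$ a.s., so the lag-one block inherits the same rank-one structure; your block-coupling of $\xi_t$ is a reasonable alternative, but it would inherit the same information-theoretic failure as your $\bm{\Sigma}_0$ construction.
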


This lower bound matches the upper bound in Proposition \ref{prop:vector} up to a logarithm factor $\log(T)^{\epsilon/(1+\epsilon)}$, indicating that the vector-wise truncated estimators $\widetilde{\bm{\Sigma}}^\text{V}_k(\tau)$ introduced in Section \ref{sec:3.2} are nearly rate-optimal when $\tau$ is chosen properly.

\begin{remark}
  To the best of our knowledge, for covariance or autocovaraince estimation problems, the minimax lower bound in Proposition \ref{prop:LB_operator} is the first operator norm lower bound result with the phase transition phenomenon. This lower bound is sharper than that in \citet{fan2016shrinkage} by constructing a special multivariate discrete distribution with the $(2+2\epsilon)$-th moment condition; see Appendix \ref{append:B} for details.
\end{remark}

\section{Algorithm and Implementation}\label{sec:4}

\subsection{Linearized ADMM algorithm}
The constrained Yule--Walker estimators in \eqref{eq:constrainedYW}, \eqref{eq:split_YW} and \eqref{eq:YuleWalker2} are convex optimization problems as each of them consists of a convex objective function and a convex constraint function. For the constrained problem in \eqref{eq:constrainedYW}, define the constraint set $\mathbb{C}_{\mathcal{R}^*}(\lambda)=\{\bm{M}\in\mathbb{R}^{p\times pd}:\mathcal{R}^*(\bm{M})\leq \lambda\}$ and the optimization problem can be rewritten as
\begin{equation}
    \min_{\bm{A},\bm{D}}\mathcal{R}(\bm{A}),\quad\text{subject to }\bm{D}=\bm{A}\bm{\widetilde{\Sigma}}_0-\bm{\widetilde{\Sigma}}_1~\text{and}~\bm{D}\in\mathbb{C}_{\mathcal{R}^*}(\lambda).
\end{equation}
The augmented Lagrangian form is given as
\begin{equation}
    \mathcal{L}_\rho(\bm{A},\bm{D};\bm{W})=\mathcal{R}(\bm{A})-\langle\bm{W},\bm{A}\bm{\widetilde{\Sigma}}_0-\bm{\widetilde{\Sigma}}_1-\bm{D}\rangle+\frac{\rho}{2}\|\bm{A}\bm{\widetilde{\Sigma}}_0-\bm{\widetilde{\Sigma}}_1-\bm{D}\|_\textup{F}^2,
\end{equation}
subject to $\bm{D}\in\mathbb{C}(\lambda)$, where $\bm{W}$ is the Lagrangian multiplier and $\rho$ is the regularization parameter. The augmented Lagrangian form can be solved by the alternating direction method of multipliers \citep[ADMM]{boyd2011distributed} with the iterative updates of $(\bm{A},\bm{D},\bm{W})$:
\begin{equation}
    \begin{split}
        \bm{A}^{(j+1)}&=\underset{\bm{A}}{\argmin}~\mathcal{R}(\bm{A})+\frac{\rho}{2}\|\bm{A}\bm{\widetilde{\Sigma}}_0-\bm{\widetilde{\Sigma}}_1-\bm{D}^{(j)}-\bm{W}^{(j)}/\rho\|_\textup{F}^2,\\
        \bm{D}^{(j+1)}&=\underset{\bm{D}\in\mathbb{C}_{\mathcal{R}^*}(\lambda)}{\argmin}\|\bm{D}+\bm{\widetilde{\Sigma}}_1-\bm{A}^{(j+1)}\bm{\widetilde{\Sigma}}_0+\bm{W}^{(j)}/\rho\|_\textup{F}^2,\\
        \bm{W}^{(j+1)}&=\bm{W}^{(j)}+\rho(\bm{\widetilde{\Sigma}}_1-\bm{A}^{(j+1)}\bm{\widetilde{\Sigma}}_0+\bm{D}^{(j+1)}).
    \end{split}
\end{equation}
The efficiency of the ADMM algorithm depends largely on the complexity of the resulting subproblems.
Note that $\bm{A}$-update in the ADMM algorithm is a regularized least squares problem of $\bm{A}$, which might not have an explicit solution and can be computationally expensive. 

To alleviate the computational burden in $\bm{A}$-update, we use an inexact proximal method to minimize
\begin{equation}
    \begin{split}
        &\widetilde{\mathcal{L}}_\rho^{(j)}(\bm{A},\bm{D}^{(j)};\bm{W}^{(j)})\\
        =&\mathcal{R}(\bm{A})+\frac{\rho\mu}{2}\left\|\bm{A}-\left[\bm{A}^{(j)}-\frac{2}{\mu}(\bm{A}^{(j)}\bm{\widetilde{\Sigma}}_0-\bm{\widetilde{\Sigma}}_1-\bm{D}^{(j)}-\bm{W}^{(j)}/\rho)\bm{\widetilde{\Sigma}}_0\right]\right\|_\textup{F}^2
    \end{split}
\end{equation}
where $\mu$ is another prespecified parameter. 
As in \citet{wang2012linearized}, $\widetilde{\mathcal{L}}^{(j)}_\rho(\bm{A},\bm{D}^{(j)};\bm{W}^{(j)})$ can be viewed as a linearized approximation of $\mathcal{L}_\rho(\bm{A},\bm{D}^{(j)};\bm{W}^{(j)})$. 

For the separated constrained Yule--Walker estimation in \eqref{eq:split_YW}, the linearized ADMM algorithm can be developed in similar fashion. For the $\ell_1$ norm and nuclear norm regularized estimators, in $\bm{D}$-update and the proximal variant of $\bm{A}$-update, the soft-thresholding operator and truncation operator can be applied to the elements and singular values, respectively. The proposed ADMM algorithm can be illustrated as a particular application of the inexact Bregman ADMM algorithm \citep{wang2014bregman}. To guarantee the global convergence, $\mu/2$ needs to be greater than the largest eigenvalue of $\bm{\widetilde{\Sigma}}_0^2$.

In summary, the ADMM algorithm has explicit updates and can be solved efficiently. The details of the linearized ADMM algorithm for sparse and reduced-rank VAR models are presented in Appendix \ref{append:C}.

\subsection{Linear and semidefinite programmings}\label{sec:4.2}

Some specific cases of the constrained Yule--Walker estimator in \eqref{eq:constrainedYW} can be formulated as linear programmings (LP) or semidefinite programs (SDP), and can thus be efficiently solved by any of the standard LP or SDP solvers.

As discussed in Example \ref{ex:sparse}, for sparse VAR models, a separated constrained minimization framework in \eqref{eq:split_YW} is proposed with $\mathcal{R}(\cdot)=\|\cdot\|_1$ and $\mathcal{R}^*(\cdot)=\|\cdot\|_\infty$. Following the lead of Dantzig selector \citep{candes2007dantzig}, each of the sub-problems can be recast to an LP:
\begin{equation}
    \begin{split}
        \text{minimize} & ~\bm{1}_{pd}^\top\bm{u}\\
        \text{subject to} & ~-\bm{u}\leq \bm{a}_i\leq\bm{u}~\text{and}~-\lambda\cdot\bm{1}_{pd}\leq \widetilde{\bbm{\sigma}}_{1i}-\widetilde{\bm{\Sigma}}_0\bm{a}_i\leq\lambda\cdot\bm{1}_{pd},
    \end{split}
\end{equation}
where the optimization variables are $\bm{u}\in\mathbb{R}^{pd}$ and $\bm{a}_i\in\mathbb{R}^{pd}$. As discussed in \citet{candes2007dantzig}, there is a large class of efficient algorithms for solving such problems.

Similarly, the constrained Yule--Walker estimator in \eqref{eq:constrainedYW}, with $\mathcal{R}(\cdot)=\|\cdot\|_\textup{nuc}$ and $\mathcal{R}^*(\cdot)=\|\cdot\|_\textup{op}$, can be used to estimate the reduced-rank VAR models. As introduced by \citet{candes2011tight}, the nuclear norm minimization problem can be recast to a SDP with a linear matrix inequality constraint:
\begin{equation}
    \begin{split}
        \text{minimize}&~[\text{tr}(\bm{W}_1)+\text{tr}(\bm{W}_2)]/2\\
        \text{subject to}&~\begin{bmatrix}
            \bm{W}_1 & \bm{A} \\
            \bm{A}^\top & \bm{W}_2 \\
        \end{bmatrix} \succeq 0,~\text{and}~
        \begin{bmatrix}
            \lambda\bm{I}_p & \widetilde{\bm{\Sigma}}_1-\bm{A}\widetilde{\bm{\Sigma}}_0\\
            \widetilde{\bm{\Sigma}}_1^\top-\widetilde{\bm{\Sigma}}_0\bm{A}^\top & \lambda\bm{I}_{pd} 
        \end{bmatrix}\succeq 0,
    \end{split}
\end{equation}
with optimization variables $\bm{W}_1\in\mathbb{R}^{p\times p}$, $\bm{W}_2\in\mathbb{R}^{pd\times pd}$, and $\bm{A}\in\mathbb{R}^{p\times pd}$. Interior point methods or first-order methods, such as splitting cone solver, can be applied to efficiently solve large-scale SDP. 

For linear-restricted VAR models, the $\ell_\infty$ optimization problem of the constrained Yule--Walker estimator in \eqref{eq:YuleWalker2} can also be formulated as a LP:
\begin{equation}
    \begin{split}
        \text{minimize}&~t\\
        \text{subject to}&~ -t\cdot\bm{1}_r \leq \bbm{\theta} \leq t\cdot\bm{1}_r,~\text{and}
        ~-\lambda\cdot\bm{1}_r \leq \bm{\widetilde{\Omega}}\bbm{\theta}-\bbm{\widetilde{\omega}} \leq \lambda\cdot\bm{1}_r,
    \end{split}
\end{equation}
with optimization variables $t\in\mathbb{R}$ and $\bbm{\theta}\in\mathbb{R}^r$, and can be solved by standard LP solvers.

As the data truncation and robust autocovariance estimator calculation are computationally cheap, the operation time of ADMM, LP and SDP algorithms mainly depends on the dimension $p$ and AR order $d$. As shown in the simulation experiments in Section \ref{sec:5}, the proposed ADMM algorithm is much more efficient than the LP and SDP solvers, especially when $p$ is large.

\subsection{Tuning parameter selection}\label{sec:4.3}

For the constrained Yule--Walker estimators, both the robustification parameter(s) $\tau$ (or $\tau_1$ and $\tau_2$) and constraint parameter $\lambda$ need to adapt properly to the dimension $p$, sample size $T$ and moment condition bound $M_{2+2\epsilon}$ to achieve optimal trade-off between the truncation bias and tail robustness. However, based on the conditions and convergence rates in Section \ref{sec:3}, the optimal value of both parameters rely on some unknown parameters $\epsilon$ and $M_{2+2\epsilon}$. In the literature on robust estimation of \textit{i.i.d.} data, cross-validation is an intuitive data-driven method to select robustification and regularization parameters. To accommodate the intrinsically ordered nature of time series data, we use a rolling forecasting validation, one of the standard approaches to tuning parameter selection for time series data.

To simultaneously select robustification and constraint parameters, a two-dimensional or multidimensional grid of tuning parameters is constructed.
The search of the constraint parameter $\lambda$ starts from a predetermined $\lambda_{\max}$ and decreases in log-linear increments. For the robustification parameter $\tau$, grid search on the data-driven interval $[\tau_{\min},\tau_{\max}]$ is suggested, where $\tau_{\min}$ and $\tau_{\max}$ are empirical quantiles of the data. For instance, to tune $\tau$ for the element truncation estimators, we may set $\tau_{\min}$ and $\tau_{\max}$ to be the median and maximum of $|y_{it}|$, for all $1\leq i\leq p$ and $1\leq t\leq T$, respectively. If at least one of $p$ and $T$ is very large, we may calculate the quantiles of some randomly sampled $|y_{it}|$.

For the dataset with time series observations from $t=1$ to $T$, we determine a validation period $[T_\text{val},T]$ whose length $T-T_\text{val}$ is supposed to be slightly smaller than $T$. Given the constraint and robustification parameters $(\lambda,\tau)$, at any time point $t\in[T_\text{val},T-1]$, we use all the historical data to calculate the constrained Yule--Walker estimates $\bm{\widehat{A}}(\lambda,\tau)$ and the corresponding one-step-ahead forecast $\bm{\widehat{A}}(\lambda,\tau)\bm{x}_t$. The performance evaluation is based on the mean squared forecast error (MSFE)
\begin{equation}
    \text{MSFE}(\lambda,\tau; T_\text{val}, T)=\frac{1}{T-T_\text{val}}\sum_{t=T_\text{val}+1}^{T}\|\bm{y}_t-\bm{\widehat{A}}(\lambda,\tau)\bm{x}_t\|_2^2,
\end{equation}
and the tuning parameters are selected by minimizing the $\text{MSFE}(\lambda,\tau;T_\text{val},T)$ over the grid points.

\section{Simulation Study}\label{sec:5}

\subsection{VAR estimation}\label{sec:5.1}

In this subsection, we compare the performance of the robust procedure and two standard estimation procedures for sparse, reduced-rank, and banded VAR models in finite samples. For each model, we consider three innovation settings: standardized $t_{2.1}$-distributed innovations, standardized log-normal innovations, and standard Gaussian innovations. They represent heavy-tailed symmetric distributions with finite second moments, heavy-tailed asymmetric distributions with finite fourth moments, and light-tailed symmetric distribution, respectively. The data are drawn from the VAR(1) process $\bm{y}_t=\bm{A}^*\bm{y}_{t-1}+\bbm{\varepsilon}_t$ with the following structures of $\bm{A}^*$.

DGP-1: Sparse VAR with $\bm{A}^*$ specified as $\bm{A}^*_{ij}=0.5$ if $i=j$, $\bm{A}^*_{ij}=0.4$ if $i-j=1$, $\bm{A}^*_{ij}=-0.4$ if $i-j=-1$, and $\bm{A}^*_{ij}=0$ otherwise. We consider $p\in\{30,50\}$ and $T\in\{100,150,200,250,300\}$.

DGP-2: Reduced-rank VAR with $\bm{A}^*$ having a low-rank singular value decomposition $\bm{A}^*=\bm{U}\bm{D}\bm{V}^\top$. Specifically, we consider a fixed $\bm{D}=\text{diag}(1.5,0.8)$, and generate random orthonormal $\bm{U}\in\mathbb{R}^{p\times 2}$ and $\bm{V}\in\mathbb{R}^{p\times 2}$ from the first two leading singular vectors of a random $p$-by-$p$ Gaussian matrix. We consider $p\in\{30,50\}$ and $T\in\{300,350,400,450,500\}$.

DGP-3: Banded VAR with a banded $\bm{A}^*$ having bandwidth $k=2$. Specifically, $\bm{A}^*_{ij}=0.5$ if $i=j$; $\bm{A}^*_{ij}=0.3$ if $i-j=1$; $\bm{A}^*_{ij}=-0.3$ if $i-j=-1$; $\bm{A}^*_{ij}=0.2$ if $|i-j|=2$; and $\bm{A}^*_{ij}=0$ for all $|i-j|>2$. We consider $p\in\{50,100\}$ and $T\in\{100,150,200,250,300\}$.

We first specify the robust estimation procedure, denoted as ROB, for these three DGPs. For DGP-1, as discussed in Example \ref{ex:sparse}, we apply the split estimation method in \eqref{eq:split_YW} with $\mathcal{R}(\cdot)=\|\cdot\|_1$, $\mathcal{R}(\cdot)=\|\cdot\|_\infty$, and the element-wise truncation autocovariance estimators $\widetilde{\bm{\Sigma}}_0^\text{E}(\tau)$ and $\widetilde{\bm{\Sigma}}_1^\text{E}(\tau)$ specified in Section \ref{sec:3.1}. For DGP-2, as discussed in Example \ref{ex:reduced-rank}, we apply the estimation method in \eqref{eq:constrainedYW} with $\mathcal{R}(\cdot)=\|\cdot\|_\textup{nuc}$, $\mathcal{R}^*(\cdot)=\|\cdot\|_\textup{op}$, and the vector-wise truncation autocovariance estimators $\widetilde{\bm{\Sigma}}_0^\text{V}(\tau)$ and $\widetilde{\bm{\Sigma}}_1^\text{V}(\tau)$ specified in Section \ref{sec:3.2}. For DGP-3, as discussed in Example \ref{ex:banded}, we apply the estimation method in \eqref{eq:YuleWalker2} with the element-wise truncation autocovariance estimators $\widetilde{\bm{\Omega}}^\text{E}(\tau)$ and $\widetilde{\bbm{\omega}}^\text{E}(\tau)$ in Section \ref{sec:3.3}. The tuning parameters $\lambda$ and $\tau$ are selected simultaneously by the MSFE in Section \ref{sec:4.3}.

Alternatively, if the possibility of 
heavy-tailed distribution is ignored, we consider the sample autocovariance estimators
\begin{equation}
  \widehat{\bm{\Sigma}}_0=\frac{1}{T}\sum_{t=1}^T\bm{y}_t\bm{y}_t^\top,~~\widehat{\bm{\Sigma}}_1=\frac{1}{T-1}\sum_{t=2}^T\bm{y}_t\bm{y}_{t-1}^\top,~~\widehat{\bm{\Omega}}=\bm{C}^\top(\bm{I}_p\otimes\widetilde{\bm{\Sigma}}_0)\bm{C},~~\text{and}~~\bm{C}^\top\textup{vec}(\widehat{\bm{\Sigma}}_1^\top),
\end{equation}
and plug them in the Yule--Walker estimators. We denote this estimation procedure as the regular Yule--Walker (RYW) method. In the high-dimensional VAR literature, another popular estimation procedure is the regularized least squares (RLS) method. For DGP-1 and 2, we apply the regularized method
\begin{equation}
  \widehat{\bm{A}}_\text{RLS}=\argmin_{\bm{A}}\frac{1}{T}\sum_{t=1}^T\|\bm{y}_t-\bm{A}\bm{y}_{t-1}\|_2^2+\lambda\mathcal{R}(\bm{A})
\end{equation}
with $\mathcal{R}(\cdot)=\|\cdot\|_1$ and $\|\cdot\|_\textup{nuc}$, respectively. For DGP-3, we consider the least squares (LS) estimator
\begin{equation}
  \widehat{\bbm{\theta}}_\text{LS}=\argmin_{\bbm{\theta}}\frac{1}{T}\|\bm{y}_t-(\bm{I}_p\otimes\bm{y}_{t-1}^\top)\bm{C}\bbm{\theta}\|_2^2.
\end{equation}

To evaluate the estimation performance and to confirm the theoretical results in Theorems \ref{thm:sparseAR}-\ref{thm:banded}, for $\widehat{\bm{A}}-\bm{A}^*$, we calculate the $\ell_{2,\infty}$ norm for sparse VAR models, the Frobenius norm for reduced-rank VAR models, and the operator norm and $\ell_\infty$ norm for banded VAR models. For the three DGPs, the estimation errors are calculated by averaging 500 replications for each setting and presented in Figures \ref{fig:sim1}, \ref{fig:sim2} and \ref{fig:sim3}, respectively.

\begin{figure}[!htp]
  \begin{center}
    \includegraphics[width=0.9\textwidth]{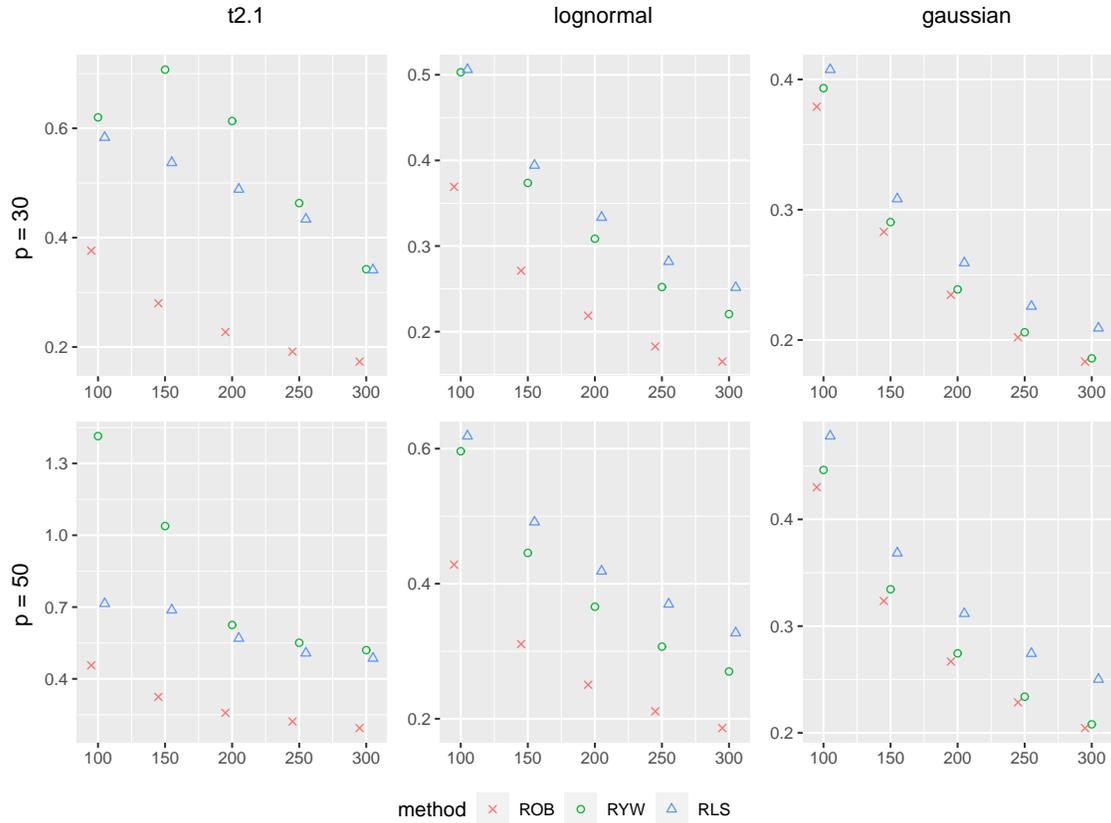}
    \vspace{-0.5cm}
    \caption{Estimation errors of robust (ROB), regularized Yule--Walker (RYW), and regularized least squares (RLS) methods in the $\ell_{2,\infty}$ norm v.s. sample size $T$ for various dimension and innovation distribution settings in DGP-1.}
    \label{fig:sim1}
  \end{center}
\end{figure}

\begin{figure}[!htp]
    \begin{center}
        \includegraphics[width=0.9\textwidth]{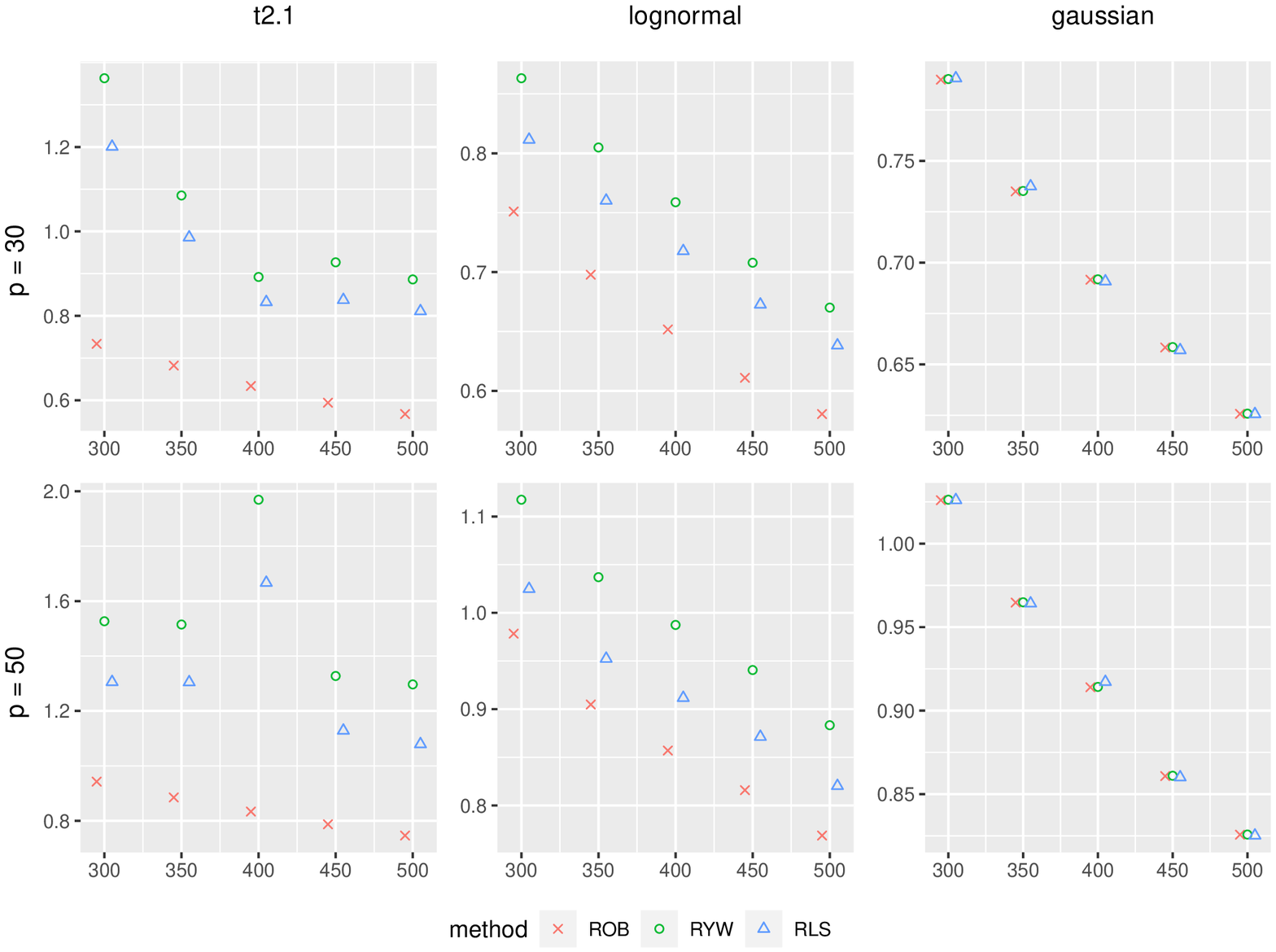}
    \vspace{-0.5cm}
        \caption{Estimation errors of robust (ROB), regularized Yule--Walker (RYW), and regularized least squares (RLS) methods in the Frobenius norm v.s. sample size $T$ for various dimension and innovation distribution settings in DGP-2.}
        \label{fig:sim2}
    \end{center}
\end{figure}

\begin{figure}[!htp]
    \begin{center}
        \includegraphics[width=0.9\textwidth]{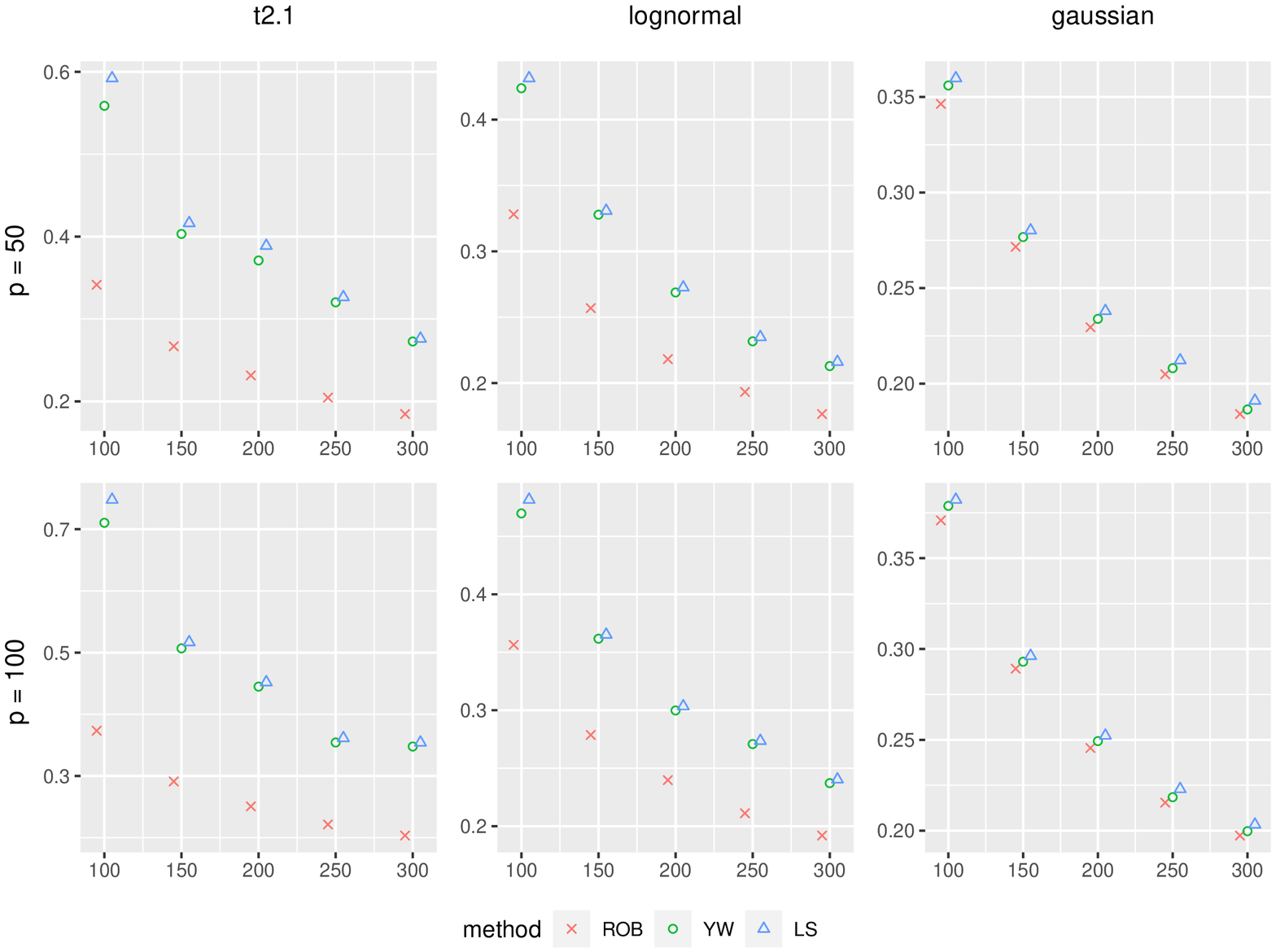}
    \vspace{-0.5cm}
        \caption{Estimation errors of robust (ROB), regularized Yule--Walker (RYW), and least squares (LS) methods in the operator norm v.s. sample size $T$ for various dimension and innovation distribution settings in DGP-3.}
        \label{fig:sim3}
    \end{center}
\end{figure}

For the three types of VAR models considered, we can observe from Figures \ref{fig:sim1}-\ref{fig:sim3} that the proposed robust estimation method yields much small statistical errors than both standard estimation methods under the heavy-tailed innovations, i.e. the standardized $t_{2.1}$ distributed innovation and log-normal innovation. In particular, under the $t_{2.1}$ innovation setting, the estimation performance of both standard estimation methods is not stable.
Under the setting of Gaussian innovation, the proposed robust estimator produces almost the same or even slightly smaller estimation errors than the standard methods, indicating that it does not hurt to use the robust procedure under the light-tailed innovation setting. In sum, the simulation results generally confirm the theoretical convergence rates and demonstrate the robustness of the proposed estimators against heavy-tailed distributions.

We also compare the performance of two computational algorithms discussed in Section \ref{sec:4}, using the simulated data with a fixed $T=300$ and varying dimension $p\in\{25,50,75,100,125\}$ from the Gaussian settings of DGP-1 and 2, respectively. We obtain almost identical estimates from the two algorithms under the same choices of $\lambda$ and $\tau$. Regarding computational time, we record the averaged time consumed over 100 replications for each algorithm. Figure \ref{fig:com_time} demonstrates that the ADMM is much more efficient than the LP and SDP solvers, especially when the dimension $p$ is large.

\begin{figure}[!htp]
  \begin{center}
    \includegraphics[width=0.65\textwidth]{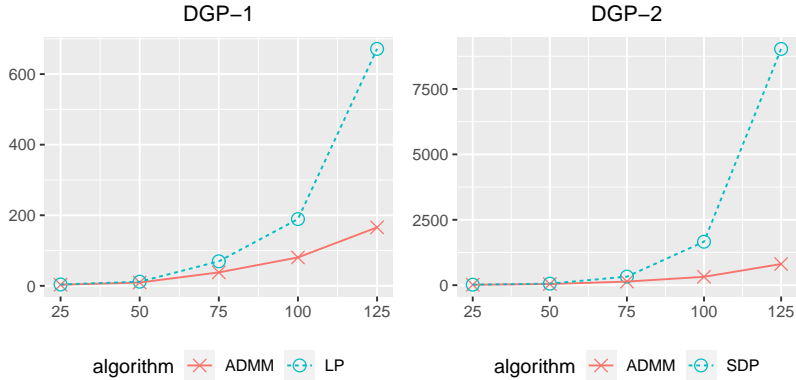}
    \vspace{-0.5cm}
    \caption{Computational time (in seconds) of ADMM and LP/SDP algorithms v.s. dimension $p$ in DGP-1 and DGP-2.}
    \label{fig:com_time}
  \end{center}
\end{figure}

\subsection{Autocovariance matrix estimation}

We also conduct a simulation experiment to compare the truncation-based autocovariance matrix estimator with the standard sample autocovariance estimator. Two data generating processes, DGP-1 and DGP-2 in Subsection \ref{sec:5.1}, are employed. Similarly to the previous experiment, we also consider three innovation settings for each DGP. The element and vector truncation estimators are considered for these two processes. Based on 1000 replications, the average estimation errors of autocovariance matrices in terms of $\ell_{\infty}$ and operator norms are presented in Figures \ref{fig:sim4} and \ref{fig:sim5} for the two processes, respectively.

\begin{figure}
  \begin{center}
    \includegraphics[width=0.9\textwidth]{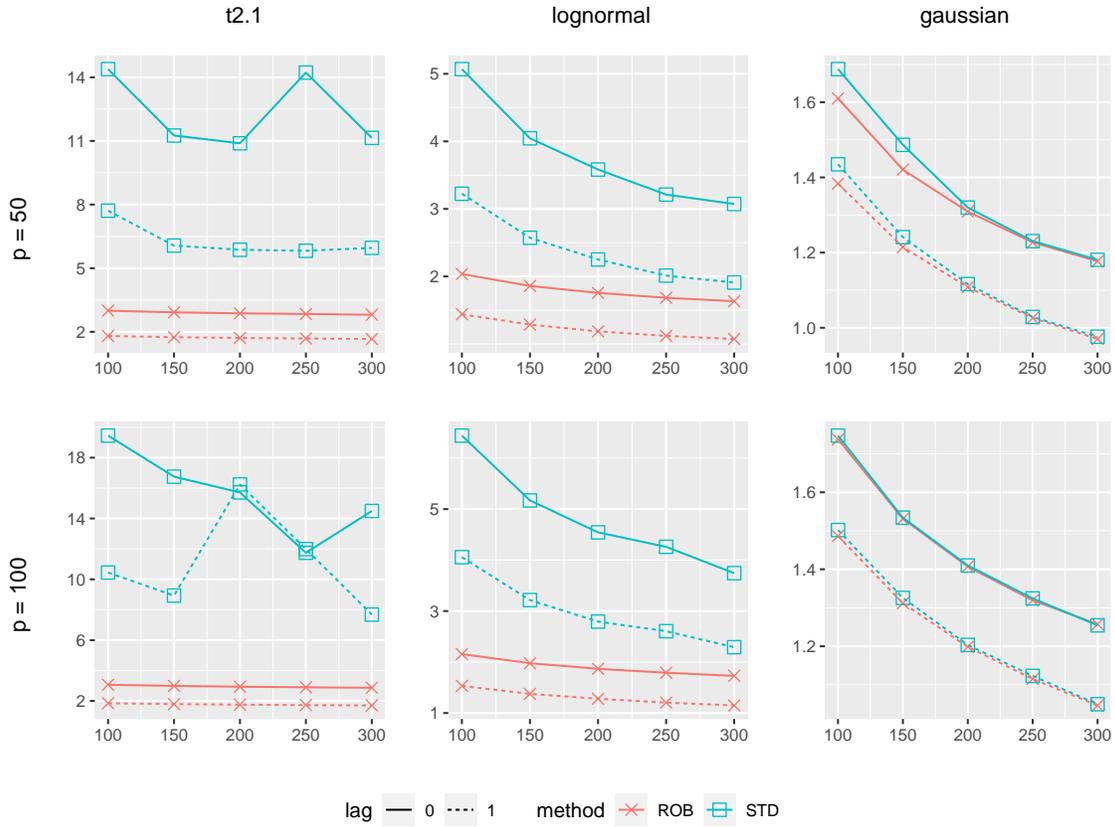}
    \vspace{-0.2cm}
    \caption{Estimation errors of $\widetilde{\bm{\Sigma}}_0^\textup{E}(\tau)$ and $\widetilde{\bm{\Sigma}}_1^\textup{E}(\tau)$ in the $\ell_\infty$ norm v.s. sample size $T$ for different dimension and innovation distribution settings in DGP-1.}
    \label{fig:sim4}
  \end{center}
\end{figure}

\begin{figure}
  \begin{center}
    \includegraphics[width=0.9\textwidth]{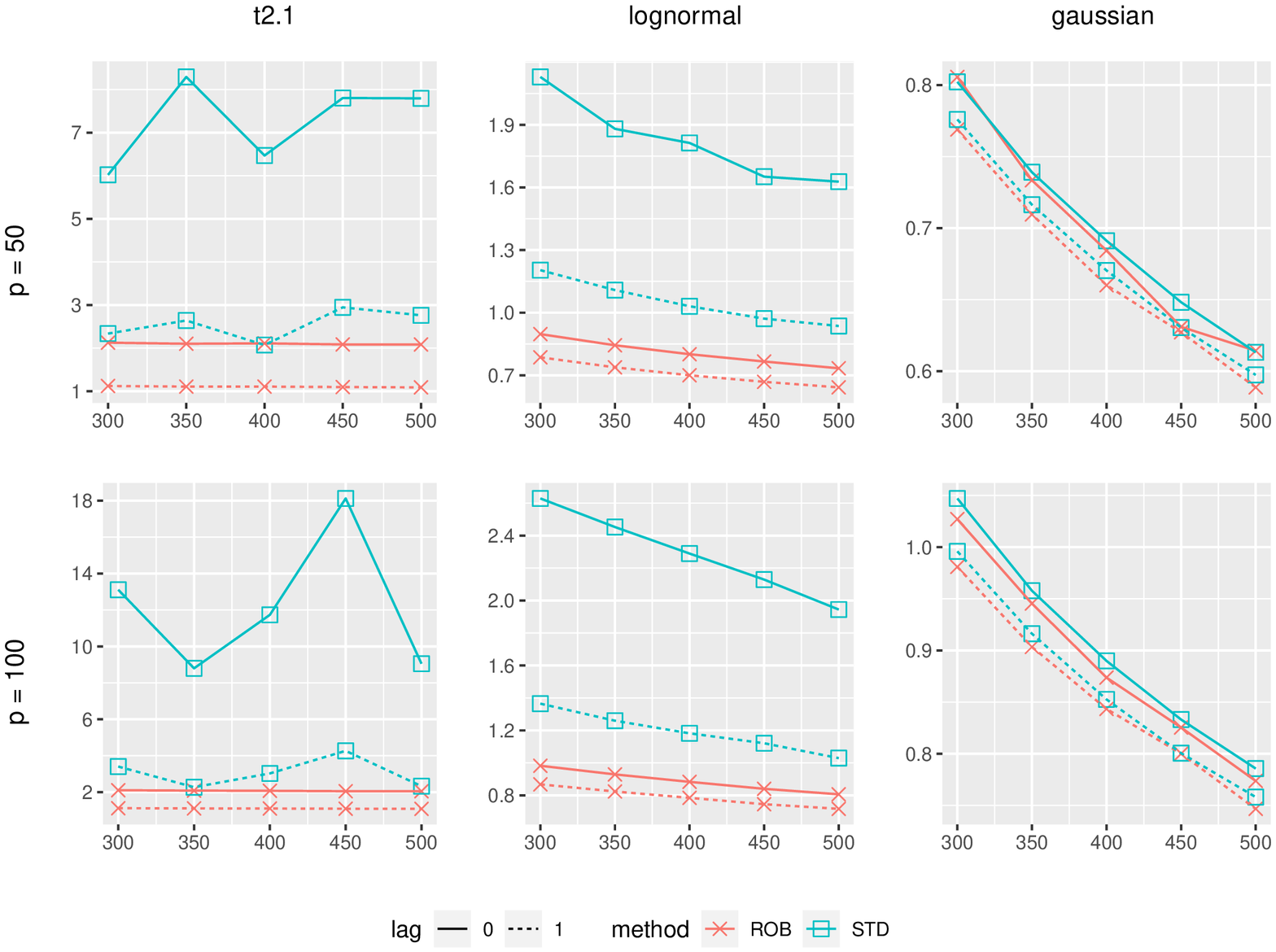}
    \vspace{-0.2cm}
    \caption{Estimation errors of $\widetilde{\bm{\Sigma}}_0^\textup{E}(\tau)$ and $\widetilde{\bm{\Sigma}}_1^\textup{E}(\tau)$ in the operator norm v.s. sample size $T$ for different dimension and innovation distribution settings in DGP-2.}
    \label{fig:sim5}
  \end{center}
\end{figure}

For these two DGPs, the truncation-based autocovariance matrix estimators yield much smaller estimation errors than the sample autocovariance estimator under the heavy-tailed innovation settings. Similarly to the previous experiment, under the $t_{2.1}$ innovation setting, the performance of the sample autocovariance estimator is not stable, but the smoothly decreasing pattern of the proposed robust estimator can be observed. Under the Gaussian innovation setting, the performance of both estimators are nearly the same. The numerical results in this experiment confirm the rates of convergence in Propositions \ref{prop:element} and \ref{prop:vector} and verify the robustness of the proposed autocovariance estimators.

\section{Real Data Example}\label{sec:6}

In this section, we apply the proposed robust estimation procedure to a real data set consisting of 40 quarterly macroeconomic variables of the United States from the third quarter of 1959 to fourth quarter of 2007, with 194 observations for each variable. Following \citet{stock2009forecasting}, all variables are seasonally adjusted except for financial variables, transformed by differencing or log differencing to be stationary, and standardized to zero mean and unit standard deviation. These macroeconomic variables capture many aspects of the U.S. economy, including GDP, National Association of Purchasing Manager indices, industrial production, pricing, employment, credit and interest rate. Both factor model and VAR model have been applied to these series in empirical econometric analyses for structural analysis and forecasting; see \citet{stock2009forecasting}, \citet{koop2013forecasting} and \citet{wang2019high}. The kurtosis of the standardized variables are plotted in Figure \ref{fig:KurtHist}, with the blue and red dashed lines indicating the kurtosis of normal distribution and $t_6$ distribution, respectively. Among 40 variables, 36 and 11 variables have larger kurtosis than standard normal distribution and $t_6$ distribution, providing some evidence of the existence of heavy-tailed distributions in this macroeconomic dataset.

\begin{figure}[!htp]
    \centering
    \includegraphics[width=0.8\textwidth]{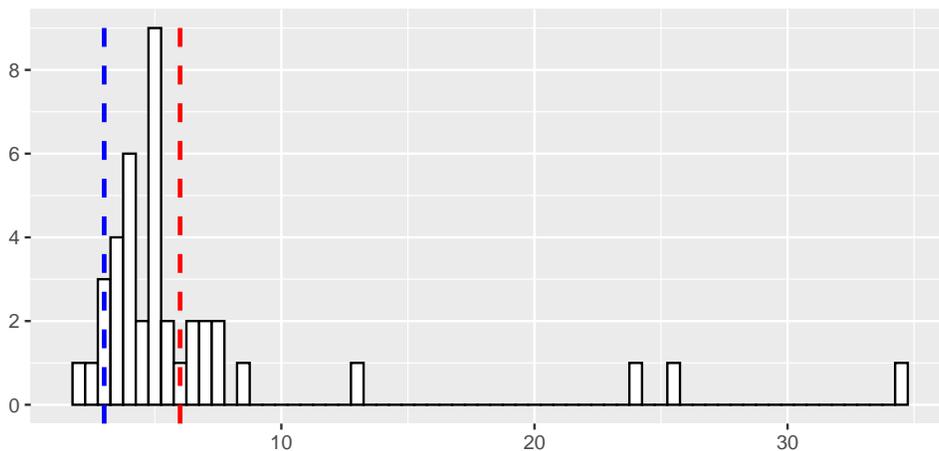}
    \vspace{-0.5cm}
    \caption{Histogram of kurtosis of 40 macroeconomic variables. Blue and red dashed line mark the theoretical kurtosis of standard normal distribution and $t_6$ distribution.}
    \label{fig:KurtHist}
\end{figure}

Following \citet{koop2013forecasting}, we apply a VAR(4) model to these macroeconomic time series. If no structural assumption is imposed on the parameter matrix, we can directly use the ordinary least squares estimator by minimizing the least squares loss function, namely
\begin{equation}
    \bm{\widehat{A}}_\text{OLS}=\argmin_{\bm{A}\in\mathbb{R}^{40\times 160}}\frac{1}{T}\sum_{t=1}^T\|\bm{y}_t-\bm{A}\bm{x}_{t}\|_2^2,
\end{equation}
where $\bm{x}_t=(\bm{y}_{t-1}^\top,\cdots,\bm{y}_{t-4}^\top)^\top$. If the low-rank structure on $\bm{A}$ is considered and the rank $r$ is pre-specified, the rank-constrained least squares estimator is formulated as
\begin{equation}
    \bm{\widehat{A}}_\text{RRR}=\argmin_{\bm{A}\in\mathbb{R}^{40\times 160},\text{rank}(\bm{A})\leq r}\frac{1}{T}\sum_{t=1}^T\|\bm{y}_t-\bm{A}\bm{x}_{t}\|_2^2.
\end{equation}
As shown in Section \ref{sec:5}, if the low-rank structure is not directly imposed, we can use the nuclear norm regularized least squares (NN-RLS), nuclear norm regularized Yule--Walker (NN-RYW), and the proposed nuclear norm robust (NN-ROB) methods. For the sparse VAR model, the $\ell_1$ regularized least squares (L1-RLS), $\ell_1$ regularized Yule--Walker (L1-RYW), and $\ell_1$ robust (L1-ROB) methods can be applied as discussed in Section \ref{sec:5}.

The performance of these eight methods are compared via out-of-sample forecasting errors. From the first quarter of 1993 ($t=135$) to the fourth quarter 2007 ($t=194$), we fit the VAR model by different methods utilizing all the historical data available until time $t-1$ and obtain the one-step-head forecast $\bm{\widehat{A}}\bm{x}_t$. Then, we calculate the rolling forecasting errors $\widehat{\bbm{\delta}}_t=\bm{\widehat{A}}\bm{x}_t-\bm{y}_t$ in the $\ell_2$ and $\ell_\infty$ norms, respectively, and summarized the their means and medians in Table \ref{tbl:2}.

\begin{table}
    \centering
  \renewcommand{\arraystretch}{1.4}
  \caption{Mean and median of out-of-sample rolling forecasting errors of various methods for the full, sparse, and reduced-rank VAR(4) models.}
    \scriptsize{\begin{tabular}{l|ccccccccc}
      \hline
      Model & Full & \multicolumn{3}{c}{Sparse} && \multicolumn{4}{c}{Reduced-rank}\\
      \cline{3-5} \cline{7-10}
      Method & OLS & L1-RLS & L1-RYW & L1-ROB && RRR & NN-RLS & NN-RYW & NN-ROB \\
      \hline
      mean of $\|\widehat{\bbm{\delta}}_t\|_2$ & 22.63 & 5.64 & 5.82 & \textbf{4.17} && 19.67 & 8.82 & 8.91 & 6.52 \\
      median of $\|\widehat{\bbm{\delta}}_t\|_2$ & 14.57 & 5.21 & 5.20 & \textbf{3.81} && 9.62 & 5.39 & 5.34 & 3.97 \\
      mean of $\|\widehat{\bbm{\delta}}_t\|_\infty$ & 9.29 & 2.16 & 2.42 & \textbf{1.66} && 6.42 & 3.44 & 3.27 & 2.49 \\
      median of $\|\widehat{\bbm{\delta}}_t\|_\infty$ & 6.42 & 1.96 & 2.01 & \textbf{1.43} && 3.06 & 2.16 & 2.09 & 1.80 \\
      \hline
    \end{tabular}}
    \vspace{0.3cm}
    \label{tbl:2}
\end{table}

As shown in Table \ref{tbl:2}, the OLS estimator produces the largest out-of-sample prediction errors as the full VAR(4) model is highly over-parameterized, while low-dimensional and sparse structure can alleviate the problem of overfitting and significantly improve the forecasting performance. Both L1-ROB and NN-ROB estimators have smaller out-of-sample forecasting errors than their standard counterparts, indicating the necessity of robust estimation for large-scale time series data and the satisfactory performance of the proposed methodology. The L1-ROB estimator performs best among all estimators as it produces a parsimonious sparse parameter matrix and prevents overfitting effectively.




\bibliographystyle{imsart-nameyear}
\bibliography{mybib.bib}

\clearpage
\begin{center}
{\Large\textbf{Supplementary material for \\ ``Rate-Optimal Robust Estimation of High-Dimensional Vector Autoregressive Models"}}\\~\\
{Di Wang and Ruey S. Tsay
\\\textit{Booth School of Business, University of Chicago}}
\end{center}
\vspace{0.3cm}

This supplementary material provides all technical proofs of the theoretical results in the main paper, some auxiliary lemmas, and details of the linearized ADMM algorithm. 
Specifically, Appendices \ref{append:A} and \ref{append:B} present the proofs of upper and lower bound results, respectively, and some related auxiliary lemmas.
Appendix \ref{append:C} shows the ADMM algorithms for the $\ell_1$ regularized sparse VAR model and the nuclear norm regularized reduced-rank VAR model.

\begin{appendix}

\section{Proofs of Upper Bound Results}\label{append:A}

 We present the proofs of the deterministic upper bounds (Propositions \ref{prop:1}--\ref{prop:3}) in Section \ref{sec:A.1}, the error bounds of the robust autocovariance estimators (Propositions \ref{prop:element}--\ref{prop:linear}) in Section \ref{sec:A.2}, and the error bounds of the robust VAR estimators (Theorems \ref{thm:sparseAR}--\ref{thm:network}) in Section \ref{sec:A.3}, respectively. The auxiliary lemmas are relegated to Section \ref{sec:A.4}.
 
 We start with some notation used in the Appendix. Denote by $\mathbb{S}^{p}=\{\bm{v}\in\mathbb{R}^{p}:\|\bm{v}\|_2=1\}$ the $p$-dimensional sphere with unit radius in the Euclidean norm. For any symmetric matrix $\bm{M}$, let  $\lambda_{\max}(\bm{M})$ and $\lambda_{\min}(\bm{M})$ be its largest and smallest eigenvalues.

\subsection{Proof of Propositions \ref{prop:1}--\ref{prop:3}}\label{sec:A.1}

\begin{proof}[\textbf{Proof of Proposition \ref{prop:1}}]

The general idea of the proof follows from that of Dantzig selector \citep{candes2007dantzig} and constrained minimization estimation \citep{cai2011constrained,han2015direct}.

By the conditions in Proposition \ref{prop:1}, $\mathcal{R}^*(\widetilde{\bm{\Sigma}}_0-\bm{\Sigma}_0)\leq\zeta_0$ and $\mathcal{R}^*(\widetilde{\bm{\Sigma}}_1-\bm{\Sigma}_1)\leq\zeta_1$. Let $\bm{\Delta}=\widehat{\bm{A}}-\bm{A}^*$ be the estimation error. We first show that the true value $\bm{A}^*$ is a feasible solution to the optimization problem in \eqref{eq:constrainedYW}. Note that as $\mathcal{R}^*(\cdot)$ is convex,
\begin{equation}
  \begin{split}
    &\mathcal{R}^*(\bm{A}^*\bm{\widetilde{\Sigma}}_0-\bm{\widetilde{\Sigma}}_1)
    =\mathcal{R}^*(\bm{A}^*\bm{\widetilde{\Sigma}}_0-\bm{\Sigma}_1+\bm{\Sigma}_1-\bm{\widetilde{\Sigma}}_1)\\
    =&\mathcal{R}^*(\bm{A}^{*}\bm{\widetilde{\Sigma}}_0-\bm{A}^*\bm{\Sigma}_0+\bm{\Sigma}_1-\bm{\widetilde{\Sigma}}_1)\\
    \leq&\mathcal{R}^*(\bm{A}^*[\bm{\widetilde{\Sigma}}_0-\bm{\Sigma}_0])+\mathcal{R}^*(\bm{\Sigma}_1-\bm{\widetilde{\Sigma}}_1)\leq\mathcal{R}(\bm{A}^*)\mathcal{R}^*(\bm{\widetilde{\Sigma}}_0-\bm{\Sigma}_0)+\zeta_1\\
    \leq&\zeta_0\mathcal{R}(\bm{A}^*)+\zeta_1\leq\lambda,
  \end{split}
\end{equation}
where the first inequality follows from the triangle inequality, and the second inequality follows from the properties of $\mathcal{R}^*(\cdot)$ and $\mathcal{R}(\cdot)$.

Therefore, $\bm{A}^*$ is feasible in the optimization equation, and hence $\mathcal{R}(\widehat{\bm{A}})\leq \mathcal{R}(\bm{A}^*)$. Then, by the triangle inequality, we have
\begin{equation}
  \begin{split}
    & \mathcal{R}^*(\bm{\Delta}) = \mathcal{R}^*(\widehat{\bm{A}}-\bm{\Sigma}_1\bm{\Sigma}_0^{-1})\\
    = & \mathcal{R}^*([\widehat{\bm{A}}\bm{\Sigma}_0-\bm{\Sigma}_1]\bm{\Sigma}_0^{-1})\\
    = & \mathcal{R}^*([\widehat{\bm{A}}\bm{\Sigma}_0-\widehat{\bm{A}}\widetilde{\bm{\Sigma}}_0+\widehat{\bm{A}}\widetilde{\bm{\Sigma}}_0-\widetilde{\bm{\Sigma}}_1+\widetilde{\bm{\Sigma}}_1-\bm{\Sigma}_1]\bm{\Sigma}_0^{-1})\\
    \leq & \{\mathcal{R}^*(\widehat{\bm{A}}[\bm{\Sigma}_0-\widetilde{\bm{\Sigma}}_0]) + \mathcal{R}^*(\widehat{\bm{A}}\widetilde{\bm{\Sigma}}_0-\widetilde{\bm{\Sigma}}_1) + \mathcal{R}^*(\widetilde{\bm{\Sigma}}_1-\bm{\Sigma}_1)\}\cdot\mathcal{C}(\bm{\Sigma}_0^{-1})\\
    \leq & [\zeta_0\mathcal{R}(\widehat{\bm{A}})+\lambda+\zeta_1]\cdot\mathcal{C}(\bm{\Sigma}_0^{-1}) \leq [\zeta_0\mathcal{R}(\bm{A}^*)+\lambda+\zeta_1]\cdot\mathcal{C}(\bm{\Sigma}_0^{-1}) \leq 2\lambda\hspace{0.01in}\mathcal{C}(\bm{\Sigma}_0^{-1}).
  \end{split} 
\end{equation}

By the definitions of model subspace and perturbation subspace, we can decompose $\bm{A}^*=\bm{A}^*_{\mathcal{M}}+\bm{A}^*_{\overline{\mathcal{M}}^\perp}$. Since $\bm{A}^*$ is feasible with respect to the constraint, we have
\begin{equation}
  \label{eq:tech1}
  \mathcal{R}(\widehat{\bm{A}}) \leq \mathcal{R}(\bm{A}^*) = \mathcal{R}(\bm{A}^*_{\mathcal{M}} + \bm{A}^*_{\overline{\mathcal{M}}^\perp}) = \mathcal{R}(\bm{A}^*_{\mathcal{M}}) + \mathcal{R}(\bm{A}^*_{\overline{\mathcal{M}}^\perp}),
\end{equation}
where the last equality follows from the decomposability of $\mathcal{R}(\cdot)$.

Moreover, we have
\begin{equation}
  \begin{split}
    \mathcal{R}(\bm{\widehat{A}})=&\mathcal{R}(\bm{A}^*+\bm{\Delta})=\mathcal{R}(\bm{A}^*_{\mathcal{M}}+\bm{\Delta}_{\overline{\mathcal{M}}^\perp}+\bm{A}^*_{\overline{\mathcal{M}}^\perp}+\bm{\Delta}_{\overline{\mathcal{M}}})\\
    \geq& \mathcal{R}(\bm{A}^*_{\mathcal{M}}+\bm{\Delta}_{\overline{\mathcal{M}}^\perp})-\mathcal{R}(\bm{A}^*_{\overline{\mathcal{M}}^\perp}+\bm{\Delta}_{\overline{\mathcal{M}}})\\
    \geq &\mathcal{R}(\bm{A}^*_{\mathcal{M}}) + \mathcal{R}(\bm{\Delta}_{\overline{\mathcal{M}}^\perp}) - \mathcal{R}(\bm{A}^*_{\overline{\mathcal{M}}^\perp}) -  \mathcal{R}(\bm{\Delta}_{\overline{\mathcal{M}}}),
  \end{split}
\end{equation}
where the first inequality follows from the triangle inequality and the second inequality follows from the decomposibility of $\mathcal{R}(\cdot)$ and the triangle inequality. Thus, together with \eqref{eq:tech1}, we have
$\mathcal{R}(\bm{\Delta}_{\overline{\mathcal{M}}^\perp}) \leq \mathcal{R}(\bm{\Delta}_{\overline{\mathcal{M}}}) + 
2\mathcal{R}(\bm{A}^*_{\overline{\mathcal{M}}^\perp})$.

Hence, the upper bound in terms of $\mathcal{R}(\cdot)$ can be found as
\begin{equation}
  \begin{split}
    &\mathcal{R}(\bm{\Delta}) = \mathcal{R}(\bm{\Delta}_{\overline{\mathcal{M}}}+\bm{\Delta}_{\overline{\mathcal{M}}^\perp})\\
    \leq & \mathcal{R}(\bm{\Delta}_{\overline{\mathcal{M}}}) + \mathcal{R}(\bm{\Delta}_{\overline{\mathcal{M}}^\perp})
    \leq 2\mathcal{R}(\bm{\Delta}_{\overline{\mathcal{M}}}) + 2\mathcal{R}(\bm{A}^*_{\overline{\mathcal{M}}^\perp})\\
    \leq & 2\phi(\overline{\mathcal{M}})\mathcal{R}^*(\bm{\Delta}_{\overline{\mathcal{M}}}) + 2\mathcal{R}(\bm{A}^*_{\overline{\mathcal{M}}^\perp})\\ 
    \leq & 2\phi(\overline{\mathcal{M}})\mathcal{R}^*(\bm{\Delta}) + 2\mathcal{R}(\bm{A}^*_{\overline{\mathcal{M}}^\perp}) \\
    \leq & 4\phi(\overline{\mathcal{M}})\lambda\hspace{0.01in}\mathcal{C}(\bm{\Sigma}_0^{-1}) + 2\mathcal{R}(\bm{A}^*_{\overline{\mathcal{M}}^\perp}).
  \end{split}
\end{equation}

Finally, by the duality between $\mathcal{R}(\cdot)$ and $\mathcal{R}^*(\cdot)$, the upper bound in the squared Frobenius norm can be derived as
\begin{equation}
  \|\bm{\Delta}\|_\textup{F}^2 \leq \mathcal{R}(\bm{\Delta})\mathcal{R}^*(\bm{\Delta})\leq 8\phi(\overline{\mathcal{M}})\lambda^2\hspace{0.01in}\mathcal{C}^2(\bm{\Sigma}_0^{-1})+4\lambda\mathcal{R}(\bm{A}^*_{\overline{\mathcal{M}}^\perp})\hspace{0.01in}\mathcal{C}(\bm{\Sigma}_0^{-1}).
\end{equation}
\end{proof}

\begin{proof}[\textbf{Proof of Proposition \ref{prop:2}}]

The proof of Proposition \ref{prop:2} generally follows that of Proposition \ref{prop:1}. Note that we have $\overline{\mathcal{R}}^*(\widetilde{\bm{\Sigma}}_0-\bm{\Sigma}_0)\leq\zeta_0$ and $\mathcal{R}^*(\widetilde{\bbm{\sigma}}_{1i}-\bbm{\sigma}_{1i})\leq\zeta_{1i}$. Let $\bbm{\delta}_i=\widehat{\bm{a}}_i-\bm{a}_i^*$. We first show that the true value $\bm{a}_i^*$ is a feasible solution to the optimization sub-problem in \eqref{eq:split_YW}. Note that
\begin{equation}
  \begin{split}
    &\mathcal{R}^*(\bm{\widetilde{\Sigma}}_0\bm{a}_i^*-\bbm{\widetilde{\sigma}}_{1i})
    =\mathcal{R}^*(\bm{\widetilde{\Sigma}}_0\bm{a}_i^*-\bbm{\sigma}_{1i}+\bbm{\sigma}_{1i}-\bbm{\widetilde{\sigma}}_{1i})\\
    =&\mathcal{R}^*(\bm{\widetilde{\Sigma}}_0\bm{a}_{i}^{*}-\bm{\Sigma}_0\bm{a}_i^*+\bbm{\sigma}_{1i}-\bbm{\widetilde{\sigma}}_{1i})\\
    \leq&\mathcal{R}^*([\bm{\widetilde{\Sigma}}_0-\bm{\Sigma}_0]\bm{a}_i^*)+\mathcal{R}^*(\bbm{\sigma}_{1i}-\bbm{\widetilde{\sigma}}_{1i})\leq\overline{\mathcal{R}}^*(\bm{\widetilde{\Sigma}}_0-\bm{\Sigma}_0)\mathcal{R}(\bm{a}_i^*)+\zeta_{2i}\\
    \leq&\zeta_0\mathcal{R}(\bm{a}_i^*)+\zeta_{1i}\leq\lambda.
  \end{split}
\end{equation}
Therefore, $\bm{a}_i^*$ is feasible in the optimization equation, and $\mathcal{R}(\widehat{\bm{a}}_i)\leq \mathcal{R}(\bm{a}_i^*)$. Then, by triangle inequality, we have
\begin{equation}
  \begin{split}
    & \mathcal{R}^*(\bbm{\delta}_i) = \mathcal{R}^*(\widehat{\bm{a}}_i-\bm{\Sigma}_0^{-1}\bbm{\sigma}_{1i})\\
    = & \mathcal{R}^*(\bm{\Sigma}_0^{-1}[\bm{\Sigma}_0\widehat{\bm{a}}_i-\bbm{\sigma}_{1i}])\\
    = & \mathcal{R}^*(\bm{\Sigma}_0^{-1}[\bm{\Sigma}_0\widehat{\bm{a}}_i-\widetilde{\bm{\Sigma}}_0\widehat{\bm{a}}_i+\widetilde{\bm{\Sigma}}_0\widehat{\bm{a}}_i-\widetilde{\bbm{\sigma}}_{1i}+\widetilde{\bbm{\sigma}}_{1i}-\bbm{\sigma}_{1i}])\\
    \leq & \mathcal{C}(\bm{\Sigma}_0^{-1})\cdot \mathcal{R}^*([\bm{\Sigma}_0-\widetilde{\bm{\Sigma}}_0]\widehat{\bm{a}}_i) + \mathcal{R}_0(\bm{\Sigma}_0^{-1})\cdot\mathcal{R}^*(\widetilde{\bm{\Sigma}}_0\widehat{\bm{a}}_i-\widetilde{\bbm{\sigma}}_1) + \mathcal{C}(\bm{\Sigma}_0^{-1})\cdot\mathcal{R}^*(\widetilde{\bbm{\sigma}}_1-\bbm{\sigma}_1)\\
    \leq & \mathcal{C}(\bm{\Sigma}_0^{-1})\cdot[\zeta_1\mathcal{R}(\widehat{\bm{a}}_i)+\lambda+\zeta_{2i}] \leq \hspace{0.01in}\mathcal{C}(\bm{\Sigma}_0^{-1})\cdot[\zeta_1\mathcal{R}(\bm{a}_i^*)+\lambda+\zeta_{2i}]\leq 2\lambda\hspace{0.01in}\mathcal{C}(\bm{\Sigma}_0^{-1}).
  \end{split} 
\end{equation}

By the definitions of model subspace and perturbation subspace, we can decompose $\bm{a}_i^*=(\bm{a}_i^*)_{\mathcal{M}_i}+(\bm{a}^*_i)_{\overline{\mathcal{M}}^\perp_i}$. Since $\bm{a}_i^*$ is feasible, by decomposability of $\mathcal{R}(\cdot)$, we have
\begin{equation}
  \label{eq:tech2}
  \mathcal{R}(\widehat{\bm{a}}_i) \leq \mathcal{R}(\bm{a}^*) = \mathcal{R}((\bm{a}^*_i)_{\mathcal{M}_i} + (\bm{a}_i^*)_{\overline{\mathcal{M}}_i^\perp}) = \mathcal{R}((\bm{a}^*_i)_{\mathcal{M}_i}) + \mathcal{R}((\bm{a}^*_i)_{\overline{\mathcal{M}}_i^\perp}).
\end{equation}
Moreover, we have
\begin{equation}
  \begin{split}
    \mathcal{R}(\bm{\widehat{a}}_i)=&\mathcal{R}(\bm{a}_i^*+\bbm{\delta}_i)=\mathcal{R}((\bm{a}^*_i)_{\mathcal{M}_i}+(\bbm{\delta}_i)_{\overline{\mathcal{M}}_i^\perp}+(\bm{a}_i^*)_{\overline{\mathcal{M}}_i^\perp}+(\bbm{\delta}_i)_{\overline{\mathcal{M}}_i})\\
    \geq& \mathcal{R}((\bm{a}^*_i)_{\mathcal{M}_i}+(\bbm{\delta}_i)_{\overline{\mathcal{M}}_i^\perp})-\mathcal{R}((\bm{a}^*_i)_{\overline{\mathcal{M}}_i^\perp}+(\bbm{\delta}_i)_{\overline{\mathcal{M}}_i})\\
    \geq &\mathcal{R}((\bm{a}^*_i)_{\mathcal{M}_i}) + \mathcal{R}((\bbm{\delta}_i)_{\overline{\mathcal{M}}_i^\perp}) - \mathcal{R}((\bm{a}^*_i)_{\overline{\mathcal{M}}_i^\perp}) -  \mathcal{R}((\bbm{\delta}_i)_{\overline{\mathcal{M}}_i}),
  \end{split}
\end{equation}
where the first inequality follows from the triangle inequality and the second one follows from the decomposability of $\mathcal{R}(\cdot)$.
Thus, together with \eqref{eq:tech2}, we have
$\mathcal{R}((\bbm{\delta}_i)_{\overline{\mathcal{M}}_i^\perp}) \leq \mathcal{R}((\bbm{\delta}_i)_{\overline{\mathcal{M}}_i}) + 
2\mathcal{R}((\bm{a}^*_i)_{\overline{\mathcal{M}}_i^\perp})$.

Hence, the upper bound in terms of $\mathcal{R}(\cdot)$ can be found as
\begin{equation}
  \begin{split}
    &\mathcal{R}(\bbm{\delta}_i) = \mathcal{R}((\bbm{\delta}_i)_{\overline{\mathcal{M}}_i}+(\bbm{\delta}_i)_{\overline{\mathcal{M}}_i^\perp})\\
    \leq & \mathcal{R}((\bbm{\delta}_i)_{\overline{\mathcal{M}}_i}) + \mathcal{R}((\bbm{\delta}_i)_{\overline{\mathcal{M}}_i^\perp})
    \leq 2\mathcal{R}((\bbm{\delta}_i)_{\overline{\mathcal{M}}_i}) + 2\mathcal{R}((\bm{a}_i^*)_{\overline{\mathcal{M}}_i^\perp})\\
    \leq & 2\phi(\overline{\mathcal{M}}_i)\mathcal{R}^*((\bbm{\delta}_i)_{\overline{\mathcal{M}}_i}) + 2\mathcal{R}(\bm{A}^*_{\overline{\mathcal{M}}_i^\perp})\\ 
    \leq & 2\phi(\overline{\mathcal{M}}_i)\mathcal{R}^*(\bbm{\delta}_i) + 2\mathcal{R}((\bm{a}_i^*)_{\overline{\mathcal{M}}_i^\perp}) \\
    \leq & 4\phi(\overline{\mathcal{M}}_i)\lambda\hspace{0.01in}\mathcal{C}(\bm{\Sigma}_0^{-1}) + 2\mathcal{R}(\bm{A}^*_{\overline{\mathcal{M}}_i^\perp}),
  \end{split}
\end{equation}
for all $i=1,2,\dots,p$.

Finally, by the duality between $\mathcal{R}(\cdot)$ and $\mathcal{R}^*(\cdot)$, we have
\begin{equation}
  \begin{split}
    \|\bbm{\delta}_i\|_2^2\leq \mathcal{R}(\bbm{\delta}_i)\mathcal{R}^*(\bbm{\delta}_i) \leq 8\phi(\overline{\mathcal{M}}_i)\lambda^2\hspace{0.01in}\mathcal{C}^2(\bm{\Sigma}_0^{-1}) + 4\lambda\mathcal{R}(\bm{A}^*_{\overline{\mathcal{M}}_i^\perp})\hspace{0.01in}\mathcal{C}(\bm{\Sigma}_0^{-1}).
  \end{split}
\end{equation}
\end{proof}

\begin{proof}[\textbf{Proof of Proposition \ref{prop:3}}]

Similarly to Proposition \ref{prop:1}, we first show that the true value $\bbm{\theta}^*$ is a feasible solution to the optimization problem. For simplicity, we assume that $\bbm{\gamma}=\bm{0}$. Note that
\begin{equation}
  \begin{split}
    &\|\bm{\widetilde{\Omega}}\bbm{\theta}^*-\bbm{\widetilde{\omega}}\|_\infty=\|\bm{\widetilde{\Omega}}\bbm{\theta}^*-\bm{\Omega}\bbm{\theta}^*+\bbm{\omega}-\bbm{\widetilde{\omega}}\|_\infty\\
    \leq&\|(\bm{\widetilde{\Omega}}-\bm{\Omega})\bbm{\theta}^*\|_\infty+\|\bbm{\omega}-\bbm{\widetilde{\omega}}\|_\infty\\
    \leq&\|\bm{\widetilde{\Omega}}-\bm{\Omega}\|_{1,\infty}\|\bbm{\theta}^*\|_\infty+\|\bbm{\omega}-\bbm{\widetilde{\omega}}\|_\infty\\
    \leq&\zeta_1\|\bbm{\theta}^*\|_\infty+\zeta_2\leq\lambda,
  \end{split}
\end{equation}
where the third last inequality follows from the facts that $\bm{\widetilde{\Omega}}-\bm{\Omega}$ is a symmetric matrix and $\|\bm{M}\bm{v}\|_\infty\leq\|\bm{M}^\top\|_{1,\infty}\|\bm{v}\|_\infty$, for any compatible matrix $\bm{M}$ and vector $\bm{v}$. Therefore, the true value $\bbm{\theta}^*$ is feasible and $\|\bbm{\widehat{\theta}}\|_\infty\leq\|\bbm{\theta}^*\|_\infty$. Denote $\bbm{\delta}=\bbm{\widehat{\theta}}-\bbm{\theta}^*$, and we have that
\begin{equation}
  \begin{split}
    &\|\bbm{\delta}\|_\infty=\|\bm{\Omega}^{-1}\bm{\Omega}\bbm{\delta}\|_\infty\\
    \leq&\|\bm{\Omega}^{-1}\|_{1,\infty}\|\bm{\Omega}(\bbm{\widehat{\theta}}-\bbm{\theta}^*)\|_\infty\\
    =&\|\bm{\Omega}^{-1}\|_{1,\infty}\|\bm{\Omega}\bbm{\widehat{\theta}}-\bbm{\omega}\|_\infty\\
    =&\|\bm{\Omega}^{-1}\|_{1,\infty}\Big\{\|(\bm{\Omega}-\bm{\widetilde{\Omega}})\bbm{\widehat{\theta}}+\bm{\widetilde{\Omega}}\bbm{\widehat{\theta}}-\bbm{\widetilde{\omega}}+\bbm{\widetilde{\omega}}-\bbm{\omega}\|_\infty\Big\}\\
    \leq&\|\bm{\Omega}^{-1}\|_{1,\infty}\Big\{\|(\bm{\Omega}-\bm{\widetilde{\Omega}})\|_{1,\infty}\|\bbm{\widehat{\theta}}\|_\infty+\|\bm{\widetilde{\Omega}}\bbm{\widehat{\theta}}-\bbm{\widetilde{\omega}}\|_\infty+\|\bbm{\widetilde{\omega}}-\bbm{\omega}\|_\infty\Big\}\\
    \leq&\|\bm{\Omega}^{-1}\|_{1,\infty}\Big\{\|\bm{\Omega}-\bm{\widetilde{\Omega}}\|_{1,\infty}\|\bbm{\theta}^*\|_\infty+\|\bm{\widetilde{\Omega}}\bbm{\widehat{\theta}}-\bbm{\widetilde{\omega}}\|_\infty+\|\bbm{\widetilde{\omega}}-\bbm{\omega}\|_\infty\Big\}\\
    \leq&\|\bm{\Omega}^{-1}\|_{1,\infty}(\zeta_1\|\bbm{\theta}^*\|_\infty+\lambda+\zeta_2)\leq2\lambda\|\bm{\Omega}^{-1}\|_{1,\infty},
  \end{split}
\end{equation}
where the second equality follows from the linear restricted Yule--Walker equation $\bm{\Omega}\bbm{\theta}^*=\bbm{\omega}$, the third last inequality follows from the feasibility condition $\|\bbm{\widehat{\theta}}\|_\infty\leq \|\bbm{\theta}^*\|_\infty$, and the second last inequality follows from the condition $\lambda\geq\zeta_1\|\bbm{\theta}^*\|_\infty+\zeta_2$.
\end{proof}

\subsection{Proofs of Propositions \ref{prop:element}--\ref{prop:linear}}\label{sec:A.2}

\begin{proof}[\textbf{Proof of Proposition \ref{prop:element}}]
In this proof, we simplify the notation $y_{it}^\text{E}(\tau)$ to $y_{it}(\tau)$ and consider the finite $(2+2\epsilon)$-th moment condition for $\bm{y}_t$, for any $\epsilon\in(0,1]$, namely $\max_{1\leq i\leq p}\mathbb{E}[|y_{it}|^{2+2\epsilon}]= M_{2+2\epsilon}$.

For $1\leq i,j\leq p$ and $\ell=0,1,\dots,d$, we can bound the bias introduced by the data truncation as 
\begin{equation}
  \label{eq:bias_1}
  \begin{split}
    &\mathbb{E}[y_{it}y_{j,t+\ell}]-\mathbb{E}[\widetilde{y}_{it}(\tau)\widetilde{y}_{j,t+\ell}(\tau)]\\
    \leq& \mathbb{E}[|y_{it}y_{j,t+\ell}|(1\{|y_{it}|\geq \tau\}+1\{|y_{j,t+\ell}|\geq \tau\})]\\
    \overset{(1)}{\leq}&\left[(\mathbb{E}|y_{it}|^{2+2\epsilon})(\mathbb{E}|y_{j,t+\ell}|^{2+2\epsilon})\right]^{1/(2+2\epsilon)}\left[\mathbb{P}(|y_{it}|\geq \tau)^{\epsilon/(1+\epsilon)}+\mathbb{P}(|y_{j,t+\ell}|\geq \tau)^{\epsilon/(1+\epsilon)}\right]\\
    \overset{(2)}{\leq}& M_{2+2\epsilon}^{1/(1+\epsilon)}\left[\left(\frac{\mathbb{E}|y_{it}|^{2+2\epsilon}}{\tau^{2+2\epsilon}}\right)^{\epsilon/(1+\epsilon)} + \left(\frac{\mathbb{E}|y_{j,t+\ell}|^{2+2\epsilon}}{\tau^{2+2\epsilon}}\right)^{\epsilon/(1+\epsilon)}\right]\\
    \leq& \frac{2M_{2+2\epsilon}}{\tau^{2\epsilon}} \overset{(3)}{\asymp} \left[\frac{M_{2+2\epsilon}^{1/\epsilon}\log(p^2d)}{n_\text{eff}}\right]^{\epsilon/(1+\epsilon)},
  \end{split}
\end{equation}
where the steps (1) to (3) follow from H\"{o}lder's inequality, Markov's inequality, and the choice of the truncation parameter
\begin{equation}
  \tau\asymp\left[\frac{M_{2+2\epsilon}n_\text{eff}}{\log(p^2d)}\right]^{\frac{1}{2+2\epsilon}},
\end{equation}
respectively.

For the truncated $\widetilde{\bm{y}}_{t}(\tau)$, by the element-wise truncation threshold $\tau$ and the $(2+2\epsilon)$-th moment bound $M_{2+2\epsilon}$, it can be checked that 
\begin{equation}
  \label{eq:variance_eq}
  \begin{split}
    &\mathbb{E}\left[|y_{it}(\tau)y_{j,t+\ell}(\tau)|^2\right]\leq \tau^{2-2\epsilon}\cdot\mathbb{E}\left[|y_{it}(\tau)y_{j,t+\ell}(\tau)|^{1+\epsilon}\right]\\
    \overset{(1)}{\leq} & \tau^{2-2\epsilon}\cdot\sqrt{\mathbb{E}[|y_{it}(\tau)|^{2+2\epsilon}]\mathbb{E}[|y_{j,t+\ell}|^{2+2\epsilon}]}\leq \tau^{2-2\epsilon}M_{2+2\epsilon},
  \end{split}
\end{equation}
where the inequality (1) follows from the Cauchy--Schwarz inequality. Also, for any $k=3,4,\dots$, the higher-order moments satisfy that
\begin{equation}
  \label{eq:moment_eq}
    \mathbb{E}\left[|y_{it}(\tau)y_{j,t+\ell}(\tau)|^k\right]
    \leq \tau^{2(k-2)} \mathbb{E}\left[|y_{it}(\tau)y_{j,t+\ell}(\tau)|^{2}\right].
\end{equation}

Since $\bm{\widetilde{y}}_t(\tau)$ is a measurable function of $\bm{y}_t$, by Lemma \ref{lemma:mixing}, $\widetilde{\bm{y}}_t(\tau)$ is also $\alpha$-mixing with the mixing coefficient smaller than or equal to $\alpha(\ell)\leq Cr^\ell$ for all $\ell\geq0$. As the lag order $d$ is fixed, by Lemma \ref{lemma:lag_mixing}, the lagged values $\widetilde{\bm{x}}_t=(\widetilde{\bm{y}}_{t-1}^\top,\dots,\widetilde{\bm{y}}_{t-d}^\top)^\top$ is also $\alpha$-mixing with the mixing coefficient decaying  geometrically.
Therefore, by \eqref{eq:variance_eq}, \eqref{eq:moment_eq} and the Bernstein-type inequality for the $\alpha$-mixing sequence in Lemma \ref{lemma:alpha-mixing},
\begin{equation}
  \begin{split}
    &\mathbb{P}\left(\left|\frac{1}{T}\sum_{t=1}^{T}y_{it}(\tau)y_{j,t+\ell}(\tau)-\mathbb{E}[y_{it}(\tau)y_{j,t+\ell}(\tau)]\right|\geq\varepsilon\right)\\
    \leq& C\left[\mu(\varepsilon)+\log(T)\right]\exp\left[-C\frac{\mu(\varepsilon)T}{\log(T)}\right],
  \end{split}
\end{equation}
where
\begin{equation}
  \mu(\varepsilon)=\frac{\varepsilon^2}{\tau^2\varepsilon+\tau^{2-2\epsilon}M_{2+2\epsilon}}.
\end{equation}
Let $\varepsilon=CM_{2+2\epsilon}^{1/(1+\epsilon)}[\log(p)/n_\text{eff}]^{\epsilon/(1+\epsilon)}$. Then, $\mu(\varepsilon)\ll\log(T)$ and 
\begin{equation}
  \begin{split}
    &\mathbb{P}\left(\left|\frac{1}{T}\sum_{t=1}^{T}y_{j,t}(\tau)y_{k,t+\ell}(\tau)-\mathbb{E}[y_{j,t}(\tau)y_{k,t+\ell}(\tau)]\right|\geq C\left[\frac{M_{2+2\epsilon}^{1/\epsilon}\log(p^2d)}{n_\text{eff}}\right]^{\frac{\epsilon}{1+\epsilon}}\right)\\
    \leq & C\log(T)\exp\left[-Cn_\text{eff}\log(T)\cdot\frac{\log(p^2d)}{n_\text{eff}}\right]= C\exp\left[\log\log(T)-C\log(p^2d)\log(T)\right]\\
    \leq & C\exp[-C\log(p^2d)\log(T)]
  \end{split}
\end{equation}
For all $1\leq i,j\leq p$ and $\ell=0,\dots,d$,
\begin{equation}
  \label{eq:deviation_1}
  \begin{split}
    & \mathbb{P}\left(\max_{\substack{1\leq j,k\leq p\\ \ell=0,\dots,d}}\left|\frac{1}{T}\sum_{t=1}^Ty_{it}(\tau)y_{j,t+\ell}(\tau)-\mathbb{E}[y_{it}(\tau)y_{j,t+\ell}(\tau)]\right| \geq C\left[\frac{M_{2+2\epsilon}^{1/\epsilon}\log(p^2d)}{n_\text{eff}}\right]^{\frac{\epsilon}{1+\epsilon}}\right) \\
    \leq&\sum_{\substack{1\leq j,k\leq p\\ \ell=0,\dots,d}}\mathbb{P}\left(\left|\frac{1}{T}\sum_{t=1}^Ty_{it}(\tau)y_{j,t+\ell}(\tau)-\mathbb{E}[y_{it}(\tau)y_{j,t+\ell}(\tau)]\right| \geq C\left[\frac{M_{2+2\epsilon}^{1/\epsilon}\log(p^2d)}{n_\text{eff}}\right]^{\frac{\epsilon}{1+\epsilon}}\right)\\
    \leq&Cp^2d\exp[-C\log(p^2d)\log(T)]= C\exp[\log(p^2d)-C\log(p^2d)\log(T)]\\
    \leq & C\exp[-C\log(p^2d)\log(T)]
  \end{split}
\end{equation}

Finally, the $\ell_\infty$ norm bounds for $\widetilde{\bm{\Sigma}}_0$ and $\widetilde{\bm{\Sigma}}_1$ can be obtained by combining the high probability deviation bound in \eqref{eq:bias_1} and the bound for the truncation bias in \eqref{eq:deviation_1}.
\end{proof}

\begin{proof}[\textbf{Proof of Proposition \ref{prop:vector}}]

Throughout this proof, we simplify the notations $\bm{y}_t^\text{V}(\tau_1)$ and $\bm{x}_t^\text{V}(\tau_2)$ to $\bm{y}_t(\tau_1)$ and $\bm{x}_t(\tau_2)$, respectively. For any vector $\bm{v}=(v_1,v_2,\dots,v_p)^\top\in\mathbb{R}^p$ and $\epsilon\in(0,1]$, by H\"{o}lder's inequality,
\begin{equation}
  \begin{split}
    \|\bm{v}\|_2^2=\sum_{i=1}^pv_i^2\leq\left(\sum_{i=1}^pv_i^{2+2\epsilon}\right)^{\frac{1}{1+\epsilon}}p^{\frac{\epsilon}{1+\epsilon}}=p^{\frac{\epsilon}{1+\epsilon}}\|\bm{v}\|_{2+2\epsilon}^2.
  \end{split}
\end{equation}
and
\begin{equation}
  \begin{split}
    \|\bm{v}\|_2^2=\sum_{i=1}^pv_i^2\leq\left(\sum_{i=1}^pv_i^{2+2\epsilon+\delta}\right)^{\frac{1}{1+\epsilon+\delta/2}}p^{\frac{\epsilon+\delta/2}{1+\epsilon+\delta/2}}\|\bm{v}\|_{2+2\epsilon+\delta}^2.
  \end{split}
\end{equation}

First, we prove the operator norm of
$\bm{\widetilde{\Sigma}}_0^\text{V}(\tau_2)-\bm{\Sigma}_0$. 
For any $\bm{v}\in\mathbb{R}^{pd}$ such that $\|\bm{v}\|_2=1$, it holds that
\begin{equation}
  \label{eq:bias2}
  \begin{split}
    &\mathbb{E}[\bm{v}^\top(\bm{x}_t\bm{x}_{t}^\top-\bm{x}_t(\tau_2)\bm{x}_t^\top(\tau_2))\bm{v}]\\
    =&\mathbb{E}[((\bm{v}^\top\bm{x}_t)^2-(\bm{v}^\top\bm{x}_t(\tau_2))^2)1\{\|\bm{x}_t\|_{2}>\tau_2\}]\\
    \leq&\mathbb{E}[(\bm{v}^\top\bm{x}_t)^21\{\|\bm{x}_t\|_{2}>\tau_2\}]\\
    \overset{(1)}{\leq}&[\mathbb{E}(\bm{v}^\top\bm{x}_t)^{2+2\epsilon}]^{1/(1+\epsilon)}\cdot\mathbb{P}[\|\bm{x}_t\|_{2}>\tau_2]^{\epsilon/(1+\epsilon)}\\
    \overset{(2)}{\leq}& M_{2+2\epsilon}^{1/(1+\epsilon)}\left(\frac{\mathbb{E}\|\bm{x}_t\|_{2}^{2+2\epsilon}}{\tau_2^{2+2\epsilon}}\right)^{\frac{\epsilon}{1+\epsilon}} \leq  M_{2+2\epsilon} \cdot (\tau_2^{-2}pd)^{\epsilon}\\
    \overset{(3)}{\asymp} &\left[\frac{pdM_{2+2\epsilon}^{1/\epsilon}}{n_\text{eff}}\right]^{\epsilon/(1+\epsilon)},
  \end{split}
\end{equation}
where the steps (1) to (3) follow from H\"{o}lder's inequality, Markov's inequality, and the choice $\tau_2\asymp \left[(pd)^\epsilon M_{2+2\epsilon}n_\text{eff}\right]^{1/(2+2\epsilon)}$, respectively

Denote $\bm{\widetilde{X}}_t=\bm{x}_t(\tau_2)\bm{x}_t(\tau_2)^\top$, where $\tau_2$ is omitted for simplicity. By H\"{o}lder's inequality,
\begin{equation}
  \begin{split}
    & \|\widetilde{\bm{X}}_t-\mathbb{E}\widetilde{\bm{X}}_t\|_\textup{op} \leq \|\widetilde{\bm{X}}_t\|_\textup{op}+\|\mathbb{E}\widetilde{\bm{X}}_t\|_\textup{op}\\
    = & \|\bm{x}_{t}(\tau_2)\|_2^2 + \sup_{\bm{v}\in\mathbb{S}^{pd}}\mathbb{E}\left[(\bm{v}^\top\bm{x}_t(\tau_2))^2\right]\\
    \leq & \tau_2^2 + \sup_{\bm{v}\in\mathbb{S}^{pd}}\mathbb{E}\left[(\bm{v}^\top\bm{x}_t(\tau))^{2+2\epsilon}\right]^{\frac{1}{1+\epsilon}} \leq \tau_2^2 + M_{2+2\epsilon}^{1/(1+\epsilon)}\\
    \asymp & (pd)^{\epsilon/(1+\epsilon)}M_{2+2\epsilon}^{1/(1+\epsilon)}n_\text{eff}^{1/(1+\epsilon)}:=B.
  \end{split}
\end{equation}
For any $\bm{v}\in\mathbb{S}^{pd}$, also by H\"{o}lder's inequality,
\begin{equation}
  \begin{split}
    &\mathbb{E}(\bm{v}^\top\widetilde{\bm{X}}_t\widetilde{\bm{X}}_t\bm{v}) = \mathbb{E}\left[\|\bm{x}_t(\tau_2)\|_2^2(\bm{v}^\top\bm{x}_t(\tau_2))^2\right]\\
    \leq & \mathbb{E}\left[\|\bm{x}_t(\tau_2)\|_2^{2-2\epsilon}\cdot \|\bm{x}_t(\tau_2)\|_2^{2\epsilon}(\bm{v}^\top\bm{x}_t(\tau_2))^2\right]\\
    \leq & \tau_2^{2-2\epsilon}\cdot\mathbb{E}[\|\bm{x}_t\|_2^{2+2\epsilon}]^{\frac{\epsilon}{1+\epsilon}}\cdot\mathbb{E}[(\bm{v}^\top\bm{x}_t)^{2+2\epsilon}]^{\frac{1}{1+\epsilon}}\\
    \leq & \tau_2^{2-2\epsilon}\cdot (pd)^\epsilon\cdot M_{2+2\epsilon}.
  \end{split}
\end{equation}
It follows that $\|\mathbb{E}\widetilde{\bm{X}}_t\widetilde{\bm{X}}_t\|_\textup{op}\leq \tau_2^{2-2\epsilon}\cdot (pd)^\epsilon \cdot M_{2+2\epsilon}$.

Since $\|\mathbb{E}(\widetilde{\bm{X}}_t)\mathbb{E}(\widetilde{\bm{X}}_t)\|_\textup{op}\leq \|\mathbb{E}\bm{x}_t(\tau_2)\bm{x}_t(\tau_2)^\top\|_\textup{op}^2=[\sup_{\bm{v}\in\mathbb{S}^{pd}}\mathbb{E}(\bm{v}^\top\bm{x}_t(\tau_2))^2]^2\leq M_{2+2\epsilon}^{2/(1+\epsilon)}$, we have that 
\begin{equation}
  \begin{split}
    & \|\mathbb{E}[(\widetilde{\bm{X}}_t-\mathbb{E}\widetilde{\bm{X}}_t)(\widetilde{\bm{X}}_t-\mathbb{E}\widetilde{\bm{X}}_t)]\|_\textup{op} \leq \|\mathbb{E}\widetilde{\bm{X}}_t\widetilde{\bm{X}}_t\|_\textup{op} + \|\mathbb{E}(\widetilde{\bm{X}}_t)\mathbb{E}(\widetilde{\bm{X}}_t)\|_\textup{op}\\
    \leq & \tau_2^{2-2\epsilon}\cdot (pd)^\epsilon\cdot M_{2+2\epsilon} + M_{2+2\epsilon}^{2/(1+\epsilon)}\\
    \asymp & (pd)^{\frac{2\epsilon}{1+\epsilon}}\cdot M_{2+2\epsilon}^{2/(1+\epsilon)}\cdot n_\text{eff}^{(1-\epsilon)/(1+\epsilon)} := v^2.
  \end{split}
\end{equation}

By Lemmas \ref{lemma:mixing} and \ref{lemma:lag_mixing}, as $\widetilde{\bm{X}}_t$ can be viewed as a function of $\bm{x}_t$, $\widetilde{\bm{X}}_t$ is a $\beta$-mixing sequence with mixing coefficients bounded by those of $\bm{y}_t$, where the definition of the $\beta$-mixing condition for a time series of random vectors can readily be extended to a time series of random matrices.
For any $\bm{v}\in\mathbb{S}^{pd}$, $\bm{u}\in\mathbb{S}^{pd}$, $\ell\geq 1$, by Lemma \ref{lemma:covariance},
\begin{equation}
  \begin{split}
    &\mathbb{E}[\bm{v}^\top\widetilde{\bm{X}}_t\widetilde{\bm{X}}_{t+\ell}\bm{u}]\\
    =&\mathbb{E}\left[(\bm{v}^\top\bm{x}_t(\tau_2))(\bm{u}^\top\bm{x}_{t+\ell}(\tau_2))\bm{x}_t(\tau_2)^\top\bm{x}_{t+\ell}(\tau_2)\right]\\
    \leq & \tau_2^{2-2\epsilon}\cdot\mathbb{E}\left[(\bm{v}^\top\bm{x}_t(\tau_2))(\bm{u}^\top\bm{x}_{t+\ell}(\tau))[\bm{x}_t(\tau_2)^\top\bm{x}_{t+\ell}(\tau_2)]^\epsilon\right]\\
    \leq & \tau_2^{2-2\epsilon}\cdot\beta(\ell)^{\delta/(4+\delta)}\cdot \mathbb{E}[\|\bm{x}_t\|_2^{2+2\epsilon+\delta}]^{\frac{\epsilon}{2+2\epsilon+\delta}}
    \cdot\mathbb{E}[\|\bm{x}_{t+\ell}\|_2^{2+2\epsilon+\delta}]^{\frac{\epsilon}{2+2\epsilon+\delta}}\cdot\\
    &\mathbb{E}[(\bm{v}^\top\bm{x}_t)^{2+2\epsilon}]^{\frac{1}{2+2\epsilon+\delta}}\cdot
    \mathbb{E}[(\bm{v}^\top\bm{x}_{t+\ell})^{2+2\epsilon}]^{\frac{1}{2+2\epsilon+\delta}}\\
    \leq & \tau_2^{2-2\epsilon}\cdot (pd)^\epsilon\cdot \beta(\ell)^{\delta/(2+2\epsilon+\delta)}\cdot M_{2+2\epsilon}.
  \end{split}
\end{equation}
Thus, we have that
\begin{equation}
  \|\mathbb{E}[(\widetilde{\bm{X}}_t-\mathbb{E}\widetilde{\bm{X}}_t)(\widetilde{\bm{X}}_{t+\ell}-\mathbb{E}\widetilde{\bm{X}}_t)]\|_\textup{op}\leq C\cdot \tau_2^{2-2\epsilon} \cdot M_{2+2\epsilon}\cdot p^\epsilon d^\epsilon\cdot \exp[-(\delta/(2+2\epsilon+\delta))\ell]. 
\end{equation} 

For any $K\subset\{1,\dots, T\}$,
\begin{equation}
  \begin{split}
    &\frac{1}{\text{Card}(K)}\lambda_{\max}\left\{\mathbb{E}\left(\sum_{t\in K}(\widetilde{\bm{X}}_t-\mathbb{E}\widetilde{\bm{X}}_t)\right)^2\right\} \\
    \leq&\frac{1}{\text{Card}(K)}\left\|\sum_{i,j\in K}\mathbb{E}(\bm{\widetilde{X}}_i-\mathbb{E}\bm{\widetilde{X}}_i)(\bm{\widetilde{X}}_j-\mathbb{E}\bm{\widetilde{X}}_j)\right\|_\textup{op}\\
    \leq&\frac{1}{\text{Card}(K)}\sum_{i,j\in K}\left\|\mathbb{E}(\bm{\widetilde{X}}_i-\mathbb{E}\bm{\widetilde{X}}_i)(\bm{\widetilde{X}}_j-\mathbb{E}\bm{\widetilde{X}}_j)\right\|_\textup{op}\\
    =&\left\|\mathbb{E}(\bm{\widetilde{X}}_t-\mathbb{E}\bm{\widetilde{X}}_t)^2\right\|_\textup{op}+\frac{2}{\text{Card}(K)}\sum_{i>j,i,j\in K}\left\|\mathbb{E}(\bm{\widetilde{X}}_i-\mathbb{E}\bm{\widetilde{X}}_i)(\bm{\widetilde{X}}_j-\mathbb{E}\bm{\widetilde{X}}_j)\right\|_\textup{op}\\
    \leq&C\cdot \tau_2^{2-2\epsilon}\cdot M_{2+2\epsilon}\cdot p^\epsilon d^\epsilon\cdot[1-\exp(-\delta/(2+2\epsilon+\delta))]^{-1} \asymp v^2.
  \end{split}
\end{equation}
Thus, by the $\beta$-mixing matrix Bernstein-type inequality in Lemma \ref{lemma:beta-mixing}, we have that for any $\varepsilon>0$,
\begin{equation}
  \begin{split}
    &\mathbb{P}\left(\left\|\frac{1}{T}\sum_{t=1}^T\widetilde{\bm{X}}_t-\mathbb{E}\widetilde{\bm{X}}_t\right\|_\textup{op} \geq \varepsilon\right)\\
    \leq & 2\mathbb{P}\left(\lambda_{\max}\left(\sum_{t=1}^T\widetilde{\bm{X}}_t-\mathbb{E}\widetilde{\bm{X}}_t\right) \geq T\varepsilon\right)\\
    \leq & 2pd\exp\left(-\frac{CT^2\varepsilon^2}{v^2 T+B^2+T\varepsilon\log(T)^2B}\right).
  \end{split}
\end{equation}
Letting $\varepsilon=C(pd M_{2+2\epsilon}^{1/\epsilon}n_\text{eff}^{-1})^{\epsilon/(1+\epsilon)}$, if $T\gtrsim pd$, we have
\begin{equation}
  \begin{split}
    &\mathbb{P}\left(\left\|\frac{1}{T}\sum_{t=1}^T\widetilde{\bm{X}}_t-\mathbb{E}\widetilde{\bm{X}}_t\right\|_\textup{op} \geq C\left[\frac{pdM_{2+2\epsilon}^{1/\epsilon}}{n_\text{eff}}\right]^{\epsilon/(1+\epsilon)}\right)\\
    \leq & 2\exp[\log(pd)-C\log(T)]\leq 2\exp[-C\log(T)].
  \end{split}
\end{equation}
Combining this tail probability and the bound of truncation bias in \eqref{eq:bias2}, we have that with probability at least $1-2\exp[C\log(T)]$,
\begin{equation}
  \|\widetilde{\bm{\Sigma}}_0^\text{V}(\tau_2)-\bm{\Sigma}_0\|_\textup{op}\lesssim \left[\frac{pdM_{2+2\epsilon}^{1/\epsilon}}{n_\text{eff}}\right]^{\epsilon/(1+\epsilon)}.
\end{equation}

Next we prove the upper bound for the operator norm of $\widetilde{\bm{\Sigma}}_1^\text{V}(\tau_1,\tau_2)-\bm{\Sigma}_1$. For any $\bm{v}\in\mathbb{S}^{p}$ and $\bm{u}\in\mathbb{S}^{pd}$, similarly to $\widetilde{\bm{\Sigma}}_0^\text{V}(\tau_2)$, it holds that
\begin{equation}
  \begin{split}
    & \mathbb{E}[\bm{v}^\top(\bm{y}_t\bm{x}_{t}^\top-\bm{y}_t(\tau_1)\bm{x}_t(\tau_2)^\top)\bm{u}]\\
    \leq & \mathbb{E}[|(\bm{v}^\top\bm{y}_t)(\bm{u}^\top\bm{x}_{t})|\cdot(1\{\|\bm{y}_t\|_2>\tau_1\}+1\{\|\bm{x}_{t}\|_2>\tau_2\})]\\
    \leq & \mathbb{E}[|(\bm{v}^\top\bm{y}_t)(\bm{u}^\top\bm{x}_{t})|^{1+\epsilon}]^{1/(1+\epsilon)}\cdot \mathbb{P}[\|\bm{y}_t\|_2>\tau_1]^{\epsilon/(1+\epsilon)}\\
    + & \mathbb{E}[|(\bm{v}^\top\bm{y}_t)(\bm{u}^\top\bm{x}_{t})|^{1+\epsilon}]^{1/(1+\epsilon)}\cdot \mathbb{P}[\|\bm{x}_t\|_2>\tau_2]^{\epsilon/(1+\epsilon)}\\
    \leq & M_{2+2\epsilon}^{1/(1+\epsilon)}\left(\frac{\mathbb{E}\|\bm{y}_t\|_2^{2+2\epsilon}}{\tau_1^{2+2\epsilon}}\right)^{\epsilon/(1+\epsilon)} + M_{2+2\epsilon}^{1/(1+\epsilon)} \left(\frac{\mathbb{E}\|\bm{x}_t\|_2^{2+2\epsilon}}{\tau_2^{2+2\epsilon}}\right)^{\epsilon/(1+\epsilon)} \\
    \leq & M_{2+2\epsilon}\cdot\tau_1^{-2\epsilon}\cdot p^\epsilon + M_{2+2\epsilon}\cdot\tau_2^{-2\epsilon}\cdot p^\epsilon d^\epsilon\\
    \lesssim & \left[\frac{pdM_{2+2\epsilon}^{1/\epsilon}}{n_\text{eff}}\right]^{\epsilon/(1+\epsilon)},
  \end{split}
\end{equation}
with $\tau_1\asymp [p^\epsilon M_{2+2\epsilon}n_\text{eff}]^{1/(2+2\epsilon)}$ and $\tau_2\asymp \left[(pd)^\epsilon M_{2+2\epsilon}n_\text{eff}\right]^{1/(2+2\epsilon)}$.

Denote the symmetrized version of $\bm{y}_t(\tau_1)\bm{x}_{t}(\tau_2)^\top$ by $\widetilde{\bm{Z}}_t$ with the parameters $\tau_1$ and $\tau_2$ omitted
\begin{equation}
  \widetilde{\bm{Z}}_t=\begin{pmatrix}
    \bm{O}_{p\times p} & \bm{y}_t(\tau_1)\bm{x}_{t}(\tau_2)^\top\\
    \bm{x}_t(\tau_2)\bm{y}_{t}(\tau_1)^\top & \bm{O}_{pd\times pd}
  \end{pmatrix}.
\end{equation}
Note that $\|\bm{y}_t(\tau_1)\bm{x}_t(\tau_2)^\top\|_\textup{op}=\lambda_{\max}(\widetilde{\bm{Z}}_t)$ and
\begin{equation}
  \begin{split}
    &\|\widetilde{\bm{Z}}_t-\mathbb{E}\widetilde{\bm{Z}}_t\|_\textup{op} \leq \|\widetilde{\bm{Z}}_t\|_\textup{op} + \|\mathbb{E}\widetilde{\bm{Z}}_t\|_\textup{op}\\
    \leq&\|\bm{y}_t(\tau_1)\|_2\|\bm{x}_t(\tau_2)\|_2+\sup_{\bm{v}\in\mathbb{S}^{p},\bm{u}\in\mathbb{S}^{p}}\mathbb{E}[(\bm{v}^\top\widetilde{\bm{y}}_t(\tau))(\bm{u}^\top\widetilde{\bm{y}}_{t-1}(\tau))]\\
    \leq & \tau_1\tau_2 + \sup_{\bm{v}\in\mathbb{S}^{p}}\mathbb{E}[(\bm{v}^\top\bm{y}_t(\tau_1))^{2+2\epsilon}]^{1/(2+2\epsilon)}\sup_{\bm{u}\in\mathbb{S}^{pd}}\mathbb{E}[(\bm{u}^\top\bm{x}_{t}(\tau_2))^{2+2\epsilon}]^{1/(2+2\epsilon)}\\
    \leq & \tau_1\tau_2 + M_{2+2\epsilon}^{1/(1+\epsilon)} \asymp B.
  \end{split}
\end{equation}

For any $\bm{w}\in\mathbb{S}^{p(d+1)}$, split it to $\bm{w}=(\bm{w}_1^\top,\bm{w}_2^\top)^\top$, where $\bm{w}_1\in\mathbb{R}^p$ and $\bm{w}_2\in\mathbb{R}^{pd}$, and similarly to $\widetilde{\bm{X}}_t\widetilde{\bm{X}}_t$,
\begin{equation}
  \begin{split}
    &\mathbb{E}(\bm{w}^\top\widetilde{\bm{Z}}_t\widetilde{\bm{Z}}_t\bm{w})\\
    =&\mathbb{E}(\bm{w}_1^\top\bm{y}_t(\tau_1)\bm{x}_t(\tau_2)^\top\bm{x}_t(\tau_2)\bm{y}_{t}(\tau_1)^\top\bm{w}_1) + \mathbb{E}(\bm{w}_2^\top\bm{x}_{t}(\tau_2)\bm{y}_{t}(\tau_1)^\top\bm{y}_{t}(\tau_1)\bm{y}_{t}(\tau_2)^\top\bm{w}_2)\\
    \leq & \mathbb{E}[\|\bm{x}_t(\tau_2)\|_2^{2-2\epsilon}\cdot(\|\bm{x}_t(\tau_2)\|_2^{2\epsilon}(\bm{w}_1^\top\bm{y}_t)^2)] + 
    \mathbb{E}[\|\bm{y}_{t}(\tau_1)\|_2^{2-2\epsilon}\cdot(\|\bm{y}_{t}(\tau_1)\|_2^{2\epsilon}(\bm{w}_2^\top\bm{x}_{t})^2)]\\
    \leq & \tau_2^{2-2\epsilon}\cdot (pd)^\epsilon\cdot M_{2+2\epsilon}\cdot\|\bm{w}_1\|_2^2 + \tau_1^{2-2\epsilon}\cdot p^\epsilon\cdot M_{2+2\epsilon}\cdot\|\bm{w}_2\|_2^2\\
    \leq & (\tau_1^{2-2\epsilon}+\tau_2^{2-2\epsilon})\cdot (pd)^\epsilon\cdot M_{2+2\epsilon}.
  \end{split}
\end{equation}
Since $\|\mathbb{E}(\widetilde{\bm{Z}}_t)\mathbb{E}(\widetilde{\bm{Z}}_t)\|_\textup{op}\leq \|\mathbb{E}(\widetilde{\bm{Z}}_t)\|_\textup{op}^2\leq M_{2+2\epsilon}^{2/(1+\epsilon)}$, we have that
\begin{equation}
  \|\mathbb{E}[(\widetilde{\bm{Z}}_t-\mathbb{E}\widetilde{\bm{Z}}_t)(\widetilde{\bm{Z}}_t-\mathbb{E}\widetilde{\bm{Z}}_t)]\|_\textup{op} \leq (\tau_1^{2-2\epsilon}+\tau_2^{2-2\epsilon})\cdot (pd)^\epsilon\cdot M_{2+2\epsilon} + M_{2+2\epsilon}^{2/(1+\epsilon)}.
\end{equation}

For any $\bm{v}=(\bm{v}_1^\top,\bm{v}_2^\top)^\top\in\mathbb{S}^{p(d+1)}$, $\bm{u}=(\bm{u}_1^\top,\bm{u}_2^\top)^\top\in\mathbb{S}^{p(d+1)}$, $\ell\geq 1$, by Lemma \ref{lemma:covariance} and similar techniques used for $\widetilde{\bm{X}}_t\widetilde{\bm{X}}_t$,
\begin{equation}
  \begin{split}
    &\mathbb{E}[\bm{v}^\top\widetilde{\bm{Z}}_t\widetilde{\bm{Z}}_{t+\ell}\bm{u}]\\
    \leq & \mathbb{E}[\bm{v}_1^\top\bm{y}_{t}(\tau_1)\bm{x}_{t}(\tau_2)^\top\bm{x}_{t+\ell}(\tau_2)\bm{y}_{t+\ell}(\tau_1)^\top\bm{u}_1] + \mathbb{E}[\bm{v}_2^\top\bm{x}_{t}(\tau_2)\bm{y}_t(\tau_1)^\top\bm{y}_{t+\ell}(\tau_1)\bm{x}_{t+\ell}(\tau_2)^\top\bm{u}_2]\\
    \leq & (\tau_1^{2-2\epsilon}+\tau_2^{2-2\epsilon})\cdot (pd)^\epsilon\cdot\beta(\ell)^{\delta/(2+2\epsilon+\delta)}\cdot M_{2+2\epsilon},
  \end{split}
\end{equation}
We can also show that with probability at least $1-2\exp[-C\log(T)]$,
\begin{equation}
  \left\|\frac{1}{T}\sum_{t=1}^T\widetilde{\bm{Z}}_t-\mathbb{E}\widetilde{\bm{Z}}_t\right\|_\textup{op}\lesssim \left[\frac{pdM_{2+2\epsilon}^{1/\epsilon}}{n_\text{eff}}\right]^{\epsilon/(1+\epsilon)}
\end{equation}
and hence,
\begin{equation}
  \|\widetilde{\bm{\Sigma}}_1^\text{V}(\tau_1,\tau_2)-\bm{\Sigma}_1\|_\textup{op}\lesssim\left[\frac{pdM_{2+2\epsilon}^{1/\epsilon}}{n_\text{eff}}\right]^{\epsilon/(1+\epsilon)}.
\end{equation}

\end{proof}

\begin{proof}[\textbf{Proof of Proposition \ref{prop:linear}}]
  By Assumption \ref{asmp:moment_linear}, for $i=1,\dots,r$, $\mathbb{E}[\|\bm{w}_{it}\|_2^{2+2\epsilon}]\leq M_{1,2+2\epsilon}$, where the moment bound $M_{1,2+2\epsilon}$ possibly depends on the dimension $p$.

  First, by the Cauchy--Schwarz inequality and Markov's inequailty, we can bound the bias from data truncation, for $1\leq i,j\leq r$,
  \begin{equation}
    \begin{split}
      & |\mathbb{E}[\bm{w}_{it}^\top\bm{w}_{jt}]-\mathbb{E}[\widetilde{\bm{w}}_{it}(\tau_1)^\top\widetilde{\bm{w}}_{jt}(\tau_1)]|\\
      \leq & \mathbb{E}[|\bm{w}_{it}^\top\bm{w}_{jt}|^{1+\epsilon}]^{1/(1+\epsilon)}\cdot[\mathbb{P}(\|\bm{w}_{it}\|_2\geq\tau_1)^{\epsilon/(1+\epsilon)}+\mathbb{P}(\|\bm{w}_{jt}\|_2\geq\tau_1)^{\epsilon/(1+\epsilon)}]\\
    \leq & \mathbb{E}[\|\bm{w}_{it}\|_2^{2+2\epsilon}]^{1/(2+2\epsilon)}\mathbb{E}[\|\bm{w}_{jt}\|_2^{2+2\epsilon}]^{1/(2+2\epsilon)}\left[\left(\frac{\mathbb{E}\|\bm{w}_{it}\|_2^{2+2\epsilon}}{\tau_1^{2+2\epsilon}}\right)^{\frac{\epsilon}{1+\epsilon}}+\left(\frac{\mathbb{E}\|\bm{w}_{jt}\|_2^{2+2\epsilon}}{\tau_1^{2+2\epsilon}}\right)^{\frac{\epsilon}{1+\epsilon}}\right]\\
    \leq & \frac{2M_{1,2+2\epsilon}}{\tau_1^{2+2\epsilon}} \lesssim \left[\frac{M_{1,2+2\epsilon}^{1/\epsilon}\log(r)}{n_\text{eff}}\right]^{\frac{\epsilon}{1+\epsilon}},
    \end{split}
  \end{equation}
  where $\tau_1\asymp[M_{1,2+2\epsilon}n_\text{eff}/\log(r)]^{1/(2+2\epsilon)}$.

  For the truncated $\bm{w}_{it}(\tau_1)$, based on the truncation level $\tau_1$ on $\|\bm{w}_{it}(\tau_1)\|_2$, it can be checked that, for any $k=3,4,\dots$
  \begin{equation}
    \begin{split}
      & \mathbb{E}[|\bm{w}_{it}(\tau_1)^\top\bm{w}_{jt}(\tau_1)|^k]\leq  \mathbb{E}[\|\bm{w}_{it}(\tau_1)\|_2^k\|\bm{w}_{jt}(\tau_1)\|_2^k]\\
      \leq & \tau_1^{2(k-1-\epsilon)}\mathbb{E}[\|\bm{w}_{it}(\tau_1)\|_2^{1+\epsilon}\|\bm{w}_{jt}(\tau_1)\|_2^{1+\epsilon}]\\
      \leq & \tau_1^{2(k-1-\epsilon)}\sqrt{\mathbb{E}[\|\bm{w}_{it}(\tau_1)\|_2^{2+2\epsilon}]\cdot\mathbb{E}[\|\bm{w}_{jt}(\tau_1)\|_2^{2+2\epsilon}]}\\
      \leq & \tau_1^{2(k-2)}(\tau_1^{2-2\epsilon}\cdot M_{1,2+2\epsilon}).
    \end{split}
  \end{equation}

  Similarly to the proof of Proposition \ref{prop:element}, by the property of mixing sequences in Lemma \ref{lemma:mixing} and the Bernstein-type inequality for the $\alpha$-mixing sequence in Lemma \ref{lemma:alpha-mixing},
  \begin{equation}
    \begin{split}
      & \mathbb{P}\left(\left|\frac{1}{T}\sum_{t=1}^T\bm{w}_{it}(\tau_1)^\top\bm{w}_{jt}(\tau_1)-\mathbb{E}[\bm{w}_{it}(\tau_1)^\top\bm{w}_{jt}(\tau_1)]\right| \geq \varepsilon \right)\\
      \leq & C\log(T)\exp\left[-C\frac{T}{\log(T)}\cdot\frac{\varepsilon^2}{\varepsilon\tau_1^2 + M_{1,2+2\epsilon}\cdot\tau_1^{2-2\epsilon}}\right]
    \end{split}
  \end{equation}
  Letting $\varepsilon=CM_{1,2+2\epsilon}^{1/(1+\epsilon)}\log(r)^{\epsilon/(1+\epsilon)}n_\text{eff}^{-\epsilon/(1+\epsilon)}$, we have
  \begin{equation}
    \begin{split}
      & \mathbb{P}\left(\left|\frac{1}{T}\sum_{t=1}^T\bm{w}_{it}(\tau_1)^\top\bm{w}_{jt}(\tau_1)-\mathbb{E}[\bm{w}_{it}(\tau_1)^\top\bm{w}_{jt}(\tau_1)]\right| \geq C\left[\frac{M_{1,2+2\epsilon}^{1/\epsilon}\log(r)}{n_\text{eff}}\right]^{\epsilon/(1+\epsilon)} \right)\\
      & ~~~~~\leq C\log(T)\exp\left[-C\frac{T}{\log(T)}\cdot\frac{\log(r)}{n_\text{eff}}\right]\\
      & ~~~~~= C\exp[\log\log(T)-C\log(r)\log(T)]\leq C\exp[-C\log(r)\log(T)].
    \end{split}
  \end{equation}
  For all $1\leq i,j\leq r$, we have
  \begin{equation}
    \begin{split}
      \mathbb{P}&\left(\max_{1\leq i,j\leq r} \left|\frac{1}{T}\sum_{t=1}^T\bm{w}_{it}(\tau_1)^\top\bm{w}_{jt}(\tau_1)-\mathbb{E}[\bm{w}_{it}(\tau_1)^\top\bm{w}_{jt}(\tau_1)]\right| \geq C\left[\frac{M_{1,2+2\epsilon}^{1/\epsilon}\log(r)}{n_\text{eff}}\right]^{\epsilon/(1+\epsilon)} \right)\\
      & \leq C\exp[\log(r)-C\log(r)\log(T)]\leq C\exp[-C\log(r)\log(T)].
    \end{split}
  \end{equation}
  Combining the above deviation bound and the bound of bias, we can obtain the convergence rate of $\|\widetilde{\bm{\Omega}}(\tau_1)-\bm{\Omega}\|_{1,\infty}$.

  Next, we prove the upper bound of $\|\widetilde{\bbm{\omega}}(\tau_1,\tau_2)-\bbm{\omega}\|_\infty$. Note that $\textup{vec}(\bm{y}_t\bm{x}_t^\top)^\top\bm{c}_i=\bm{y}_t^\top(\bm{I}_p\otimes\bm{x}_t^\top)\bm{c}_i=\bm{z}_{it}^\top\bm{w}_{it}$, for all $1\leq i\leq r$. Then, for $1\leq i\leq r$, the bias of data truncation can be bounded by
  \begin{equation}
    \begin{split}
      &|\mathbb{E}[\bm{z}_{it}^\top\bm{w}_{it}]-\mathbb{E}[\bm{z}_{it}(\tau_2)^\top\bm{w}_{it}(\tau_1)]|\\
      \leq & \mathbb{E}\left[|\bm{z}_{it}^\top\bm{w}_{it}|\cdot(1\{\|\bm{z}_{it}\|_2\geq \tau_2\}+1\{\|\bm{w}_{it}\|_2\geq \tau_1\})\right]\\
      \leq & \mathbb{E}[|\bm{z}_{it}^\top\bm{w}_{it}|^{1+\epsilon}]^{1/(1+\epsilon)}\cdot[\mathbb{P}(\|\bm{z}_{it}\|_2\geq\tau_2)^{\epsilon/(1+\epsilon)}+\mathbb{P}(\|\bm{w}_{it}\|_2\geq\tau_1)^{\epsilon/(1+\epsilon)}]\\
      \leq & \mathbb{E}[\|\bm{z}_{it}\|_2^{2+2\epsilon}]^{1/(2+2\epsilon)}\mathbb{E}[\|\bm{w}_{it}\|_2^{2+2\epsilon}]^{1/(2+2\epsilon)}\left[\left(\frac{\mathbb{E}\|\bm{y}_{t}\|_2^{2+2\epsilon}}{\tau_2^{2+2\epsilon}}\right)^{\frac{\epsilon}{1+\epsilon}}+\left(\frac{\mathbb{E}\|\bm{w}_{it}\|_2^{2+2\epsilon}}{\tau_1^{2+2\epsilon}}\right)^{\frac{\epsilon}{1+\epsilon}}\right]\\
      \lesssim & M_{1,2+2\epsilon}^{1/(2+2\epsilon)}M_{2,2+2\epsilon}^{1/(2+2\epsilon)}\left(M_{1,2+2\epsilon}^{\epsilon/(1+\epsilon)}\tau_1^{-2\epsilon}+M_{2,2+2\epsilon}^{\epsilon/(1+\epsilon)}\tau_2^{-2\epsilon}\right)\\
      \asymp & M_{1,2+2\epsilon}^{1/(2+2\epsilon)}M_{2,2+2\epsilon}^{1/(2+2\epsilon)}(\log(r)/n_\text{eff})^{\epsilon/(1+\epsilon)},
    \end{split}
  \end{equation}
  where
  \begin{equation}
    \tau_1\asymp\left[\frac{M_{1,2+2\epsilon}n_\text{eff}}{\log(r)}\right]^{1/(2+2\epsilon)}~~\text{and}~~\tau_2\asymp\left[\frac{M_{2,2+2\epsilon}n_\text{eff}}{\log(r)}\right]^{1/(2+2\epsilon)}.
  \end{equation}

  For the truncated $\bm{w}_{it}(\tau_1)$ and $\bm{z}_{it}(\tau_2)$, for any $k=3,4,\dots$
  \begin{equation}
    \begin{split}
      & \mathbb{E}[|\bm{w}_{it}(\tau_1)^\top\bm{z}_{it}(\tau_2)|^k] \leq \mathbb{E}[\|\bm{w}_{it}(\tau_1)\|_2^k\cdot\|\bm{z}_{it}(\tau_2)\|_2^k]\\
      \leq & (\tau_1\tau_2)^{k-1-\epsilon}\mathbb{E}[\|\bm{w}_{it}(\tau_1)\|_2^{1+\epsilon}\cdot\|\bm{y}_t(\tau_2)\|_2^{1+\epsilon}]\\
      \leq & (\tau_1\tau_2)^{k-1-\epsilon}\sqrt{\mathbb{E}[\|\bm{w}_{it}(\tau_1)\|_2^{2+2\epsilon}]\cdot\mathbb{E}[\|\bm{z}_{it}(\tau_2)\|_2^{2+2\epsilon}]}\\
      \leq & (\tau_1\tau_2)^{k-2}\left[(\tau_1\tau_2)^{1-\epsilon}\cdot M_{1,2+2\epsilon}^{1/2}\cdot M_{2,2+2\epsilon}^{1/2}\right].
    \end{split}
  \end{equation}

  By Lemma \ref{lemma:alpha-mixing},
  \begin{equation}
    \begin{split}
      &\mathbb{P}\left(\left|\frac{1}{T}\sum_{t=1}^T\bm{w}_{it}(\tau_1)^\top\bm{z}_{it}(\tau_2)-\mathbb{E}\left[\bm{w}_{it}(\tau_1)^\top\bm{z}_{it}(\tau_2)\right]\right|\geq \varepsilon\right)\\
      \leq & C\log(T)\exp\left[-C\frac{T}{\log(T)}\cdot\frac{\varepsilon^2}{\varepsilon\tau_1\tau_2+(\tau_1\tau_2)^{1-\epsilon}\cdot M_{1,2+2\epsilon}^{1/2}\cdot M_{2,2+2\epsilon}^{1/2}}\right].
    \end{split}
  \end{equation}
  Letting $\varepsilon=CM_{1,2+2\epsilon}^{1/(2+2\epsilon)}M_{2,2+2\epsilon}^{1/(2+2\epsilon)}[\log(r)/n_\text{eff}]^{\epsilon/(1+\epsilon)}$, we have
  \begin{equation}
    \begin{split}
      \mathbb{P}&\left(\left|\frac{1}{T}\sum_{t=1}^T\bm{w}_{it}(\tau_1)^\top\bm{z}_{it}(\tau_2)-\mathbb{E}[\bm{w}_{it}(\tau_1)^\top\bm{z}_{it}(\tau_2)]\right|\geq C\left[\frac{M_{1,2+2\epsilon}^{1/(2\epsilon)}M_{2,2+2\epsilon}^{1/(2\epsilon)}\log(r)}{n_\text{eff}}\right]^{\frac{\epsilon}{1+\epsilon}}\right)\\
      \leq &  C\log(T)\exp\left[-C\frac{T}{\log(T)}\cdot\frac{\log(r)\log(T)^2}{T}\right]\\
      \leq & C\exp[-C\log(r)\log(T)].
    \end{split}
  \end{equation}
  For all $1\leq i\leq r$,
  \begin{equation}
    \begin{split}
      \mathbb{P}&\left(\max_{1\leq i\leq r} \left|\frac{1}{T}\sum_{t=1}^T\bm{w}_{it}(\tau_1)^\top\bm{z}_{it}(\tau_2)-\mathbb{E}[\bm{w}_{it}(\tau_1)^\top\bm{z}_{it}(\tau_2)]\right| \geq C\left[\frac{M_{1,2+2\epsilon}^{1/(2\epsilon)}M_{2,2+2\epsilon}^{1/(2\epsilon)}\log(r)}{n_\text{eff}}\right]^{\frac{\epsilon}{1+\epsilon}} \right)\\
      & \leq Cr\exp[-C\log(r)\log(T)]= C\exp[\log(r)-C\log(r)\log(T)]\\
      & \leq C\exp[-C\log(r)\log(T)].
    \end{split}
  \end{equation}

  Finally, the convergence rate of $\|\widetilde{\bbm{\omega}}(\tau_1,\tau_2)-\bbm{\omega}\|_\infty$ is established by combining the deviation bound and the bound of the bias. 

\end{proof}

\subsection{Proofs of Theorems \ref{thm:sparseAR}--\ref{thm:network}}\label{sec:A.3}

\begin{proof}[\textbf{Proof of Theorem \ref{thm:sparseAR}}]
Denote the entries in $\bm{a}_i^*$ as $(a_{i,1}^*,\dots,a_{i,pd}^*)^\top$. First, the maximum norm bound directly follows from Propositions \ref{prop:2} and \ref{prop:element}. To derive the $\ell_1$ norm bound, for each row $\bm{a}_i^*$ in $\bm{A}^*$, define $S_{\kappa,i}=\{1\leq j\leq pd:|a_{ij}^*|\geq \kappa\}$, for $1\leq i\leq p$, with a threshold parameter $\kappa>0$. By the definition of $\mathbb{B}_q(s_q)$, we have $s_q\geq|S_{\kappa,i}|\cdot\kappa^q$, and thus $|S_{\kappa,i}|\leq s_q\cdot\kappa^{-q}$.

The approximation error on $S_{\kappa,i}^\complement=\{1,2,\dots,pd\}/S_{\kappa,i}$ can be bounded by
\begin{equation}
  \|(\bm{a}^*_i)_{S^\complement_{\kappa,i}}\|_1=\sum_{j\in S_{\kappa,i}^\complement}|a^*_{ij}|=\sum_{j\in S_{\kappa,i}^\complement}|a^*_{ij}|^q|a^*_{ij}|^{1-q}\leq s_q\cdot\kappa^{1-q}.
\end{equation}
Then, by Proposition \ref{prop:2}, the $\ell_1$ norm of the estimation error can be bounded by
\begin{equation}
  \begin{split}
    &\|\bm{\widehat{a}}_i(\lambda,\tau)-\bm{a}_i^*\|_{1}\\
    \lesssim & \|\bm{\Sigma}_0^{-1}\|_{1,\infty}\cdot|S_{\kappa,i}|\cdot\left[\frac{M_{2+2\epsilon}^{1/\epsilon}\log(p^2d)}{n_\textup{eff}}\right]^{\frac{\epsilon}{1+\epsilon}}+\|(\bm{a}_i^*)_{S_{\kappa,i}^\complement}\|_1\\
    \lesssim & \|\bm{\Sigma}_0^{-1}\|_{1,\infty}\cdot s_q\cdot\kappa^{-q}\cdot\left[\frac{M_{2+2\epsilon}^{1/\epsilon}\log(p^2d)}{n_\textup{eff}}\right]^{\frac{\epsilon}{1+\epsilon}} +  s_q\cdot\kappa^{1-q},
  \end{split}
\end{equation}
for all $1\leq i\leq p$. Setting 
\begin{equation}
  \kappa\asymp\|\bm{\Sigma}_0^{-1}\|_{1,\infty}\left[\frac{M_{2+2\epsilon}^{1/\epsilon}\log(p^2d)}{n_\textup{eff}}\right]^{\frac{\epsilon}{1+\epsilon}}
\end{equation}
we have that, with probability at least $1-C\exp[-C\log(T)\log(p^2d)]$,
\begin{equation}
  \|\bm{\widehat{a}}_i(\lambda,\tau)-\bm{a}_i^*\|_1\lesssim s_q\|\bm{\Sigma}_0^{-1}\|_{1,\infty}^{1-q}\left[\frac{M_{2+2\epsilon}^{1/\epsilon}\log(p^2d)}{n_\textup{eff}}\right]^{\frac{(1-q)\epsilon}{1+\epsilon}}.
\end{equation}

Finally, by the duality of the $\ell_1$ norm and $\ell_\infty$ norm, it can be shown that
\begin{equation}
  \begin{split}
    \|\bm{\widehat{a}}_i(\lambda,\tau)-\bm{a}_i^*\|_{2}^2 & \leq \|\bm{\widehat{a}}_i(\lambda,\tau)-\bm{a}_i^*\|_{1}
    \cdot\|\bm{\widehat{a}}_i(\lambda,\tau)-\bm{a}_i^*\|_{\infty}\\
    & \lesssim s_q\|\bm{\Sigma}_0^{-1}\|_{1,\infty}^{2-q}\left[\frac{M_{2+2\epsilon}^{1/\epsilon}\log(p^2d)}{n_\textup{eff}}\right]^{\frac{(2-q)\epsilon}{1+\epsilon}},
  \end{split}
\end{equation}
for all $1\leq i\leq p$.
\end{proof}

\begin{proof}[\textbf{Proof of Theorem \ref{thm:lowrankAR}}]

The operator norm directly follows from Propositions \ref{prop:1} and \ref{prop:vector}. Similarly to the proof of Theorem \ref{thm:sparseAR}, define the thresholded subspace $\mathcal{M}_\kappa$ corresponding to the column and row spaces spanned by the first $r_\kappa$ singular vectors where $\sigma_1(\bm{A}^*)\geq\cdots\geq\sigma_{r_\kappa}(\bm{A}^*)\geq\kappa>\sigma_{r_\kappa+1}(\bm{A}^*)$. By the definition of $\widetilde{\mathbb{B}}_q(r_q)$, we have $r_q\geq r_\kappa\cdot\kappa^q$ and thus $r_\kappa\leq r_q\cdot\kappa^{-q}$.

The approximation error in the nuclear norm can be bounded by
\begin{equation}
  \|\bm{A}^*_{\overline{\mathcal{M}}_\kappa^\perp}\|_\textup{nuc}=\sum_{r=r_\kappa+1}^{p}\sigma_r(\bm{A}^*)=\sum_{r=r_\kappa+1}^p\sigma_r^q(\bm{A}^*)\cdot\sigma_r^{1-q}(\bm{A}^*)\leq r_q\cdot\kappa^{1-q}.
\end{equation}
Then, by Proposition \ref{prop:2}, the nuclear norm estimation error can be bounded by
\begin{equation}
  \begin{split}
    &\|\bm{\widehat{A}}(\lambda,\tau_1,\tau_2)-\bm{A}^*\|_\textup{nuc}\\
    \lesssim & r_{\kappa}\|\bm{\Sigma}_0^{-1}\|_\textup{op}\left[\frac{pdM_{2+2\epsilon}^{1/\epsilon}}{n_\text{eff}}\right]^{\frac{\epsilon}{1+\epsilon}}+r_q\kappa^{1-q}\\
    \leq &  r_q\kappa^{-q}\|\bm{\Sigma}_0^{-1}\|_\textup{op}\left[\frac{pdM_{2+2\epsilon}^{1/\epsilon}}{n_\text{eff}}\right]^{\frac{\epsilon}{1+\epsilon}}+r_q\kappa^{1-q}.
  \end{split}
\end{equation}
Setting  
\begin{equation}
  \kappa\asymp\|\bm{\Sigma}_0^{-1}\|_\textup{op}\left[\frac{pdM_{2+2\epsilon}^{1/\epsilon}}{n_\text{eff}}\right]^{\frac{\epsilon}{1+\epsilon}},
\end{equation}
we have that, with probability at least $1-C\exp[-C\log(T)]$,
\begin{equation}
  \|\bm{\widehat{A}}(\lambda,\tau)-\bm{A}^*\|_\textup{nuc}\lesssim r_q\|\bm{\Sigma}_0^{-1}\|_\textup{op}^{1-q}\left[\frac{pdM_{2+2\epsilon}^{1/\epsilon}}{n_\text{eff}}\right]^{\frac{(1-q)\epsilon}{1+\epsilon}}.
\end{equation}

Finally, by the duality of the operator norm and nuclear norm, we have
\begin{equation}
  \|\bm{\widehat{A}}(\lambda,\tau)-\bm{A}^*\|_\textup{F}^2\leq \|\bm{\widehat{A}}(\lambda,\tau)-\bm{A}^*\|_\textup{nuc}\|\bm{\widehat{A}}(\lambda,\tau)-\bm{A}^*\|_\textup{op}\lesssim r_q\|\bm{\Sigma}_0^{-1}\|_\textup{op}^{2-q}\left[\frac{pdM_{2+2\epsilon}^{1/\epsilon}}{n_\text{eff}}\right]^{\frac{(2-q)\epsilon}{1+\epsilon}}.
\end{equation}
\end{proof}

\begin{proof}[\textbf{Proof of Theorem \ref{thm:banded}}]

Firstly, following Propositions \ref{prop:3}, \ref{prop:element} and \ref{prop:linear}, we can directly obtain the $\ell_\infty$ norm estimation bound of $\bbm{\widehat{\theta}}_\text{B}(\lambda,\tau)$,
\begin{equation}
  \|\bbm{\widehat{\theta}}_\text{B}(\lambda,\tau)-\bbm{\theta}^*\|_\infty \lesssim \|\bm{\Omega}^{-1}\|_{1,\infty} \left[\frac{M_{2+2\epsilon}^{1/\epsilon}\log(p)}{n_\text{eff}}\right]^{\frac{\epsilon}{1+\epsilon}}.
\end{equation}

Secondly, we derive the upper bound of $\bm{\widehat{A}}_\text{B}(\lambda,\tau)-\bm{A}^*$ in terms of operator norm. By the banded structure of $\bm{A}$ with bandwidth $k_0$ (see also the discussions in \citet{guo2016high}), it can be shown that
\begin{equation}
  \|\bm{\widehat{A}}_\text{B}(\lambda,\tau)-\bm{A}^*\|_\textup{op}\leq (2k_0+1)\|\bbm{\widehat{\theta}}_\text{B}(\lambda,\tau)-\bbm{\theta}^*\|_\infty \lesssim \|\bm{\Omega}^{-1}\|_{1,\infty} \left[\frac{M_{2+2\epsilon}^{1/\epsilon}\log(p)}{n_\text{eff}}\right]^{\frac{\epsilon}{1+\epsilon}}.
\end{equation}

Finally, as $\bbm{\theta}$ is a vector of dimension $2pk_0+p-k_0^2-k_0\asymp p$, we can directly obtain that
\begin{equation}
  \|\bbm{\widehat{\theta}}_\text{B}(\lambda,\tau)-\bbm{\theta}^*\|_2=\|\bm{\widehat{A}}_\text{B}(\lambda,\tau)-\bm{A}^*\|_\textup{F}\lesssim\sqrt{p}\|\bm{\Omega}^{-1}\|_{1,\infty}\left[\frac{M_{2+2\epsilon}^{1/\epsilon}\log(p)}{n_\text{eff}}\right]^{\frac{\epsilon}{1+\epsilon}}.
\end{equation}
\end{proof}

\begin{proof}[\textbf{Proof of Theorem \ref{thm:network}}]

Following Propositions \ref{prop:3} and \ref{prop:linear}, as $r$ is a fixed constant, we can obtain the $\ell_\infty$ norm estimation bound of $\bbm{\widehat{\theta}}_\text{N}(\lambda,\tau_1,\tau_2)$,
\begin{equation}
  \|\bbm{\widehat{\theta}}_\text{N}(\lambda,\tau_1,\tau_2)-\bbm{\theta}^*\|_\infty \lesssim \sqrt{p}\cdot\left[\frac{M_{2+2\epsilon}^{1/\epsilon}\log(p)}{n_\text{eff}}\right]^{\frac{\epsilon}{1+\epsilon}}.
\end{equation}
Then, for a fixed $r$, the $\ell_2$ norm bound of $\bbm{\widehat{\theta}}_\text{N}(\lambda,\tau_1,\tau_2)$ and the Frobenius norm bound of $\bm{\widehat{A}}_\text{N}(\lambda,\tau_1,\tau_2)$ directly follow.
\end{proof}

\subsection{Auxiliary Lemmas}\label{sec:A.4}

This subsection gives some auxiliary lemmas. We begin with a useful property of the transformation of $\alpha$- and $\beta$-mixing processes.

\begin{lemma}
  \label{lemma:mixing}
  For any $\alpha$- or $\beta$-mixing process $\{x_t\}_{t=1}^T$ and measurable function $f(\cdot)$, the sequence of the transformed observations $\{f(x_t)\}_{t=1}^T$ is also $\alpha$- or $\beta$-mixing in the same sense with its mixing coefficients bounded by those of the original sequence.
\end{lemma}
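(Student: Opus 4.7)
The plan is to exploit the measurability of $f$ to show that the $\sigma$-algebras generated by the transformed process are sub-$\sigma$-algebras of those generated by the original process, and then observe that the mixing coefficients are defined as suprema over events in these $\sigma$-algebras, so restricting to a smaller $\sigma$-algebra can only decrease the supremum.

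First I would fix any integer $s$ and lag $\ell \geq 1$ and compare the ``past'' and ``future'' $\sigma$-algebras of the two processes. Since $f$ is measurable, each $f(x_t)$ is $\sigma(x_t)$-measurable, hence
\begin{equation}
\sigma(\{f(x_t)\}_{t=-\infty}^{s}) \subseteq \sigma(\{x_t\}_{t=-\infty}^{s}) \quad \text{and} \quad \sigma(\{f(x_t)\}_{t=s+\ell}^{\infty}) \subseteq \sigma(\{x_t\}_{t=s+\ell}^{\infty}).
\end{equation}
For the $\alpha$-mixing case, any event $A \in \sigma(\{f(x_t)\}_{t=-\infty}^{s})$ and any $B \in \sigma(\{f(x_t)\}_{t=s+\ell}^{\infty})$ thus also lie in the corresponding $\sigma$-algebras generated by $\{x_t\}$; taking suprema over the smaller collections yields $\alpha^{f}(\ell) \leq \alpha(\ell)$, where $\alpha^{f}$ denotes the $\alpha$-mixing coefficient of $\{f(x_t)\}$.

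Next I would handle the $\beta$-mixing case identically. Any finite measurable partition $\{A_1,\dots,A_I\}$ of the probability space with $A_i \in \sigma(\{f(x_t)\}_{t=-\infty}^{s})$ is also a finite measurable partition with $A_i \in \sigma(\{x_t\}_{t=-\infty}^{s})$, and likewise for $\{B_1,\dots,B_J\}$ on the future side. Since the $\beta$-coefficient is a supremum of the sum $\tfrac{1}{2}\sum_{i,j} |\mathbb{P}(A_i \cap B_j) - \mathbb{P}(A_i)\mathbb{P}(B_j)|$ over all such partition pairs, passing to a smaller class of admissible partitions can only decrease the supremum, giving $\beta^{f}(\ell) \leq \beta(\ell)$.

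Since there is no substantive obstacle here beyond invoking the definitions, the only care needed is to state the subset inclusion of $\sigma$-algebras cleanly and to note that the supremum is monotone under restriction of the index set. Taking the supremum over $s \in \mathbb{Z}$ on both sides then yields the claimed inequalities for the mixing coefficients of the transformed process, completing the proof.
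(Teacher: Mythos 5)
Your proposal is correct and follows essentially the same route as the paper: measurability of $f$ gives the inclusion $\sigma(\{f(x_t)\})\subseteq\sigma(\{x_t\})$ for the past and future blocks, and the mixing coefficients, being suprema over events (or partitions) in these $\sigma$-algebras, can only decrease when the $\sigma$-algebras shrink. You simply spell out the monotonicity-of-supremum step that the paper leaves implicit.
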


\begin{proof}[\textbf{Proof of Lemma \ref{lemma:mixing}}]
  For any measurable function $f(\cdot)$, it is clear  $\sigma(\{f(x_t)\}_{t=t_1}^{t_2})\subseteq\sigma(\{x_t\}_{t=t_1}^{t_2})$. Then, the statement can easily be verified by the definitions of the $\alpha$- and $\beta$-mixing conditions.
\end{proof}

Next, we show that if $\bm{y}_t$ satisfies some strong mixing conditions, the lagged observation $\bm{x}_t=(\bm{y}_{t-1},\dots,\bm{y}_{t-d})^\top$ is also strong mixing.

\begin{lemma}
  \label{lemma:lag_mixing}
  Suppose that $\bm{y}_t$ satisfies $\alpha$- or $\beta$-mixing condition with the mixing coefficients decayed geometrically. For any fixed positive integer $d$, the lagged values $\bm{x}_t=(\bm{y}_{t-1}^\top,\dots,\bm{y}_{t-d}^\top)^\top$ is also $\alpha$- or $\beta$-mixing with the mixing coefficients decayed geoemtrically.
\end{lemma}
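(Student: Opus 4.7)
The plan is to exploit the fact that each $\bm{x}_t$ is a measurable function of a finite window of the original process $\{\bm{y}_t\}$, so sigma-fields generated by $\{\bm{x}_t\}$ over disjoint time intervals can be embedded into sigma-fields generated by $\{\bm{y}_t\}$ over slightly enlarged intervals, and the mixing-coefficient inequalities then transfer directly. This reduces the problem to a simple accounting of how much the time windows get enlarged.

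First, I would note the key inclusions. Since $\bm{x}_t = (\bm{y}_{t-1}^\top, \dots, \bm{y}_{t-d}^\top)^\top$, for any integers $s$ and $r$ with $s < r$,
\begin{equation}
\sigma(\{\bm{x}_t\}_{t=-\infty}^{s}) \subseteq \sigma(\{\bm{y}_t\}_{t=-\infty}^{s-1}), \qquad \sigma(\{\bm{x}_t\}_{t=r}^{\infty}) \subseteq \sigma(\{\bm{y}_t\}_{t=r-d}^{\infty}).
\end{equation}
Therefore, using the definition of the $\alpha$-mixing coefficient (taking supremum over events $A, B$ in smaller sigma-fields yields a smaller supremum), for any lag $\ell > d$,
\begin{equation}
\alpha_{\bm{x}}(\ell) \le \alpha_{\bm{y}}\bigl((s+\ell-d) - (s-1)\bigr) = \alpha_{\bm{y}}(\ell - d + 1).
\end{equation}
The same containment and the same partition-based argument applied to the definition of $\beta$-mixing gives $\beta_{\bm{x}}(\ell) \le \beta_{\bm{y}}(\ell - d + 1)$ for $\ell > d$.

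Next, I would combine this with the geometric decay assumption. If $\alpha_{\bm{y}}(\ell) \le C r^{\ell}$ (respectively $\beta_{\bm{y}}(\ell) \le C r^{\ell}$) for some $r < 1$, then for $\ell > d$,
\begin{equation}
\alpha_{\bm{x}}(\ell) \le C r^{\ell - d + 1} = (C r^{1-d}) \cdot r^{\ell} = O(r^{\ell}),
\end{equation}
with the analogous bound for $\beta_{\bm{x}}(\ell)$. For the finitely many residual lags $1 \le \ell \le d$, both $\alpha_{\bm{x}}(\ell)$ and $\beta_{\bm{x}}(\ell)$ are bounded by universal constants (at most $1/4$ and $1$, respectively), and since $d$ is fixed these finitely many terms can be absorbed by choosing a possibly larger prefactor in the $O(r^{\ell})$ bound. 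This completes the argument.

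There is essentially no hard step here; the only thing to be careful about is the bookkeeping of the time indices (the $-d+1$ shift) and treating the small-$\ell$ case separately so the $O(r^{\ell})$ statement holds uniformly in $\ell \ge 1$ rather than only for $\ell > d$.
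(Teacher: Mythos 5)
Your proposal is correct and follows essentially the same route as the paper: embed the sigma-fields generated by $\{\bm{x}_t\}$ over past/future windows into sigma-fields generated by $\{\bm{y}_t\}$ over slightly shifted windows, transfer the $\alpha$- (or $\beta$-) mixing bound with a lag shift of order $d$, absorb the shift into the constant, and handle the finitely many small lags separately. Your index bookkeeping ($\ell-d+1$ shift) is if anything a bit more careful than the paper's, which works with $\ell\geq 2d$ and the same absorb-into-the-constant step.
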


\begin{proof}[\textbf{Proof of Lemma \ref{lemma:lag_mixing}}]
  In this proof, we focus on the $\alpha$-mixing condition, because the proof can also be applied to the $\beta$-mixing condition. For any $\ell\geq2d$, by the definition of  $\alpha$-mixing condition,
  \begin{equation}
    \alpha(\{\bm{x}_t\}_{t=-\infty}^s,\{\bm{x}_t\}_{t=s+2d}^\infty)=\alpha(\{\bm{y}_t\}_{t=-\infty}^{s-1},\{\bm{y}_t\}_{t=s+d-1}^\infty)\leq Cr^{-d}=(Cr^d)r^{-2d}:=C'r^{-2d},
  \end{equation}
  where $C'$ is another constant independent of the dimension $p$ and sample size $T$. For other lags smaller than $2d$, as there are only a fixed number of cases, the argument can easily be extended to all $\ell\geq1$.
\end{proof}

Next, we state a Bernstein-type concentration inequality for $\alpha$-mixing processes.

\begin{lemma}
  \label{lemma:alpha-mixing}
  Let $\{x_t\}_{t=1}^T$ be a strictly stationary $\alpha$-mixing process with mean zero and mixing coefficient $\alpha(l)\leq Cr^l$ for some $C>0$ and $r=r(p)<\bar{r}$, where $\bar{r}$ is a constant smaller than 1. Suppose that $\mathbb{E}|x_t|^k\leq Ck!A^{k-2}D^2$, where $D^2=\mathbb{E}(x_t^2)$, for $k=3,4,\dots$, then for any $\varepsilon>0$,
  \begin{equation}
    \mathbb{P}\left(\left|\sum_{t=1}^Tx_t\right|>n\varepsilon\right)\leq C[\log(T)+\mu(\varepsilon)]\exp\left[-C\frac{T\mu(\varepsilon)}{\log(T)}\right],
  \end{equation}
  where $\mu(\varepsilon)=\varepsilon^2/(5A\varepsilon+25D^2)$.
\end{lemma}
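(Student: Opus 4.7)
The plan is to follow the classical Bernstein blocking technique for $\alpha$-mixing sequences (in the spirit of Theorem 1.3 of Bosq, 1998), in which the sum is split into alternating long and short blocks, the long blocks are coupled to an independent sequence at a cost controlled by the mixing coefficient, and the resulting independent sum is handled by a Cram\'{e}r--Chernoff argument under the Bernstein moment condition.

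First, I would fix a block length $b \asymp \log(T)/|\log r|$ (so $b \asymp \log(T)$ since $r \leq \bar r < 1$) and set $q = \lfloor T/(2b) \rfloor$. Partitioning $\{1,\dots,T\}$ into $q$ odd blocks and $q$ even blocks of length $b$ (plus a negligible remainder), I write $\sum_{t=1}^T x_t = U + V + R$ where $U = \sum_{j=1}^q U_j$ and $V = \sum_{j=1}^q V_j$ are the sums over odd and even blocks. A union bound reduces the tail of $\sum x_t$ to tail bounds for $U$, $V$, and $R$, all handled symmetrically.

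Next, I would use a Bradley-type coupling for strong mixing sequences (or equivalently a characteristic function argument \`{a} la Rio) to construct an independent sequence $\{U_j^*\}_{j=1}^q$ such that each $U_j^*$ has the same marginal law as $U_j$ and $\mathbb{P}(U_j \neq U_j^*)$ is controlled by $\alpha(b)$. Summing the coupling errors gives $\mathbb{P}(U \neq U^*) \lesssim q\,\alpha(b) \lesssim T\,r^b/b$, which is of smaller order than the target probability once $b$ is taken large enough (a constant multiple of $\log T$ suffices since $r^b$ then decays polynomially in $T$).

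The core estimate is then a Bernstein bound for the i.i.d.\ sum $\sum_j U_j^*$. Using the moment hypothesis $\mathbb{E}|x_t|^k \leq C k! A^{k-2} D^2$ together with Minkowski-type bounds on the centered block sum, I would verify a block-level Bernstein condition of the form $\mathbb{E}|U_j|^k \leq C\,k!\,(bA)^{k-2}(bD^2)$, yielding the moment-generating-function bound $\mathbb{E} \exp(\lambda U_j) \leq \exp\{\lambda^2 b D^2/(2(1-\lambda b A))\}$ for $|\lambda| < 1/(bA)$. Independence of the $U_j^*$ multiplies these MGFs, and the standard Chernoff optimization in $\lambda$ gives
\begin{equation}
\mathbb{P}(|U^*| > T\varepsilon/3) \;\lesssim\; \exp\!\left(-\frac{C\,T\,\varepsilon^2}{b(5A\varepsilon + 25 D^2)}\right) \;=\; \exp\!\left(-\frac{C\,T\,\mu(\varepsilon)}{\log T}\right),
\end{equation}
after substituting $b \asymp \log T$. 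The prefactor $[\log(T) + \mu(\varepsilon)]$ absorbs the coupling error and the handling of the remainder $R$.

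The main obstacle, and where I would spend the most care, is the block moment bound: translating the per-summand Bernstein condition into a Bernstein condition for $U_j$ requires using $\alpha$-mixing to decorrelate cross-moments $\mathbb{E}[x_{t_1}\cdots x_{t_k}]$, which is the step where the $b$ factors in $bD^2$ and $bA$ must be tracked sharply to recover the $1/\log T$ in the exponent (rather than a worse $1/\log^2 T$). Choosing $b$ to balance the coupling error $q\,r^b$ against the deterioration of the Bernstein constant is what fixes $b \asymp \log T$ and yields the stated rate.
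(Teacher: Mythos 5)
Your plan takes a genuinely different route from the paper: the paper does not re-derive a blocking inequality at all, but simply invokes an off-the-shelf Bernstein-type bound for $\alpha$-mixing sequences (Theorem 2.19 of \citet{fan2008nonlinear}), which for an arbitrary block parameter $q$ gives a two-term bound, and then chooses $q\asymp T/\log(T)$; under the geometric decay $\alpha(\ell)\leq Cr^{\ell}$ with $r\leq\bar{r}<1$, the mixing remainder term is $O(T^{-c})$ with $c$ arbitrarily large, leaving exactly the stated $C[\log(T)+\mu(\varepsilon)]\exp[-CT\mu(\varepsilon)/\log(T)]$. What you propose is, in effect, to reprove the cited theorem by the classical blocking--coupling--Chernoff argument. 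That is a legitimate outline, but it is much more work than the paper needs, and as a self-contained proof your sketch currently has two genuine gaps.

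First, the coupling step is misstated: an exact coupling with $\mathbb{P}(U_j\neq U_j^*)$ bounded by the mixing coefficient is Berbee's lemma and requires $\beta$-mixing; under the $\alpha$-mixing assumed here, Bradley's lemma gives only an approximate coupling of the form $\mathbb{P}(|U_j-U_j^*|>\xi)\lesssim(\|U_j\|_{\gamma}/\xi)^{\gamma/(2\gamma+1)}\,\alpha(b)^{2\gamma/(2\gamma+1)}$, whose error depends on moments of the block sums and on the deviation level --- this is precisely why the bound in \citet{fan2008nonlinear} carries the residual term $CT[1+5\varepsilon^{-1}(\mathbb{E}|x_t|^k)^{1/(2k+1)}]\alpha(\cdot)^{2k/(2k+1)}$. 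With geometric mixing, $b=c_1\log T$ for large $c_1$, and the fact that all moments of $x_t$ are finite under the Bernstein condition, this error can still be made negligible, but the argument must be run with the approximate coupling, not the exact one you assert. Second, the crux of the whole derivation, the block-level bound $\mathbb{E}|U_j|^k\leq Ck!(bA)^{k-2}(bD^2)$ with variance proxy $bD^2$ rather than $b^2D^2$, is only flagged, not established: Minkowski alone yields the $b^2D^2$ proxy and hence the $1/\log^2 T$ rate you are trying to avoid, and upgrading to $bD^2$ uniformly in $k$ requires a Rosenthal/Doukhan-type moment inequality for mixing sums, which is exactly the technical content of the result the paper cites. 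So your proposal is a reasonable road map for reproving that inequality, but it does not yet constitute a proof; the shortest repair is to do what the paper does and apply the ready-made inequality with $q\asymp T/\log(T)$.
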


\begin{proof}[\textbf{Proof of Lemma \ref{lemma:alpha-mixing}}]

  Let $\mu(\varepsilon)=\varepsilon/(5A\varepsilon+25D^2)$. By Theorem 2.19 of \citet{fan2008nonlinear}, for any $T\geq2$, $k\geq3$, $q\in[1,T/2]$, and $\varepsilon>0$,
  \begin{equation}\label{eq:FY_upper}
    \begin{split}
      \mathbb{P}\left(\left|\sum_{t=1}^Tx_t\right|>T\varepsilon\right)& \leq C[1+T/q+\mu(\varepsilon)]\exp[-q\mu(\varepsilon)]\\
      &+CT\left[1+5\varepsilon^{-1}(\mathbb{E}x_t^k)^{1/(2k+1)}\right]\alpha\left(\left[\frac{T}{q+1}\right]\right)^{2k/(2k+1)}.
    \end{split}
  \end{equation}

  Since there exists a constant $\bar{r}<1$ such that $r(p)<\bar{r}$ for all $p$, letting  $q=T/(c_1\log(T))$ for sufficiently large $c_1>0$, we have $\alpha(T/(q+1))\asymp T^{-c_2}$, where $c_2$ can be arbitrarily large as long as $c_1$ is sufficiently large. Therefore, for any fixed $\varepsilon$ and when $T\to\infty$, the second term of the upper bound in \eqref{eq:FY_upper} can be arbitrarily small. In this case, we have
  \begin{equation}
    \mathbb{P}\left(\left|\sum_{t=1}^Tx_t\right|>T\varepsilon\right) \leq C[\log(T)+\mu(\varepsilon)]\exp\left[-\frac{T\mu(\varepsilon)}{\log(T)}\right].
  \end{equation}

\end{proof}

We state a covariance inequality for $\alpha$-mixing random variables from \citet{doukhan1994mixing}.
\begin{lemma}
  \label{lemma:covariance}
  If $\mathbb{E}[|X|^p+|Y|^q]<\infty$ for some $p,q\geq1$ and $1/p+1/q<1$, it holds that
  \begin{equation}
    |\textup{Cov}(X,Y)|\leq 8\alpha^{1/r}[\mathbb{E}|X|^p]^{1/p}[\mathbb{E}|Y|^q]^{1/q},
  \end{equation}
  where $r=(1-p^{-1}-q^{-1})^{-1}$ and $\alpha$ is the $\alpha$-mixing coefficient between $X$ and $Y$.
\end{lemma}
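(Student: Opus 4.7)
The plan is to establish this classical Davydov-type covariance inequality by combining the Hoeffding covariance identity with an interpolation argument that exploits the two natural bounds on the joint-minus-product probability: the mixing bound $\alpha$ and the trivial bound by either marginal.

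First I would reduce to the case of nonnegative $X$ and $Y$. By writing $X = X_+ - X_-$ and $Y = Y_+ - Y_-$, the covariance splits into four pieces, each involving nonnegative variables whose $\alpha$-mixing coefficient is bounded by the coefficient for $(X,Y)$, and whose $L^p$ / $L^q$ norms are bounded by those of $X,Y$. So it suffices to prove the inequality (up to an adjusted constant) for $X,Y \ge 0$. Next I would invoke Hoeffding's identity, which for nonnegative variables with finite first moment gives
\begin{equation}
\textup{Cov}(X,Y) \;=\; \int_0^\infty \int_0^\infty \bigl[\mathbb{P}(X>s,\,Y>t) - \mathbb{P}(X>s)\mathbb{P}(Y>t)\bigr]\,ds\,dt.
\end{equation}

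The core estimate is a pointwise bound on the integrand. From the $\alpha$-mixing hypothesis applied to the events $A = \{X>s\}$ and $B = \{Y>t\}$, the integrand is bounded in absolute value by $\alpha$. Trivially it is also bounded by $\min\{\mathbb{P}(X>s),\mathbb{P}(Y>t)\}$. Writing $1 = 1/r + 1/p + 1/q$, I would geometrically interpolate these bounds to get
\begin{equation}
\bigl|\mathbb{P}(X>s,Y>t) - \mathbb{P}(X>s)\mathbb{P}(Y>t)\bigr| \;\le\; C\,\alpha^{1/r}\,\mathbb{P}(X>s)^{1/p}\mathbb{P}(Y>t)^{1/q}.
\end{equation}
Substituting this into the Hoeffding representation and using Fubini separates the two integrals. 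For each factor, Markov's inequality plus the layer-cake identity
\begin{equation}
\int_0^\infty \mathbb{P}(X>s)^{1/p}\,ds \;\le\; \int_0^\infty \bigl(s^{-p}\mathbb{E}|X|^p\bigr)^{1/p}\wedge 1 \,ds \;\lesssim\; (\mathbb{E}|X|^p)^{1/p}
\end{equation}
after optimizing the truncation level (the condition $1/p+1/q<1$, equivalently $r>1$, is what makes the tail integral converge) delivers the desired bound with a numerical constant that, after tracking through the positive/negative-part decomposition, can be taken to be $8$.

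The main obstacle, and the step that forces the constant $8$, is the interpolation inequality, whose sharp form requires either Hölder's inequality with three exponents or the coupling version due to Berbee/Rio. A slicker alternative I would try if time permitted is to invoke the quantile-coupling lemma: construct $X'$ with the same distribution as $X$ but independent of $Y$ on an enlarged probability space with $\mathbb{P}(X\neq X') \le \alpha$; then $\textup{Cov}(X,Y) = \mathbb{E}[(X-X')Y]$, and a direct Hölder argument with exponents $(p, q, r)$ gives the inequality immediately. I would likely present the Hoeffding-identity route since it avoids invoking coupling machinery and keeps the argument self-contained.
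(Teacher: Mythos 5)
This lemma is not proved in the paper at all: it is quoted verbatim as Davydov's covariance inequality from Doukhan (1994), so the benchmark is the classical argument (truncation plus Ibragimov's inequality for bounded variables, or Rio's quantile-function bound). Your Hoeffding-identity route starts correctly — the reduction to nonnegative variables, the identity itself, and the pointwise bound $|\mathbb{P}(X>s,Y>t)-\mathbb{P}(X>s)\mathbb{P}(Y>t)|\le \alpha^{1/r}\,\mathbb{P}(X>s)^{1/p}\,\mathbb{P}(Y>t)^{1/q}$ (interpolating the three bounds $\alpha$, $\mathbb{P}(X>s)$, $\mathbb{P}(Y>t)$ with weights $1/r+1/p+1/q=1$) are all fine. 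The gap is the very next step: after Fubini you are left with $\int_0^\infty \mathbb{P}(X>s)^{1/p}\,ds$, and this is \emph{not} controlled by $(\mathbb{E}|X|^p)^{1/p}$. It is the Lorentz $L^{p,1}$ norm, which is strictly stronger than the $L^p$ norm; e.g.\ a tail $\mathbb{P}(X>s)=s^{-p}(\log s)^{-p}$ for large $s$ has $\mathbb{E}|X|^p<\infty$ but $\int_0^\infty \mathbb{P}(X>s)^{1/p}ds=\infty$. Your displayed ``Markov plus layer-cake'' bound is itself false: with $m=(\mathbb{E}|X|^p)^{1/p}$ the majorant is $\min\{1,m/s\}$, whose integral over $(0,\infty)$ diverges logarithmically, and no choice of truncation level rescues a $1/s$ tail. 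The condition $1/p+1/q<1$ does not make this integral converge; its role in the correct proof is different.

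The repair is to keep $\alpha$ inside the integrals rather than pulling $\alpha^{1/r}$ out front: either split the $(s,t)$-plane at levels $a,b$ chosen in terms of $\alpha$ (equivalently, truncate $X$ at $a$ and $Y$ at $b$, bound the bounded-part covariance by $4\alpha ab$ via Ibragimov's inequality, handle the tails by Hölder, and optimize $a\asymp(\mathbb{E}|X|^p/\alpha)^{1/p}$, $b\asymp(\mathbb{E}|Y|^q/\alpha)^{1/q}$ — this is where the constant $8$ and the requirement $r>1$ actually come from), or use Rio's inequality $|\textup{Cov}(X,Y)|\le 2\int_0^{2\alpha}Q_X(u)Q_Y(u)\,du$ and apply Hölder with exponents $(r,p,q)$ on the finite interval $(0,2\alpha)$. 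Your proposed fallback is also not available as stated: the coupling with $\mathbb{P}(X\neq X')\le\alpha$ is Berbee's lemma, which holds for $\beta$-mixing (total variation), not for $\alpha$-mixing; under strong mixing alone no such coupling exists in general, so that shortcut would prove the lemma only under the stronger $\beta$-mixing hypothesis.
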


The definition of $\beta$-mixing condition for a time series of random vectors $\bm{y}_t$ can readily be extended to a time series of random matrices.
We then have a Bernstein-type concentration inequality for geometrically $\beta$-mixing dependent random matrices. The following lemma is Theorem 1 of \citet{banna2016bernstein}.

\begin{lemma}
  \label{lemma:beta-mixing}
  Let $\{\bm{X}_t\}_{t\geq1}$ be a family of self-adjoint stationary $\beta$-mixing random matrices of size $d$ with mean zero and mixing coefficient $\beta(\ell)\leq r^\ell$ for some $r\in(0,1)$. Suppose that $\lambda_{\max}(\bm{X}_t)\leq M$ almost surely, then for any $\varepsilon\geq0$,
  \begin{equation}
    \mathbb{P}\left(\lambda_{\max}\left(\sum_{t=1}^T\bm{X}_t\right)\geq \varepsilon\right)\leq d\exp\left(-\frac{C\varepsilon^2}{v^2T+c^{-1}M^2+\varepsilon M\gamma(c,T)}\right),
  \end{equation}
  where
  \begin{equation}
    v^2=\sup_{K\subset\{1,\dots,T\}}\frac{1}{\textup{Card}(K)}\lambda_{\max}\left(\mathbb{E}\left(\sum_{t\in K}\bm{X}_t\right)^2\right)
  \end{equation}
  and
  \begin{equation}
    \gamma(c,T)=\frac{\log(T)}{\log(2)}\max\left(2,\frac{32\log(T)}{c\log(2)}\right).
  \end{equation}
\end{lemma}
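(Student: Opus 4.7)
The plan is to follow the Bernstein blocking / Berbee coupling strategy combined with Tropp-type matrix Chernoff bounds, which is the standard route for matrix concentration under $\beta$-mixing (and is the argument underlying Theorem~1 of \citet{banna2016bernstein} that is cited here). The first step is to pass from an exponential-moment estimate to a tail bound via the matrix Laplace transform method: for any $\theta>0$,
\begin{equation}
\mathbb{P}\!\left(\lambda_{\max}\!\left(\sum_{t=1}^{T}\bm{X}_t\right)\geq \varepsilon\right)\leq e^{-\theta\varepsilon}\,\mathbb{E}\,\mathrm{tr}\,\exp\!\left(\theta\sum_{t=1}^{T}\bm{X}_t\right),
\end{equation}
so the whole problem reduces to controlling the matrix moment generating function in the presence of $\beta$-mixing dependence.

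Next I would implement the Bernstein block decomposition: partition $\{1,\dots,T\}$ into $2k$ alternating blocks of size roughly $q$, with $k\asymp T/(2q)$, giving two sums $\bm{S}_{\mathrm{odd}}$ and $\bm{S}_{\mathrm{even}}$ of $k$ block-sums each. A symmetrization/splitting step (e.g., $\{\lambda_{\max}(\bm{S})\geq\varepsilon\}\subseteq\{\lambda_{\max}(\bm{S}_{\mathrm{odd}})\geq\varepsilon/2\}\cup\{\lambda_{\max}(\bm{S}_{\mathrm{even}})\geq\varepsilon/2\}$) lets me handle one set of blocks at a time. Then I invoke Berbee's lemma (or the Rio-type coupling that underlies the $\beta$-mixing framework in \citet{banna2016bernstein}) to build, block by block, an independent copy of each block-sum that differs from the original with probability at most $\beta(q)$. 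The geometric assumption $\beta(\ell)\leq r^{\ell}$ makes this coupling error at most $T r^{q}$, which can be absorbed into the stated bound by choosing $q\asymp \log T/\log(1/r)$; this is exactly where the logarithmic factor $\gamma(c,T)$ enters.

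On the coupled independent block-sums $\widetilde{\bm{Y}}_1,\dots,\widetilde{\bm{Y}}_k$ I would apply Tropp's matrix Chernoff/Bernstein inequality. The per-block deterministic bound is $\lambda_{\max}(\widetilde{\bm{Y}}_j)\leq qM$, and the matrix variance proxy $\lambda_{\max}(\sum_j \mathbb{E}\widetilde{\bm{Y}}_j^{2})$ is controlled by $v^2 T$ using exactly the definition of $v^2$ given in the statement (which is why the supremum in $v^2$ is taken over arbitrary subsets $K\subset\{1,\dots,T\}$: it must dominate the variance of every coupled block-sum simultaneously). This produces a bound of the form $d\exp(-C\varepsilon^2/(v^2 T + \varepsilon q M))$, and after substituting $q\asymp\log T$ and optimizing the auxiliary parameter $c$ that balances the coupling error $T r^{q}$ against the exponential, I arrive at the denominator $v^2 T + c^{-1}M^2 + \varepsilon M\gamma(c,T)$ with the advertised $\gamma(c,T)\asymp \log^{2}(T)$.

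The main obstacle will be the variance-proxy bookkeeping: the coupling destroys the clean i.i.d. structure, so the second moment of each $\widetilde{\bm{Y}}_j$ has to be compared to $\mathbb{E}(\sum_{t\in K_j}\bm{X}_t)^{2}$ up to controlled errors, and these errors must be uniform in the choice of block boundaries - this is precisely why $v^2$ is defined as a supremum over all $K$. A secondary technical point is that matrix exponential moments are not multiplicative across dependent blocks, so one must instead use Golden-Thompson or Lieb's concavity inequality inside each block and the coupling only at the level of the block-sums themselves; mishandling this order leads to sub-optimal variance terms. Once these two points are handled as in \citet{banna2016bernstein}, the tail bound follows by optimizing over $\theta$ in the Laplace transform.
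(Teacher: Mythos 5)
The paper does not prove this lemma: it is quoted verbatim as Theorem 1 of \citet{banna2016bernstein}, so there is no in-paper argument to compare yours against. Your sketch is therefore an outline of how that external theorem is established, and it correctly identifies the main ingredients --- the matrix Laplace transform, a blocking of the time axis, a Berbee-type coupling whose failure probability $Tr^{q}$ is absorbed by taking block length $q\asymp\log T$, and a Golden--Thompson/Lieb step to factorize the trace-exponential.

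One structural point in your plan does not match the actual mechanism and would cause trouble if carried out literally. You propose a classical Bernstein decomposition into $2k$ alternating blocks followed by Tropp's inequality applied to the $k$ coupled independent block-sums. But the Golden--Thompson inequality only factorizes $\mathrm{tr}\exp(\bm{A}+\bm{B})$ for \emph{two} summands; applied iteratively across $k$ blocks it degrades the variance proxy and cannot produce the stated constants. The proof in \citet{banna2016bernstein} instead uses a recursive Cantor-like dyadic decomposition of $\{1,\dots,T\}$: at each level the surviving index set is split into two well-separated halves, Golden--Thompson is applied once per level, and the coupling decorrelates the two halves. The factor $\gamma(c,T)=\frac{\log T}{\log 2}\max\bigl(2,\frac{32\log T}{c\log 2}\bigr)$ is precisely the bookkeeping of the $\log_2(T)$ levels of that recursion, and the additive $c^{-1}M^2$ term in the denominator accounts for the indices discarded at each level of the Cantor construction --- neither of these arises naturally from an odd/even alternating-block scheme, so your parameter $c$ would have nothing to optimize over. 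Your remark about $v^2$ is correct and is in fact forced by this structure: the supremum over arbitrary subsets $K$ is needed exactly because the surviving index sets at each level of the recursion are unions of separated intervals rather than contiguous blocks.
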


\section{Proof of Lower Bound Results}\label{append:B}

In this section, we provide proofs of the minimax lower bound results for the estimation problems of autocovariance matrices and VAR models. We prove the lower bounds for  autocovariance estimation in the $\ell_\infty$ norm and sparse VAR estimation in Section \ref{sec:B.1} and those of  the operator norm lower bound for autocovariance estimation and the Frobenius norm lower bound for reduced-rank VAR estimation in Section \ref{sec:B.2}. Some auxiliary lemmas are presented in Section \ref{sec:B.3}.

We start with some notations. For any real number $x$, the ceiling function $\lceil x\rceil$ is defined as the smallest integer not smaller than $x$ and the floor function $\lfloor x\rfloor$ is the largest integer not larger than $x$. For any binary vectors $\bm{v}_1,\bm{v}_2\in\{0,1\}^p$, the Hamming distance of $\bm{v}_1$ and $\bm{v}_2$, denoted by $d_H(\bm{v}_1,\bm{v}_2)$, is defined as the number of different entries in the two vectors.

\subsection{Maximum Norm Lower Bounds and Sparse VAR}\label{sec:B.1}

We start with constructing minimax lower bounds of autocovariance matrix estimation in the $\ell_\infty$ norm.

\begin{proof}[\textbf{Proof of Proposition \ref{prop:LB_maximum}}]
  
  The proof consists of four steps. The first three steps present the minimax lower bound for $\bm{\Sigma}_0$ with $d=1$. Specifically, in the first step, a pair of discrete distributions are constructed. In the second step, we consider a pair of $\alpha$-mixing time series $\{\bm{y}_1,\dots,\bm{y}_T\}$ and $\{\bm{y}_1',\dots,\bm{y}_T'\}$ with the mixing coefficients decayed geometrically. In the third step, we show that with probability at a constant level, it is impossible to obtain an estimation error smaller than the minimax lower bound for $\{\bm{y}_1,\dots,\bm{y}_T\}$ and $\{\bm{y}_1',\dots,\bm{y}_T'\}$ simultaneously.
  In the last step, the minimax lower bound is extended to the case of $d>1$ and $\bm{\Sigma}_1$.
  Note that we establish the lower bounds in a non-asymptotic sense, where all parameters including $p$, $T$ and $M$ are assumed to be finite.\\

  \noindent\textit{Step 1. Specify a pair of bivariate distributions}

  First, motivated by \citet{devroye2016sub}, we construct a pair of discrete bivariate distributions for $(y_1,y_2)^\top$. Specifically, for the given moment bound $M$, we consider the bivariate distribution $\mathcal{P}_{c,\gamma}=\{P_+,P_-\}$ with some positive constants $c>0$ and $0<\gamma<1$ such that $c^{2+2\epsilon}\gamma=M$, in which
  \begin{equation}
    \begin{split}
      &P_+(y_1=c,y_2=c)=P_+(y_1=-c,y_2=-c)=\frac{\gamma}{2},\quad P_+(y_1=y_2=0)=1-\gamma,\\
      &P_-(y_1=c,y_2=-c)=P_-(y_1=-c,y_2=c)=\frac{\gamma}{2},~~\text{and}~~P_-(y_1=y_2=0)=1-\gamma.
    \end{split}
  \end{equation}
  It is simple to check that under both distributions $P_+$ and $P_-$, $\mathbb{E}(y_1)=\mathbb{E}(y_2)=0$ and $\mathbb{E}(|y_1|^{2+2\epsilon})=\mathbb{E}(|y_2|^{2+2\epsilon})=c^{2+2\epsilon}\gamma=M$. In addition, $P_+(y_1y_2=c^2)=\gamma$, $P_+(y_1y_2=0)=1-\gamma$, $P_-(y_1y_2=-c^2)=\gamma$, and $P_-(y_1y_2=0)=1-\gamma$. Therefore, $\mathbb{E}_+(y_1y_2)=c^2\gamma$ and $\mathbb{E}_-(y_1y_2)=-c^2\gamma$.\\

  \noindent\textit{Step 2. Construct an $\alpha$-mixing sequence}

  Let $v=r^{-2}>1$. For $t=1,2,\dots,T$, consider $\widetilde{T}:=\lceil (T/\lfloor\log_vT\rfloor)\rceil$ time index groups $I_1=\{1,\dots,\lfloor\log_{v}T\rfloor\}$, $I_2=\{\lfloor\log_{v}T\rfloor+1,\dots,2\lfloor\log_{v}T\rfloor\}$, $\cdots$, $I_{\widetilde{T}}=\{(\widetilde{T}-1)\lfloor\log_{v}T\rfloor+1,\dots,\widetilde{T}\lfloor\log_{v}T\rfloor\}$. For each $1\leq k\leq\widetilde{T}$, let $\{y_{1t},y_{2t},y_{1t}',y_{2t}'\}_{t\in I_k}$ be random variables with the joint distribution
  \begin{equation}
    \begin{split}
      & \mathbb{P}\left(\cap_{t\in I_k}\{y_{1t}=y_{2t}=y_{1t}'=y_{2t}'=0\}\right)=1-\gamma,\\
      & \mathbb{P}\left(\cap_{t\in I_k}\{y_{1t}=c,y_{2t}=c,y_{1t}'=c,y_{2t}'=-c\}\right)=\frac{\gamma}{2},\\
      \text{and }& \mathbb{P}\left(\cap_{t\in I_k}\{y_{1t}=-c,y_{2t}=-c,y_{1t}'=-c,y_{2t}'=c\}\right)=\frac{\gamma}{2}.
    \end{split}
  \end{equation}
  Marginally, it is easy to check that $(y_{1t},y_{2t})\sim P_+$ and $(y_{1t}',y_{2t}')\sim P_-$. For the variables in different time index groups, $\{y_{1t},y_{2t},y_{1t}',y_{2t}'\}$ are independent. 

  For the rest of variables other than the first and second ones, we assume that $y_{it}=y_{it}'$ almost surely for $i\in\{3,4,\dots,p\}$ and they are independent with $y_{1t}$ and $y_{2t}$. In addition, they are independent at different time points. Though we only consider the data from $t=1$ to $t=T$, we assume that the above distribution applies to all time indexes $t$ so that we can study the mixing property of the stochastic process.

  Let $\gamma=(2\widetilde{T})^{-1}\log(1/(2\delta))\leq 1/2$ with some constant $\delta\in[\exp(-\widetilde{T})/2,1/2)$. Now we verify that both $\bm{y}_t=(y_{1t},y_{2t},y_{3t},\dots,y_{pt})^\top$ and $\bm{y}_t'=(y_{1t}',y_{2t}',y_{3t}',\dots,y_{pt}')^\top$ are $\alpha$-mixing with mixing coefficients geometrically decayed. As $\{(y_{3t},\dots,y_{pt})^\top\}_{t=1}^T$ and $\{(y_{3t}',\dots,y_{pt}')^\top\}_{t=1}^T$ are independent series and are independent with $\{y_{1t},y_{2t},y_{1t}',y_{2t}'\}$, it suffices to show that $\{(y_{1t},y_{2t})\}_{t=1}^T$ and $\{(y_{1t}',y_{2t}')\}_{t=1}^T$ satisfy the $\alpha$-mixing condition.

  Without loss of generality, we first consider $t_1,t_2\in I_1$ with $t_1<t_2$. By definition, it can easily be checked that,
  \begin{equation}
    \begin{split}
      &\alpha(\{\bm{y}_t\}_{t=-\infty}^{t_1},\{\bm{y}_t\}_{t=t_2}^{\infty})=\alpha\left(\{(y_{1t},y_{2t})^\top\}_{t=1}^{t_1},\{(y_{1t},y_{2t})^\top\}_{t=t_2}^{\lfloor\log_vT\rfloor}\right)\\
      =&\sup|\mathbb{P}(A\cap B)-\mathbb{P}(A)\mathbb{P}(B)|,~~A\in\sigma(\{(y_{1t},y_{2t})^\top\}_{t=1}^{t_1})~\text{and}~B\in\sigma(\{(y_{1t},y_{2t})^\top\}_{t=t_2}^{\lfloor\log_vT\rfloor})\\
      =&\max\{(1-\gamma)-(1-\gamma)^2,\gamma/2-\gamma^2/4,\gamma(1-\gamma)\}\\
      \leq&\gamma=\frac{\log(1/(2\delta))}{2\widetilde{T}}=Cr^{2\log_v\widetilde{T}}\leq Cr^{\log_vT}\leq Cr^{\lfloor\log_vT\rfloor}\leq Cr^{t_2-t_1}.
    \end{split}
  \end{equation}
  In addition, for $t_1<t_2$ in different time index groups,  it is obvious that
  \begin{equation}
    \alpha(\{\bm{y}_t\}_{t=-\infty}^{t_1},\{\bm{y}_t\}_{t=t_2}^{\infty})=0,
  \end{equation}
  because the variables in different groups are independent. 
  Therefore, the constructed $p$-dimensional series $\bm{y}_t$ and $\bm{y}'_t$ are both $\alpha$-mixing with the mixing coefficients decayed geometrically.\\

  \noindent\textit{Step 3. Establish the lower bound for $\bm{\Sigma}_0$ with $d=1$}

  As $1-\gamma\geq\exp[-\gamma/(1-\gamma)]$, we have
  \begin{equation}
    \mathbb{P}(\{\bm{y}_t\}_{t=1}^T=\{\bm{y}_t'\}_{t=1}^T)=(1-\gamma)^{\widetilde{T}}\geq\exp\left(\frac{-\gamma\widetilde{T}}{1-\gamma}\right)\geq\exp(-2\gamma\widetilde{T})=2\delta.
  \end{equation}

  Let $\widehat{\mathbb{E}}_{1,2}(\cdot)$ be any mean estimator of $(\bm{\Sigma}_0)_{1,2}$. Then,
  \begin{equation}
    \begin{split}
      & \max\left[\mathbb{P}\left(\left|\widehat{\mathbb{E}}_{1,2}(\{\bm{y}_t\}_{t=1}^T)-c^2\gamma\right|>c^2\gamma\right),\mathbb{P}\left(\left|\widehat{\mathbb{E}}_{1,2}(\{\bm{y}_{t}'\}_{t=1}^T)+c^2\gamma\right|>c^2\gamma\right)\right]\\
      \geq & \frac{1}{2}\mathbb{P}\left[\left|\widehat{\mathbb{E}}_{1,2}(\{\bm{y}_{t}\}_{t=1}^T)-c^2\gamma\right|>c^2\gamma,\text{ or }\left|\widehat{\mathbb{E}}_{1,2}(\{\bm{y}_{t}'\}_{t=1}^T)+c^2\gamma\right|>c^2\gamma\right]\\
      \geq & \frac{1}{2}\mathbb{P}\left[\widehat{\mathbb{E}}_{1,2}(\{\bm{y}_{t}\}_{t=1}^T)=\widehat{\mathbb{E}}_{1,2}(\{\bm{y}_{t}'\}_{t=1}^T)\right]\\
      \geq & \frac{1}{2}\mathbb{P}\left[\{\bm{y}_t\}_{t=1}^T=\{\bm{y}_{t}'\}_{t=1}^T\right]\geq \delta,
    \end{split}
  \end{equation}
  where $\widehat{\mathbb{E}}_{1,2}(\{\bm{y}_t\}_{t=1}^T)=\widehat{\mathbb{E}}_{1,2}(\{\bm{y}_t'\}_{t=1}^T)$, following  directly from $\{\bm{y}_t\}_{t=1}^T=\{\bm{y}_t'\}_{t=1}^T$.

  Taking $\delta=1/3$, we have that
  \begin{equation}
    c^2\gamma = M^{1/(1+\epsilon)}\gamma^{\epsilon/(1+\epsilon)}=\left[\frac{M^{1/\epsilon}\log(1/(2\delta))}{2\widetilde{T}}\right]^{\frac{\epsilon}{1+\epsilon}}\asymp\left[\frac{M^{1/\epsilon}\log(T)}{T}\right]^{\frac{\epsilon}{1+\epsilon}}.
  \end{equation}

  Hence, we have
  \begin{equation}
    \begin{split}
      \max&\left\{\mathbb{P}\left[\left|\widehat{\mathbb{E}}_{1,2}(\{\bm{y}_{t}\}_{t=1}^T)-\mathbb{E}[y_{1t}y_{2t}]\right|\gtrsim\left[\frac{M^{1/\epsilon}\log(T)}{T}\right]^{\frac{\epsilon}{1+\epsilon}}\right],\right.\\
      &\quad\left.\mathbb{P}\left[\left|\widehat{\mathbb{E}}_{1,2}(\{\bm{y}_{t}'\}_{t=1}^T)-\mathbb{E}[y_{1t}'y_{2t}']\right|\gtrsim\left[\frac{M^{1/\epsilon}\log(T)}{T}\right]^{\frac{\epsilon}{1+\epsilon}}\right]\right\}\geq 1/3.
    \end{split}
  \end{equation}
  Therefore, as $\mathcal{P}_{c,\gamma}\in\mathcal{P}_\textup{E}(M,\epsilon,r)$, the minimax lower bound for $\bm{\Sigma}_0$ in the $\ell_\infty$ norm can be established.\\

  \textit{Step 4. Extension to the case of $d>1$ and $\bm{\Sigma}_1$}

  Since the analysis in the first three steps is independent of $p$, the above minimax lower bound result can be extended directly to the $pd$-dimensional vector $\bm{x}_t=(\bm{y}_{t-1}^\top,\bm{y}_{t-2}^\top,\dots,\bm{y}_{t-d}^\top)^\top$ and the $p(d+1)$-dimensional vector $\bm{z}_t=(\bm{y}_t^\top,\bm{x}_t^\top)^\top$. Hence, the minimax lower bound results for the case of $d>1$ and $\bm{\Sigma}_1$ can be obtained in a similar fashion, so the details are omitted for brevity.

\end{proof}

Next, we apply the techniques developed above to the estimation of sparse VAR models.

\begin{proof}[\textbf{Proof of Theorem \ref{thm:LB_sparse_VAR}}]
  
  Similarly to the proof of Proposition \ref{prop:LB_maximum}, we focus on the case of $d=1$, as the proof can readily be extended to the general case of $d>1$. The proof consists of three steps.
  In the first step, a pair of joint distributions of $\{\bm{y}_1,\dots,\bm{y}_T\}$ and $\{\bm{y}_1',\dots,\bm{y}_T'\}$ are constructed. In the second step, we verify that the constructed time series are $\alpha$-mixing with mixing coefficients decayed geometrically. In the final step, we show that with probability at a constant level, it is impossible to obtain an estimation error smaller than the minimax lower bound for $\{\bm{y}_1,\dots,\bm{y}_T\}$ and $\{\bm{y}_1',\dots,\bm{y}_T'\}$ simultaneously.\\

  \noindent\textit{Step 1. Specify a class of distribution}

  For simplicity, for given $0<\epsilon\leq 1$,  sparsity level $s_0=s$, $c>0$, and $0<\gamma<1/s$, we consider the bivariate distribution $\mathcal{P}_{c,\gamma,s,\epsilon}=\{P_+,P_-\}$ for random variables $\{x,y\}$, where
  \begin{equation}
    \begin{split}
      & P_+(x=cs^{\epsilon/(1+\epsilon)},y=c)=\frac{\gamma}{2},\quad\quad P_+(x=-cs^{\epsilon/(1+\epsilon)},y=-c)=\frac{\gamma}{2},\\
      & P_+(x=0,y=c)=\frac{(s-1)\gamma}{2},\quad\quad P_+(x=0,y=-c)=\frac{(s-1)\gamma}{2},\\
      & P_+(x=0,y=0)=1-s\gamma,
    \end{split}
  \end{equation}
  and
  \begin{equation}
    \begin{split}
      & P_-(x=cs^{\epsilon/(1+\epsilon)},y=-c)=\frac{\gamma}{2},\quad\quad P_-(x=-cs^{\epsilon/(1+\epsilon)},y=c)=\frac{\gamma}{2},\\
      & P_-(x=0,y=c)=\frac{(s-1)\gamma}{2},\quad\quad P_-(x=0,y=-c)=\frac{(s-1)\gamma}{2},\\
      & P_-(x=0,y=0)=1-s\gamma.
    \end{split}
  \end{equation}
  We have that under both distributions, $\mathbb{E}(x)=\mathbb{E}(y)=0$, $\mathbb{E}(x^2)=c^2s^{2\epsilon/(1+\epsilon)}\gamma$, $\mathbb{E}(y^2)=c^2s\gamma$, $\mathbb{E}|x|^{2+2\epsilon}=c^{2+2\epsilon}s^{2\epsilon}\gamma:=M$, $\mathbb{E}|y|^{2+2\epsilon}=c^{2+2\epsilon}s\gamma$. Also, $\mathbb{E}_+(xy)=c^2s^{\epsilon/(1+\epsilon)}\gamma$ and $\mathbb{E}_-(xy)=-c^2s^{\epsilon/(1+\epsilon)}\gamma$.\\

  \noindent\textit{Step 2. Construct an $\alpha$-mixing sequence}

  Note that $y_{s+1,t+1}$ is only correlated with $\bm{x}_t=(y_{1t},\dots,y_{st})^\top$, and $y_{s+1,t+1}'$ is only correlated with $\bm{x}_t'=(y_{1t}',\dots,y_{st}')^\top$. Let $v=r^{-2}>1$. For $t=1,2,\dots,T$, consider $\widetilde{T}:=\lceil(T/\log_v T)\rceil$ index groups $I_1=\{1,\dots,\lceil\log_vT\rceil\}$, $I_2=\{\lceil\log_vT\rceil+1,\dots,2\lceil\log_vT\rceil\}$, $\dots$, $I_{\widetilde{T}}=\{(\widetilde{T}-1)\lceil\log_vT\rceil+1,\dots,\widetilde{T}\lceil\log_vT\rceil\}$. For each $1\leq k\leq \widetilde{T}$, let $\{\bm{x}_t,y_{s+1,t+1},\bm{x}_t',y_{s+1,t+1}'\}_{t\in I_k}$ be random variables with the joint distribution
  \begin{equation}
    \begin{split}
      & \mathbb{P}\left(\cap_{t\in I_k}\left\{\bm{x}_t=cs^{\frac{\epsilon}{1+\epsilon}}\bm{e}_i,y_{s+1,t+1}=c,\bm{x}_t'=cs^{\frac{\epsilon}{1+\epsilon}}\bm{e}_i,y_{s+1,t+1}'=-c\right\}\right)=\frac{\gamma}{2},~~1\leq i\leq s,\\
      & \mathbb{P}\left(\cap_{t\in I_k}\left\{\bm{x}_t=-cs^{\frac{\epsilon}{1+\epsilon}}\bm{e}_i,y_{s+1,t+1}=-c,\bm{x}_t'=-cs^{\frac{\epsilon}{1+\epsilon}}\bm{e}_i,y_{s+1,t+1}'=c\right\}\right)=\frac{\gamma}{2},~~1\leq i\leq s,\\
      & \text{and}~~\mathbb{P}\left(\cap_{t\in I_k}\left\{\bm{x}_t=\bm{x}_t'=\bm{0},y_{s+1,t+1}=y_{s+1,t+1}'=0\right\}\right)=1-s\gamma.
    \end{split}
  \end{equation}
  Marginally, it is easy to check that each pair of $(y_{it},y_{s+1,t+1})\sim P_+$ and $(y_{it}',y_{s+1,t+1}')\sim P_-$, for $i=1,\dots,s$, respectively. In addition, for any $1\leq i<j\leq s$, $\text{Cov}(y_{it},y_{jt})=0$.

  For the rest of variables other than the first $s+1$ variables, we assume that $y_{it}=y_{it}'$ almost surely for $i\in\{s+2,\dots,p\}$ and they are independent with $\bm{x}_t$ and $y_{s+1,t}$. Let $\gamma=(2s\widetilde{T})^{-1}\log(1/(2\delta))\leq 1/2$ with some constant $\delta\in[\exp(-\widetilde{T}/2,1/2))$. Since both $\{(y_{s+2,t},\dots,y_{pt})^\top\}_{t=1}^T$ and $\{(y_{s+2,t}',\dots,y_{pt}')^\top\}_{t=1}^T$ are independent series and they are independent with the first $s+1$ variables, it suffices to show that $\widetilde{\bm{y}}_t=\{(\bm{x}_t^\top,y_{s,t})^\top\}_{t=1}^T$ and $\widetilde{\bm{y}}_t'=\{(\bm{x}_t^{'\top},y_{s,t}')^\top\}_{t=1}^T$ satisfy the $\alpha$-mixing condition.

  Without loss of generality, we first consider $t_1,t_2\in I_1$ with $t_1<t_2$. By definition, it can be checked that
  \begin{equation}
    \begin{split}
      & \alpha(\{\widetilde{\bm{y}}_t\}_{t=-\infty}^{t_1},\{\widetilde{\bm{y}}_t\}_{t=t_2}^{\infty})=\alpha(\{\widetilde{\bm{y}}_t\}_{t=1}^{t_1},\{\widetilde{\bm{y}}_t\}_{t=t_2}^{\lfloor\log_vT\rfloor})\\
      = & \sup|\mathbb{P}(A\cap B)-\mathbb{P}(A)\mathbb{P}(B)|,~~A\in\sigma(\{\widetilde{\bm{y}}_t\}_{t=1}^{t_1}),~~B\in\sigma(\{\widetilde{\bm{y}}_t\}_{t=t_2}^{\lfloor\log_vT\rfloor})\\
      = & \max\{(1-s\gamma)-(1-s\gamma)^2,s\gamma/2-s^2\gamma^2/4,s\gamma(1-s\gamma)\}\\
      \leq & s\gamma = \frac{\log(1/2\delta)}{2\widetilde{T}}=Cr^{2\log_v\widetilde{T}}\leq Cr^{\log_vT}\leq Cr^{\lfloor\log_v T\rfloor}\leq Cr^{t_2-t_1}.
    \end{split}
  \end{equation}
  In addition, for $t_1<t_2$ in different index groups, as the variables in different index groups are independent, we have
  \begin{equation}
    \alpha(\{\bm{y}_t\}_{t=-\infty}^{t_1},\{\bm{y}_t\}_{t=t_2}^{\infty})=0.
  \end{equation}
  Therefore, the constructed $p$-dimensional series $\bm{y}_t$ and $\bm{y}_t'$ are both $\alpha$-mixing with the mixing coefficients decayed geometrically.\\

  \noindent\textit{Step 3. Establish the lower bound}

  For the constructed sequence $\bm{y}_t$ and $\bm{y}_t'$, we have
  \begin{equation}
    \begin{split}
      & \mathbb{P}(\{y_{1t},\dots,y_{st},y_{s+1,t+1}\}_{t=1}^T=\{y_{1t}',\dots,y_{st}',y_{s+1,t+1}'\}_{t=1}^T) = (1-s\gamma)^{\widetilde{T}}\\
      \geq & \exp\left(\frac{s\gamma\widetilde{T}}{1-s\gamma}\right)\geq\exp(-2s\gamma\widetilde{T})=2\delta.
    \end{split}
  \end{equation}

  We consider the estimation task of coefficients of the predictors $\bm{x}_t=(y_{1t},\dots,y_{st})^\top$ and the response $y_{s+1,t}$. By the Yule--Walker equation, the true values of the coefficient vector are defined as $\bm{a}^*(\{\bm{y}_t\})=\mathbb{E}[\bm{x}_t\bm{x}_{t}^\top]^{-1}\mathbb{E}[\bm{x}_{t}y_{s+1,t}]$ and $\bm{a}^*(\{\bm{y}'_t\})=\mathbb{E}[\bm{x}_{t}'\bm{x}_{t}^{'\top}]^{-1}\mathbb{E}[\bm{x}'_{t}y'_{s+1,t}]$, respectively. As $\bm{\Sigma}_{\bm{x}}:=\mathbb{E}[\bm{x}_{t}\bm{x}_{t}^\top]=\mathbb{E}[\bm{x}_{t}'\bm{x}_{t}'^{\top}]$ is a diagonal matrix, it can be obtained that $\|\bm{a}^*(\{\bm{y}_t\})\|_2 = \|\bm{a}^*(\{\bm{y}_t'\})\|_2 =\|\bm{\Sigma}_{\bm{x}}^{-1}\|_{1,\infty}c^2s^{\frac{\epsilon}{1+\epsilon}+\frac{1}{2}}\gamma$.

  Let $\widehat{\mathbb{E}}_{s+1}(\cdot)$ be any mean estimator of $\bm{a}_{s+1}$. Then,
  \begin{equation}
    \begin{split}
      & \max\left\{\mathbb{P}\left[\left\|\widehat{\mathbb{E}}_{s+1}(\{\bm{y}_t\}_{t=1}^T)-\bm{a}^*(\{y_t\})\right\|_2>\|\bm{a}^*(\{\bm{y}_t\})\|_2\right],\right.\\
      &\quad\quad\quad\left.\mathbb{P}\left[\left\|\widehat{\mathbb{E}}_{s+1}(\{\bm{y}_{t}'\}_{t=1}^T)+\bm{a}^*(\{\bm{y}_t'\})\right\|_2>\|\bm{a}^*(\{\bm{y}_t'\})\|_2\right]\right\}\\
      \geq & \frac{1}{2}\mathbb{P}\left[\left\|\widehat{\mathbb{E}}_{s+1}(\{\bm{y}_{t}\}_{t=1}^T)-\bm{a}^*(\{\bm{y}_t\})\right\|_2>\|\bm{a}^*(\{\bm{y}_t\})\|_2,\right.\\
      &\left.\quad\quad\quad\text{ or }\left\|\widehat{\mathbb{E}}_{s+1}(\{\bm{y}_{t}'\}_{t=1}^T)+\bm{a}^*(\{\bm{y}_t'\})\right\|_2>\|\bm{a}^*(\{\bm{y}_t'\})\|_2\right]\\
      \geq & \frac{1}{2}\mathbb{P}\left[\widehat{\mathbb{E}}_{s+1}(\{\bm{y}_{t}\}_{t=1}^T)=\widehat{\mathbb{E}}_{s+1}(\{\bm{y}_{t}'\}_{t=1}^T)\right]\\
      \geq & \frac{1}{2}\mathbb{P}\left[\{\bm{y}_t\}_{t=1}^T=\{\bm{y}_{t}'\}_{t=1}^T\right]\\
      =& \frac{1}{2}\mathbb{P}\left[\{y_{1t},\dots,y_{st},y_{s+1,t+1}\}_{t=1}^T=\{y_{1t}',\dots,y_{st}',y_{s+1,t+1}'\}_{t=1}^{T}\right]\geq \delta,
    \end{split}
  \end{equation}

  Taking $\delta=1/3$, we have that
  \begin{equation}
    \begin{split}
      \|\bm{a}^*(\{\bm{y}_t\})\|_2=&\|\bm{a}^*(\{\bm{y}_t'\})\|_2 =  \|\bm{\Sigma}_{\bm{x}}^{-1}\|_{1,\infty}(c^2s^{\frac{\epsilon}{1+\epsilon}}\gamma^{\frac{1}{1+\epsilon}})\sqrt{s}\gamma^{\frac{\epsilon}{1+\epsilon}} =  \|\bm{\Sigma}_{\bm{x}}^{-1}\|_{1,\infty}M^{\frac{1}{1+\epsilon}}\sqrt{s}\gamma^{\frac{\epsilon}{1+\epsilon}}\\
      = & \|\bm{\Sigma}_{\bm{x}}^{-1}\|_{1,\infty}M^{\frac{1}{1+\epsilon}}\sqrt{s}\left[\frac{\log(1/(2\delta))}{2\widetilde{T}}\right]^{\frac{\epsilon}{1+\epsilon}} \asymp \|\bm{\Sigma}_{\bm{x}}^{-1}\|_{1,\infty}\sqrt{s}\left[\frac{M^{1/\epsilon}\log(T)}{T}\right]^{\frac{\epsilon}{1+\epsilon}}.
    \end{split}
  \end{equation}
  Therefore, the minimax lower bound for $\bm{a}_{s+1}$ in terms of $\ell_2$ norm can be derived, and the $\|\cdot\|_{2,\infty}$ lower bound can be developed accordingly.

  Finally, for $d>1$, we consider the same data generating process and use the VAR($d$) model as the running model. As the coefficient vector in the VAR(1) model is a sub-vector of that in the VAR($d$) model, the lower bound result for the VAR(1) model can be extended directly to the VAR($d$) model.

\end{proof}

\subsection{Operator Norm Lower Bounds and Reduced-Rank VAR}\label{sec:B.2}

In this subsection, we establish the minimax lower bounds of the autocovariance estimation in the operator norm and estimation of the reduced-rank VAR.

\begin{proof}[\textbf{Proof of Proposition \ref{prop:LB_operator}}]
  
  The proof consists of three steps. The operator norm minimax lower bound for $\bm{\Sigma}_0$ with $d=1$ is developed in the first three steps. In the first step, we construct a class of multivariate distributions with some moment conditions. In the second step, we apply Fano's inequality (Lemma \ref{lemma:Fano}) to show that with probability at a constant level, it is impossible to obtain an estimation error in the operator norm smaller than the minimax lower bound. Finally, the lower bound result is extended to the case of $d>1$ and $\bm{\Sigma}_1$ in the third step.\\

  \noindent\textit{Step 1. Construct a class of discrete multivariate distributions}

  Without loss of generality, we assume that the dimension $p$ is a multiple of 2, i.e., $p=2h$, where $h$ is a positive integer. Consider $2^p$ vectors $\bm{v}_1,\bm{v}_2,\dots,\bm{v}_{2^p}$ where each $\bm{v}_i\in\{p^{-1/2},-p^{-1/2}\}^p$, for $i=1,\dots,2^p$. 

  For any $\bm{v}_i$, $1\leq i\leq 2^p$, it can be obtained that
  $\sum_{j=1}^{2^p}\langle\bm{v}_i,\bm{v}_j\rangle^2=2^p/p$. Based on $\bm{v}_i$, define the corresponding multivariate distribution $P_{\bm{v}_i}$, where
  \begin{equation}
    \begin{split}
      &P_{\bm{v}_i}(\bm{y}=c\bm{v}_j)=\frac{\gamma}{p}+\langle\bm{v}_i,\bm{v}_j\rangle^2\gamma,\quad j=1,\dots,2^p\\
      \text{and}~&P_{\bm{v}_i}(\bm{y}=\bm{0}_p)=1-\frac{2^p}{p}\gamma-\sum_{j=1}^{2^p}\langle\bm{v}_i,\bm{v}_j\rangle^2\gamma=1-\frac{2^{p+1}}{p}\gamma,
    \end{split}
  \end{equation}
  with the parameters $c$ and $\gamma$ satisfying that $c^{2+2\epsilon}\gamma p^{-(2+\epsilon)}2^{p+3}=M$ and $\gamma<2^{-(p+1)}p$.

  Under each distribution $P_{\bm{v}_i}$, it can be checked that $\mathbb{E}_{\bm{v}_i}[\bm{y}]=\bm{0}_p$,
  \begin{equation}
    \bm{\Sigma}_{\bm{v}_i}:=\mathbb{E}_{\bm{v}_i}[\bm{y}\bm{y}^\top]= \left[\frac{2^p}{p^2}+\frac{2^p(p-2)}{p^3}\right]c^2\gamma\bm{I}_p +  \frac{2^{p+1}}{p^2}c^2\gamma\bm{v}_i\bm{v}_i^\top,
  \end{equation}
  and
  \begin{equation}
    \begin{split}
      &\mathbb{E}_{\bm{v}_i}\left[|\bm{v}_i^\top\bm{y}|^{2+2\epsilon}\right]=\sum_{j=1}^{2^p}|\langle c\bm{v}_j,\bm{v}_i\rangle|^{2+2\epsilon}P_{\bm{v}_i}(\bm{y}=c\bm{v}_j)\\
      =&c^{2+2\epsilon}\frac{\gamma}{p}\sum_{j=1}^{2^p}|\langle\bm{v}_j,\bm{v}_i\rangle|^{2+2\epsilon} + c^{2+2\epsilon}\gamma\sum_{j=1}^{2^p}|\langle\bm{v}_j,\bm{v}_i\rangle|^{4+2\epsilon}\\
      \leq & c^{2+2\epsilon}\gamma p^{-2-\epsilon}2^{p+3}=M.
    \end{split}
  \end{equation}

  For $h=p/2$, by Lemma \ref{lemma:GV}, there exist $N\geq\exp(h/8)$ binary vectors $\bm{z}_1,\dots,\bm{z}_N\in\{0,1\}^p$ such that $d_\text{H}(\bm{z}_j,\bm{z}_k)\geq h/4$ for all $1\leq j\neq k\leq h$, where $d_\text{H}(\cdot,\cdot)$ is the Hamming distance that measures the number of different entries in two binary vectors.

  For each $\bm{z}_j=(z_{j1},\dots,z_{jh})^\top$, $j=1,\dots,N$, let
  \begin{equation}
    \bm{w}_j=p^{-1/2}\left(z_{j1}(1,1)+(1-z_{j1})(1,-1),\dots,z_{jh}(1,1)+(1-z_{jh})(1,-1)\right)^\top
  \end{equation}
  and note that $\bm{w}_j\in\{p^{-1/2},-p^{-1/2}\}^p$.

  Consider a class of $N$ multivariate distributions $\mathcal{P}_{c,\gamma}=\{P_{\bm{w}_1},\dots,P_{\bm{w}_N}\}$. For the operator norm lower bound, we consider the independent setting, which is also a special case of strong mixing. Specifically, denote the distribution of $\{\bm{y}_t\}_{t=1}^\infty$ as $\mathbb{P}_{\bm{w}_j}^T$ with \textit{i.i.d.} data, where each $\bm{y}_t$ follows the distribution $P_{\bm{w}_j}$.\\

  \noindent\textit{Step 2. Establish the lower bound for $\bm{\Sigma}_0$ with $d=1$}

  For the multivariate distributions $P_{\bm{w}_j}$ and $P_{\bm{w}_k}$, the Kullback-Leibler (KL) divergence of these two distributions is defined as
  \begin{equation}
    \begin{split}
      \text{KL}(P_{\bm{w}_j},P_{\bm{w}_k})=&\sum_{i=1}^{2^p}P_{\bm{w}_j}(\bm{y}=c\bm{v}_i)\log\left(\frac{P_{\bm{w}_j}(\bm{y}=c\bm{v}_i)}{P_{\bm{w}_k}(\bm{y}=c\bm{v}_i)}\right)\\
      =&\frac{\gamma}{p}\sum_{i=1}^{2^p}\log\left(\frac{P_{\bm{w}_j}(\bm{y}=c\bm{v}_i)}{P_{\bm{w}_k}(\bm{y}=c\bm{v}_i)}\right)+\gamma\sum_{i=1}^{2^p}\langle\bm{v}_i,\bm{w}_j\rangle^2\log\left(\frac{P_{\bm{w}_j}(\bm{y}=c\bm{v}_i)}{P_{\bm{w}_k}(\bm{y}=c\bm{v}_i)}\right).
    \end{split}
  \end{equation}
  Note that
  \begin{equation}
    \sum_{i=1}^{2^p}\log\left(\frac{P_{\bm{w}_j}(\bm{y}=c\bm{v}_i)}{P_{\bm{w}_k}(\bm{y}=c\bm{v}_i)}\right)=\sum_{i=1}^{2^p}\log(P_{\bm{w}_j}(\bm{y}=c\bm{v}_i))-\sum_{i=1}^{2^p}\log(P_{\bm{w}_k}(\bm{y}=c\bm{v}_i))=0.
  \end{equation}
  In addition, as $\log(1+t)\leq t$ for $t\in(0,\infty)$, we have
  \begin{equation}
    \begin{split}
      & \sum_{i=1}^{2^p}\langle\bm{v}_i,\bm{w}_j\rangle^2\log\left(\frac{P_{\bm{w}_j}(\bm{y}-c\bm{v}_i)}{P_{\bm{w}_k}(\bm{y}-c\bm{v}_i)}\right)\\
      = & \sum_{i=1}^{2^p}\langle\bm{v}_i,\bm{w}_j\rangle^2[\log(P_{\bm{w}_j}(\bm{y}-c\bm{v}_i))-\log(P_{\bm{w}_k}(\bm{y}-c\bm{v}_i))]\\
      = & \sum_{i=1}^{2^p}\langle\bm{v}_i,\bm{w}_j\rangle^2\log(P_{\bm{w}_j}(\bm{y}-c\bm{v}_i)) - \sum_{i=1}^{2^p}\langle\bm{v}_i,\bm{w}_k\rangle^2\log(P_{\bm{w}_k}(\bm{y}-c\bm{v}_i))\\
      + & \sum_{i=1}^{2^p}(\langle\bm{v}_i,\bm{w}_k\rangle^2-\langle\bm{v}_i,\bm{w}_j\rangle^2)\log(P_{\bm{w}_k}(\bm{y}-c\bm{v}_i))\\
      = & \sum_{i=1}^{2^p}(\langle\bm{v}_i,\bm{w}_k\rangle^2-\langle\bm{v}_i,\bm{w}_j\rangle^2)\log(1+p\langle\bm{v}_i,\bm{w}_k\rangle^2)\\
      \leq & p\sum_{i=1}^{2^p}(\langle\bm{v}_i,\bm{w}_k\rangle^2-\langle\bm{v}_i,\bm{w}_j\rangle^2)\langle\bm{v}_i,\bm{w}_k\rangle^2\\
      \leq & p\sum_{i=1}^{2^p}\langle\bm{v}_i,\bm{w}_k\rangle^4 = \frac{2^{p+1}}{p^2}(1.5p-1) \leq 2^{p+2}p^{-1}.
    \end{split}
  \end{equation}
  Hence, we have that $\text{KL}(P_{\bm{w}_j},P_{\bm{w}_k})\leq \gamma p^{-1}2^{p+2}$.

  For $1\leq j\neq k\leq N$, $d_H(\bm{z}_j,\bm{z}_k)\geq h/4=p/8$. Hence, we have
  \begin{equation}
    \begin{split}
      & \|\bm{\Sigma}_{\bm{w}_j}-\bm{\Sigma}_{\bm{w}_k}\|_\textup{op}^2 = \frac{2^{2p+2}}{p^4}c^4\gamma^2\|\bm{w}_j\bm{w}_j^\top-\bm{w}_k\bm{w}_k^\top\|_\textup{op}^2 \\
      \geq&\frac{2^{2p+1}}{p^4}c^4\gamma^2\|\bm{w}_j\bm{w}_j^\top-\bm{w}_k\bm{w}_k^\top\|_\textup{F}^2\\
      \geq&\frac{2^{2p+1}}{p^4}c^4\gamma^2\frac{1}{p^2}\|(\sqrt{p}\bm{w}_j)(\sqrt{p}\bm{w}_j)^\top-(\sqrt{p}\bm{w}_k)(\sqrt{p}\bm{w}_k)^\top\|_\textup{F}^2\\
      \geq&\frac{2^{2p+1}}{p^4}c^4\gamma^2\frac{32}{p^2}\sum_{1\leq s,r\leq h}1\{z_{js}\neq z_{ks},z_{jr}\neq z_{kr}\}\\
      \geq&\frac{2^{2p+1}}{p^4}c^4\gamma^2\frac{32}{p^2}\left(\frac{p}{8}\right)^2=\frac{2^{2p-4}}{p^4}c^4\gamma^2.
    \end{split}
  \end{equation}

  Since $p<128T$, we take $\gamma=p^22^{-(p+8)}T^{-1}$ and have that 
  \begin{equation}
    \frac{2^{p-2}}{p^2}c^2\gamma=\frac{1}{32}M^{\frac{1}{1+\epsilon}}(p^{-1}2^{p+3}\gamma)^{\frac{\epsilon}{1+\epsilon}}=\frac{1}{32}M^{\frac{1}{1+\epsilon}}\left(\frac{p}{32T}\right)^{\frac{\epsilon}{1+\epsilon}}\geq\frac{1}{1024}M^{\frac{1}{1+\epsilon}}\left(\frac{p}{T}\right)^{\frac{\epsilon}{1+\epsilon}}.
  \end{equation}
  Also, for any $1\leq j,k\leq N$,
  \begin{equation}
    \text{KL}(\mathbb{P}^T_{\bm{w}_j},\mathbb{P}^T_{\bm{w}_k})=T\cdot\text{KL}(P_{\bm{w}_j},P_{\bm{w}_k})\leq T\gamma p^{-1}2^{p+2}=\frac{p}{64}.
  \end{equation}

  By the Fano's inequality in Lemma \ref{lemma:Fano}, when $p\geq20>64\log2$, we have
  \begin{equation}
    \begin{split}
      & \inf_{\widehat{\bm{\Sigma}}}\max_{j\in\{1,\dots,N\}}\mathbb{P}\left[\|\widehat{\bm{\Sigma}}-\bm{\Sigma}_{\bm{w}_j}\|_\textup{op}\geq\frac{1}{1024}\left(\frac{M^{1/\epsilon}p}{T}\right)^{\frac{\epsilon}{1+\epsilon}}\right]\\
      \geq & 1-\frac{N^{-2}\sum_{1\leq j,k\leq N}\text{KL}(\mathbb{P}^T_{\bm{w}_j},\mathbb{P}^T_{\bm{w}_k})+\log2}{\log N}\\
      \geq & 1-\frac{(p/64)+\log2}{h/8}\geq1-\frac{(p/64)+(p/64)}{p/16}=\frac{1}{2}.
    \end{split}
  \end{equation}

  Therefore, as $\mathbb{P}_{\bm{w}_1}^T,\mathbb{P}_{\bm{w}_2}^T,\dots,\mathbb{P}_{\bm{w}_N}^T\in\mathcal{P}_\text{V}(M,\epsilon,r)$, the required minimax lower bound for $\bm{\Sigma}_0$ with $d=1$ is obtained.\\

  \noindent\textit{Step 3. Extension to the case of $d>1$ and $\bm{\Sigma}_1$}

  For the case of $d>1$, consider the proposed distribution $\mathbb{P}_{\bm{w}_1}^T,\mathbb{P}_{\bm{w}_2}^T,\dots,\mathbb{P}_{\bm{w}_N}^T$ with \textit{i.i.d.} samples. Let $\bm{x}_t=(\bm{y}_{t-1}^\top,\dots,\bm{y}_{t-d}^\top)^\top$, then $\mathbb{E}_{\bm{w}_j}[\bm{x}_t\bm{x}_t^\top]$ is a block diagonal matrix with $d$ blocks $\bm{\Sigma}_{\bm{w}_j}$. Hence, the minimax lower in the operator norm can be obtained accordingly.

  For $\bm{\Sigma}_1$, without loss of generality, assume that $p$ is a multiple of 4, i.e., $p=4h$, where $h$ is a positive integer. Let $\bm{y}_t=(\bm{y}_{1t}^\top,\bm{y}_{2t}^\top)^\top$, where each $\bm{y}_{it}$ is a $2h$-dimensional vector. Assume that $\bm{y}_{1t}=\bm{y}_{2,t-1}$ almost surely, and $\bm{y}_{1t}$ is a sequence of \textit{i.i.d.} random vectors. Then,
  \begin{equation}
    \mathbb{E}[\bm{y}_t\bm{y}_{t-1}^\top]=\begin{bmatrix}
      \mathbb{E}[\bm{y}_{1t}\bm{y}_{1,t-1}^\top] & \mathbb{E}[\bm{y}_{1t}\bm{y}_{2,t-1}^\top]\\
      \mathbb{E}[\bm{y}_{2t}\bm{y}_{1,t-1}^\top] & \mathbb{E}[\bm{y}_{2t}\bm{y}_{2,t-1}]
    \end{bmatrix}=
    \begin{bmatrix}
      \bm{0}_{2h\times 2h} & \mathbb{E}[\bm{y}_{1t}\bm{y}_{1t}^\top]\\
      \bm{0}_{2h\times 2h} & \bm{0}_{2h\times 2h}
    \end{bmatrix}.
  \end{equation}
  Hence, we can apply the constructed distributions $\mathbb{P}_{\bm{w}_1}^T,\mathbb{P}_{\bm{w}_2}^T,\dots,\mathbb{P}_{\bm{w}_N}^T$ to $\bm{y}_{1t}$, where each $\bm{w}_j\in\{(2h)^{-1/2},(2h)^{-1/2}\}^{2h}$. The required minimax lower bound can be obtained similarly and the proof is omitted for brevity.

\end{proof}

\begin{proof}[\textbf{Proof of Theorem \ref{thm:LB_lowrank_VAR}}]

  The proof consists of three steps. In the first step, for the special case of $r_0=1$ and $d=1$, we apply the proposed multivariate discrete distributions to construct a special time series process. In the second step, we apply Fano's inequailty to obtain the minimax lower bound. In the last step, the lower bound result is extended to the general case of $r_0>1$ and $d>1$.\\

  \noindent\textit{Step 1. Construct a class of discrete time series}

  For simplicity, we first focus on the case with $r_0=1$ and $d=1$. Without loss of generality, we assume that $p$ is a multiple of $4$, that is, $p=4h$, where $h$ is a positive integer. For $\bm{y}_t\in\mathbb{R}^p$, we split it as  $\bm{y}_t=(\bm{y}_{1t}^\top,\bm{y}_{2t}^\top)^\top$, where each $\bm{y}_{it}$ is a $2h$-dimensional vector.

  Consider the following data generating process of $\bm{y}_t$: $\bm{y}_{1t}$ is independent with $\bm{y}_{2t}$ and all historical information $\{\bm{y}_s\}_{s<t}$, and $\bm{y}_{2t}$ is only related to $\bm{y}_{1,t-1}$. Let $\bm{u}_t=(\bm{y}_{1,t-1}^\top,\bm{y}_{2t}^\top)^\top\in\mathbb{R}^p$.

  Similarly to the step 1 in the proof of Proposition \ref{prop:LB_operator}, we consider the multivariate distribution $P_{\bm{v}_1},\dots,P_{\bm{v}_{2^p}}$. As discussed in the proof of Proposition \ref{prop:LB_operator}, $P_{\bm{w}_j}\in\mathcal{P}_{\textup{V}}(M,\epsilon,r)$, for $j=1,\dots,N$. For $h=p/4$, by Lemma \ref{lemma:GV}, there exist $N\geq\exp(h/8)$ binary vectors $z_1,\dots,z_N\in\{0,1\}^p$ such that $d_\text{H}(\bm{z}_j,\bm{z}_k)\geq h/4$ for all $1\leq j\neq k\leq N$, where $d_\text{H}(\cdot,\cdot)$ is the Hamming distance that measures the number of different entries in two binary vectors.

  For each $\bm{z}_j=(z_{j1},\dots,z_{jh})^\top$, $j=1,\dots,N$, let
  \begin{equation}
    \bar{\bm{w}}_j=p^{-1/2}(z_{j1}(1,1)+(1-z_{j1})(1,-1),\dots,z_{jh}(1,1)+(1-z_{jh})(1,-1))^\top
  \end{equation}
  and $\bm{w}_j=(\bar{\bm{w}}^\top_j,\bar{\bm{w}}^\top_j)^\top\in\{p^{-1/2},-p^{-1/2}\}^p$.

  Based on the vectors $\bm{w}_1,\dots,\bm{w}_N$, consider $N$ distributions $P_{\bm{w}_1},\dots,P_{\bm{w}_N}$ for the vector $\bm{u}_t$. According to the proof of Proposition \ref{prop:LB_operator}, under the distribution $P_{\bm{w}_j}$
  \begin{equation}
    \begin{split}
      \mathbb{E}_{\bm{w}_j}[\bm{u}_t\bm{u}_t^\top]=&
      \begin{bmatrix}
        \mathbb{E}_{\bm{w}_j}[\bm{y}_{1t}\bm{y}_{1t}^\top] & \mathbb{E}_{\bm{w}_j}[\bm{y}_{1,t-1}\bm{y}_{2t}^\top]\\
        \mathbb{E}_{\bm{w}_j}[\bm{y}_{2t}\bm{y}_{1,t-1}^\top] & \mathbb{E}_{\bm{w}_j}[\bm{y}_{2t}\bm{y}_{2t}^\top]
      \end{bmatrix}\\
      =&\left[\frac{2^p}{p^2}+\frac{2^p(p-2)}{p^3}\right]c^2\gamma\bm{I}_p + \frac{2^{p+1}}{p^2}c^2\gamma\bm{w}_j\bm{w}_j^\top\\
      =&\left[\frac{2^p}{p^2}+\frac{2^p(p-2)}{p^3}\right]c^2\gamma\bm{I}_p + \frac{2^{p+1}}{p^2}c^2\gamma\begin{bmatrix}
        \bar{\bm{w}}_j\bar{\bm{w}}_j^\top & \bar{\bm{w}}_j\bar{\bm{w}}_j^\top\\
        \bar{\bm{w}}_j\bar{\bm{w}}_j^\top & \bar{\bm{w}}_j\bar{\bm{w}}_j^\top
      \end{bmatrix}.
    \end{split}
  \end{equation}
  Hence, we can further obtain
  \begin{equation}
    \begin{split}
      \mathbb{E}_{\bm{w}_j}[\bm{y}_t\bm{y}_t^\top]=&\begin{bmatrix}
        \mathbb{E}_{\bm{w}_j}[\bm{y}_{1t}\bm{y}_{1t}^\top] & \mathbb{E}_{\bm{w}_j}[\bm{y}_{1t}\bm{y}_{2t}^\top]\\
        \mathbb{E}_{\bm{w}_j}[\bm{y}_{2t}\bm{y}_{1t}^\top] & \mathbb{E}_{\bm{w}_j}[\bm{y}_{2t}\bm{y}_{2t}^\top]
      \end{bmatrix}
      =\begin{bmatrix}
        \bm{M}_{\bar{\bm{w}}_j} & \bm{0}_{2h\times 2h} \\
        \bm{0}_{2h\times 2h} & \bm{M}_{\bar{\bm{w}}_j}
        \end{bmatrix}
    \end{split}
  \end{equation}
  and
  \begin{equation}
    \begin{split}
      \mathbb{E}_{\bm{w}_j}[\bm{y}_t\bm{y}_{t-1}^\top]=&\begin{bmatrix}
        \mathbb{E}_{\bm{w}_j}[\bm{y}_{1t}\bm{y}_{1,t-1}^\top] & \mathbb{E}_{\bm{w}_j}[\bm{y}_{1t}\bm{y}_{2,t-1}^\top]\\
        \mathbb{E}_{\bm{w}_j}[\bm{y}_{2t}\bm{y}_{1,t-1}^\top] & \mathbb{E}_{\bm{w}_j}[\bm{y}_{2t}\bm{y}_{2,t-1}^\top]
      \end{bmatrix}
      =\begin{bmatrix}
        \bm{0}_{2h\times 2h} & c^2\gamma 2^{p+1}p^{-2}\bar{\bm{w}}_j\bar{\bm{w}}_j^\top \\
        \bm{0}_{2h\times 2h} & \bm{0}_{2h\times 2h}
        \end{bmatrix},
    \end{split}
  \end{equation}
  where
  \begin{equation}
    \begin{split}
      \bm{M}_{\bar{\bm{w}}_j}=\left[\frac{2^p}{p^2}+\frac{2^p(p-2)}{p^3}\right]c^2\gamma\bm{I}_{2h} + \frac{2^{p+1}}{p^2}c^2\gamma\bar{\bm{w}}_j\bar{\bm{w}}_j^\top:=c^2\gamma (C_1\bm{I}_{2h}+C_2\bar{\bm{w}}_j\bar{\bm{w}}_j^\top)
    \end{split}
  \end{equation}
  and
  \begin{equation}
    \bm{M}_{\bar{\bm{w}}_j}^{-1}=c^{-2}\gamma^{-1}C_1^{-1}[\bm{I}_{2h}-2C_2/(2C_1+C_2)\bar{\bm{w}}_j\bar{\bm{w}}_j^\top].
  \end{equation}

  Denote by $\mathbb{P}_{\bm{w}_j}^T$ the distribution of $\{\bm{y}_t\}_{t=1}^T$, where $\bm{u}_t$ follows the distribution $P_{\bm{w}_j}$.
  By Yule--Walker equation,
  \begin{equation}
    \bm{A}^*(P_{\bm{w}_j})=\mathbb{E}_{\bm{w}_j}[\bm{y}_t\bm{y}_{t-1}^\top]\mathbb{E}_{\bm{w}_j}[\bm{y}_t\bm{y}_t^\top]^{-1}=
    \begin{bmatrix}
      \bm{0}_{2h\times 2h} & \bm{A}^*_{1,2}(P_{\bm{w}_j})\\
      \bm{0}_{2h\times 2h} & \bm{0}_{2h\times 2h}
    \end{bmatrix}
  \end{equation}
  is a rank-1 matrix, where
  \begin{equation}
    \begin{split}
      & \bm{A}^*_{1,2}(P_{\bm{w}_j})=c^2\gamma 2^{p+1}p^{-2}\bar{\bm{w}}_j\bar{\bm{w}}_j^\top\bm{M}_{\bar{\bm{w}}_j}^{-1} = \frac{2C_2}{2C_1+C_2}\bar{\bm{w}}_j\bar{\bm{w}}_j^\top\\
      = & \left[c^{-2}\gamma^{-1}C_1^{-1}\left(1-\frac{2C_2}{2C_1+C_2}\right)\right]c^2\gamma C_1\left(1-\frac{2C_2}{2C_1+C_2}\right)^{-1}\frac{2C_2}{2C_1+C_2}\bar{\bm{w}}_j\bar{\bm{w}}_j^\top\\
      = & \left\|\mathbb{E}_{\bm{w}_j}[\bm{y}_t\bm{y}_t^\top]^{-1}\right\|_\textup{op} C_1C_3c^2\gamma\bar{\bm{w}}_j\bar{\bm{w}}_j^\top
    \end{split}
  \end{equation}
  and $C_3=2C_2/(2C_1-C_2)=p/(p-2)$. In addition, note that $\|\mathbb{E}_{\bm{w}_j}[\bm{y}_t\bm{y}_t^\top]^{-1}\|_\textup{op}:=C_4$ is a constant independent of the choice of $\bm{w}_j$.\\

  \noindent\textit{Step 2. Establish the lower bound with $r_0=1$ and $d=1$}

  For $1\leq j\neq k\leq N$, as $d_\text{H}(\bm{w}_j,\bm{w}_k)\geq h/4=p/16$,
  \begin{equation}
    \begin{split}
      &\|\bm{A}^*(P_{\bm{w}_j})-\bm{A}^*(P_{\bm{w}_k})\|_\textup{op}^2\\
      =&C_1^2C_3^2C_4^2c^4\gamma^2\|\bar{\bm{w}}_j\bar{\bm{w}}_j^\top-\bar{\bm{w}}_k\bar{\bm{w}}_k^\top\|_\textup{F}^2\\
      \geq&C_1^2C_3^2C_4^2c^4\gamma^2\|\bar{\bm{w}}_j\bar{\bm{w}}_j^\top-\bar{\bm{w}}_k\bar{\bm{w}}_k^\top\|_\textup{F}^2\\
      \geq&C_4^2\frac{2^{2p}}{p^4} c^4\gamma^2\frac{1}{p^2}\|(\sqrt{p}\bar{\bm{w}}_j)(\sqrt{p}\bar{\bm{w}}_j)^\top-(\sqrt{p}\bar{\bm{w}}_k){\sqrt{p}\bar{\bm{w}}_k^\top}\|_\textup{F}^2 \\
      \geq&C_4^2\frac{2^{2p}}{p^4}c^4\gamma^2.
    \end{split}
  \end{equation}
  
  Similarly to the proof of Proposition \ref{prop:LB_operator}, note that
  \begin{equation}
    \frac{2^p}{p^2}c^2\gamma=\frac{1}{8}M^{\frac{1}{1+\epsilon}}(p^{-1}2^{p+3}\gamma)^{\frac{\epsilon}{1+\epsilon}}.
  \end{equation}
  Taking $\gamma=p^22^{-(p+8)}T^{-1}$, we have that for any $1\leq j,k\leq N$,
  \begin{equation}
    \text{KL}(\mathbb{P}_{\bm{w}_j}^T,\mathbb{P}_{\bm{w}_k}^T) = T\gamma p^{-1}2^{p+2}=\frac{p}{128}.
  \end{equation}
  By Fano's inequality in Lemma \ref{lemma:Fano}, when $p\geq39>128\log(2)$,
  \begin{equation}
    \begin{split}
      & \inf_{\widehat{\bm{A}}}\max_{j\in\{1,\dots,N\}}\mathbb{P}\left[\|\widehat{\bm{A}}-\bm{A}^*(P_{\bm{w}_j})\|_\textup{F}\geq\frac{1}{256}\|\bm{\Sigma}_0(P_{\bm{w}_j})^{-1}\|_\textup{op}M^{\frac{1}{1+\epsilon}}(p/T)^{\frac{\epsilon}{1+\epsilon}}\right]\\
      \geq & 1-\frac{N^{-2}\sum_{1\leq j,k\leq N}\text{KL}(\mathbb{P}_{\bm{w}_j}^T,\mathbb{P}_{\bm{w}_k}^T)+\log2}{\log N}\\
      \geq & 1-\frac{p/128+\log(2)}{p/32}\geq\frac{1}{2},
    \end{split}
  \end{equation}
  which concludes the proof of the case with $r_0=1$ and $d=1$.\\

  \noindent\textit{Step 3. Extension to the general case of $r_0>1$}

  For the case of $r_0>1$, we consider that $\bm{y}_{t}$ can be split into $2r_0$ components $\bm{y}_t=[\bm{y}_{1t}^{(1)\top},\bm{y}_{2t}^{(1)\top},\bm{y}_{1t}^{(2)\top},\bm{y}_{2t}^{(2)\top},\dots,\bm{y}_{1t}^{(r_0)\top},\bm{y}_{2t}^{(r_0)\top}]$, where $\{(\bm{y}_{1t}^{(j)\top},\bm{y}_{2t}^{(j)\top})^\top\}_{t=1}^T$ and $\{(\bm{y}_{1t}^{(k)\top},\bm{y}_{2t}^{(k)\top})^\top\}_{t=1}^T$ are independent for $1\leq j,k\leq r_0$. Then, we can apply the same technique in the steps 1 and 2 to each $\{(\bm{y}_{1t}^{(j)\top},\bm{y}_{2t}^{(j)\top})^\top\}_{t=1}^T$ and obtain the required lower bound. In addition, the lower bound result for $d=1$ directly implies that for $d>1$, so the proof of $d>1$ is omitted for brevity.

\end{proof}

\subsection{Auxiliary Lemmas}\label{sec:B.3}

In this section, we present some auxiliary lemmas used in the proofs of lower bound results.
The first lemma presents the number of vectors satisfying a certain Hamming distance, which is known as the Gilbert--Varshamov Lemma in \citet{massart2007concentration}.

\begin{lemma}\label{lemma:GV}
  There exist binary vectors $\bm{z}_1,\dots,\bm{z}_m\in\{0,1\}^p$ such that
  \begin{itemize}
    \item[1.] $d_H(\bm{z}_j,\bm{z}_k)\geq p/4$ for all $j\neq k$;
    \item[2.] $m\geq\exp(p/8)$.
  \end{itemize}
\end{lemma}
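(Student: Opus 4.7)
The plan is to use the standard maximal packing / volume argument. Let $\rho = p/4$ and define the Hamming ball $B(\bm{z},\rho) = \{\bm{w}\in\{0,1\}^p : d_H(\bm{z},\bm{w}) < \rho\}$. First, I will choose $\{\bm{z}_1,\dots,\bm{z}_m\}\subset\{0,1\}^p$ to be a maximal set with the property $d_H(\bm{z}_j,\bm{z}_k)\geq \rho$ for all $j\neq k$; such a set exists because $\{0,1\}^p$ is finite. By maximality, no point of $\{0,1\}^p$ can be added without violating the separation property, so every $\bm{w}\in\{0,1\}^p$ must lie within Hamming distance less than $\rho$ of some $\bm{z}_j$. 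In other words, the balls $B(\bm{z}_1,\rho),\dots,B(\bm{z}_m,\rho)$ cover $\{0,1\}^p$, giving the covering inequality $m\cdot |B(\bm{z},\rho)|\geq 2^p$, where $|B(\bm{z},\rho)|$ is independent of the center.

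Next I will bound the ball volume. Since $|B(\bm{z},\rho)| = \sum_{k=0}^{\lceil \rho\rceil -1}\binom{p}{k}$, a Chernoff/entropy estimate yields $|B(\bm{z},p/4)|\leq 2^{pH(1/4)}$, where $H(q) = -q\log_2 q - (1-q)\log_2(1-q)$ is the binary entropy function. Combining with the covering inequality gives
\begin{equation}
    m \geq 2^{p(1 - H(1/4))}.
\end{equation}
A direct computation shows $1-H(1/4) \approx 0.189 > 1/(8\log 2)\approx 0.180$, so $2^{p(1-H(1/4))}\geq e^{p/8}$, proving the claim.

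The only non-routine step is verifying that the constant obtained from the entropy bound is large enough to yield the exponent $1/8$ advertised in the lemma; everything else (existence of a maximal packing, the covering consequence of maximality, and the Chernoff bound on cumulative binomial tails) is standard. If the numerical margin between $1-H(1/4)$ and $1/(8\log 2)$ were deemed too tight to state cleanly, one could instead invoke the sharper inequality $\sum_{k\leq p/4}\binom{p}{k}\leq 2^p \exp(-p/8)$, which follows from Hoeffding's inequality applied to the sum of $p$ i.i.d.\ Bernoulli$(1/2)$ variables and delivers the bound $m\geq e^{p/8}$ directly.
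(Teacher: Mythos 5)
Your argument is correct. Note that the paper does not prove this lemma at all -- it is quoted verbatim as the Gilbert--Varshamov bound from \citet{massart2007concentration} -- so any complete proof is "extra" relative to the text; yours is the standard self-contained route: take a maximal $p/4$-separated subset of $\{0,1\}^p$, use maximality to get the covering inequality $m\,|B(\bm{z},p/4)|\geq 2^p$, and bound the Hamming-ball volume. Both of your volume bounds check out numerically: $1-H(1/4)\approx 0.1887>1/(8\log 2)\approx 0.1803$, and the Hoeffding route $\mathbb{P}(\mathrm{Bin}(p,1/2)\leq p/4)\leq \exp(-2p(1/4)^2)=\exp(-p/8)$ gives $m\geq e^{p/8}$ with no numerical slack to verify, so the second variant is the cleaner one to write down. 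The only cosmetic caveat is that the entropy-based margin is tight enough that a careless rounding (e.g.\ using $H(1/4)\approx 0.81$ versus $0.8113$) could obscure the inequality, which is exactly why your fallback to Hoeffding is the preferable final form; the classical proofs in the cited literature (Massart, Tsybakov) use essentially this same packing-plus-binomial-tail argument, so there is no substantive divergence from the intended source.
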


The second lemma is a well known result from information theory called Fano's inequality \citep{wainwright2019high}.

\begin{lemma}\label{lemma:Fano}
  Let $P_1,\dots,P_M$ be $M$ probability distributions, $M\geq2$. Then,
  \begin{equation}
    \inf_{\psi}\max_{1\leq j\leq M}P_j[\psi(X)\neq j]\geq 1-\frac{M^{-2}\sum_{j,k=1}^M\textup{KL}(P_j,P_k)+\log2}{\log M}
  \end{equation}
  where $\textup{KL}(\cdot,\cdot)$ is the Kullback--Leibler (KL) divergence of two distributions and the infimum is taken over all tests with values in $\{1,2,\dots,M\}$.
\end{lemma}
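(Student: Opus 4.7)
The plan is to cast this as a standard application of the data-processing inequality combined with Fano's entropy inequality. Introduce an auxiliary random index $J$ that is uniformly distributed on $\{1,\dots,M\}$, and conditionally on $J=j$ let $X\sim P_j$; then $\psi(X)$ is an estimator of $J$, and the Bayes-average error $\bar P_e := \tfrac{1}{M}\sum_{j=1}^M P_j[\psi(X)\neq j]$ is a lower bound on $\max_j P_j[\psi(X)\neq j]$, so it suffices to lower bound $\bar P_e$.

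The first step is to apply the classical entropy form of Fano: for the Markov chain $J\to X\to \psi(X)$, one has $H(J\mid X)\le h(\bar P_e)+\bar P_e\log(M-1)\le \log 2+\bar P_e\log M$, where $h$ is the binary entropy. Rewriting $H(J\mid X)=H(J)-I(J;X)=\log M-I(J;X)$, this rearranges into
\begin{equation}
\bar P_e\;\ge\;1-\frac{I(J;X)+\log 2}{\log M}.
\end{equation}
Thus the remaining task is to bound the mutual information $I(J;X)$ by the pairwise KL average appearing in the statement.

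For this, use the standard identity $I(J;X)=\tfrac{1}{M}\sum_{j=1}^M \mathrm{KL}(P_j,\bar P)$ with $\bar P=\tfrac{1}{M}\sum_{k=1}^M P_k$, and then invoke joint convexity of the KL divergence in its second argument (equivalently, the fact that the Bayes mixture minimizes the average KL): $\mathrm{KL}(P_j,\bar P)\le \tfrac{1}{M}\sum_{k=1}^M \mathrm{KL}(P_j,P_k)$. Summing over $j$ gives $I(J;X)\le M^{-2}\sum_{j,k}\mathrm{KL}(P_j,P_k)$, and substituting into the displayed inequality yields exactly the conclusion.

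No step is genuinely difficult, since all three ingredients (Fano's entropy inequality, the identity $I(J;X)=\tfrac{1}{M}\sum_j\mathrm{KL}(P_j,\bar P)$, and convexity of KL) are textbook facts; the only minor bookkeeping is the passage from the sharper $\log(M-1)$ to $\log M$ on the right-hand side, which is harmless and is what produces the clean form stated. Since the paper cites \citet{wainwright2019high}, one may alternatively simply quote that reference, but the argument above is short enough to give in full for completeness.
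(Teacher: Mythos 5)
Your proof is correct. The paper itself gives no argument for this lemma---it simply quotes the result from \citet{wainwright2019high}---and your derivation is exactly the standard one behind that citation: reduce the max risk to the uniform-prior Bayes risk, apply the entropy form of Fano together with $H(J\mid X)\le H(J\mid \psi(X))$, write $I(J;X)=M^{-1}\sum_j \mathrm{KL}(P_j,\bar P)$, and bound it by the pairwise KL average via convexity of $\mathrm{KL}$ in its second argument. All steps, including the harmless relaxations $h(\bar P_e)\le\log 2$ and $\log(M-1)\le\log M$, are sound.
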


\section{ADMM Algorithms}\label{append:C}

In this appendix, we present the detailed ADMM algorithm for $\ell_1$ regularized sparse VAR model and nuclear norm regularized reduced-rank VAR model.

\subsection{$L_1$-regularized sparse VAR model}

For a matrix $\bm{A}=(A_{ij})$, define the element-wise soft thresholding operator $S_{\kappa}(\bm{A})=(S_{\kappa}(A_{ij}))_{ij}$, where
\begin{equation}
  S_\kappa(x)=\begin{cases}
    x-\kappa, & x>\kappa;\\
    0, & -\kappa\leq x\leq \kappa;\\
    x+\kappa, & x<-\kappa.
  \end{cases}
\end{equation}
The ADMM algorithm proposed for $\ell_1$ regularized sparse VAR model has closed-form iterative updates summarized in Algorithm \ref{alg:sparse}.

\begin{algorithm}
  \begin{flushleft}
    \textbf{Input}: $\bm{\widetilde{\Sigma}}_0$, $\bm{\widetilde{\Sigma}}_1$, $\lambda$, $\rho$, $\mu$, $J$\\
    \textbf{Initialize}: $\bm{A}^{(0)}$, $\bm{D}^{(0)}$, $\bm{W}^{(0)}$\\
    \textbf{for} $j=1,\dots,J$\\
    \hspace*{1cm} $\bm{A}^{(j+1)}=S_{2/(\rho\mu)}(\bm{A}^{(j)}-2/\mu(\bm{A}^{(j)}\bm{\widetilde{\Sigma}}_0-\bm{D}^{(j)}-\bm{\widetilde{\Sigma}}_1-\bm{W}^{(j)}/\rho)\bm{\widetilde{\Sigma}}_0)$\\
    \hspace*{1cm} $\bm{D}^{(j+1)}=\text{sign}(\bm{A}^{(j+1)}\bm{\widetilde{\Sigma}}_0-\bm{\widetilde{\Sigma}}_1-\bm{W}^{(j)}/\rho)\cdot(|\bm{A}^{(j+1)}\bm{\widetilde{\Sigma}}_0-\bm{\widetilde{\Sigma}}_1-\bm{W}^{(j)}/\rho|\wedge \lambda)$\\
    \hspace*{1cm} $\bm{W}^{(j+1)}=\bm{W}^{(j)}+\rho(\bm{\widetilde{\Sigma}}_1-\bm{A}^{(j+1)}\bm{\widetilde{\Sigma}}_0+\bm{D}^{(j+1)})$\\
    \textbf{end for}\\
    \textbf{Return}: $\bm{A}^{(J)}$
  \end{flushleft}
  \caption{ADMM algorithm for $\ell_1$ regularized sparse VAR model}
  \label{alg:sparse}
\end{algorithm}

\subsection{Nuclear norm regularized reduced-rank VAR model}

For a matrix $\bm{A}$ with singular value decomposition $\bm{A}=\bm{U}\bm{D}\bm{V}^\top$, define the singular value soft thresholding operator $\text{SSV}_\kappa(\bm{A})=\bm{U}S_\kappa(\bm{D})\bm{V}^\top$. For the $\bm{A}$-update, we can apply the singular value soft thresholding operator to obtain the closed-form solution, while the $\bm{D}$-update can be achieved by truncating the singular values of $\bm{\widetilde{\Sigma}}_1-\bm{A}^{(j+1)}\bm{\widetilde{\Sigma}}_0+\bm{W}^{(j)}$. Thus, the ADMM algorithm proposed for nuclear norm regularized reduced-rank VAR model has closed-form iterative updates summarized in Algorithm \ref{alg:reduced-rank}.

\begin{algorithm}
  \begin{flushleft}
    \textbf{Input}: $\bm{\widetilde{\Sigma}}_0$, $\bm{\widetilde{\Sigma}}_1$, $\lambda$, $\rho$, $\mu$, $J$\\
    \textbf{Initialize}: $\bm{A}^{(0)}$, $\bm{D}^{(0)}$, $\bm{W}^{(0)}$\\
    \textbf{for} $j=1,\dots,J$\\
    \hspace*{1cm} $\bm{A}^{(j+1)}=\text{SSV}_{2/(\rho\mu)}(\bm{A}^{(j)}-2/\mu(\bm{A}^{(j)}\bm{\widetilde{\Sigma}}_0-\bm{D}^{(j)}-\bm{\widetilde{\Sigma}}_1-\bm{W}^{(j)}/\rho)\bm{\widetilde{\Sigma}}_0)$\\
    \hspace*{1cm} $\bm{U},\bm{S},\bm{V}=\text{SVD}(\bm{A}^{(j+1)}\bm{\widetilde{\Sigma}}_0-\bm{\widetilde{\Sigma}}_1-\bm{W}^{(j)}/\rho)$\\
    \hspace*{1cm} $\bm{D}^{(j+1)}=\bm{U}(\bm{S}\wedge\lambda)\bm{V}^\top$\\
    \hspace*{1cm} $\bm{W}^{(j+1)}=\bm{W}^{(j)}+\rho(\bm{\widetilde{\Sigma}}_1-\bm{A}^{(j+1)}\bm{\widetilde{\Sigma}}_0+\bm{D}^{(j+1)})$\\
    \textbf{end for}\\
    \textbf{Return}: $\bm{A}^{(J)}$
  \end{flushleft}
  \caption{ADMM algorithm for nuclear norm regularized reduced-rank VAR model}
  \label{alg:reduced-rank}
\end{algorithm}

\end{appendix}

\end{document}